\definecolor{ForestGreen}{rgb}{0.1333,0.5451,0.1333}
\definecolor{DarkRed}{rgb}{0.8,0,0}
\definecolor{Red}{rgb}{1,0,0}
\renewcommand{\paragraph}[1]{\medskip\noindent{\bf #1}\xspace}
\declaretheorem[numberwithin=section]{theorem}
\declaretheorem[numberlike=theorem]{lemma}
\declaretheorem[numberlike=theorem]{proposition}
\declaretheorem[numberlike=theorem]{corollary}
\declaretheorem[numberlike=theorem]{claim}
\declaretheorem[numberlike=theorem]{observation}
\declaretheorem[numberlike=theorem]{invariant}
\theoremstyle{definition}
\declaretheorem[numberlike=theorem]{definition}
\crefname{algorithm}{Algorithm}{Algorithms}
\Crefname{algorithm}{Algorithm}{Algorithms}
\newcommand{\ot}{\tilde{O}}
\newcommand{\oh}{\widehat{O}}
\newcommand{\ignore}[1]{}
\newcommand{\A}{\mathcal{A}\xspace}
\newcommand{\cP}{\mathcal{P}\xspace}
\newcommand{\D}{\mathcal{D}\xspace}
\newcommand{\cF}{\mathcal{F}\xspace}
\newcommand{\eps}{\epsilon}
\newcommand{\poly}{\operatorname{poly}} 
\newcommand{\polylog}{\operatorname{polylog}}
\newcommand{\textlocal}{\operatorname{local}} 
\newcommand{\textbase}{\operatorname{base}} 
\newcommand{\texth}{\operatorname{h}} 
\newcommand{\textsplit}{\operatorname{split}} 
\newcommand{\textcost}{\operatorname{cost}}
\newcommand{\LocalVC}{\operatorname{ LocalVC }}
\newcommand{\SplitVC}{\operatorname{ SplitVC }} 
\newcommand{\baseCaseMainVC}{ \Lambda}
\newcommand{\vol}{\operatorname{vol}}
\newcommand{\minlr}{\min(|L|,|R|)}
\newcommand{\f}{\frac}
\newcommand{\cd}{\cdot}
\newcommand{\sr}{\sqrt}
\newcommand{\lds}{\ldots}
\newcommand{\s}{\subseteq}
\newcommand{\BE}{\begin{enumerate}}
\newcommand{\EE}{\end{enumerate}}
\newcommand{\im}{\item}
\newcommand{\BI}{\begin{itemize}}
\newcommand{\EI}{\end{itemize}}
\newcommand{\inv}{^{-1}}
\newcommand{\R}{\mathbb R}
\newcommand{\N}{\mathbb N}
\newcommand{\e}{\epsilon}
\newcommand{\De}{\Delta}
\newcommand{\pt}{\partial}
\newcommand{\al}{\alpha}
\newcommand{\be}{\beta}
\newcommand{\om}{\omega}
\newcommand{\Om}{\Omega}
\newcommand{\el}{\ell}
\newcommand{\Th}{\Theta}
\newcommand{\m}{\mathcal}
\newcommand{\lf}{\lfloor}
\newcommand{\rf}{\rfloor}
\newcommand{\lc}{\lceil}
\newcommand{\rc}{\rceil}
\newcommand{\E}{\mathbb E}
\newcommand{\1}{\mathbbm 1}
\newcommand{\lp}{\left(}
\newcommand{\rp}{\right)}
\newcommand{\lmt}{\left[\begin{matrix}}
\newcommand{\rmt}{\end{matrix}\right]}
\newcommand{\BT}{\begin{theorem}}
\newcommand{\ET}{\end{theorem}}
\newcommand{\BL}{\begin{lemma}}
\newcommand{\EL}{\end{lemma}}
\newcommand{\BD}{\begin{definition}}
\newcommand{\ED}{\end{definition}}
\newcommand{\BC}{\begin{corollary}}
\newcommand{\EC}{\end{corollary}}
\newcommand{\BO}{\begin{observation}}
\newcommand{\EO}{\end{observation}}
\newcommand{\BCL}{\begin{claim}}
\newcommand{\ECL}{\end{claim}}
\newcommand{\BP}{\begin{proof}}
\newcommand{\EP}{\end{proof}}
\newcommand{\BPS}{\begin{proof}[Proof (Sketch)]}
\newcommand{\EPS}{\end{proof}}
\Crefname{observation}{Observation}{Observations}
\Crefname{assumption}{Assumption}{Assumptions}
\Crefname{reduction}{Reduction}{Reductions}
\Crefname{claim}{Claim}{Claims}
\Crefname{subclaim}{Subclaim}{Sublaims}
\newcommand{\para}{\paragraph}
\newcommand{\exc}{\textup{excess}}
\newcommand{\tO}{\tilde{O}}
\newcommand{\V}[1]{{V_{\ge {#1}}^p}}
\newcommand{\VV}[1]{{\vol(\V{#1})}}
\newcommand{\bv}{\mathbf v}
\Crefname{invariant}{Invariant}{Invariants}
\newtheorem{remark}[theorem]{Remark}
\newtheorem{fact}[theorem]{Fact}
\newcommand{\sm}{\setminus}
\newcommand{\phihat}{\widehat{\phi}}
\newcommand{\nhat}{\widehat{n}}
\newcommand{\mhat}{\widehat{m}}
\newcommand{\Shat}{\widehat{S}}
\def\richard#1{\marginpar{$\leftarrow$\fbox{R}}\footnote{$\Rightarrow$~{\sf #1 --Richard}}}
\def\danupon#1{\marginpar{$\leftarrow$\fbox{D}}\footnote{$\Rightarrow$~{\sf #1 --Danupon}}}
\def\yu#1{\marginpar{$\leftarrow$\fbox{Y}}\footnote{$\Rightarrow$~{\sf #1 --Yu}}}
\def\sorrachai#1{\marginpar{$\leftarrow$\fbox{S}}\footnote{$\Rightarrow$~{\sf
      #1 --Sorrachai}}}
\def\amm#1{\marginpar{$\leftarrow$\fbox{A}}\footnote{$\Rightarrow$~{\sf #1 --Amm}}}
\def\thatchaphol#1{\marginpar{$\leftarrow$\fbox{TS}}\footnote{$\Rightarrow$~{\sf #1 --Thatchaphol}}}
\def\richard#1{}
\def\danupon#1{}
\def\yu#1{}
\def\sorrachai#1{}
\def\amm#1{}
\def\thatchaphol#1{}
\title{Deterministic Graph Cuts in Subquadratic Time:\\Sparse, Balanced, and $k$-Vertex}
\author[1]{Yu Gao}
\author[2]{Jason Li}
\author[3]{Danupon Nanongkai}
\author[1]{Richard Peng}
\author[4]{Thatchaphol Saranurak}
\author[5]{Sorrachai Yingchareonthawornchai}
\affil[1]{Georgia Institute of Technology and Microsoft Research Redmond, USA}
\affil[2]{Carnegie Mellon University, USA}
\affil[3]{KTH Royal Institute of Technology, Sweden}
\affil[4]{Toyota Technological Institute at Chicago, USA}
\affil[5]{Aalto University,  Finland}
\date{}
\begin{document}

\begin{titlepage}
        \maketitle
        \pagenumbering{roman}
        \begin{abstract}
	
	We study deterministic algorithms for computing graph cuts,
	with focus on two fundamental problems: balanced sparse cut 
	and $k$-vertex connectivity for small $k$ ($k=O(\polylog n)$).
	Both problems can be solved in near-linear time with randomized algorithms,
	but their previous deterministic counterparts take at least quadratic time.
	In this paper, we break this bound for both problems. Interestingly, achieving this for one problem crucially relies on doing so for the other.

	In particular, via a divide-and-conquer argument, a variant of the cut-matching game by [Khandekar et al.`07], and the local vertex connectivity algorithm of [Nanongkai et al. STOC'19], we give
	a subquadratic time algorithm for $k$-vertex connectivity using a
	subquadratic time algorithm for computing balanced sparse cuts on sparse graphs. 
	To achieve the latter, we improve the previously best $mn$ bound for approximating balanced sparse cut for the whole range of $m$.
	This starts from (1) breaking the $n^3$ barrier on dense graphs to $n^{\omega + o(1)}$ (where $\omega < 2.372$) using the the PageRank matrix, but without explicitly sweeping to find sparse cuts; to (2) getting the $\tilde O(m^{1.58})$ bound by combining the $J$-trees by [Madry FOCS `10] with the $n^{\omega + o(1)}$ bound above, and finally; to (3) getting the $m^{1.5 + o(1)}$ bound by recursively invoking the second bound in conjunction with expander-based graph sparsification. Interestingly, our final  $m^{1.5 + o(1)}$  bound lands at a natural stopping point in the sense that polynomially breaking it would lead to a breakthrough for the dynamic connectivity problem. 
\end{abstract}

        \setcounter{tocdepth}{2}
        \newpage      
        \tableofcontents
\end{titlepage}

\newpage
\pagenumbering{arabic}

\section{Introduction}
\label{sec:intro}

Graph cuts, or sets of vertices or edges whose removal disconnects graphs,
are fundamental objects in graph theory and graph algorithms.
Efficiently computing graph cuts have a wide range of applications
that include planarity testing~\cite{HopcroftT73}, image processing~\cite{BoykovVZ99},
and high performance/parallel computing~\cite{BulucMSSS16}.
While a very large gap remains between theoretical and practical performances,
graph partitioning algorithms have also proven to be a powerful theoretical tool,
with prominent applications including linear systems solving~\cite{SpielmanT11},
approximation algorithms for unique games~\cite{Trevisan08,AroraBS15},
and dynamic graph data structures~\cite{NanongkaiS17,Wulff-Nilsen17,NanongkaiSW17}.

Due to the central role of cuts in graph algorithms,
they are a natural focus in the study of more efficient graph algorithms.
Many improvements in the running times of cut-related graph
algorithms~\cite{KargerKT95,Karger00:journal,
        SpielmanTengSolver:journal,Sherman13,KelnerLOS14,Madry10}
stem from better understandings of
randomized algorithmic primitives:
there is a polynomial factor separation between the best randomized
and the best deterministic algorithms for many problems on graphs.

On the other hand, deterministic algorithms have a multitude of advantages
over randomized ones.
Theoretically, this is perhaps most evident in data structures,
where an adaptive adversary can choose the next operation based
on the previous output of the data structure.
This resulting dependency is not handled by the analysis of many randomized data
structures~\cite{KelnerL13,FahrbachMPSWX18}, and is only
fixable in isolated situations using more intricate tools for analyzing probabilistic processes~\cite{CohenMP16,KyngS16,KyngPPS17}.
The highly efficient performance of the randomized algorithms also make it
difficult to apply more general purpose derandomization tools (e.g.  \cite{nisan1994hardness,Umans2003pseudo,CarmosinoIS18}),
as many such tools could
potentially incur overheads of polynomial factors.
Historically, the derandomization efforts also led to powerful tools that are useful beyond the derandomization itself. For the case of cuts, a recent example is  
Thorup and Kawarabayashi's edge connectivity algorithm \cite{KawarabayashiT15} (see also \cite{HenzingerRW17,LoST18}), which is a deterministic counterpart of Karger's prominent randomized near-linear time algorithm \cite{Karger00:journal}.
Techniques from \cite{KawarabayashiT15} have later found applications in, e.g., distributed algorithms \cite{DagaHNS19}, dynamic algorithms \cite{GoranciHT18}, and querying algorithms \cite{RubinsteinSW18} (some of these algorithms are randomized).

\paragraph{Near-linear Randomized vs Quadratic Deterministic.}
While deterministic algorithms for edge connectivity are well understood, there remain big gaps between deterministic and randomized algorithms for many other graph cut problems.
Among such problems, two well-known ones are {\em $k$-vertex connectivity}, and {\em approximate sparsest cut} along with its generalization to {\em balanced sparse cut}. On sparse graphs, these problems can be solved in {\em near-linear} time with randomized algorithms, but their previous deterministic counterparts take at least {\em quadratic time}:

(I) The first problem asks whether $k$ vertices can be removed to disconnect the graph for a given parameter $k$. Note that on sparse graphs (when $m=\tilde O(n)$), it can be assumed that $k=O(\polylog(n))$ \cite{NagamochiI92}.\footnote{We use $n$ and $m$ to denote the number of nodes and edges respectively, and use $\tilde O$ to suppress $\polylog(n)$.} 
There has been a long line of work on this problem  (e.g. \cite{Kleitman1969methods,Podderyugin1973algorithm,EvenT75,Even75,Galil80,EsfahanianH84,Matula87,BeckerDDHKKMNRW82,LinialLW88,CheriyanT91,NagamochiI92,CheriyanR94,Henzinger97,HenzingerRG00,Gabow06,NanongkaiSY19}). The problem was recently shown to admit a randomized algorithm that takes $\tilde O(m+ nk^3)$ time \cite{NanongkaiSY19,NanongkaiSY19_linear,ForsterY19}, thus {\em near-linear} time on sparse graphs.
In contrast, the best deterministic algorithm, due to Gabow~\cite{Gabow06}, takes $O(m\cdot(n+\min(k^{5/2},kn^{3/4})))$ time, which is $\tilde O(n^2)$ on sparse graphs.
Note, though, that for such $k$ the $\tilde O(n^2)$ bound dates five decades back to the result of Kleitman \cite{Kleitman1969methods}. Gabow's and the preceding deterministic algorithms (e.g. \cite{HenzingerRG00,Even75,FederM95}) improved over Kleitman's bound only when the input graph is dense enough.  
In fact, no improvement over Kleitman was known even for $k=4$ (the case of $k<4$ was long known to admit near-linear time \cite{Tarjan72,HopcroftT73}). 
See, e.g., \cite[Chapter~15]{Schrijver-book} and \cite{Gabow06} for further surveys.

(II) The second problem asks for a cut that approximately minimizes the {\em conductance}, which is the ratio between the number of cut edges and the volume (the sum of degrees) of the smaller side of the cut (we will make this, as well as other high level definitions, more precise in \Cref{sec:overview}).
A harder version of this problem, the balanced sparse cut problem (\Cref{def:ExpansionApprox}), additionally requires that the two sides of the cut are (approximately) as equal as possible in terms of volumes. 
Our focus is on algorithms for these problems with small approximation
factors in both the cut size and the balance.
With randomization, both versions admit such algorithms with near-linear time complexity
(e.g.~by flow-based algorithms \cite{KhandekarRV09,KelnerLOS14,Sherman13,DBLP:conf/soda/Peng16} or spectral-based algorithms \cite{SpielmanT04,AndersenCL06,OrecchiaV11,OrecchiaSV12}), and such
routines are widely used primitives in efficient graph algorithms.

However, the previous best deterministic algorithms require $\Omega(mn)$
time (e.g. by computing PageRank vectors \cite{AndersenCL06}). 
On dense graphs this bound can be improved to $O(n^\omega)$ for sparsest cut \cite{Alon86,PanC99}, where $\omega< 2.372$ is the matrix multiplication exponent, but not for the balanced version that
underlies most uses of sparse cuts in efficient algorithms.
\footnote{While there is a reduction from sparsest cut to balanced sparsest cut in the approximation algorithms literature, it may need to iterate up to $\Theta(n)$ times, giving a worse overall total than the $O(nm)$ bound.}

Furthermore, approximating balanced sparse cuts on sparse graphs is already understood to be an extremely important graph theoretic primitive~\cite{NanongkaiS17,Wulff-Nilsen17,NanongkaiSW17}.
For any constant $\theta>0$, achieving an $O(n^{1.5-\theta})$-time deterministic algorithm for the balanced sparse cut problem on sparse graphs would imply a major breakthrough in dynamic graph algorithms, namely a polynomial improvement over the classic deterministic algorithm for the dynamic graph connectivity problem \cite{Frederickson85,EppsteinGIN97}.
On sparse graphs, no time bound better than $n^2$ was known for balanced sparse cut or even the easier sparsest cut problem.

\subsection{Our Contributions}\label{sec:intro:contributions}

In this paper, we present the first sub-quadratic time algorithms on sparse graphs for all the above problems. 
Our result for the $k$-vertex connectivity problem is as follows.

\begin{theorem}
        \label{thm:VertexCutMain}
        There is an algorithm that takes an undirected unweighted graph
        with $n$ vertices and $m$ edges, along with a threshold $k$, where $k < n^{1/8}$,
        and outputs a subset $S$ of size less than $k$
        whose removal from $G$ disconnects it into at least two components,
        or that no such subsets exist,
        in time $\oh(m
        + \min\{ n^{1.75} k^{1 + k/2}, n^{1.9} k^{2.5}\})$.%
        \footnote{We use the $\oh(\cdot)$ notation to hide sup-polynomial
                lower order terms.
                Formally $\oh(f(n)) = O(f(n)^{1 + o(1)})$, or that for
                any constant $\theta > 0$, we have $\oh(f(n)) \leq O(f(n)^{1 + \theta})$.
                It can be viewed as a direct generalization of the $\ot(\cdot)$ notation
                for hiding logarithmic factors, and behaves in the same manner.}
\end{theorem}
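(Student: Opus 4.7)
The plan is to combine the local vertex connectivity routine of~\cite{NanongkaiSY19} (for \emph{unbalanced} small vertex cuts) with a recursive application of the subquadratic balanced sparse cut algorithm developed elsewhere in this paper (for \emph{balanced} ones), stitched together by a vertex-version of the Khandekar--Rao--Vazirani cut-matching game~\cite{KhandekarRV09}. By Nagamochi--Ibaraki sparsification~\cite{NagamochiI92} we may assume the input graph has $m = \tilde O(nk)$ edges; the reduction to this case accounts for the $\oh(m)$ additive term in the final bound.

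First, I would run $\LocalVC$ from every vertex with a volume threshold $\nu$ to be optimized: for each starting vertex $v$, in time $\tilde O(\nu \cdot \poly(k))$ this either produces a vertex cut of size $<k$ whose $v$-side has volume at most $\nu$, or certifies that no such cut exists. If any cut is found we are done; otherwise every $<k$-vertex cut must be volume-balanced (both sides have volume exceeding $\nu$). I would then invoke a vertex-version of the cut-matching game: in each of $O(\log n)$ rounds, call the new $m^{1.5+o(1)}$-time balanced sparse cut routine on an appropriate demand graph. If every round succeeds in embedding a balanced matching, the union of the embedded matchings certifies that no balanced $<k$-vertex cut can exist, so combined with the $\LocalVC$ phase we conclude $G$ is $k$-vertex connected. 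If some round fails, it produces a balanced sparse edge cut $(A,B)$ with $|E(A,B)| = \tilde O(k)$ and $\vol(A) \approx \vol(B)$, and I would recurse on the two induced subgraphs, obtaining $O(\log n)$ recursion depth thanks to the balance.

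The delicate case is a $<k$-vertex cut $S$ that \emph{straddles} $(A,B)$, i.e., meets both sides of the recursive split. For this I would use a $\SplitVC$-style enumeration over the possible interactions of $S$ with $E(A,B)$: since $|E(A,B)| = \tilde O(k)$ and $|S| < k$, the number of configurations is at most $k^{O(k)}$, each verifiable by a small max-flow computation. This enumeration is precisely what produces the $k^{1 + k/2}$ factor in the first bound, while a coarser handling via a global $\BaseVC$-style call gives the second bound with factor $k^{2.5}$. The main obstacle will be setting up the cut-matching game so that failure of a round yields a vertex-useful sparse cut---not merely an edge cut---and so that the matching demand graph stays sparse enough that the $m^{1.5+o(1)}$ bound can be invoked recursively without blowing up. Finally, balancing $\nu$ against the cut-matching cost and the recursion overhead produces the two branches of the minimum in $\oh(m + \min\{n^{1.75} k^{1+k/2},\, n^{1.9} k^{2.5}\})$, under the stated assumption $k < n^{1/8}$.
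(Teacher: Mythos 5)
Your proposal assembles the right ingredients (LocalVC, cut-matching, the new balanced sparse cut routine, Nagamochi--Ibaraki sparsification), but the recursion you build from them has a structural gap, and you misattribute where the $k^{O(k)}$ factor comes from.

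The central problem is the recursion step. You propose: find a balanced sparse \emph{edge} cut $(A,B)$ with $|E(A,B)| = \tilde O(k)$ and recurse on the two induced subgraphs $G[A]$ and $G[B]$. But this does not preserve $k$-vertex-connectivity in either direction. A vertex cut $S$ of $G$ with $S \subseteq A$ need not be a vertex cut of $G[A]$ (the two sides of $G - S$ may reconnect through $B$), and a vertex cut of $G[A]$ need not be a vertex cut of $G$ (the edges $E(A,B)$ may reconnect the pieces). The paper instead finds a \emph{vertex} separator $(L,S,R)$ with low vertex expansion and recurses on $H_L$ and $H_R$, which are not induced subgraphs but $G[L \cup S]$ and $G[S \cup R]$ each augmented with a size-$k$ clique fully joined to $S$ (\Cref{thm:vertex-cut-characterization}, \Cref{def:HLHR}). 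This clique gadget is exactly what makes recursion sound: it ensures that a small vertex cut in $H_L$ or $H_R$ corresponds to a genuine cut in $G$, and conversely that a cut of $G$ yields one in $H_L$ or $H_R$ (plus a cheap check that all pairs in $S$ remain $k$-connected). To make the recursion correct you therefore also need to pass from an edge cut to a vertex separator with good vertex expansion, which is precisely what the cut-matching game accomplishes in \Cref{thm:VertexExpansionMain}: it is a \emph{preprocessing} reduction from vertex expansion to balanced edge conductance, not a subroutine interleaved inside the recursion as you describe. Your ``straddle'' enumeration over interactions of $S$ with $E(A,B)$ does not appear to close the gap either: even after fixing which of the $\tilde O(k)$ crossing edges $S$ ``blocks,'' you still need to know the connectivity structure that survives through $B$ when recursing in $A$, which the enumeration does not reconstruct.

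On the running-time accounting: the $k^{1+k/2}$ factor does not arise from a $k^{O(k)}$-size enumeration in the splitting step. SplitVC in the paper is a polynomial-in-$k$ number of max-flow calls (\Cref{thm:split-vc-runtime} gives $O(mk(|S|+k^2))$). The exponential-in-$k$ term comes from the deterministic LocalVC running time $\tilde O(\nu k^k)$ via the adaptation of Chechik et al.~\cite{ChechikHILP17} (\Cref{thm:localvc}), which is the bottleneck in the $h(G) \geq \eta$ branch; the alternative $n^{1.9}k^{2.5}$ bound corresponds to using the $\tilde O(\nu^{1.5}k)$-time LocalVC instead. Without the clique-augmented recursion of \Cref{thm:vertex-cut-characterization} and the vertex-expansion reformulation, the two branches of the minimum in \Cref{thm:VertexCutMain} do not fall out of your scheme.
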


\begin{table}[t]
        \begin{tabular}{>{\centering}p{0.35\columnwidth}|>{\centering}p{0.15\textwidth}|>{\centering}p{0.05\textwidth}|>{\centering}p{0.17\textwidth}|>{\centering}p{0.16\textwidth}}
                Method  & $f(\phi)$  & $\beta$  & Rand. Runtime  & Det. Runtime\tabularnewline
                \hline 
                Spectral / Cheeger Cut\\ \cite{Alon86}  & $O(\phi^{1/2})$  & $n$  & $O(m\phi^{-1/2}\log{n})$  & $O(n^{\omega})$\tabularnewline
                LP/SDP rounding \cite{LeightonR99,AroraRV09,AroraK16}  & $O(\phi\sqrt{\log{n}})$  & $O(1)$  & $O(n^{2})$  & $\poly(n,m)$\tabularnewline
                Local PageRank \cite{AndersenCL06,AndersenCL08,AndersenL08,AndersenP09}  & $O(\phi^{1/2}\log{n})$  & $O(1)$  & $\oh(m\phi^{-1})$  & $O(nm)$\tabularnewline
                Single Commodity Flows \cite{KhandekarRV09,KhandekarKOV07,Sherman09,DBLP:conf/soda/Peng16}  & $O(\phi\sqrt{\log{n}})$  & $O(1)$  & $\oh(m)$  & \textemdash{}\tabularnewline
                \hline 
                (A) This paper [\Cref{thm:pagerank-main}]  & $O(\phi^{1/2}\log n)$  & $O(1)$  & \textemdash{}  & $O(n^{\omega})$\tabularnewline
                (B) This paper [\Cref{cor:OneShotMadry}] & $O(\phi^{1/2}\log^{2.5} n)$ & $O(1)$  & \textemdash{}  & $\oh(m^{\frac{2\omega-1}{\omega}})$\tabularnewline
                (C) This paper [\Cref{cor:Recursive}]  & $\phi^{1/2}n^{o(1)}$  & $O(1)$  & \textemdash{}  & $\oh(m^{1.5})$ \tabularnewline
        \end{tabular}
        
        \caption{Previous Results for Approximating Balanced Cuts. Recall from \Cref{def:ExpansionApprox}
                that $f(\phi)$ is the loss in conductance, and $\beta$ is the loss
                in balance.}
        \label{table:BalCutPrevious} 
\end{table}

Our key tool for obtaining this running time is an $\oh(m^{1.5})$-time algorithm for the balanced sparse cut problem. The is the first subquadratic-time algorithm for both balanced sparse cut and sparsest cut. All our results are summarized in  \Cref{table:BalCutPrevious}. We only explain these results roughly here, and defer detailed discussions to \Cref{sec:overview}. Roughly, an algorithm for the sparsest cut problem is given a parameter $\phi<1$, and must either output that the input graph has conductance at least $\phi$, or output a sparse cut, with conductance $f(\phi)\geq \phi$, where we want $f(\phi)$ to be as close to $\phi$ as possible. (Ideally, we want $f(\phi)=\phi$, but $f(\phi)=O(\phi^{\Omega(1)}n^{o(1)})$ is typically acceptable.) 
For the balanced sparsest cut, we have parameter $\beta$ indicating the balancedness of the output. Those algorithms with $\beta=O(1)$ (i.e. all, but the first) can be used to solve the balanced sparse cut problem. 
In \Cref{table:BalCutPrevious}, the most important parameter to compare our and previous algorithms are the time bounds in the last column. (All algorithms guarantee acceptable values of $f(\phi)$, and keep in mind that the first algorithm does not work for the balanced sparse cut problem.)
We present three algorithms for balanced sparse cut:
\begin{itemize}[noitemsep]
\item The first algorithm (A) guarantees the same $f(\phi)$ as the previous $O(nm)$ time algorithm, but takes $O(n^\omega)$. It thus improves the previous $O(mn)$ time bound for dense graphs and match the $O(n^\omega)$ time bound previously hold only for the sparsest cut problem. 
\item The second algorithm (B) guarantees slightly worse $f(\phi)$ than our first algorithm, but with lower time complexity ($\oh(m^{\frac{2\omega-1}{\omega}})=O(m^{1.578})$). Its time complexity is subquadratic for sparse graphs. 
\item The third algorithm (C) guarantees an even worse $f(\phi)$, but with a even better time bound ($\oh(m^{1.5})$). What is most interesting about this bound is that it lands at a natural stopping point in the sense that polynomially  polynomially improving it (even with a slightly worse $f(\phi)$) would lead to a breakthrough for the dynamic connectivity problem as discussed above. 
\end{itemize}

While there remains significant gaps in the performances of our methods and
their randomized counter parts (which we will discuss in Section~\ref{sec:Related}),
we believe our investigation represents a natural stopping point for a first step
on more efficient deterministic graph cut algorithms. As mentioned earlier, improving our $\oh(m^{1.5})$ bound further would lead to a major breakthrough in dynamic graph algorithms.

Subsequently, a result involving a subset
of the authors of this paper~\cite{ChuzhoyGLNPS19:arxiv}
gave a determinstic algorithm for computing balanced cuts
with $n^{o(1)}$ approximation in time $\oh(m)$.
While this result supersedes our third balanced-cut
algorithm (C) for all values of $\phi$,
its high approximation error of $n^{o(1)}$ means
our first two algorithms (A) and (B)
still give better approximations when $\phi > n^{-o(1)}$.
Furthermore, the $\oh(m^{1.5})$ runtime overhead in approximating
minimum vertex expansion (Theorem~\ref{thm:VertexExpansionMain}) means
the result in~\cite{ChuzhoyGLNPS19:arxiv} does not immediately imply
faster approximate vertex expansion routines.
It can also be checked that the newer
deterministic approximate vertex expansion bounds
in~\cite{ChuzhoyGLNPS19:arxiv}
\footnote{Sections~7.6 and~7.7 of
	{https://arxiv.org/pdf/1910.08025v1.pdf}}
also don't improve the overall running times as stated in
Theorem~\ref{thm:VertexCutMain}.
It remains open to obtain almost-linear time algorithms
for deterministically approximating vertex expansion
(which would imply $k$-vertex connectivity in $\oh(m+n^{1.5})$ time
for constant values of $k$),
or almost-linear time algorithms for deterministically computing
$k$-vertex connectivity.

\paragraph{Techniques.} An interesting aspect of our techniques is
the inter-dependencies between the results.

First, to obtain \Cref{thm:VertexCutMain}, we need a subquadratic-time algorithm for the balanced sparse cut problem, in particular the $\oh(m^{1.5})$-time algorithm.
This is because, based on the deterministic local algorithms of \cite{NanongkaiSY19,ChechikHILP17}, we can construct a deterministic graph partitioning scheme which runs in subquadratic time as long as we have access to a deterministic procedures for approximating the ``vertex expansion'' of a graph.
Then, we construct such procedure by relating the vertex expansion problem to the balanced sparse cut problem via a variant of the cut-matching game \cite{KhandekarKOV07}. 

Secondly, to compute balanced sparse cuts in subquadratic time on sparse graphs, we invoke a deterministic version of the $J$-trees by Madry \cite{Madry10} to reduce to solving the same problem on dense graphs. Here existing algorithms for dense graph are not efficient enough for the a subquadratic time balanced sparse cut algorithm, so we develop a new $O(n^\omega)$-time algorithm for dense graphs. For this, we show how to find balanced sparse cuts from several PageRank vectors {\em without sweeping}; instead we can look at the volumes of some cuts and do binary search. 
Combining this $O(n^\omega)$-time algorithm together with $J$-trees leads to a time bound of $\oh(m^{\frac{2\omega-1}{\omega}})$. By recursive invocations of such balanced sparse-cut routines in conjunction with expander-based graph sparsifications, we finally obtained the final time bound of $\oh(m^{1.5})$.

\section{Overview}\label{sec:overview}

In this section, we briefly outline our techniques.
As from hereon the results will be stated in their fullest formality,
we will introduce notations as we proceed.
A summary of these notations is in Appendix~\ref{sec:notations}.

Our graphs will be represented using $G = (V, E)$,
and we will use $n = |V|$ and $m = |E|$ to denote the number of edges and vertices respectively.
We assume all graphs are connected because otherwise,
we either have a trivial cut, or can run our algorithms
on each of the connected components separately.

\subsection{$k$-Vertex-Connectivity}

We say that $G=(V,E)$ is $k$-vertex-connected (or simply $k$-connected) if there is no set $S \subset V$ of size $|S|<k$ where $G[V-S]$ has more than one connected component.
We need the following notion:
\begin{definition}
        \label{def:separation-triple}
        A separation triple $(L, S, R)$
        is a partition of vertices such that $L,R \neq \empty$ and there is no edge between $L$ and $R$.
The size of $(L, S, R)$ is $|S|$. \end{definition}

Checking whether a graph is $k$-vertex-connected is then equivalent
to finding a separation triple of size less than $k$.
We say that a pair of vertices $x,y$ is $k$-connected if there is no separation triple of size less than $k$ where $x\in L$ and $y\in R$.

Our deterministic algorithm for checking if $G$ is $k$-connected 
is based on a divide-and-conquer approach.
This is done by first exhibiting a sequence of structural results
in Section~\ref{sec:structure-vc}, %
and then providing a divide-and-conquer algorithm
in Section~\ref{sec:vc-algo}.  
The key structural theorem from Section~\ref{sec:structure-vc} is the following:
\begin{restatable}[]{theorem}{StructureVCMain}
        \label{thm:vertex-cut-characterization}
        For any separation triple $(L,S,R)$, consider forming the graph
        $H_L$ by 
        \begin{enumerate} [noitemsep,nolistsep]
                \item removing all vertices of $R$,
                \item replacing $R$ with a clique $K_{right}$ of size $k$, and
                \item adding a biclique between $S$ and $K_{right}$,
        \end{enumerate}
        as well as $H_R$ symmetrically.
        Then $G$ is $k$-connected if and only if
        \begin{enumerate}[noitemsep,nolistsep]
                \item $|S|\ge k$.
                \item Both the graphs $H_{L}$ and $H_R$ are $k$-connected.
                \item Each pair $x, y \in S$, $x$ and $y$ are $k$-connected.
        \end{enumerate}
\end{restatable}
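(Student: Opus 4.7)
The plan is to prove both directions by contrapositive, translating small separators between $G$ and the auxiliary graphs $H_L, H_R$. Condition~1 ($|S|\ge k$) is immediate in the forward direction, since $(L,S,R)$ is itself a separation triple of $G$, and it will also supply a useful foothold in the backward direction.

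For the forward direction, I would assume $G$ is $k$-connected. A pair $x,y\in S$ failing to be $k$-connected immediately yields a separation triple of $G$ of size $<k$, handling condition~3. For condition~2, suppose $H_L$ had a separation triple $(L',S',R')$ with $|S'|<k$. Since $K_{right}$ is a $k$-clique, $K_{right}\setminus S'$ is nonempty and, being a clique, sits entirely on one side---say $L'$. The biclique between $S$ and $K_{right}$ then forces $S\setminus S'\subseteq L'$, and in particular $R'\subseteq L\setminus S'$. I would pull back to $G$ by taking $L'' = (L'\cap(L\cup S))\cup R$, $S'' = S'\cap(L\cup S)$, $R'' = R'$. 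There are no $L$--$R$ edges in $G$, and any edge between $L'\cap(L\cup S)$ and $R'$ inside $L\cup S$ would also be an edge of $H_L$ between $L'$ and $R'$ outside $S'$---forbidden. Hence $(L'',S'',R'')$ is a separation triple of $G$ with $|S''|\le|S'|<k$, contradicting $k$-connectivity. The argument for $H_R$ is symmetric.

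For the backward direction, I would assume 1--3 hold and suppose for contradiction that $G$ has a separation triple $(L',S',R')$ with $|S'|<k$. I would then case split on how $S$ meets $L',R'$. If both $S\cap L'$ and $S\cap R'$ are nonempty, pick $x\in S\cap L'$ and $y\in S\cap R'$; then $(L',S',R')$ certifies that $x,y$ are not $k$-connected, contradicting~3. Otherwise, WLOG $S\cap R' = \emptyset$, so $S\subseteq L'\cup S'$, and condition~1 guarantees $S\cap L'\ne\emptyset$ since $|S|\ge k>|S'|$. In the subcase $R'\cap L\ne\emptyset$, push the separator into $H_L$ using $L'' = (L'\cap(L\cup S))\cup K_{right}$, $S'' = S'\cap(L\cup S)$, $R'' = R'\cap L$. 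Edges in $H_L\setminus S''$ between $L''$ and $R''$ could only come from $K_{right}$ (impossible, since $K_{right}$ only touches $S$ and itself, while $R''\subseteq L$) or from $G[L\cup S]$ (impossible by the same $G$-separation argument used above), violating condition~2. The remaining subcase $R'\subseteq R$ is handled symmetrically in $H_R$ by swapping the roles of $L$ and $R$.

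The hard part will be the slightly delicate bookkeeping needed to show that the added clique-plus-biclique gadgets never introduce a spurious edge across the translated separator. Three invariants will drive every case: a clique of size $k$ cannot be split by a separator of size $<k$; the gadget $K_{right}$ (resp.\ $K_{left}$) only attaches to $S$, so the $L$-internal (resp.\ $R$-internal) edge structure is preserved; and since $G$ has no $L$--$R$ edges, every candidate crossing edge must live in $G[L\cup S]$ or $G[S\cup R]$, which are exactly the subgraphs embedded into $H_L$ and $H_R$. Once these invariants are isolated, each case should reduce to a mechanical check that the proposed partition is disjoint, covers every vertex, and admits no crossing edge.
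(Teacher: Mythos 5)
Your proposal is correct, and it follows the same high-level strategy as the paper (both directions by contrapositive, translating small separators between $G$ and $H_L, H_R$), but the mechanics are cleaner in a way worth pointing out. The paper works with an \emph{optimal} separation triple $(L^*,S^*,R^*)$ in both directions and uses that optimality to prove auxiliary structural lemmas: in one direction, that $S^*\cap R=\emptyset$ (via a neighborhood-size argument comparing $|N(L\cap R^*)|$ to $|S^*|$); in the other, a ``move the clique'' lemma (\Cref{lem:move-clique}) showing the clique can be shifted entirely into one side of an optimal cut. You avoid optimality entirely: given an arbitrary small separation triple of one graph, you simply intersect each part with the relevant vertex set ($L\cup S$ or $S\cup R$), glue on $R$ or $K_{\mathrm{right}}$ as appropriate, and observe that $|S''|\le|S'|<k$ automatically, while the absence of crossing edges follows from your three invariants. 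This unifies the push-forward and pull-back into one symmetric intersect-and-reattach pattern and dispenses with \Cref{lem:move-clique} and the $S^*\cap R=\emptyset$ claim. One tiny implicit point you should make explicit in a full write-up: in the forward direction, after pulling back to $(L'',S'',R'')$ in $G$, if $S''=\emptyset$ (i.e.\ $S'\subseteq K_{\mathrm{right}}$) this would disconnect $G$ with an empty separator, which is already absurd for a connected graph and so yields the contradiction anyway — this is exactly what the paper's \Cref{lem:Sisnotallclique} handles separately, and your argument subsumes it.
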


Note that checking if $x,y$  are $k$-connected can be done using augmenting-path based max-flow algorithms in $O(mk)$ time \cite{EvenT75}.
Therefore, the time for checking  the third condition is small when $k$ and $|S|$ are small. This theorem naturally motivates a divide-and-conquer algorithm where we recurse
on both $H_L$ and $H_R$. In order to reduce the number of recursion levels, it is useful to
find a small $S$ that splits the vertices as even as possible. This motivates the following notion:

\begin{definition}
        \label{def:vertex-expansion}
        The \emph{vertex expansion} of a separation triple $(L,S,R)$ is
        \[
        h\left(L,S,R\right)
        =
        \frac{\left|S\right|}
        {\min\left\{\left|L\right|,\left|R\right|\right\} + \left|S\right|},
        \]
        and the vertex expansion of $G$ is
        $h(G) = \min_{(L,S,R)}h(L,S,R)$. A \emph{$c$-approximation to the minimum vertex expansion} is a separation triple whose vertex expansion at most $c\cdot h(G)$.
\end{definition}

At high level, our recursive algorithm works roughly as follows. We assume we can compute quickly
an $n^{o(1)}$-approximation $(\hat{L},\hat{S},\hat{R})$ to the minimum vertex expansion. If $G$ has low vertex expansion, then we apply the structural theorem on such $(\hat{L},\hat{S},\hat{R})$ and recurse on  
both $H_{\hat{L}}$ and $H_{\hat{R}}.$ Otherwise, $G$ has high vertex expansion. In this case, observe that any separation triple $(L,S,R)$ of size less than $k$ must be very unbalanced, i.e., either $|L|$ or $|R|$ is very small. Now, this is exactly the situation where we can use
local vertex connectivity routine
from~\cite{NanongkaiSY19,ChechikHILP17} for quickly detecting such separation triple $(L,S,R)$.
 By careful implementation of this idea together with some
standard techniques, the performance of our algorithm can be formalized as:

\begin{restatable}[]{theorem}{VCAlgoMain}
\label{thm:VCAlgoMain}
Given a routine that computes $n^{o(1)}$-approximations
to the minimum vertex expansion of an undirected unweighted
graph $G$ with $m$ edges in time $m^{\theta}$ for some $\theta > 1$,
we can compute a $k$-vertex cut,
or determine if none exists, when $k < n^{1/8}$, in time
\[
\oh\left( m + 
\min\left\{
n^{1 + \frac{1}{2}\theta} k^{1 + \frac{k}{2}},
n^{1 + \frac{3}{5}\theta} k^{\frac{8}{5}+\frac{3}{5}\theta}
\right\}
\right).
\]
\end{restatable}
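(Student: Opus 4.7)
The plan is to instantiate the divide-and-conquer strategy outlined in the overview with careful bookkeeping. Given $(G, k)$, the algorithm invokes the assumed $n^{o(1)}$-approximation to minimum vertex expansion in time $O(m^\theta)$, obtaining a separation triple $(\hat L, \hat S, \hat R)$, and branches into three cases.

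\textbf{Case~(i): $|\hat S| < k$.} Output $\hat S$ and halt; $G$ is not $k$-vertex-connected.

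\textbf{Case~(ii): Low expansion.} If $h(\hat L, \hat S, \hat R) \leq \tau$ for a threshold $\tau$ to be tuned, apply \Cref{thm:vertex-cut-characterization}: for each of the $\binom{|\hat S|}{2}$ pairs $x, y \in \hat S$, run an Even--Tarjan $k$-bounded max-flow in $O(mk)$ time, and recurse on the two graphs $H_{\hat L}$ and $H_{\hat R}$. Report a cut from whichever branch finds one; otherwise declare $G$ to be $k$-connected.

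\textbf{Case~(iii): High expansion.} Otherwise $h(G) > \tau / n^{o(1)}$, so any separation triple $(L, S, R)$ of size $|S| < k$ must satisfy $\min(|L|, |R|) = O(k/\tau)\cdot n^{o(1)}$. Sweep over suitable seeds and apply the local vertex-connectivity routine of \cite{NanongkaiSY19, ChechikHILP17}, which on a small side of size $\nu$ runs in time $\ot(\nu \cdot \poly(k))$, for a total cost of $\ot(n \cdot k^{O(1)} / \tau)$.

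\textbf{Recursion analysis.} Because $|\hat S| \geq k$ and $h(\hat L, \hat S, \hat R) \leq \tau$, the approximation guarantee forces $\min(|\hat L|, |\hat R|) \geq |\hat S|(1/\tau - 1)/n^{o(1)}$, so once $\tau$ is chosen small enough the two recursive subgraphs $H_{\hat L}$, $H_{\hat R}$ shrink the vertex count geometrically. The $m^\theta$ vertex-expansion calls at each recursion node then telescope to an $n^{1+\theta/2}$-type aggregate, while the pair-checking work contributes $|\hat S|^2 \cdot mk$ per node and the local routine is charged once at the base of Case~(iii).

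\textbf{Main obstacle.} The central difficulty is that the theorem asserts the minimum of two runtimes, and these correspond to two distinct tunings of the threshold $\tau$ (equivalently, two distinct amortization schemes). The bound $n^{1+\theta/2}k^{1+k/2}$ arises from an aggressive recursion that tolerates separators $\hat S$ of size up to $\Theta(k)$ and consequently pays a branching factor of $k$ across up to $k/2$ levels, which is only affordable when $k$ is extremely small. The bound $n^{1+3\theta/5}k^{8/5+3\theta/5}$ arises from a more conservative split that keeps the pair-checking cost polynomial in $k$ at the price of a larger $n$-exponent, by balancing Cases~(ii) and~(iii) against one another. Working out these two parameter choices so that the three contributions -- the $m^\theta$ vertex-expansion calls, the $|\hat S|^2 \cdot mk$ pair-checks, and the local-routine enumeration -- match up, subject to the side condition $k < n^{1/8}$ that keeps cross terms such as $k^4 n$ below the claimed exponents, is where the bulk of the arithmetic lies. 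I expect the pair-checking term to be the binding constraint in both regimes, which is why the recursion depth versus the size of $\hat S$ is the parameter one is really optimizing over.
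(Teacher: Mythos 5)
Your three-case decomposition matches the paper's Algorithm~\ref{alg:main-vc} at the structural level (base case aside): test vertex expansion, recurse through $H_L,H_R$ via \Cref{thm:vertex-cut-characterization} when expansion is low, and sweep with LocalVC when expansion is high. But your account of where the two running-time bounds come from, and of what the binding constraint is, is wrong in ways that would prevent the arithmetic from closing.

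First, the $k^{1+k/2}$ factor does \emph{not} arise from ``a branching factor of $k$ across up to $k/2$ levels.'' The separator produced by the vertex-expansion routine is bounded by $|\hat S|\le n^{a+o(1)}/2$ (Corollary~\ref{cor:vexp-clean}), not by $\Theta(k)$, and the recursion tree has $\tO(n)$ nodes in total over $O(\log n)$ levels (Lemma~\ref{lem:totalsizen-each-level}, Corollary~\ref{cor:total-node}), not an unbounded branching factor. The $k^{\Theta(k)}$ dependence is imported wholesale from the running time of the Chechik-et-al.-based LocalVC routine, $\ot(\nu\,k^k)$; with $\nu=\Theta(k^2 n^{1-a})$ this gives $t_{\textlocal}=\ot(n^{1-a}k^{2+k})$, and multiplying by the $\tO(n)$ LocalVC calls across the leaves produces the $n^{2-a}k^{2+k}$ term in Corollary~\ref{cor:use-time-ez}. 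Your characterization of the LocalVC cost as $\ot(\nu\cdot\poly(k))$ hides exactly this exponential factor and makes the first bound unexplainable.

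Second, the two bounds in the $\min$ do not correspond to ``two distinct tunings of the threshold $\tau$'' for the same algorithm; they correspond to two \emph{different} LocalVC subroutines. The first bound uses the $\ot(\nu k^k)$ routine (and $a=1/4$ to balance against $n^{\theta+a}$); the second uses the $\ot(\nu^{1.5}k)$ routine (and a $\theta$-dependent choice of $a$). Both regimes do re-tune $a$, but the re-tuning is a consequence of swapping the LocalVC routine, not the mechanism producing the two bounds.

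Third, the pair-checking (SplitVC) term is \emph{not} the binding constraint in either regime. In the paper's notation, $t_{\textcost}=\ot(n^\theta k^\theta + n^{1+a+o(1)}k^2 + nk^4)$ with the $n^\theta k^\theta$ term dominant, and the overall cost is $\ot(t_{\textcost}\,n^{a+o(1)}+n\,t_{\textlocal})$; the optimization balances the $m^\theta$ vertex-expansion calls against the LocalVC calls. SplitVC contributes only lower-order terms (the paper even implements it in $O(mk(|S|+k^2))$ rather than $|S|^2\cdot mk$, precisely to keep it subordinate).

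Finally, you omit Nagamochi--Ibaraki sparsification (\Cref{thm:sparsification}), which the paper applies to $G$ and to every recursive subgraph to enforce arboricity $k$ (hence $m\le nk$). Without it the per-node costs $T_{\texth}(m,n)=m^\theta$ cannot be written as $(nk)^\theta$ and the recursion-tree charging does not telescope. These are not cosmetic corrections: with your account of the costs, neither stated exponent can be derived.
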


Proving this is the main goal of Section~\ref{sec:vc-algo}.
For our eventual value of $\theta = 1.5 + o(1)$ in \Cref{subsection:Bananas},
this gives a running time of
$\oh(m
+ \min\{ n^{1.75} k^{1 + k/2}, n^{1.9} k^{2.5}\})$ respectively.

\subsection{From Vertex Expansion to Edge Conductance}

To approximate  vertex expansion of a graph,
we relate it to its much more well-studied edge analog,
namely \emph{conductance}.
For any graph $G = (V,E)$, the conductance of a cut $S \subset V$ is 
$
\Phi\left(S\right)
=
\frac{\left|E_{G}\left(S,V \setminus S\right)\right|}
{\min\left\{\vol\left(S\right),\vol\left(V \setminus S\right) \right\}}
\label{eq:def:conductance}
$ where $\vol(S)=\sum_{u\in S}\deg u$.
The conductance $\Phi(G)$ of a graph $G$ is the minimum conductance of a subset of vertices, i.e., $\Phi(G)=\min_{\emptyset\neq S\subset V}\Phi(S)$.
Recall that $\Phi(G)$ is NP-hard to compute \cite{LeightonR99}.
 Most efficient algorithms make the following bi-criteria approximation: 

\begin{restatable}{definition}{ExpansionApprox}
        \label{def:ExpansionApprox}
        A subset $S \subseteq V$ of a graph $G=(V,E)$ is a
        \emph{$(\phihat,c)$-most-balanced $\phi$-conductance cut}
        for some parameters $\phi$, $\phihat$, and $c$ if
        \begin{enumerate}
        \im $\vol(S) \leq m$ and $\Phi(S) \le \phi$, and
        \im any set $\Shat \subseteq V$ satisfying $\vol(\Shat) \leq m$
        and $\Phi(\Shat)\le\phihat$ satisfies $\vol(\Shat) \leq c \cdot \vol(S)$.
        \end{enumerate}
        Furthermore, for a function $f$ where $f(\phi)\ge \phi$ and a value $c \ge 1$,
        we say that an algorithm $\mathcal{A}$ is an
        \emph{$(f(\phi), c)$-approximate balanced-cut} algorithm
        if for any graph $G$ and any parameter $\phi > 0$ given as input,
        it either:
        \begin{enumerate}
        	\item certifies that $\Phi(G) \geq \phi$, or
        	\item outputs a $(\phi, c)$-most-balanced $f(\phi)$-conductance cut.
        \end{enumerate}
\end{restatable}

Our algorithmic definitions allow for general functions that transform conductances
because the Cheeger-based algorithms~\cite{Alon86,OrecchiaSV12} take $\phi$ to $1/2$ powers.
A more detailed description of previous graph partitioning algorithms
in terms of this formulation is in Section~\ref{sec:Related}.
This notion of approximation helps us connects all algorithmic component throughout the paper.
First, in Section~\ref{sec:cutmatching}, we show a reduction that,
given a $f(\phi), c)$-approximate balanced-cut routing,
we can obtain an $n^{o(1)}$-approximation to the minimum vertex expansion as we need from the previous section:

\begin{restatable}[]{theorem}{VertexExpansionMain}
        \label{thm:VertexExpansionMain}
        Given any $(f(\phi), c)$-approximate balanced-cut routine
        $\textsc{ApproxBalCut}$ such that
        $f(\phi) \leq \phi^{\xi} n^{o(1)}$ for some absolute constant $0<\xi\le 1$,
        we can compute an $n^{o(1)}$-approximation to the minimum vertex expansion
        on a graph with $n$ vertices and $m$ edges
        by invoking $\textsc{ApproxBalCut}$ a total of $O(c \log{n})$ times,
        each time on a graph with $n$ vertices and maximum degree $O(c \log{n})$,
        plus a further deterministic overhead of $\oh(c m^{1.5})$.
\end{restatable}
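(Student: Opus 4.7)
The plan is to adapt the cut-matching game of Khandekar-Khandekar-Orecchia-Vazirani to the vertex setting, using $\textsc{ApproxBalCut}$ as the (now deterministic) cut player and a deterministic vertex-capacitated max-flow as the matching player. I maintain an auxiliary graph $H=(V,E_H)$ on the original vertex set, initially empty, and over $T=\Theta(c\log n)$ rounds alternate between (a)~asking $\textsc{ApproxBalCut}$ for a balanced sparse cut of $H$, and (b)~routing a near-perfect matching across that cut through the vertex-split gadget $G^{\mathrm{split}}$ of $G$, in which every $v\in V$ is replaced by an internal unit-capacity edge $v_{\mathrm{in}}\to v_{\mathrm{out}}$ and each original undirected edge becomes two infinite-capacity directed edges, so that vertex cuts in $G$ correspond exactly to $s$-$t$ edge cuts in $G^{\mathrm{split}}$.

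In each round I first call $\textsc{ApproxBalCut}(H,\phi_0)$ for a fixed threshold $\phi_0=1/\polylog n$. If the oracle certifies $\Phi(H)\geq\phi_0$, the game ends and I fall back on any trivial separation triple of $G$ (whose expansion is then already within $n^{o(1)}$ of $h(G)$). Otherwise the oracle returns a cut $A\subseteq V$ with $\Phi_H(A)\leq f(\phi_0)$ and $\vol_H(A)\geq \vol_H(V)/(2c)$; I then compute a deterministic $s$-$t$ max-flow in $G^{\mathrm{split}}$ routing one unit from each vertex of $A$ to a partner in $V\setminus A$. If the flow saturates $\Omega(|A|/c)$ source-sink pairs, an integral flow decomposition yields a matching $M$ that I add to $H$. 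Otherwise the min-cut of the flow problem projects back to a balanced separation triple $(L,S,R)$ of $G$ whose expansion is at most $n^{o(1)}\cdot h(G)$ --- this is where the hypothesis $f(\phi)\leq \phi^{\xi}n^{o(1)}$ is invoked to translate the oracle's conductance bound on $H$ into a vertex-expansion bound on $G$ --- and I output it.

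For correctness, if the game never outputs a cut, a re-parameterization of the standard KKOV potential argument (to accommodate the balance loss $c$) shows that after $T=\Theta(c\log n)$ rounds $H$ is an $\Omega(1/\polylog n)$-expander, and the $T$ routings constitute an embedding of $H$ into $G^{\mathrm{split}}$ with vertex congestion $T$, certifying $h(G)\geq 1/n^{o(1)}$. For running time, each round adds at most a matching to $H$, so its maximum degree stays $O(c\log n)$, matching the statement; there are $T=O(c\log n)$ oracle calls total, and each deterministic max-flow on the unit-vertex-capacity graph $G^{\mathrm{split}}$ can be carried out by Even-Tarjan blocking flows in $\oh(m^{1.5})$ time, summing to $\oh(cm^{1.5})$ overhead. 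The main obstacle I anticipate is combining the two sources of multiplicative loss --- the per-round factor $\phi_0^{\xi-1}n^{o(1)}$ from the oracle and the $T$-fold congestion/dilation loss from the embedding --- into a single $n^{o(1)}$ factor in the final expansion approximation, which requires tuning $\phi_0$ as a function of $c$ and re-doing the flow/cut duality underlying the KKOV analysis with a deterministic, lossier cut player that only certifies balance up to factor $c$.
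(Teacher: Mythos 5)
Your proposal is structurally close to the paper's (a KKOV cut-matching game over the vertex-split flow network with $\textsc{ApproxBalCut}$ as the deterministic cut player), but there is a genuine gap in how you use the oracle. You assert that each call to $\textsc{ApproxBalCut}(H,\phi_0)$ returns a cut $A$ with $\vol_H(A)\ge\vol_H(V)/(2c)$. \Cref{def:ExpansionApprox} does not give you this. The ``$(\phi_0,c)$-most-balanced'' guarantee is purely relative: every competitor $\hat S$ with $\Phi_H(\hat S)\le\phi_0$ satisfies $\vol_H(\hat S)\le c\cdot\vol_H(A)$, and this is vacuous when $H$ has no large low-conductance cut. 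That regime is exactly where the cut-matching game is headed: as $H$ accumulates matchings, the remaining low-conductance cuts can become arbitrarily unbalanced, the oracle legitimately returns a tiny $A$, the matching routed across it adds only $o(n/c)$ edges, and the KKOV potential barely moves. As written, the round count is not bounded by $O(c\log n)$, and the ``re-parameterization of the standard KKOV potential argument to accommodate the balance loss $c$'' you invoke does not close the gap --- the balance slack is not a fixed multiplicative factor $c$ but depends on $H$ and can be arbitrary.

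The paper avoids this by not calling $\textsc{ApproxBalCut}$ directly. It first wraps it in the expander-complement reduction of \cite{NanongkaiS17,Wulff-Nilsen17} (\Cref{thm:det cond}), which iteratively peels off low-conductance pieces and upgrades the relative balance guarantee to a trichotomy: on input $H$, either (i) $\Phi(H)\ge\gamma$ is certified, or (ii) a cut of volume $\Omega(m/\log n)$ is returned, or (iii) a small cut $S$ with $\Phi(H[V\setminus S])\ge\gamma$ is returned. Case~(iii) is the ingredient missing from your proof: after one more matching $M_i$ is routed across $(S,V\setminus S)$, the union $W_i = W_{i-1}\cup M_i$ already has $\sigma(W_i)=\Omega(\gamma)$ (proved by splitting an arbitrary test cut $C$ according to whether $|C\cap S|\ge 2|C\setminus S|$), so the game terminates immediately rather than iterating on a graph whose only remaining low-conductance cuts are tiny. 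With this wrapper, KKOV's bound of $O(\log n)$ occurrences of case~(ii) is the only termination argument needed. The near-expander complement step, not a re-parameterized potential function, is what you must add; the rest of your outline --- the vertex-split flow network, the $\oh(m^{1.5})$ deterministic max-flow per round, and the embedding-to-expansion lower bound --- does match the paper.
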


The key to \Cref{thm:VertexExpansionMain} is the cut-matching game framework by Khandekar et
al.~\cite{KhandekarKOV07,KhandekarRV09}.
We observe that the cut-matching game variant by Khandekar, Khot, Orecchia, and Vishnoi~\cite{KhandekarKOV07}
can be seen as a general reduction that allow us to can reduce the problems of approximating various notions of graph expansion (e.g. sparsity, vertex expansion, conductance) to the problem of many computing low conductance
cuts in a $O(c \log n)$-regular graph.

This enables us to concentrate on approximating low conductance
balanced cuts, which is our second main contribution.

\subsection{Approximating Low-Conductance Balanced Cuts}

We develop three $(f(\phi), c)$-approximate balanced-cut routines. The first algorithm, based on computating the PageRank matrix, breaks the $n^3$ barrier on dense graphs and have running time $O(n^\omega)$. We then combine it with the $j$-tree technique of Madry~\cite{Madry10} previously developed for randomized algorithms and obtain an algorithm with running time 
$\oh(m^{\frac{2\omega-1}{\omega}})$ but has slightly worse approximation. This breaks $n^2$ barrier for sparse graphs. Finally, we speed up the running time further to  $\oh(m^{1.5} )$ by recursively invoking the second bound in conjunction with expander-based graph sparsification.

 We start with the formal statement of our first algorithm which is proved in Section~\ref{sec:pagerank}:
\begin{restatable}[]{theorem}{PageRankMain}
        \label{thm:pagerank-main}
        There is an $(O(\phi^{1/2} \log{m}), 10)$-approximate balanced-cut
        algorithm that runs in deterministic $O(n^\om)$ time
        on any multigraph $G=(V,E)$ with $n$ vertices, and any parameter $\phi$.
\end{restatable}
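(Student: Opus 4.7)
The plan is to compute the full personalized PageRank (PPR) matrix $P$ by a single dense matrix inversion in $O(n^{\om})$ time, and to then extract the desired balanced sparse cut from $P$ using only $O(\log m)$ matrix--matrix products, thereby avoiding the per-seed sweep that forces the classical $O(nm)$ bound.

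Concretely, following Andersen--Chung--Lang I would set the teleportation parameter to $\alpha = \Theta(\phi)$ and compute $P = \alpha(I-(1-\alpha)D^{-1}A)^{-1}$; the $v$-th row of $P$ is the PPR vector $p_v$ seeded at $v$. Any low-conductance cut near $v$ appears as a sub-level set $S_{v,t} = \{u : p_v(u)/\deg(u) \geq t\}$, but we do not sweep through all thresholds. Instead, for each of $O(\log m)$ dyadic thresholds $t_j = 2^{-j}$, we form the $0/1$ indicator matrix $T^{(j)}$ with entry $(v,u)$ equal to $\mathbf{1}[p_v(u) \geq t_j \deg(u)]$ in $O(n^2)$ time; a single matrix product $T^{(j)} A (T^{(j)})^{\top}$ then recovers $2|E(S_{v,t_j})|$ as its $v$-th diagonal entry for every seed $v$ simultaneously, and $T^{(j)} d$ recovers $\vol(S_{v,t_j})$. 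Each level costs $O(n^{\om})$, summing to $O(n^{\om})$ over the $O(\log m)$ levels.

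Given these $O(n\log m)$ candidate cuts, I would flip each $S_{v,t_j}$ with volume exceeding $m$ to its complement (with identical conductance), discard any remaining cut whose conductance exceeds the target $O(\phi^{1/2}\log m)$, and output the survivor of largest volume, certifying $\Phi(G)\ge \phi$ if none survives. Correctness rests on the standard ACL-type argument: for any witness $\hat S$ with $\vol(\hat S)\le m$ and $\Phi(\hat S)\le \phi$, a constant-volume fraction of seeds $v\in\hat S$ are good in the ACL sense, meaning some real threshold $t^\star$ yields a sub-level set of conductance $O(\phi^{1/2}\log m)$ and volume $\Omega(\vol(\hat S))$; the nearest dyadic $t_j$ agrees with $t^\star$ up to a factor of two, which is absorbed into the conductance and balance constants.

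The principal technical obstacle is showing that this dyadic rounding loses only a constant factor in both conductance and balance, tight enough to deliver the promised factor $10$. I would argue that PPR mass produced by $\alpha = \Theta(\phi)$ is concentrated so that $\vol(S_{v,t})$ increases by at most a bounded ratio between consecutive dyadic levels, and absorb any remaining slack into the $\log m$ factor of $f(\phi)$. A secondary concern is bit precision: since $(I-(1-\alpha)D^{-1}A)$ has polynomially bounded rational entries, fast matrix inversion costs $O(n^{\om})$ up to polylogarithmic factors, which fall into the stated bound.
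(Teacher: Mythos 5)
Your starting point is the same as the paper's — compute all personalized PageRank vectors at once by inverting $I-(1-\alpha)W$ in $O(n^\om)$ time — but after that your proposal diverges, and the divergence contains two genuine gaps.

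\textbf{Gap 1: dyadic thresholds do not suffice.} The paper explicitly warns that the entire difficulty is extracting a low-conductance sweep cut from each $p_v$ without paying $O(m)$ per seed, and its solution (Algorithm~\ref{alg:binary-search}) is to do a binary search for the exact threshold $t_+\in(t_0,\tau)$, driven purely by \emph{volume} queries and justified by the median-expansion inequality $t-t_{med}\le 6\alpha/|\partial V^p_{\ge t}|$. You instead propose evaluating $O(\log m)$ dyadic thresholds $t_j=2^{-j}$ and computing the cut sizes at those thresholds via $T^{(j)}A(T^{(j)})^\top$. The cut-size computation is correct and clever (the paper avoids it, but your matrix-product trick does run in $O(n^\omega\log m)$), yet restricting to dyadic thresholds is not sound: the good threshold $t^\star$ guaranteed by the Lov\'asz--Simonovits/ACL argument lies somewhere in an $\epsilon$-wide interval and need not be close to a dyadic point, and conductance is neither monotone nor Lipschitz in $t$. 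Your claim that ``the nearest dyadic $t_j$ agrees with $t^\star$ up to a factor of two, which is absorbed'' has no support — a factor-two change in threshold can change the sweep set (and hence its volume and cut size) by arbitrarily large factors (consider a vector with a concentrated ``cliff''). Likewise, the asserted concentration property (``$\vol(S_{v,t})$ increases by at most a bounded ratio between consecutive dyadic levels'') is false in general. The binary search of Algorithm~\ref{alg:binary-search} is not an implementation detail; it is the core of the argument, and skipping it leaves the conductance guarantee unproven.

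\textbf{Gap 2: the balance guarantee is missing.} You propose to filter the candidate cuts by conductance and output the one of largest volume. This silently assumes that some single per-seed sweep cut has volume $\Omega(\vol(\hat S))$, but that is not what the ACL bound (Theorem~\ref{thm:pagerank-start}) gives: a good seed $v\in C_\alpha$ only guarantees a low-conductance cut whose volume is controlled by the size of a particular excess bucket, not by $\vol(\hat S)$. The paper cannot get balance from a single seed either; that is exactly why Algorithm~\ref{alg:near-expander-cut} performs a \emph{merging} step, iteratively unioning cuts $C$ with $\vol(C\setminus S)\ge\vol(C)/2$ until either $\vol(S)\ge m/4$ or $V\setminus S$ is a near-expander, and then Lemma~\ref{lem:ne-mb} converts the near-expander-complement property into the $(\phi,c)$-most-balanced guarantee. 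Taking a maximum over per-seed cuts does not establish a near-expander complement and therefore does not establish balance. This gap is the more fundamental of the two, since it reflects the actual reason why getting the balance constant (here $c=10$) is nontrivial.

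If both gaps were repaired — replacing the dyadic grid with the volume-driven binary search, and replacing max-volume selection with the merging procedure — you would essentially recover the paper's algorithm, with your matrix-multiplication cut-size computation as an alternative (but unnecessary) primitive.
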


This algorithm is a derandomization of the \emph{PageRank-Nibble}
algorithm by Andersen, Chung, and Lang \cite{AndersenCL06}. Roughly speaking,  this algorithm computes the
\emph{PageRank vector }$p_{v}\in\mathbb{R}_{\ge0}^{V}$\emph{ of vertex
$v$ }encoding a distribution of random walk starting at $v$. We first observe that computing the PageRank
vector $p_{v}$ for all $v\in V$ \emph{simultaneously} 
can be easily done in $O(n^{\omega})$ by computing an inverse of some matrix. How to exploits these vector are more subtle and  challenging. 

 A \emph{sweep cut} w.r.t. $p_{v}$ is a cut of the form $V_{\ge t}^{p_{v}}$
where $V_{\ge t}^{p_{v}}=\{u\in V\mid p_{v}(u)\ge t\}$. Checking
if there is a sweep cut $V_{\ge t}^{p_{v}}$ with conductance at most
$\phi$ in can be easily done in  $O(m)$. To do this, we compute $|E(V_{\ge t}^{p_{v}},V-V_{\ge t}^{p_{v}})|$
and $\vol(V_{\ge t}^{p_{v}})$ of \emph{all }$t$ in $O(m)$, by sorting
vertices according their values in $p_{v}$ and ``sweeping'' through
vertices in the sorted order. Unfortunately, spending $O(m)$ time
for each vertex $v$ would give $O(mn)$ time algorithm which is again
too slow.\footnote{To the best of our knowledge, there is no deterministic data structure
even for checking whether $|E(S,V-S)|>0$ in $o(m)$ time, given a
vertex set $S\subset V$. That is, it is not clear how to approximate
$|E(V_{\ge t}^{p_{v}},V-V_{\ge t}^{p_{v}})|$ in $o(m)$ time even
for a fix $t$.}

To overcome this  obstacle, we show a novel way to obtain a sweep
cut \emph{without} approximating the cut size $|E(V_{\ge t'}^{p_{v}},V-V_{\ge t'}^{p_{v}})|$
for any $t'$. We exploit the fact the sweep cut
is w.r.t. a PageRank vector $p_{v}$ and not some arbitrary vector.
This allows us to do a binary search tree for $t$ where the condition
depends solely on the volume $\vol(V_{\ge t}^{p_{v}})$ and not the
cut size $|E(V_{\ge t}^{p_{v}},V-V_{\ge t}^{p_{v}})|$. This is the key to the efficiency of our algorithm.

Next, we in turn use this scheme to speed up the computation of
balanced low-conductance cuts via the $j$-tree constructions
by Madry~\cite{Madry10}.
In Section~\ref{sec:madry}, we show:

\begin{restatable}[]{theorem}{MadryMain}
        \label{thm:MadryMain}
        Given an $(f(\phi), c)$-approximate balanced-cut routine with
        running time $T_{BalCut}(n, m)$,
        along with any integer parameter $k > 0$,
        there is also an $(f(O(\phi \log^3 n)), 10 c)$-approximate balanced-cut
        routine with running time:
        \[
                \oh\left( k \left(m + T_{BalCut}\left(m / k , m \right) \right) \right).
        \]
\end{restatable}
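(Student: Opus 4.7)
The plan is to invoke Madry's $j$-tree machinery to reduce finding a balanced low-conductance cut in $G$ to finding one on a much smaller ``core'' graph of $O(m/k)$ vertices, to which the supplied routine is applied.

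I would first deterministically construct a collection of $\oh(k)$ $j$-trees $H_1, \ldots, H_r$ with $j = m/k$ core vertices apiece, so that together they form a cut-sparsifier of $G$ with conductance distortion $O(\log^3 n)$. Concretely, any balanced low-conductance cut of $G$ is witnessed (up to the stated distortion) by a cut in at least one $H_i$, and conversely every cut of $H_i$ maps back to a cut of $G$ whose conductance is degraded by at most an $O(\log^3 n)$ factor. This step is the deterministic counterpart of Madry's randomized $j$-tree embedding; it plugs a deterministic low-stretch spanning tree construction into Madry's template and then selects core vertices in a deterministic fashion.

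Next, I reduce the balanced-cut problem on each $H_i$ to the same problem on a graph with only $O(m/k)$ vertices. The key structural property of a $j$-tree is that any cut $(S, V \setminus S)$ can be rewritten, without increasing its cut value, so that every non-core vertex lies on the same side as the unique core vertex to which it attaches through the tree portion. Hence searching over cuts of $H_i$ is equivalent to searching over cuts of a weighted graph on its $j$ core vertices, obtained by contracting the tree part and accumulating weights. Feeding this contracted graph, which has $O(m/k)$ vertices and at most $m$ edges, into the supplied $(f(\phi), c)$-approximate balanced-cut routine costs $T_{BalCut}(m/k, m)$ time, and lifting the returned cut back to a cut of $G$ costs an additional $O(m)$ per tree.

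Finally, I take the best cut found across the $\oh(k)$ $j$-trees. The total running time is $\oh(k(m + T_{BalCut}(m/k, m)))$. The conductance guarantee $f(O(\phi \log^3 n))$ follows by composing the $O(\log^3 n)$ distortion of the $j$-tree collection with the $f$ factor of the inner routine, and the balance factor $10c$ absorbs the inner $c$ together with constant overheads from the cut-sparsifier approximation and the lift-back. The main obstacle I anticipate is obtaining the $j$-tree collection deterministically with a worst-case (rather than in-expectation) $O(\log^3 n)$ distortion, and simultaneously ensuring that the ``most-balanced'' guarantee of Definition~\ref{def:ExpansionApprox} is preserved both when we push a cut down to the contracted core and when we lift the resulting cut back up to $G$; this lifting must not shrink $\vol(S)$ by more than a constant factor, which is where the blowup from $c$ to $10c$ must be tracked carefully.
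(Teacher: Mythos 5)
There is a genuine gap in the reduction step. You claim that any cut of the $j$-tree $H_i$ can be rewritten, without increasing its cut value, so that every non-core vertex lies on the same side as its core root, and hence searching over cuts of $H_i$ is equivalent to searching over cuts of the contracted core graph. The cut \emph{value} is indeed non-increasing under this rearrangement, but the \emph{balance} is not preserved: a cut that slices through the middle of a long path hanging off a single core vertex has small cut value and high balance, yet after contracting the path into its root it collapses to the trivial (empty, or totally unbalanced) cut. Since the statement you are proving is about $(\phi, c)$-\emph{most-balanced} cuts, this loss of balance is fatal — the contraction simply cannot witness balanced cuts that live entirely in the forest part of a $j$-tree.

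The paper's proof of \Cref{thm:MadryMain} handles precisely this issue via a case split (\Cref{lem:either}): any near-optimal balanced cut of a $j$-tree $H$ can be massaged into \emph{either} (a) a cut that only cuts forest edges, i.e., a union of subtrees, in which case it is found by a simple linear-time sweep over the tree (Algorithm \textsc{RootedTreeMostBalancedSparseCut}), \emph{or} (b) a cut that only cuts core edges, in which case the supplied routine is applied to the core $H_K$ (with carefully added self-loops to preserve volumes under the $j$-tree embedding, cf. \Cref{clm:6.8}). Algorithm \textsc{MostBalancedSparseCut} computes both candidates and returns the more balanced one. Your proposal collapses this dichotomy to case (b) alone, and would fail on instances where the best balanced cut never touches the core. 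A secondary omission is that the paper first applies the constant-degree expander-split reduction (\Cref{prop:exp split}) before invoking $j$-trees and works with sparsity rather than conductance inside the $j$-trees, since cores can have high degree; your proposal does not address this, though it is a more routine fix than the missing forest-cut case.
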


Our strategy for \Cref{thm:MadryMain} is as follows. First, we transform the graph to have constant degree via a standard reduction. At this point, we switch to the \emph{sparsest cut} problem, since it is easier to work with in the steps to follow. Recall that the sparsity $\sigma(G)$
of a graph $G$ is $\sigma(G)=\min_{S}\frac{|E(S,V-S)}{\min\{|S|,|V-S|\}}$, which is similar to conductance but gives equal ``weight'' to each vertex.
Then, we apply Madry's $j$-tree construction on the (bounded degree) graph for a choice of $j$ depending on $k$, obtaining $k$ many $j$-trees such that there exists a $j$-tree with a near-optimal sparsest cut. Moreover, we show that we can assume that this near-optimal sparsest cut has a specific structure: either it cuts only the tree edges of the corresponding $j$-tree, or it cuts only the core. The former case is handled with a simple dynamic programming without recursion, while the latter requires a recursive low-conductance cut algorithm on the core; the algorithm tries both cases and takes the better option. Our algorithm does this for each of the $k$ many $j$-trees and takes the best cut overall, amounting to $k$ recursive low-conductance cut calls. Finally, since the graph has bounded degree, transitioning from sparsest cut back to low-conductance cut incurs only another constant factor loss in the approximation.

An immediate application of \Cref{thm:MadryMain} gives
\begin{corollary}
\label{cor:OneShotMadry}
        There is an $(O(\phi^{1/2} \log^{2.5}{m}), 100)$-approximate balanced-cut
algorithm that runs in deterministic $\oh(m^{\frac{2 \om - 1}{\om}})$ time
on any multigraph $G=(V,E)$ with $m$ edges,  and any parameter $\phi$.
\end{corollary}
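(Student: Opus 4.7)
The plan is to apply \Cref{thm:MadryMain} directly to the PageRank-based balanced-cut algorithm from \Cref{thm:pagerank-main}, choosing the parameter $k$ to balance the two terms in the resulting running time.

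First I would instantiate \Cref{thm:MadryMain} with the $(O(\phi^{1/2} \log m), 10)$-approximate balanced-cut routine given by \Cref{thm:pagerank-main}, whose running time on graphs with $n$ vertices is $T_{BalCut}(n, m) = O(n^{\omega})$. Plugging this in, \Cref{thm:MadryMain} yields, for any integer $k > 0$, an approximate balanced-cut routine with running time
\[
    \oh\!\left( k \left( m + (m/k)^{\omega} \right) \right)
    = \oh\!\left( k m + \frac{m^{\omega}}{k^{\omega - 1}} \right).
\]
Balancing the two terms by setting $k m = m^{\omega} / k^{\omega-1}$, i.e. $k^{\omega} = m^{\omega - 1}$, gives the choice $k = m^{(\omega-1)/\omega}$. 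Substituting this back yields a total running time of $\oh(m^{1 + (\omega-1)/\omega}) = \oh(m^{(2\omega-1)/\omega})$, as required.

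Next I would track the approximation guarantees. \Cref{thm:MadryMain} transforms an $(f(\phi), c)$-approximate routine into an $(f(O(\phi \log^3 n)), 10 c)$-approximate routine. Starting from $f(\phi) = O(\phi^{1/2} \log m)$ and $c = 10$, the new conductance approximation is
\[
    f(O(\phi \log^3 n))
    = O\!\left( (\phi \log^3 n)^{1/2} \log m \right)
    = O(\phi^{1/2} \log^{2.5} m),
\]
and the new balance factor is $10 \cdot 10 = 100$, matching the statement of the corollary.

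There is no real obstacle here, since both ingredients have already been established; this step is essentially bookkeeping: invoking \Cref{thm:MadryMain} as a black box on \Cref{thm:pagerank-main}, choosing $k$ to balance the runtime terms, and verifying that the composed approximation exponents and logarithmic losses line up as claimed.
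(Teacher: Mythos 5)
Your proposal is correct and follows essentially the same route as the paper: apply \Cref{thm:MadryMain} to \Cref{thm:pagerank-main} with $k = m^{(\omega-1)/\omega}$, which you derive by balancing the two runtime terms while the paper simply states this choice, and both obtain the same $O(\phi^{1/2}\log^{2.5} m)$ conductance loss and balance factor $100$.
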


\BP[]
Given \Cref{thm:MadryMain} the $(O(\phi^{1/2} \log{m}), 10)$-approximate balanced-cut
        algorithm that runs in deterministic $\oh(n^\om)$ time of \Cref{thm:pagerank-main} and setting $k$ to $m^{\frac{\om-1}{\om}}$ give an
        \[
        \left(O\left(\log n\right)\cdot O\left(\phi^{1/2} \log^{1.5}{m}\right),
        10\cdot 10\right)
        =
        \left(O\left(\phi^{1/2} \log^{2.5}{m}\right),
        100\right)
        \]
        -approximate balanced-cut algorithm with running time 
\[
        \oh\left(m^{\frac{\om-1}{\om}}\left(m+\oh((m/m^{\frac{\om-1}{\om}})^\om)\right)\right)\\
        =\oh\left(m^{\frac{\om-1}{\om}}\left(m+(m/m^{\frac{\om-1}{\om}})^\om\right)\right)\\
        =\oh\left(m^{\frac{2\om-1}{\om}}\right).
\]
\EP

Of course, just as in the construction by Madry~\cite{Madry10},
it is tempting to invoke the size reductions given in
Theorem~\ref{thm:MadryMain} recursively.
Such recursions will lead to a larger overhead on conductance,
which in turn factors only into the running time of the
$k$-vertex-cut algorithm as stated in Theorem~\ref{thm:VCAlgoMain}.
However, we obtain a faster running time by using recursion to
speed up the dense case instead.
By combining balanced cuts with graph sparsification.
We perform a 4-way recursion akin to the one used for
random spanning trees in~\cite{DurfeeKPRS17}
to obtain the following result in Section~\ref{sec:recursion}. 

\begin{restatable}[]{theorem}{RecursionMain}
\label{thm:RecursionMain}
Given any $(f(\phi), c)$-approximate balanced-cut routine
$\textsc{ApproxBalCut}$ in time $m^\theta$ for some $1<\theta\le 2$ such that
$f(\phi) \leq \phi^{\xi} n^{o(1)}$ for some absolute constant $0<\xi\le 1$,
we can obtain an $(n^{o(1)} \cdot f(\phi), c )$-approximate
balanced-cut routine with running time
\[
        \oh\left(n^{2\theta-2}m^{2-\theta}\right) = \oh\left(n^2\right).
\]
\end{restatable}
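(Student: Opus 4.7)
The plan is to give a recursive algorithm that bootstraps $\textsc{ApproxBalCut}$ against itself using expander-based sparsification, so that each expensive call of $\textsc{ApproxBalCut}$ is applied to a graph whose edge count has been reduced from $m$ toward $n$. The characteristic shape of the target bound, $\oh(n^{2\theta-2}m^{2-\theta}) = \oh(n^\theta (m/n)^{2-\theta})$, is naturally interpreted as the cost of a single call of $\textsc{ApproxBalCut}$ on an $n$-vertex graph, times a ``sparsification overhead'' $(m/n)^{2-\theta}$ paid on the way down to this sparse surrogate.

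First I would build a deterministic expander decomposition of $G$: a partition $V = V_1 \sqcup \cdots \sqcup V_t$ together with a boundary edge set $H$ of size at most $n^{o(1)} \phi^\star m$ such that $\Phi(G[V_i]) \ge \phi^\star$ for every $i$, for some carefully chosen threshold $\phi^\star = 1/n^{o(1)}$. The decomposition is produced by the standard ``repeatedly extract a balanced sparse piece'' scheme driven by $\textsc{ApproxBalCut}$; the constant-$c$ balance guarantee keeps the scheme's depth at $O(\log n)$, and the assumption $f(\phi) \le \phi^\xi n^{o(1)}$ with absolute constant $\xi$ keeps the conductance loss compounded over all levels inside a single $n^{o(1)}$ factor. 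Each expander component $G[V_i]$ is then replaced by an explicit constant-degree expander on $V_i$ with $O(|V_i|)$ edges, preserving all of its cut values up to $n^{o(1)}$; together with $H$, this yields a surrogate graph $G'$ with $\oh(n)$ edges whose low-conductance cuts correspond, up to $n^{o(1)}$ loss, to low-conductance cuts of $G$. A single call of $\textsc{ApproxBalCut}$ on $G'$ then identifies the desired balanced cut in $\oh(n^\theta)$ time, and lifting back to $G$ costs $O(m)$.

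To reach the bound $\oh(n^{2\theta-2} m^{2-\theta})$ rather than $\oh(m^\theta)$, the balanced cuts that drive the expander decomposition are themselves obtained from the routine being constructed, not from $\textsc{ApproxBalCut}$ directly. Following the 4-way recursion of \cite{DurfeeKPRS17}, each recursive call reduces both $n$ and $m$ geometrically (the two sides of a balanced split plus two sparsification sub-calls account for the branching factor of four), and a standard analysis of the resulting recurrence, using $\theta > 1$ so that the work across levels is controlled by the root, yields $T(n,m) = \oh(n^{2\theta-2} m^{2-\theta})$; the clean form $\oh(n^2)$ is then the worst case obtained by substituting $m \le n^2$. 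The main obstacle I foresee is not the recurrence itself but the bookkeeping of approximation loss: every level introduces both a sparsification error and a multiplicative loss through $f(\phi)$, and keeping the overall degradation inside $n^{o(1)}$ hinges on the $\xi$-th-power form of $f$ and on tuning $\phi^\star$ jointly with the recursion depth so that each level contributes only a $1 + o(1)$ multiplicative factor.
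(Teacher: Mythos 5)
Your first paragraph correctly reconstructs the content of \Cref{lem:SparsifyHandWave}: repeatedly invoke the balanced-cut routine to build an expander decomposition, replace each $\phi^\star$-expander with an explicit constant-degree expander, and obtain a surrogate $G'$ with $\oh(n)$ edges that $n^{o(1)}$-approximates $G$'s cuts. That step is on target, as is the arithmetic observation that $n^{2\theta-2}m^{2-\theta}=n^\theta(m/n)^{2-\theta}$.

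The gap is in your second paragraph, where you propose to get from $\oh(m^\theta)$ down to $\oh(n^{2\theta-2}m^{2-\theta})$ by having ``the balanced cuts that drive the expander decomposition\ldots obtained from the routine being constructed.'' As stated this is circular: to answer a top-level query on $(n,m)$ you first sparsify $G$; sparsification at the top level requires a balanced cut of the full $(n,m)$ graph; and that balanced cut is computed by\ldots sparsifying the full $(n,m)$ graph. Nothing decreases at the top of the recursion, so there is no base case to bottom out on, and the implied recurrence $T(n,m)=\Omega(T(n,m))+\cdots$ does not pin down a running time. Recursing on ``the two sides of a balanced split'' does not rescue this either: a balanced cut controls volume, not vertex count, so it does not give a geometric decrease in $n$, and you still need to compute that first cut on the full graph.

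The paper sidesteps the circularity with a different and essential device: \Cref{alg:4way} does \emph{not} recurse along balanced cuts at all. It partitions $V$ \emph{arbitrarily} into $b$ equal-sized parts $V_1,\dots,V_b$ (so each part has $\approx n/b$ vertices by fiat, no cut computation required), recursively sparsifies each of the $O(b^2)$ subgraphs $G_{i,j}$ on $V_i\cup V_j$ (each on at most $2n/b+1$ vertices), sums the results, and only \emph{then} runs $\textsc{DeterministicSparsify}$ once on that sum. $\textsc{DeterministicSparsify}$ always calls the given $\textsc{ApproxBalCut}$, never the recursively-built routine, so there is no regress. The base case is $m\le bn$, where $\textsc{DeterministicSparsify}$ alone costs $\oh((bn)^\theta)$, and the recursion strictly decreases vertex count; with $b\approx n_0^{1/c}$ the depth is the constant $c$, which is what keeps the compounded $n^{o(1)}$ sparsification error under control. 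That vertex-partition recursion is the missing structural idea in your write-up, and without it the recurrence you gesture at is not well-founded.
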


This result immediately implies a running time of $\oh(n^2)$
for dense graphs, but at the cost of higher approximation factors
compared to the matrix-inverse based one given in
Theorem~\ref{thm:pagerank-main}.
By further combination with previous algorithms,
we obtain our third runtime bound.

\begin{corollary}
\label{cor:Recursive}        

There is an $(\phi^{1/2}n^{o(1)}, 1000)$-approximate balanced-cut
algorithm that runs in deterministic $\oh(m^{1.5})$ time
on any multigraph $G=(V,E)$ with $m$ edges,  and any parameter $\phi$.
\end{corollary}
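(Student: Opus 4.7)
The plan is to chain together the three machineries developed in the paper in a particular order: start from \Cref{cor:OneShotMadry} (the $\oh(m^{(2\omega-1)/\omega})$ routine), feed it into the recursive-sparsification scheme of \Cref{thm:RecursionMain} to obtain an algorithm that is better than $n^2$ on moderately dense graphs, and then apply Madry's $j$-tree reduction \Cref{thm:MadryMain} one more time, tuned so that the recursive core size balances the $k$-overhead at $k = m^{1/2}$.

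Concretely, I would proceed in two steps. First, apply \Cref{thm:RecursionMain} to the $(O(\phi^{1/2}\log^{2.5}m),100)$-approximate balanced-cut routine of \Cref{cor:OneShotMadry}. Here the hypothesis $1 < \theta \le 2$ is met with $\theta = (2\omega-1)/\omega \in (1,2)$, and the hypothesis $f(\phi) \le \phi^{\xi} n^{o(1)}$ is met with $\xi = 1/2$ (the $\log^{2.5} m$ factor is absorbed into $n^{o(1)}$). The conclusion gives a $(\phi^{1/2} n^{o(1)}, 100)$-approximate balanced-cut routine, call it $\mathcal{A}_1$, whose running time can be read off the finer form in \Cref{thm:RecursionMain} as
\[
T_{\mathcal{A}_1}(n,m) \;=\; \oh\!\left(n^{(2\omega-2)/\omega}\, m^{1/\omega}\right).
\]
Second, apply \Cref{thm:MadryMain} to $\mathcal{A}_1$ with parameter $k = m^{1/2}$. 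The theorem invokes $\mathcal{A}_1$ on graphs with $n' = m/k = m^{1/2}$ vertices and $m' = m$ edges, for which
\[
T_{\mathcal{A}_1}(m^{1/2}, m) \;=\; \oh\!\left(m^{(1/2)\cdot(2\omega-2)/\omega}\cdot m^{1/\omega}\right) \;=\; \oh\!\left(m^{(\omega-1)/\omega + 1/\omega}\right) \;=\; \oh(m),
\]
so the overall running time becomes $\oh(k(m + T_{\mathcal{A}_1}(m/k, m))) = \oh(m^{1/2}\cdot m) = \oh(m^{1.5})$. For the quality of the cut, \Cref{thm:MadryMain} turns $f(\phi) = \phi^{1/2} n^{o(1)}$ into $f(O(\phi\log^{3} n)) = O(\phi^{1/2}\log^{1.5} n)\cdot n^{o(1)} = \phi^{1/2} n^{o(1)}$, and the balance factor goes from $c=100$ to $10c = 1000$, giving exactly the $(\phi^{1/2} n^{o(1)},\,1000)$ approximation claimed.

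The only thing requiring genuine care is the algebraic bookkeeping: verifying that the composite running-time exponent lands at $3/2$ and not something larger. One should confirm that $k = m^{1/2}$ is truly the optimal split by solving $\alpha + 1 = \alpha + (1-\alpha)(2\omega-2)/\omega + 1/\omega$ for $k = m^\alpha$, which collapses to $1-\alpha = 1/2$ independently of $\omega$; this is why the construction lands at the ``natural'' $\oh(m^{1.5})$ barrier alluded to in the introduction. All three input hypotheses (the range of $\theta$ in \Cref{thm:RecursionMain}, the shape condition $f(\phi)\le\phi^\xi n^{o(1)}$, and the applicability of \Cref{thm:MadryMain} to the resulting routine) need to be verified on the outputs of the previous step, but none is in danger of failing.
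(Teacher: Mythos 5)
Your proposal is correct and matches the paper's proof step for step: both first apply \Cref{thm:RecursionMain} (with $\xi=1/2$) to the routine of \Cref{cor:OneShotMadry}, then apply \Cref{thm:MadryMain} with $k = m^{1/2}$. The only cosmetic difference is that you track the finer running-time form $\oh(n^{2\theta-2}m^{2-\theta})$ from \Cref{thm:RecursionMain} rather than the simplified $\oh(n^2)$ bound the paper plugs in, but with $n' = m^{1/2}$ both evaluate to $\oh(m)$ and the composition gives the same $\oh(m^{1.5})$.
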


\BP[Proof]
By \Cref{cor:OneShotMadry} and since $\om < 2.38$, we have an $(O(\phi^{1/2} \log^{2.5}{m}), 100)$-approximate balanced-cut
algorithm that runs in deterministic $m^{1.58}$ time
on any multigraph $G=(V,E)$ with $m$ edges, and any parameter $\phi$. We can apply \Cref{thm:RecursionMain} with $\xi=1/2$ on this algorithm to get an $(n^{o(1)}\cdot O(\phi^{1/2}\log^2 m), c)$-approximate balanced-cut routine with running time $\oh(n^2)$.

Then we give the algorithm to \Cref{thm:MadryMain} with $k=m^{0.5}$ to get an $(n^{o(1)}\cdot O(\phi^{1/2}\log^3 m), 1000)$-approximate balanced-cut algorithm with running time 
\[
\oh\left(m^{0.5}\left(m+\oh\left(\left(m/m^{0.5}\right)^2\right)\right)\right)\\
=\oh\left(m^{0.5}\left(m+\left(m/m^{0.5}\right)^2\right)\right)\\
=\oh\left(m^{1.5}\right).
\]
\EP

\subsection{Putting Everything Together}
\label{subsection:Bananas}

We can now combine the pieces to obtain a sub-quadratic
algorithm for vertex connectivity.

\BP[Proof of \Cref{thm:VertexCutMain}.]
By plugging the algorithm from \Cref{cor:Recursive} to \Cref{thm:VertexExpansionMain} with $c=1000$ and $f(\phi)=\phi^{1/2}n^{o(1)}$. This results in an $n^{o(1)}$-approximation to the minimum vertex expansion
on a graph with $n$ vertices and $m$ edges with running time
\[
	O\left(\log {n}\right)
		\cdot
			\oh\left(\left(n\cdot O(\log n)\right)^{1.5}\right)
	+
	\oh\left(m^{1.5}\right)
	=
	\oh\left(m^{1.5}\right).
\]        
Finally, we can use \Cref{thm:VCAlgoMain} with $\theta=1.5+o(1)$ on the minimum vertex expansion routine above, computing a $k$-vertex cut or determining if none exists, when $k<n^{1/8}$, in time 
\[
\oh\left(m
+
\min\left\{n^{1.75}k^{1+\frac{k}{2}}, n^{1.9}k^{2.5}\right\}\right),
\]
which is the desired result of \Cref{thm:VertexCutMain} when $k\ge 2$.
\EP

We remark that while there are many other ways of combining the
various pieces for graph partitioning,
specifically Lemmas~\ref{lem:JTreeMain} and Theorem~\ref{thm:RecursionMain},
we believe it is unlikely for an algorithm built from just these
pieces to obtain a running time of $o(m^{1.5})$.
A (somewhat cyclic) way of seeing this is to consider the gains
of applying of these tools once when starting from an $m^{\theta}$ time
algorithm (for some $1 < \theta < 2$):
Theorem~\ref{thm:RecursionMain} implies that a dense graph can be solved
in time
\[
n^{2 \theta - 2 } m^{2 - \theta},
\]
which when plugged into the $j$-tree recurrence
given by Lemmas~\ref{lem:JTreeMain} gives a running time of
\[
k \cdot \left(m + \left(\frac{n}{k} \right)^{2\theta - 2 }  m^{2 - \theta} \right).
\]

To simplify this, we first consider the sparse case where $n \approx m$,
where the above runtime simplifies to
\[
k \cdot \left(n + \left(\frac{n}{k} \right)^{2\theta - 2 }  n^{2 - \theta} \right)
=
kn + n^{\theta} k^{3 - 2 \theta}.
\]
Observe that if $\theta > 1.5$, then $k^{3-2\theta} < 1$ and we can set $k = \sqrt{n}$ so that the running time is $kn + n^{\theta} k^{3 - 2 \theta} = n^{1.5}$. For general $m$, this approach speeds up an $m^\theta$-time algorithm to an algorithm with $\oh(m^{1.5})$ running time.
On the other hand, if $\theta \le 1.5$, then $k^{3-2\theta} \ge 1$ and so the running time is at least $n^\theta$, which means that setting $k>1$ does not help. This suggests that any further improvements beyond $o(m^{1.5})$ requires new tools.

\section{Related Works}
\label{sec:Related}

In the broadest sense, our results are related to the derandomization
of graph algorithms, which is a wells studied
topic~\cite{Reingold08,KawarabayashiT15,HenzingerRW17,MurtaghRSV17}.
Deterministic algorithms have a wide ranges practical of advantages,
such as the reproducibility of errors, that make them significantly
more preferable in areas such as numerical analysis and high
performance computing.
While there has been extensive work on pseudorandomness and
derandomization~\cite{BabaiFNW91,nisan1994hardness,ImpagliazzoW97,Umans2003pseudo,CarmosinoIS18},
the pursuit of more efficient deterministic graph algorithms
is much more fine grained.
Even if one can show that $BPP \subseteq P$, it is not clear
that any randomized nearly-linear time algorithm can also be
derandomized to run in nearly-linear time.

With a few exceptions such as computing global minimum
cuts~\cite{KawarabayashiT15,HenzingerRW17}, most problems involving
graph cuts have significantly faster randomized algorithms than
deterministic ones.

\subsection{$k$-Vertex Connectivity}

Vertex connectivity has been studied extensively in graph algorithms.
For the $k \leq 3$ case, determinsitic $O(m)$ time algoirthm were given
by Tarjan~\cite{Tarjan72} and Hopcroft and Tarjan~\cite{HopcroftT73},
and were one of the primary motivations for studying depth first searchers.
For $k\geq 4$, there is a long list of algorithms, starting from the
five-decade-old $O(n^2)$-time algorithm of Kleitman~\cite{Kleitman1969methods}
when $k = O(1)$ and $m = O(n)$ and many others for larger values of
$k$ and $m$~\cite{LinialLW86,KanevskyR87,NagamochiI92,HenzingerRG96}. 

However, even for the case of $k = 4$, no sub-quadratic deterministic algorithms are known.
Even with randomization, a subquadratic time algorithm was only given
recently by Nanongkai et al.~\cite{NanongkaiSY19,NanongkaiSY19_linear}.
Their algorithm runs in subquadratic
(i.e. $O(m+n^{2-\theta})$ for some constant $\theta>0$)
time when $k =o(n^{2/3})$, and near-linear time for constant values of $k$.

\subsection{Graph Partitioning}

Algorithms for approximating minimum conductance cuts have been
extensively studied.
A partial list of results that fit into our notion of
$(f(\phi), \beta)$-approximate balanced-cuts from
\Cref{def:ExpansionApprox} is given in \Cref{table:BalCutPrevious}.

This problem (in general case) can be solved in $\tilde{O}(m)$ time
by randomized algorithms (e.g.~by flow-based algorithms \cite{KhandekarRV09,KelnerLOS14,Sherman13,DBLP:conf/soda/Peng16}
or spectral-based algorithms \cite{SpielmanT04,AndersenCL06,OrecchiaV11,OrecchiaSV12}).
However, previous deterministic algorithms still take $\Omega(mn)$
time by, for example, computing PageRank vectors \cite{AndersenCL06}
from every vertex. 
Such a runtime upper bound is not sufficient for a sub-quadratic time vertex-cut algorithm.

We note that, if the balance guarantee is not
needed, it is known how to solve this problem in $O(n^{\omega})$
since the 80's (see e.g. \cite{Alon86}).
One can compute the second eigenvector $v_2$ of the laplacian matrix of
$G$ deterministically in $O(n^{\omega})$ time \cite{PanC99},
and perform a ``sweep cut'' defined by $v_{2}$ in $O(m)$ time.
As observed in \cite{Alon86,JerrumS88}, the returned cut $S$ will have conductance
at most $O(\sqrt{\phi^{*}})$ if there exists a cut with conductance
$\phi^{*}$. However, there is no guarantee about the balance of $S$.
Our running time for approximate balanced cuts matches the bound.
However, the balance guarantee is crucial for efficiently partitioning
graphs: otherwise one could repeatedly remove $O(1)$ sized vertex
subsets, leading to an $\Omega(n)$ factor overhead.

While undirected graph partitioning takes randomized nearly-linear time via reductions
to approximate maximum flow~\cite{Sherman09,DBLP:conf/soda/Peng16},
in general it is a fundamental question whether there is a deterministic almost
linear time algorithm as well.
Prior to this paper, the best deterministic algorithm takes $O(mn)$ time via
PageRank~\cite{AndersenCL06}.
Our algorithms are the first deterministic approximate balanced
cut algorithm with subquadratic runtime,
as well as the first with subcubic runtime on dense graphs.

\section{Structural Properties of Vertex-Cuts}
\label{sec:structure-vc}

Througout this section, we fix a connected graph $G = (V,E)$
and an arbitrary separation triple $(L,S,R)$ in $G$.

In this section show the top-level structural result that
enables recursive $k$-vertex-connectivity algorithms.

\StructureVCMain*

We first formally define the left subgraph $H_L$ of $G$, and right subgraph $H_R$ of $G$ as follows. 

\begin{definition} [$H_L$ and $H_R$] \label{def:HLHR}
Given a connected graph $G$, a  separation triple $(L,S,R)$, and a positive integer $k$, we define two subgraphs $H_L$ and $H_R$ as follows. We define $H_L = (V_L,E_L)$ where 
\begin{align}\label{eq:HL_graph} 
V_L = V_{1,L} \sqcup V_{2,L} \quad \mbox{and} \quad E_L = E_{1,L} \sqcup E_{2,L} \sqcup E_{3,L}, %
\end{align}
where $\sqcup$	denotes disjoint union of sets, and sets in \Cref{eq:HL_graph} are defined as follows. 

\begin{itemize}[noitemsep]
\item $V_{1,L} = L \sqcup S$.
\item $V_{2,L}$ is the set of $k$ new vertices. 
\item $E_{1,L}$ is the set of edges from the  induced subgraph $G[L \sqcup S]$.
\item $E_{2,L} = \{ (u,v) \colon  u \in S, v \in V_{2,L} \} $. 
\item $E_{3,L} = \{ (u,v) \colon  u \neq v, u \in V_{2,L}, \text{ and } v \in V_{2,L} \} $.
\end{itemize}

Similarly, we define $H_R = (V_R,E_R)$ where 
\begin{align}\label{eq:HR_graph} 
V_R = V_{1,R} \sqcup V_{2,R} \quad \mbox{and} \quad E_R = E_{1,R} \sqcup E_{2,R} \sqcup E_{3,R},%
\end{align}
where sets in \Cref{eq:HR_graph} are defined as follows. 
\begin{itemize}[noitemsep]
\item $V_{1,R} = S \sqcup R$.
\item $V_{2,R}$ is the set of $k$ new vertices. 
\item $E_{1,R}$ is the set of edges from the induced subgraph $G[S \sqcup R]$.
\item $E_{2,R}  = \{  (u,v) \colon  u \in V_{2,R}, v \in S \} $. 
\item $E_{3,R} = \{ (u,v) \colon u \neq v, u \in V_{2,R}, \text{ and } v \in V_{2,R} \} $.  \qedhere
\end{itemize}
\end{definition}

\begin{remark}
From \Cref{def:HLHR}, $H_L[V_{2,L}]$ is a clique of size $k$ such that every vertex in $S$ has an edge to all vertices in $V_{2,L}$. Symmetrically, $H_R[V_{2,R}]$ is a clique of size $k$ such that every vertex $V_{2,R}$ has an edge to all vertices in $S$.
\end{remark}

\subsection{Interactions Between Two Separating Triples}

When we consider two separation triples $(L,S,R)$ and $(L',S',R')$, it is useful to draw a standard \textit{crossing diagram} as shown in Figure \ref{fig:crossing-diagram}. For example, the neighbors of a ``quadrant'', e.g. $L \cap R'$ are contained in parts of the two vertex-cuts.   

\begin{figure}[!h]
	\centering
	\includegraphics[width=0.3\linewidth]{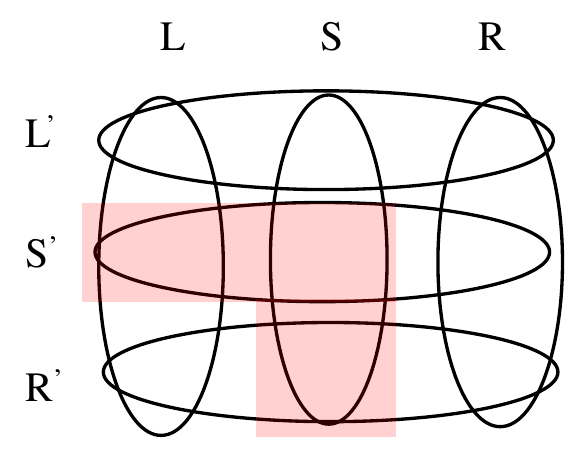}
	\caption{A crossing-diagram for two separation triples $(L,S,R)$ and $(L',S',R')$. The neighbors of $L \cap R'$ is a subset of  $(R' \cap S) \cup (S' \setminus R)$. }
	\label{fig:crossing-diagram}
\end{figure}

\begin{proposition} \label{prop:crossing_neighbors}
	Let $(L',S',R')$ and $(L,S,R)$ be any two  separation triples. We have $N(L \cap R') \subseteq (R' \cap S) \cup (S' \setminus R)$.
\end{proposition}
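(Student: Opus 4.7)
The plan is to verify the proposition by tracking, for an arbitrary neighbor $v$ of some $u \in L \cap R'$, which of the nine ``quadrants'' of the crossing diagram $v$ can possibly land in. Since $u \in L$ and $(L,S,R)$ is a separation triple, the definition forbids edges between $L$ and $R$, so $v \notin R$, i.e., $v \in L \cup S$. Symmetrically, since $u \in R'$ and $(L',S',R')$ is a separation triple, $v \notin L'$, i.e., $v \in S' \cup R'$. Thus $v$ lies in
\[
(L \cup S) \cap (S' \cup R')
= (L \cap S') \cup (L \cap R') \cup (S \cap S') \cup (S \cap R').
\]

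Next, I would use the fact that $N(L \cap R')$ denotes the (external) neighborhood, so $v \notin L \cap R'$; this removes the $L \cap R'$ piece and leaves the three quadrants $(L \cap S') \cup (S \cap S') \cup (S \cap R')$. It then suffices to observe that $S \cap R' = R' \cap S$ is already one of the target sets, while $(L \cap S') \cup (S \cap S') = S' \cap (L \cup S) = S' \setminus R$, because $V = L \sqcup S \sqcup R$. Combining these identifications gives $v \in (R' \cap S) \cup (S' \setminus R)$, which is exactly what we need.

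The argument is essentially a careful bookkeeping on the crossing diagram, so there is no real obstacle; the only subtlety is the convention that $N(\cdot)$ excludes the set itself, without which a neighbor could sit inside $L \cap R'$ and fall outside the claimed target. I would state this convention explicitly at the start of the proof (or handle it by arguing on $N(L \cap R') \setminus (L \cap R')$) to avoid ambiguity, and then conclude in a single short paragraph using the two separation constraints above.
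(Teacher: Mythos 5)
Your proof is correct and uses essentially the same argument as the paper: both exclude $R$ and $L'$ via the two separation-triple constraints and then exclude $L\cap R'$ itself via the convention that $N(\cdot)$ denotes the external neighborhood (which is indeed how the paper defines $N$). The paper phrases the final step as a case split on whether the neighbor lies in $R'$, whereas you decompose via intersections of quadrants; the two are interchangeable bookkeeping.
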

\begin{proof}
	By \Cref{def:sep-triple}, there is no edge between $L$ and $R$. Also, there is no edge between $L'$ and $R'$. Thus, a neighbor $u$ of $L \cap R'$ cannot be in $L' \cup R$ or in $L \cap R'$. Therefore, if $x \in R'$, then $x \in R' \cap S$, and if $x \not \in R'$, then $x \in S' \setminus R$. 
\end{proof}
\begin{theorem} [\cite{NagamochiI92}]\label{thm:sparsification}
	Given an undirected graph $G = (V,E)$, there is an $O(m)$-time algorithm that partitions $E$ into a sequence of forests $F_k,  k = 1, \ldots, n $ such that  the forest subgraph $H_k = (V, \bigcup_{i=1}^kF_i)$ is $k$-connected if and only if $G$ is $k$-connected. In addition, any vertex set of size $< k$ is a vertex-cut in $G$ if and only if it is a vertex-cut in $H_k$. Furthermore, $H_k$ has aboricity $k$, meaning that $|E(S,S)| \leq k|S|$ for any subset $S \subseteq V$.  
\end{theorem}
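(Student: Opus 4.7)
My plan is to prove this classical sparsification theorem via the Nagamochi--Ibaraki maximum-adjacency scan. Start by picking an arbitrary $v_1 \in V$ and, for $i=1,\ldots,n-1$, select $v_{i+1}$ among the unscanned vertices to maximize $|N(v_{i+1}) \cap \{v_1,\ldots,v_i\}|$. When $v_{i+1}$ is scanned, process its back-edges $(v_{i+1},v_j)$ with $j \le i$ in increasing order of $j$ and assign each such edge the smallest positive integer $\ell(e)$ not yet used as a label on any previously labeled edge incident to $v_{i+1}$. Define $F_k := \{e \in E : \ell(e)=k\}$ and $H_k := (V, \bigcup_{i=1}^{k} F_i)$. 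Using a bucket priority queue keyed on $|N(\cdot) \cap \text{scanned}|$ together with per-vertex bitsets of used labels, the entire construction runs in $O(m+n)$ time; this is the easy part and I would handle it at the end.

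Next I would establish the two structural facts in increasing order of difficulty. First, each $F_k$ is a forest: by the labeling rule, when $e=(v_{i+1},v_j)$ receives label $k$, the label $k$ is not used by any edge at $v_{i+1}$ connecting to $\{v_1,\ldots,v_i\}$; orienting every edge toward its later-scanned endpoint, each vertex $v_{i+1}$ has in-degree at most one in $F_k$, so $F_k$ is a forest rooted at the first vertex in which each of its tree appears. In particular $|F_k| \le n-1$. Since $H_k = \bigcup_{i\le k} F_i$ is a union of $k$ forests, any induced subgraph $H_k[S]$ has at most $k(|S|-1) \le k|S|$ edges, giving the arboricity bound.

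The hard part, and the main obstacle, is the cut-preservation statement: for any $S \subseteq V$ with $|S|<k$, the connected components of $H_k - S$ coincide with those of $G - S$. The ``$\Leftarrow$'' direction is immediate since $H_k \subseteq G$. For ``$\Rightarrow$'' it suffices to show that every edge $e = (x,y) \in E \setminus E(H_k)$ has its endpoints in the same connected component of $H_k - S$. Suppose $e$ was considered when scanning $v_{i+1} = y$ with $x=v_j$; since $e$ received a label $> k$, at that moment $v_{i+1}$ already had, for each label $t \in \{1,\ldots,k\}$, some earlier-labeled edge $e_t = (v_{i+1}, v_{j_t})$ with $j_t < j$ incident to it. By the maximum-adjacency choice that made $v_{i+1}$ the next scanned vertex, an analogous profile of $k$ labels must already have been ``present'' at $v_j = x$ by the time $x$ itself was scanned. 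The plan is to combine these two families of edges with a backward induction on the scan order to exhibit $k$ internally vertex-disjoint $x$-$y$ paths in $H_k$; Menger's theorem then forces any separator of $x$ from $y$ in $H_k$ to have size at least $k$, so no set $S$ of size $<k$ can split $x$ from $y$. I expect the bookkeeping in this exchange/rerouting step to be the delicate point, since one must simultaneously track the first $k$ labels and the max-adjacency witness vertices, but the inductive invariant ``every non-$H_k$ edge is $k$-vertex-connected in $H_k$'' is preserved by the labeling rule.

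Combining the three pieces, any vertex set $S$ of size $<k$ separates a pair $u,v$ in $G$ iff it separates them in $H_k$, which immediately gives the $k$-connectivity equivalence and finishes the proof, with total time $O(m)$ dominated by the scan.
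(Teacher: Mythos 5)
The paper cites this theorem from Nagamochi--Ibaraki and uses it as a black box, so there is no in-paper proof to compare against; I evaluate your argument on its own terms. Your construction --- the maximum-adjacency scan and labeling each back-edge by its rank among the newly scanned vertex's scanned neighbors --- is exactly the Nagamochi--Ibaraki forest decomposition, and the forest property, arboricity bound, and $O(m)$ implementation are all handled correctly. One small slip in the setup: the max-adjacency choice that gives $x = v_j$ at least $k$ back-edges in $H_k$ is the one made at step $j$ (when $v_j$ was selected it had at least as many scanned neighbors as the then-unscanned $v_{i+1}$, which at that moment already had at least $k$), not the choice that made $v_{i+1}$ the next vertex.

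The genuine gap is the sentence claiming that a ``backward induction on the scan order'' combines the two families of back-edges into $k$ internally vertex-disjoint $x$-$y$ paths in $H_k$. That is the entire content of the cut-preservation statement, and it is not substantiated. Knowing that $x$ and $y$ each have $k$ incident $H_k$-edges to earlier-scanned vertices does not by itself yield $k$ disjoint paths: the two neighbor sets $\{v_{l_1},\ldots,v_{l_k}\}$ and $\{v_{j_1},\ldots,v_{j_k}\}$ generally differ; the invariant ``every non-$H_k$ edge is $k$-connected in $H_k$'' applies only to pairs joined by a $G$-edge and so cannot be used to route between $v_{l_t}$ and $v_{j_t}$, which need not be adjacent in $G$; and any naive pairing of the two families produces paths that can share internal vertices. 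The actual Nagamochi--Ibaraki argument does not exhibit disjoint paths at all: it fixes a candidate separator $S$ with $|S| < k$ and uses structural properties of the scan forests (roughly, that each $F_t$ restricted to the appropriate prefix of the scan order stays connected) to show directly that no non-$H_k$ edge can straddle two components of $H_k - S$. As written, the proposal leaves the crux of the theorem unproved.
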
 

\subsection{Proof of \Cref{thm:vertex-cut-characterization}: Part 1 (Sufficiency)}
We first show that the conditions are sufficient.
To do this, we show the contrapositive: if $G$ has a vertex-cut of size smaller than $k$,
then at least one of the conditions in \Cref{thm:vertex-cut-characterization} is false.

If either condition 3 or 4 is false, then we are done.

Otherwise, we show that $H_L$ or $H_R$ is not $k$-connected. 
 
We now assume that $G$ has a vertex cut of size smaller than $k$. This means $\kappa_G < k$. We denote $(L^*,S^*,R^*)$ as an optimal separation triple. Note that $|S^*| = \kappa_G \leq k-1$. 

\begin{claim}
 We have  $S \cap L^* = \emptyset$  or $S \cap R^* = \emptyset$. 
\end{claim}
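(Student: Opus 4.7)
The plan is to prove the claim by contradiction. We are in the subcase of the sufficiency proof where conditions 1 and 3 of Theorem~\ref{thm:vertex-cut-characterization} hold (so $|S|\ge k$ and every pair $x,y\in S$ is $k$-connected in $G$), and we have fixed an optimal separation triple $(L^*,S^*,R^*)$ with $|S^*|=\kappa_G\le k-1$. We want to show $S\cap L^*=\emptyset$ or $S\cap R^*=\emptyset$.

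Suppose for contradiction that both $S\cap L^*\neq\emptyset$ and $S\cap R^*\neq\emptyset$. Pick any $x\in S\cap L^*$ and $y\in S\cap R^*$. Because $(L^*,S^*,R^*)$ is a partition of $V$, the sets $L^*$ and $R^*$ are disjoint, so $x\neq y$. Moreover, $(L^*,S^*,R^*)$ is itself a separation triple of $G$ with $x\in L^*$, $y\in R^*$, and $|S^*|<k$. By the definition of pairwise $k$-connectivity recalled just before Definition~\ref{def:separation-triple}, this directly witnesses that $x$ and $y$ are \emph{not} $k$-connected in $G$.

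However, both $x$ and $y$ lie in $S$, so by condition 3 of Theorem~\ref{thm:vertex-cut-characterization} (which we are assuming to hold in this subcase) the pair $x,y$ must be $k$-connected. This is a contradiction, so at least one of $S\cap L^*$ and $S\cap R^*$ must be empty, proving the claim.

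I do not anticipate any real obstacle here: the argument is a one-step application of the definition of pairwise $k$-connectivity combined with the assumption on pairs in $S$. The only subtlety is confirming $x\neq y$, which follows immediately from disjointness of the parts of the partition $(L^*,S^*,R^*)$. No use of Proposition~\ref{prop:crossing_neighbors} or Theorem~\ref{thm:sparsification} is needed for this particular claim; those will presumably be invoked in the subsequent steps that actually construct a small vertex cut in $H_L$ or $H_R$.
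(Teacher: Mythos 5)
Your proof is correct and is essentially the same argument as in the paper: pick $u\in S\cap L^*$, $v\in S\cap R^*$, observe that both lie in $S$ so they are $k$-connected by assumption, yet $S^*$ with $|S^*|<k$ separates them, a contradiction. The only cosmetic difference is that you explicitly note $u\neq v$ and phrase the contradiction via the definition of pairwise $k$-connectivity rather than via $\kappa_G(u,v)=\kappa_G<k$; the substance is identical.
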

\begin{proof}
Suppose otherwise that   $S \cap L^* \not = \emptyset$ and $S \cap R^* \not = \emptyset$. There exists $u \in S \cap L^*$, and $v \in S \cap R^*$. Hence, $u \in S$ and $v \in S$, and so $\kappa_G(u,v) \geq k$. On the other hand, $u \in L^*$ and $v \in R^*$. Therefore, $\kappa_G(u,v) = \kappa_G < k$, which is a contradiction.
\end{proof} 

We now assume WLOG that $S \cap R^* = \emptyset$.  The case $S \cap L^* = \emptyset$ is similar.

\begin{claim} \label{lem:existsxy}
 There exists a vertex $x \in S \cap L^*$, and another vertex $y \in  (L\sqcup R) \cap R^*$.    
\end{claim}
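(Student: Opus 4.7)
The plan is to obtain both vertices by straightforward pigeonhole-style counting, using the assumed condition $|S|\ge k$ together with the optimality bound $|S^*|<k$ and the WLOG assumption $S\cap R^*=\emptyset$.

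First, I would produce $x$. Since $S\cap R^*=\emptyset$ and $V=L^*\sqcup S^*\sqcup R^*$, every vertex of $S$ lies in $L^*\cup S^*$, hence
\[
|S\cap L^*|\;\ge\;|S|-|S^*|\;\ge\;k-(k-1)\;=\;1,
\]
which immediately produces the desired $x\in S\cap L^*$. Note that this step is exactly where we use the condition $|S|\ge k$ (the first condition of \Cref{thm:vertex-cut-characterization}) together with the fact that $(L^*,S^*,R^*)$ is an optimal separation triple witnessing $\kappa_G<k$.

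For $y$, I would use the definition of a separation triple, which guarantees $R^*\neq\emptyset$. Combined with $S\cap R^*=\emptyset$, this forces $R^*\subseteq L\cup R$, so any element of $R^*$ serves as $y\in (L\sqcup R)\cap R^*$. Since $V=L\sqcup S\sqcup R$ is a partition, membership in $L\cup R$ is the same as non-membership in $S$, so the argument is purely set-theoretic.

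I do not foresee any real obstacle: both parts are one-line counting / partition arguments, and everything needed is already in place (the partition structure of both separation triples, the size bounds $|S|\ge k>|S^*|$, the WLOG reduction to $S\cap R^*=\emptyset$, and the nonemptiness of $R^*$ built into \Cref{def:separation-triple}). The only thing worth double-checking while writing is that we are allowed to invoke $|S|\ge k$ here, which is fine because the outer contrapositive proof has already reduced to the case in which conditions $|S|\ge k$ and ``every pair in $S$ is $k$-connected in $G$'' both hold, and we are trying to derive a failure of the remaining condition (namely that one of $H_L$, $H_R$ is not $k$-connected).
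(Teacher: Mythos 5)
Your proof is correct and follows essentially the same pigeonhole/partition argument as the paper: $x$ is obtained from $|S|\ge k>|S^*|$ together with $S\cap R^*=\emptyset$, and $y$ from $R^*\neq\emptyset$ together with $S\cap R^*=\emptyset$. No gap.
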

\begin{proof}
 Since $|S| \geq k > |S^*|$ and $S \cap R^* = \emptyset$, there exists a vertex $x \in S \cap L^*$.   Also, since $S \cap R^* = \emptyset$, we have $R^* \cap ( L \sqcup R) \neq \emptyset$. In particular, $L \cap R^* \neq \emptyset$, or $R \cap R^* \neq \emptyset$.
\end{proof}

By \Cref{lem:existsxy}, $y \in L \cap R^*$ or $ y \in R \cap R^*$.  We assume WLOG that $y \in L \cap R^*$. The other case is similar. So far, we have that 
\begin{align}\label{eq:sofar_1}
S \cap R^* = \emptyset, \quad x \in S \cap L^*, \quad  \mbox{ and  } \quad  y \in L \cap R^*.
\end{align}

Figure \ref{fig:crossing-diagram2} shows the corresponding crossing diagram from \Cref{eq:sofar_1} with additional facts from the following claim.

\begin{figure}[!h]
\centering
\includegraphics[width=0.75\linewidth]{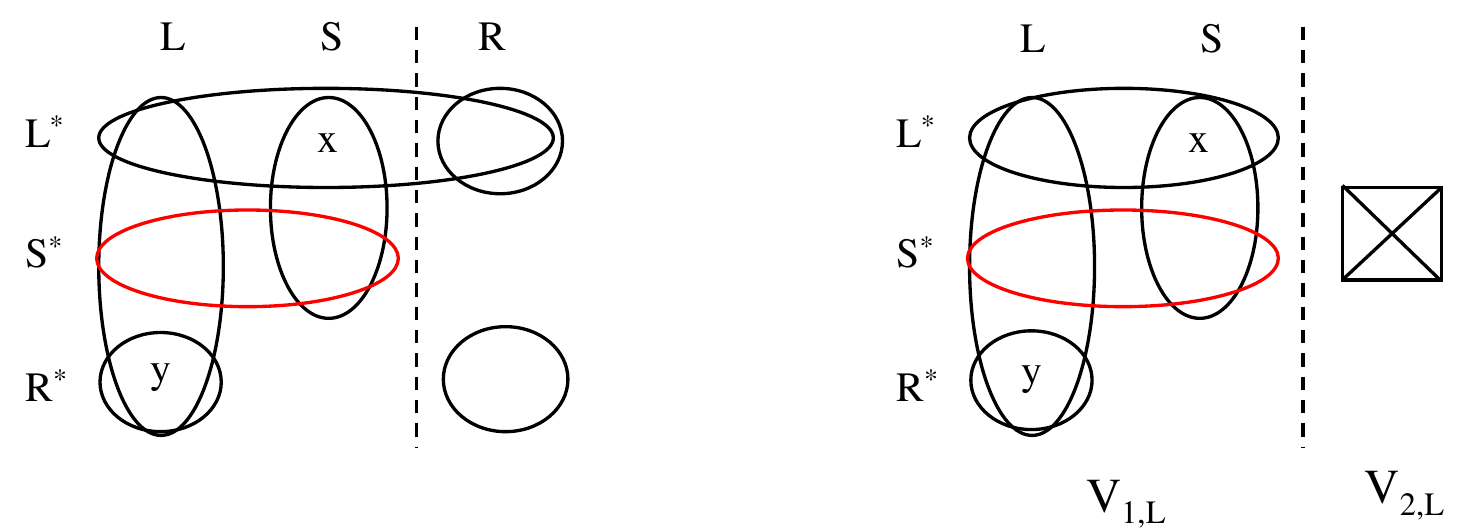}
  \caption{A crossing-diagram for two separation triples $(L,S,R)$ and $(L^*,S^*,R^*)$ before and after transformation from $G$ to $H_L$.}
  \label{fig:crossing-diagram2}
\end{figure}

\begin{claim} \label{lem:collection-of-facts-G} For the two separation triples $(L,S,R)$ and $(L^*, S^*, R^*)$ in $G$,
\begin{enumerate}[noitemsep,nolistsep]
\item \label{item:fact1} $N(L \cap R^*) \subseteq S^* \setminus R$,
\item \label{item:fact2} $S^*$ is an $(x,y)$-vertex-cut in $G$, and
\item \label{item:fact3} $S^* \cap R = \emptyset$. 
\end{enumerate}
\end{claim}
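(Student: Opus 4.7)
The plan is to prove the three facts in order, leveraging the setup $x \in S \cap L^*$, $y \in L \cap R^*$, $S \cap R^* = \emptyset$ already established above, and crucially using that $S^*$ is a \emph{minimum}-size separator so that $|S^*| = \kappa_G$.

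For item~\ref{item:fact1}, the plan is to directly instantiate \Cref{prop:crossing_neighbors} on the two triples $(L,S,R)$ and $(L^*,S^*,R^*)$ (matching $(L',S',R')$ with $(L^*,S^*,R^*)$), which gives $N(L \cap R^*) \subseteq (R^* \cap S) \cup (S^* \setminus R)$. Since $S \cap R^* = \emptyset$, the first term vanishes and the claimed inclusion follows. For item~\ref{item:fact2}, I would argue directly from the definition of a separation triple: $x \in L^*$ and $y \in R^*$ sit on opposite sides of $(L^*, S^*, R^*)$, and because there are no edges between $L^*$ and $R^*$, every $x$--$y$ path in $G$ must traverse a vertex of $S^*$, which is exactly the statement that $S^*$ is an $(x,y)$-vertex-cut.

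The main obstacle is item~\ref{item:fact3}, where I expect the minimality of $|S^*|$ to play the decisive role. I plan to proceed by contradiction: set $A := L \cap R^*$, which is non-empty since $y \in A$, and suppose toward contradiction that $S^* \cap R \neq \emptyset$. By item~\ref{item:fact1}, $|N(A)| \leq |S^* \setminus R| \leq |S^*| - 1$. Next I would check that $N(A)$ is itself a valid vertex separator of $G$: removing $N(A)$ disconnects $A$ from $V \setminus (A \cup N(A))$, and this latter set is non-empty because $R$ is disjoint from $A$ (since $A \subseteq L$ and $L \cap R = \emptyset$) and disjoint from $N(A)$ (since $N(A) \subseteq S^* \setminus R$ by item~\ref{item:fact1}), so $R \subseteq V \setminus (A \cup N(A))$ with $R \neq \emptyset$ by definition of the separation triple $(L,S,R)$. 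This produces a vertex separator of $G$ of size strictly less than $|S^*| = \kappa_G$, contradicting the optimality of $(L^*, S^*, R^*)$; hence $S^* \cap R = \emptyset$.
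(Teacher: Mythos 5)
Your proposal is correct and takes essentially the same route as the paper: items 1 and 2 match the paper verbatim, and for item 3 both arguments rest on the observation that $N(L\cap R^*)\subseteq S^*\setminus R$ is a vertex separator of $G$ (so has size at least $|S^*|=\kappa_G$), which forces $|S^*\setminus R|=|S^*|$. The only cosmetic difference is that the paper writes the chain of inequalities $|S^*|\le |N(L\cap R^*)|\le |S^*\setminus R|\le |S^*|$ directly, whereas you phrase it as a contradiction; you are also a bit more careful than the paper's terse "is an $(x,y)$-vertex-cut" in verifying that $N(L\cap R^*)$ genuinely separates a nonempty $A$ from a nonempty remainder by exhibiting $R$ on the other side.
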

\begin{proof}
We first show that  $N(L \cap R^*) \subseteq S^* \setminus R$.  By \Cref{prop:crossing_neighbors}, $N(L \cap R^*) \subseteq (R^* \cap S) \sqcup (S^* \setminus R)$. By \Cref{eq:sofar_1}, $S \cap R^* = \emptyset$. Therefore, $N(L \cap R^*) \subseteq  S^* \setminus R$. Next, $S^*$ is an $(x,y)$-vertex-cut in $G$. Since $ x\in S \cap L^*$ and $y \in L \cap R^*$, $x \in L^*$ and $y \in R^*$. Therefore. the claim follows.%

We now show that  $S^* \cap R = \emptyset$.  Since $N(L \cap R^*)$ is an $(x,y)$-vertex-cut in $G$, and $S^*$ is the smallest vertex-cut, we have $|N(L \cap R^*)| \geq |S^*|$. Since $N(L \cap R^*)  \subseteq  (S^* \setminus R)  \subseteq S^*$, $ |N(L \cap R^*)| \leq |S^* \setminus R| \leq |S^*|$. Therefore, we have $|S^*| \leq   |N(L \cap R^*)| \leq |S^* \setminus R| \leq |S^*|.$ In particular, $|S^* \setminus R| = |S^*|$.
\end{proof}

It remains to show that $S^*$ is also a vertex-cut in $H_L$ as suggested by Figure \ref{fig:crossing-diagram2}. We now make the argument precise. 
\begin{claim} \label{claim:preserve-HL}
After transformation from $G$ to $H_L$ by \Cref{def:HLHR}, vertices $x$ and $y$ are  in $H_L$. The set $S^*$ and $L \cap R^*$ do not change. In other words, for the left-subgraph $H_L$, we have, 
\begin{enumerate}[nolistsep, noitemsep]
\item  \label{item:HL1} $x \in V_{1,L}$,
\item \label{item:HL2} $y \in V_{1,L}$,
\item \label{item:HL3} $S^* \subseteq V_{1,L}$, and
\item \label{item:HL4} $L \cap R^* \subseteq V_{1,L}$. 
\item \label{item:HL5} $N(L \cap R^*) \subseteq V_{1,L}$. 
\end{enumerate}
\end{claim}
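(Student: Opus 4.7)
The plan is to verify each of the five inclusions by direct definition-chasing, leveraging the facts accumulated in Equation~(\ref{eq:sofar_1}) and Claim~\ref{lem:collection-of-facts-G}. The key observation is that none of the objects of interest---the vertices $x$ and $y$, the optimal cut $S^*$, and the set $L \cap R^*$ together with its neighborhood---intersects the side $R$ of the triple $(L,S,R)$. Since $V_{1,L} = L \sqcup S$ is precisely the complement of $R$ in $V$, every such object is automatically contained in $V_{1,L}$ and is therefore untouched by the replacement of $R$ with the new clique $V_{2,L}$ prescribed by \Cref{def:HLHR}.

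Concretely, I would handle the items in the following order. Items~\ref{item:HL1}, \ref{item:HL2}, and \ref{item:HL4} are immediate from Equation~(\ref{eq:sofar_1}): we have $x \in S \cap L^* \subseteq S$, $y \in L \cap R^* \subseteq L$, and $L \cap R^* \subseteq L$, each of which sits inside $V_{1,L} = L \sqcup S$. Item~\ref{item:HL3} follows from the partition $V = L \sqcup S \sqcup R$ combined with Claim~\ref{lem:collection-of-facts-G}\ref{item:fact3}, which gives $S^* \cap R = \emptyset$, so $S^* \subseteq L \sqcup S = V_{1,L}$. Finally, Item~\ref{item:HL5} uses Claim~\ref{lem:collection-of-facts-G}\ref{item:fact1}, namely $N(L \cap R^*) \subseteq S^* \setminus R \subseteq S^*$; combined with Item~\ref{item:HL3} this yields $N(L \cap R^*) \subseteq V_{1,L}$.

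There is no genuine obstacle here. The content of this claim is just a bookkeeping check that the construction of $H_L$ preserves every vertex relevant to the argument that follows (where $S^*$ will be shown to remain an $(x,y)$-cut in $H_L$). The real substance lies in the preceding Claim~\ref{lem:collection-of-facts-G}, especially the statement $S^* \cap R = \emptyset$, which is precisely what makes the $G \to H_L$ transition harmless for every object in play. Once that is in hand, the present claim is essentially a tautology, and I would expect the written-out proof to occupy only a handful of lines.
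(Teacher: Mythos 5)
Your proposal is correct and follows essentially the same route as the paper's own proof: items 1, 2, and 4 directly from Equation~(\ref{eq:sofar_1}), item 3 from $S^* \cap R = \emptyset$ (Claim~\ref{lem:collection-of-facts-G}, part~\ref{item:fact3}), and item 5 by chaining $N(L \cap R^*) \subseteq S^*$ (part~\ref{item:fact1}) with item 3. The framing observation that $V_{1,L} = L \sqcup S$ is exactly $V \setminus R$, so everything reduces to showing the relevant sets avoid $R$, is also the operative idea in the paper's argument, just stated more explicitly.
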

\begin{proof}
By \Cref{def:HLHR}, for any node $v$ in $G$, if $v \in L \sqcup S$, then $v \in V_{1,L}$. We will use this fact throughout the proof. We now show first two items. By \Cref{eq:sofar_1}, $ x \in S \cap L^*$ and $y \in R^* \cap L$. Thus,  $ x \in S $ and $y \in L$. Since $x$ and $y$ are both in the set $L \sqcup S$, $x \in V_{1,L}$ and $y \in V_{1,L}$.  Next, we show that  $S^* \subseteq V_{1,L}$.   By \Cref{lem:collection-of-facts-G} part \ref{item:fact3},  $S^* \cap R = \emptyset$. Thus, $S^* \subseteq L \sqcup S$. Therefore, $S^* \subseteq V_{1,L}$.  Next, we show that  $L \cap R^* \subseteq V_{1,L}$. Since $L \cap R^*$, we have $L \cap R^* \subseteq L$. Therefore, $L \cap R^* \subseteq V_{1,L}$. Finally, we show that  $N_{H_L}(L \cap R^*) \subseteq V_{1,L}$.    By \Cref{lem:collection-of-facts-G} part \ref{item:fact1}, we have $N(L \cap R^*) \subseteq S ^*$ in $G$. Also, $S^* \subseteq V_{1,L}$ in $H_L$. Therefore, $N(L \cap R^*) \subseteq S^*$ in $H_L$.
\end{proof}

\begin{lemma}
The left subgraph $H_L$  is not $k$-connected. 
\end{lemma}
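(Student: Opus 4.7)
The plan is to exhibit an explicit separation triple in $H_L$ whose middle part has size strictly less than $k$. Since Claim \ref{claim:preserve-HL} already places $x$, $y$, $S^*$, and $L \cap R^*$ inside $V_{1,L} \subseteq V_L$, the natural candidate is
\[
\bigl(L \cap R^*,\; S^*,\; V_L \setminus \bigl((L \cap R^*) \cup S^*\bigr)\bigr),
\]
whose middle has size $|S^*| < k$. Proving this is a valid separation triple of $H_L$ will give the lemma.

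First I would dispense with the easy bookkeeping: that the three sets partition $V_L$, and that the two outer parts are nonempty. For nonemptiness, $y \in L \cap R^*$ by \Cref{lem:existsxy}, while $x \in S \cap L^*$ lies in $V_{1,L}$ by Claim \ref{claim:preserve-HL}(\ref{item:HL1}) and sits in neither $L \cap R^*$ nor $S^*$ (using $L^* \cap R^* = \emptyset$ and $L^* \cap S^* = \emptyset$, both by definition of the separation triple $(L^*,S^*,R^*)$). The partition property follows from Claim \ref{claim:preserve-HL}(\ref{item:HL3})(\ref{item:HL4}) together with $L \cap R^* \cap S^* \subseteq R^* \cap S^* = \emptyset$.

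The main step, and the only one that requires real argument, is to show that no edge of $H_L$ runs between $L \cap R^*$ and $V_L \setminus \bigl((L \cap R^*) \cup S^*\bigr)$. For the ``old'' edges inherited from $G[L \sqcup S]$ (i.e.\ the set $E_{1,L}$), \Cref{lem:collection-of-facts-G}(\ref{item:fact1}) already gives $N_G(L \cap R^*) \subseteq S^* \setminus R \subseteq S^*$, so no such edge escapes $S^*$. The only remaining edges in $H_L$ are the newly added $E_{2,L}$ (the biclique between $S$ and $V_{2,L}$) and $E_{3,L}$ (the clique on $V_{2,L}$); but every vertex of $L \cap R^*$ lies in $L$, hence outside both $S$ and $V_{2,L}$, so none of these new edges are incident to $L \cap R^*$. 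Combining these observations yields $N_{H_L}(L \cap R^*) \subseteq S^*$, which is exactly what is needed.

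The principal thing to guard against, and the only potential obstacle, is a ``shortcut'' through the gadget vertices $V_{2,L}$ that would let us reach $y$ from $x$ in $H_L$ while avoiding $S^*$. This is exactly why the construction of $H_L$ restricts the new bi-/clique edges to $S \cup V_{2,L}$: the chosen left side $L \cap R^*$ sits entirely inside $L$, disjoint from everything touched by the modification, so its neighborhood is preserved verbatim from $G$, and hence still contained in the original cut $S^*$.
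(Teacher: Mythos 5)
Your proof is correct and takes essentially the same route as the paper's: both hinge on the observation that $N_{H_L}(L \cap R^*) \subseteq S^*$, which follows from \Cref{lem:collection-of-facts-G}(\ref{item:fact1}) together with the fact that the new edges of $H_L$ are incident only to $S \cup V_{2,L}$ and hence miss $L \cap R^*$. The paper phrases the conclusion as ``$S^*$ is an $(x,y)$-vertex-cut in $H_L$'' while you package the same facts into an explicit separation triple, but the content is identical.
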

\begin{proof}
Since $|S^*| < k$, it is enough to show that $S^*$ is an $(x,y)$-vertex-cut in $H_L$. To do so, we prove four items (all in $H_L$).
\begin{itemize}[nolistsep,noitemsep]
\item $N(L \cap R^* ) \subseteq S^*$.
\item $y \in L \cap R^* $.
\item  $x \not \in L \cap R^* $.
\item  $x \not \in S^*$. 
\end{itemize}
We prove the first item. By \Cref{lem:collection-of-facts-G} part \ref{item:fact1}, we have $N(L \cap R^*) \subseteq S ^*$ in $G$.   By \Cref{claim:preserve-HL} part \ref{item:HL3}, \ref{item:HL4} and \ref{item:HL5}, the set  $S^*, L \cap R^*, $ and $N(L\cap R^*)$  exist in $H_L$.  By \Cref{def:HLHR}, the new edges in $H_L$  do not join any vertex in $L$. In particular, the new edges do not join any vertex in $L \cap R^*$. This means $N(L \cap R^*)$ does not change after transformation from $G$ to $H_L$. Therefore, $N(L \cap R^*) \subseteq S^*$ in $H_L$. Next, we show that $y \in L \cap R^*$ in $H_L$. By \Cref{eq:sofar_1}, $y \in L \cap R^*$ in $G$. By \Cref{claim:preserve-HL} part \ref{item:HL2}, $y$ exists in $H_L$. Therefore,  $y \in L \cap R^*$ in $H_L$. Next, we show that $x \not \in R^* \cap L$ in $H_L$. By \Cref{eq:sofar_1}, $x \in S \cap L^*$ in $G$.  Hence, $x \not \in L$ and $x \not \in R^*$ in $G$. By \Cref{claim:preserve-HL} part \ref{item:HL1}, $x$ exists in $H_L$.  Therefore, $x \not \in L \cap R^* $ in $H_L$. Finally, we show that $x \not \in S^*$ in $H_L$. By \Cref{lem:collection-of-facts-G} part \ref{item:fact2},  $S^*$ is an $(x,y)$-vertex-cut in $G$. Hence, $x \not \in S^*$ in $G$, which means  $x \not \in S^*$ in $H_L$.

\end{proof}
\begin{remark}
It is possible that $H_R$ is not $k$-connected when the two assumptions from above are different. 
\end{remark}

\subsection{Proof of \Cref{thm:vertex-cut-characterization}: Part 2 (Necessity)}
It remains to show the other direction. That is, we show that if not all the conditions in \Cref{thm:vertex-cut-characterization} are true, then $G$ has a vertex-cut of size smaller than $k$. Before the proof, we start with simple observation. 

\begin{lemma} \label{lem:Sisnotallclique}
For any separation triple $(L',S',R')$ in $H_L (\text{or } H_R)$ such that $|S'| < k$, 
$S' \not \subseteq V_{2,L} (\text{or }  V_{2,R})$. That is, the clique $V_{2,L} (\text{or } V_{2,R})$ in \Cref{def:HLHR} of size $k$ does not contain the vertex-cut $S'$. 
\end{lemma}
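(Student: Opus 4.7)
My plan is to argue by contradiction. Suppose some separation triple $(L',S',R')$ in $H_L$ satisfies $|S'|<k$ and $S'\subseteq V_{2,L}$. Since $|V_{2,L}|=k>|S'|$, I may pick a vertex $v\in V_{2,L}\setminus S'$. The goal will be to show that every vertex of $H_L\setminus S'$ is connected to $v$, so that $H_L\setminus S'$ is connected, contradicting $L',R'\neq\emptyset$ with no edges between them. (The argument for $H_R$ is identical by the symmetry in \Cref{def:HLHR}.)

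The proof of connectivity breaks into three pieces, handling each of the three ``zones'' of $H_L$:
\textbf{(a)} Every other vertex $u\in V_{2,L}\setminus S'$ is adjacent to $v$, because $E_{3,L}$ makes $V_{2,L}$ a clique.
\textbf{(b)} Every $s\in S$ is adjacent to $v$, because $E_{2,L}$ is the complete bipartite graph between $S$ and $V_{2,L}$ (and $S\cap V_{2,L}=\emptyset$, so $s\notin S'$).
\textbf{(c)} Every $\ell\in L$ is connected to some vertex of $S$ using only edges of $E_{1,L}=E(G[L\sqcup S])$, which survive in $H_L\setminus S'$ since $S'\subseteq V_{2,L}$ touches none of them.

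The main obstacle is (c): one needs to know that every $\ell\in L$ can actually reach $S$ inside the induced subgraph $G[L\sqcup S]$. To see this, first note that $S\neq\emptyset$: the hypothesis that $G$ is connected together with $L,R\neq\emptyset$ and $E_G(L,R)=\emptyset$ forces $S$ to be nonempty. Next, let $C$ be the connected component of $\ell$ in $G[L]$. Since $E_G(L,R)=\emptyset$, the only vertices outside $L$ to which $C$ could be adjacent in $G$ lie in $S$. If $C$ had no neighbor in $S$, then $C$ would be a connected component of the whole graph $G$, contradicting connectedness of $G$ (as $R\neq\emptyset$ lies outside $C$). Hence some vertex of $C$ has a neighbor in $S$, providing a path from $\ell$ to $S$ inside $G[L\sqcup S]$ as required.

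Combining (a), (b), and (c), every vertex of $H_L\setminus S'$ is connected to $v$, so $H_L\setminus S'$ is connected. This contradicts the assumption that $(L',S',R')$ with $L',R'\neq\emptyset$ is a separation triple in $H_L$, completing the proof.
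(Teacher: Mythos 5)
Your proof is correct and takes essentially the same approach as the paper: assume $S'\subseteq V_{2,L}$, observe $V_{2,L}\setminus S'\neq\emptyset$, and derive a contradiction by showing $H_L-S'$ is connected. The only difference is expository: the paper asserts connectedness of $H_L-S'$ by noting it is (essentially) $G$ with $R$ contracted into a nonempty clique, whereas you establish the same fact from scratch via the three-zone argument (clique, $S$, and $L$) and a short component argument in $G[L]$ — a more self-contained unpacking of the same idea.
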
 
\begin{proof}
We prove the result for $H_L$. The proof for the case $H_R$ is similar. 
Suppose $S' \subseteq V_{2,L}$. Let $H'_L = H_L - S'_H$. The new graph $H'_L$ is essentially the same as $H_L$ except that the modified clique $V'_{2,L} = V_{2,L} \setminus S'$ has size $k - |S'| \geq 1$ (since $|S'| < k$).  In essence, the graph $H'_L$ has the same structure as $H_L$, but with a smaller clique. That is, from the separation triple $(L,S,R)$ in $G$, we obtain the graph $H'_L$ by contracting $R$ into a clique of size at least 1. Since $G$ is  connected and by \Cref{def:HLHR},  $H'_L$ is  connected. Therefore, $S'$ does not disconnect $H_L$, contradicting to the fact that $S'$ is a vertex-cut.   
\end{proof}

\begin{observation} \label{obs:clqiue-one-side}
The clique $V_{2,L} (\text{or } V_{2,R})$ cannot span both $L'$ and $R'$. That is, if $V_{2,L}\cap L' \not = \emptyset$, then $V_{2,L} \cap R' = \emptyset$. Likewise, if $V_{2,L} \cap R' \not = \emptyset$, then $V_{2,L} \cap L' = \emptyset$.  
\end{observation}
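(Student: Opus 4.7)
The plan is to derive the statement directly from two facts: (i) by \Cref{def:HLHR} the set $V_{2,L}$ is a clique on $k$ vertices (and similarly $V_{2,R}$), so every pair of distinct vertices of $V_{2,L}$ is joined by an edge; and (ii) by \Cref{def:separation-triple} a separation triple $(L',S',R')$ has no edge between $L'$ and $R'$. I would argue by contradiction: suppose both $V_{2,L}\cap L'\neq\emptyset$ and $V_{2,L}\cap R'\neq\emptyset$, and pick $u\in V_{2,L}\cap L'$ and $v\in V_{2,L}\cap R'$. Since $u\neq v$ (they lie in disjoint parts $L'$ and $R'$) and both lie in the clique $V_{2,L}$, the edge $(u,v)$ exists in $H_L$, contradicting the defining property of the separation triple $(L',S',R')$ that forbids any edge between $L'$ and $R'$. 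Hence at most one of $V_{2,L}\cap L'$ and $V_{2,L}\cap R'$ can be nonempty, which is exactly the observation. The symmetric argument applies to $V_{2,R}$ inside $H_R$.

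There is no real obstacle here; the only thing to be careful about is that the two ``witness'' vertices $u,v$ are genuinely distinct (so that we get an actual edge of the clique $E_{3,L}$ rather than a nonexistent self-loop), but this is immediate from $L'\cap R'=\emptyset$.
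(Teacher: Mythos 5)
Your proof is correct and follows the same argument as the paper: assume both intersections are nonempty, take representatives in each, and use the clique structure of $V_{2,L}$ to produce an edge between $L'$ and $R'$, contradicting the separation-triple property. Your extra note that the two witnesses are distinct (so the clique edge genuinely exists) is a small but welcome touch of rigor that the paper glosses over.
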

\begin{proof}
Suppose  $V_{2,L} \cap R' \not = \emptyset $, and $V_{2,L}\cap L' \not = \emptyset$. There is an edge between $L'$ and $R'$ since $V_{2,L}$ is a clique. Therefore, we have a contradiction since $(L',S',R')$ is a separation triple, but there is an edge between $L'$ and $R'$. %
\end{proof}

\begin{lemma} \label{lem:move-clique}
If there is a separation triple  $(L',S',R')$ in $H_L (\text{or } H_R)$ such that $|S'| < k$ and $V_{2,L} \cap L' \not = \emptyset (\text{or } V_{2,R} \cap L' \not = \emptyset)$, then there is a separation triple $(L'',S'',R'')$ in $H_L (\text{or } H_R)$ where  $L'' = L' \cup V_{2,L}$, $S'' = S' \setminus V_{2,L}$, and $R'' = R'$. In particular, $|S''| \leq |S'|$ and $V_{2,L} (\text{or } V_{2,R}) \subseteq L''$. 
\end{lemma}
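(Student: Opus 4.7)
The plan is to verify directly that the proposed triple $(L'', S'', R'')$ is still a separation triple by checking (i) it is a partition of $V(H_L)$, (ii) there is no edge between $L''$ and $R''$, and (iii) both $L''$ and $R''$ are nonempty; the bounds $|S''|\le|S'|$ and $V_{2,L}\subseteq L''$ will then be immediate from the construction.

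The first ingredient is locating $V_{2,L}$ relative to $(L',S',R')$. Since $V_{2,L}$ is a clique in $H_L$ (by Definition~\ref{def:HLHR}) and $V_{2,L}\cap L'\neq\emptyset$ by hypothesis, Observation~\ref{obs:clqiue-one-side} gives $V_{2,L}\cap R'=\emptyset$, so $V_{2,L}\subseteq L'\cup S'$. This already implies that $L''=L'\cup V_{2,L}$, $S''=S'\setminus V_{2,L}$, $R''=R'$ forms a partition of $V(H_L)$ (we are only shifting the vertices of $V_{2,L}\cap S'$ from $S'$ into $L'$), and that $L'',R''$ remain nonempty since $L''\supseteq L'\neq\emptyset$ and $R''=R'\neq\emptyset$.

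The main obstacle is verifying that no edge of $H_L$ crosses from $L''$ to $R''$. Edges with both endpoints in $V_{1,L}$ are unaffected by the move, and among those there are none between $L'\setminus V_{2,L}$ and $R'$ because $(L',S',R')$ is already a separation triple. The delicate case is edges incident to $V_{2,L}$, since the vertices of $V_{2,L}\cap S'$ are now being reclassified into $L''$. By Definition~\ref{def:HLHR}, every vertex of $V_{2,L}$ is adjacent only to other vertices of $V_{2,L}$ (all of which lie in $L''$) and to every vertex of $S$. Hence the obstruction reduces to showing $S\cap R'=\emptyset$: pick any $w\in V_{2,L}\cap L'$ (which exists by hypothesis); if some $s\in S$ lay in $R'$, then the edge $ws$ would cross from $L'$ to $R'$ in $(L',S',R')$, contradicting that it is a separation triple. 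Therefore $S\subseteq L'\cup S'$, and after the move $S\subseteq L''\cup S''$, so every edge from $V_{2,L}$ lands in $L''\cup S''$ rather than in $R''$.

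Combining these observations shows $(L'',S'',R'')$ is a valid separation triple in $H_L$. By construction $V_{2,L}\subseteq L''$ and $|S''|=|S'\setminus V_{2,L}|\le|S'|$, completing the proof. The $H_R$ case is entirely symmetric, obtained by interchanging the roles of $L$ and $R$ and replacing $V_{2,L}$ by $V_{2,R}$.
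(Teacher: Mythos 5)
Your proposal is correct and follows essentially the same approach as the paper's proof: both use Observation~\ref{obs:clqiue-one-side} to place $V_{2,L}$ inside $L'\sqcup S'$, pick a witness $w\in V_{2,L}\cap L'$, and use the biclique structure between $V_{2,L}$ and $S$ to show $S\cap R'=\emptyset$, hence no edge can cross from $L''$ to $R''$. The only cosmetic difference is that the paper first dispenses with the trivial case $V_{2,L}\subseteq L'$ and explicitly checks $S''\neq\emptyset$ via Lemma~\ref{lem:Sisnotallclique}, whereas you handle the first implicitly and omit the second (which is permissible since Definition~\ref{def:separation-triple} only requires $L,R\neq\emptyset$).
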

\begin{proof}
We prove the result for $H_L$. The proof for thcase $H_R$ is similar. 
If $V_{2,L} \subseteq L'$, then we are done.  Now, suppose otherwise. By \Cref{obs:clqiue-one-side}, we have  $V_{2,L} \subseteq L' \sqcup S'$ %

We claim that $N(V_{2,L}) \subseteq L' \sqcup S'$. First of all,  there is a vertex $z \in V_{2,L} \cap L'$ since $V_{2,L} \subseteq L' \sqcup S', |S'| < k$ but $|V_{2,L}| = k$. Also, by \Cref{def:HLHR},  $N(V_{2,L}) = S$.   Suppose that there is a vertex $z' \in N(V_{2,L})$ such that $z' \in R'$. Since (1) $V_{2,L}$ is a clique that every node has edges to every vertex in $S = N(V_{2,L}) $, (2) $z \in V_{2,L} \cap L'$ and (3) $z' \in  N(V_{2,L}) \cap R'$, there is an edge between $L'$ and $R'$. However, this contradicts to the fact that $(L',S',R')$ is a separation triple where $L'$ and $R'$ cannot have an edge between each other. Therefore, the claim follows.

We construct a new separation triple in $H_L$ as follows. Let $L'' = L' \cup V_{2,L}$, $S'' = S' \setminus V_{2,L}$, and $R'' = R'$. Clearly, $V_{2,L} \subseteq L''$, and $|S'| < k$.  

We claim that the vertex set $(L'',S'',R'')$ forms a separation triple in $H_L$. First of all, it is clear that $L''$, $S''$, and $R''$ form a partition of all vertices in $H_L$ (i.e., they are pairwise disjoint, and $L'' \sqcup S'' \sqcup R'' = V_{H_L}$). It is enough to verify that $L'', S'',$ and $R''$ are not empty, and that $S''$ is a vertex-cut in $H_L$. We first show that each set $L'', S'', R''$ is non-empty. Clearly, $L''$ and $R''$ are not empty since we add new elements to the set $L''$, and  $R'' = R'$. We show that $S''$ is also non-empty.  By \Cref{lem:Sisnotallclique}, there is a vertex in $S'$ that is not in the clique $V_{2,L}$. Also, we only move $V_{2,L}$ from $S'$ to $L'$. Therefore, $S''$ is not empty.  We now show that $S''$ is a vertex-cut in $H_L$. Since $N(V_{2,L}) \subseteq L' \sqcup S'$, and $L' \sqcup S' = L'' \sqcup S ''$, we have $N(V_{2,L}) \subseteq L'' \sqcup S''$. Also, $R''$ is not an emptyset. Hence,  $H_L - S''$ has no path from any vertex in $V_{2,L}$ to any vertex in $R''$. Therefore, $S''$ is a vertex-cut in $H_L$.    %
\end{proof}
 \begin{remark}
 If $V_{2,L} \cap R' \not = \emptyset$ for the separation triple $(L',S',R')$, then we can swap $L'$ and $R'$ so that we can still apply \Cref{lem:move-clique}. 
 \end{remark}
 
\begin{lemma} \label{lem:cut-in-hl-is-a-cut-in-g}
If $H_L$ has a vertex-cut $S'$ corresponding to the separation triple $(L',S',R')$ such that $|S'| < k$  and $V_{2,L} \subseteq L'$, then $S'$ is also a vertex-cut in $G$. 
\end{lemma}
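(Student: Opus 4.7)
The plan is to exhibit an explicit bipartition of $V(G) \setminus S'$ into two nonempty sides with no edges between them in $G$, thereby certifying that $S'$ disconnects $G$. Before doing so I would first tidy up where $S'$ lives: since $V_{2,L} \subseteq L'$ and $S' \cap L' = \emptyset$, we get $S' \cap V_{2,L} = \emptyset$, hence $S' \subseteq V_{1,L} = L \sqcup S \subseteq V(G)$, so it makes sense to speak of ``removing $S'$ from $G$.''

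The key structural observation I would prove next is that $R' \cap S = \emptyset$, i.e.\ $R' \subseteq L$. Suppose for contradiction there is some $v \in R' \cap S$. By the biclique edges $E_{2,L}$ in Definition~\ref{def:HLHR}, $v$ is adjacent in $H_L$ to every vertex of $V_{2,L}$. Since $V_{2,L} \subseteq L'$ and $v \in R'$, every such edge must be killed by $S'$, which forces $V_{2,L} \subseteq S'$; but $|V_{2,L}| = k > |S'|$, contradiction. So $R' \subseteq L$. I expect this to be the main (and essentially only) real step: everything else is bookkeeping, and the entire lemma turns on using the artificial biclique to rule out the awkward case $R' \cap S \neq \emptyset$.

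With $R' \subseteq L$ in hand, define
\[
A \;:=\; (L' \cap V_{1,L}) \cup R, \qquad B \;:=\; R'.
\]
I would then verify three things. First, $A$ and $B$ partition $V(G) \setminus S'$: we already have $V(G) = V_{1,L} \cup R$, and $V_{1,L} = (L' \cap V_{1,L}) \sqcup S' \sqcup R'$ since $S', R' \subseteq V_{1,L}$; also $R \cap S' = \emptyset$ because $S' \subseteq V_{1,L}$ is disjoint from $R$. Second, both sides are nonempty: $B = R' \neq \emptyset$ by definition of a separation triple in $H_L$, and $A \supseteq R \neq \emptyset$ by definition of a separation triple in $G$.

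Third, I would check there are no $G$-edges between $A$ and $B$. Edges from $B = R' \subseteq L$ to $R$ do not exist because $(L,S,R)$ is a separation triple in $G$, so $L$ and $R$ are non-adjacent. Edges from $B = R'$ to $L' \cap V_{1,L}$ would be edges of $G[L \sqcup S]$, hence edges of $E_{1,L}$ in $H_L$ by Definition~\ref{def:HLHR}; but such an edge would directly connect $R'$ and $L'$ in $H_L \setminus S'$, contradicting that $(L', S', R')$ is a separation triple of $H_L$. Thus $S'$ separates the nonempty sets $A$ and $B$ in $G$, so $S'$ is a vertex-cut in $G$, completing the proof.
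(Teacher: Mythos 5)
Your proof is correct, and it takes a genuinely more explicit route than the paper's. The paper's own argument is three short sentences of path-translation: since $V_{2,L} \subseteq L'$ and $S'$ separates $L'$ from $R'$ in $H_L$, there is no path from $V_{2,L}$ to $R'$ in $H_L - S'$; then, because $H_L$ is obtained from $G$ by replacing $R$ with the clique $V_{2,L}$ attached to $S$, a hypothetical path in $G - S'$ from $R$ to $R'$ would translate (exiting $R$ through $S$, then following $G[L \sqcup S]$ verbatim, and finally prefixing a vertex of $V_{2,L}$) into a forbidden path in $H_L - S'$. Your approach instead constructs the separating bipartition $A = (L' \cap V_{1,L}) \cup R$, $B = R'$ directly inside $G$, and the price you pay is that you must first rule out edges between $R'$ and $R$; this is exactly why you need the extra structural observation $R' \cap S = \emptyset$, which the paper never states. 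That observation is clean and does isolate the one place where the artificial biclique does real work, so your version is arguably more transparent, at the cost of being a bit longer.

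One small inefficiency: in proving $R' \cap S = \emptyset$, you write that every edge from $v \in R' \cap S$ to $V_{2,L}$ ``must be killed by $S'$, which forces $V_{2,L} \subseteq S'$; but $|V_{2,L}| = k > |S'|$.'' You do not need the size argument, and the phrasing is a little off: you are \emph{given} $V_{2,L} \subseteq L'$, so if $v \in R'$ is adjacent to any $u \in V_{2,L}$ that is already an edge between $L'$ and $R'$, which directly contradicts $(L',S',R')$ being a separation triple of $H_L$. The detour through ``forces $V_{2,L} \subseteq S'$'' lands on a statement that contradicts $V_{2,L} \subseteq L'$ anyway (as $L' \cap S' = \emptyset$), so the size bound on $S'$ is doing no work. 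Tightening this would make the key step read as cleanly as the rest of your argument.
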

\begin{proof}
Since $|S'| < k$, we only need to show that  $S'$ is a vertex-cut in $G$. Since $V_{2,L} \subseteq L'$, we have $N_{H_L}(V_{2,L}) \subseteq L' \sqcup S'$. This means $H_L - S'$ has no paths from any vertex in $V_{2,L}$ to any vertex in $R'$. Hence, by \Cref{def:HLHR}, $G -S'$ has no  paths from any vertex in $R$ (the set $R$ was contracted into a $V_{2,L}$) to any vertex in $R'$. Therefore, $S'$ is a vertex-cut in $G$. 
\end{proof}

To finish the proof of \Cref{thm:vertex-cut-characterization}, we show that if not all the conditions in \Cref{thm:vertex-cut-characterization} are true, then $G$ has a vertex-cut of size smaller than $k$. 

If $|S| < k$ or $\kappa_G(x,y) < k$ for some $x,y \in S$, then we are done. Suppose now that  $|S| \geq k$ and $\kappa_G(x,y) \geq k$. This implies $H_L$ (or $H_R$) is not $k$-connected. We show that an  optimal vertex-cut in $H_L$ or $H_R$ whose size is smaller than $k$ can be used to construct a vertex-cut in $G$ of size smaller than $k$. 

We assume WLOG that $H_L$ contains a vertex-cut of size smaller than $k$. The other case that  $H_R$ contains a vertex-cut of size smaller than $k$ is similar.

 Let $(L^*,S^*,R^*)$ be an optimal separation triple in $H_L$. Note that the vertex-cut $S^*$ has size $< k$. 
We claim that  $V_{2,L} \subseteq L^*$ or $V_{2,L} \subseteq R^*$.  Suppose that $V_{2,L} \not \subseteq L^*$ and  $V_{2,L} \not \subseteq R^*$. By \Cref{obs:clqiue-one-side}, we have $V_{2,L} \cap S^* \not = \emptyset$.  By \Cref{lem:move-clique}, we obtain a new separation triple $(L^* \cup V_{2,L},  S^* \setminus V_{2,L}, R^*)$. Clearly, $ |S^* \setminus V_{2,L}| < |S^*|$ since $V_{2,L} \cap S^* \not = \emptyset$.  However, this is impossible since $S^*$ is the smallest vertex-cut, a contradiction. 

We now show that $G$ has a vertex-cut of size at most $k$. Since $S^*$ is an optimal vertex-cut,  $H_L$ has a vertex-cut $S^*$ corresponding to the separation triple $(L^*,S^*,R^*)$ such that $|S^*| < k$, and $V_{2,L} \subseteq L^*$  (if $V_{2,L} \subseteq R^*$, we can swap $L^*$ and $R^*$). By \Cref{lem:cut-in-hl-is-a-cut-in-g}, $S^*$ is also a vertex-cut in $G$.

\subsection{Vertex-Expansion and $k$-Connectivity} 

\begin{definition}[Vertex expansion of a separation triple $h(L,S,R)$] \label{def:vexpansion_triple}
Given a separation triple $(L,S,R)$, the \textit{vertex expansion} of $(L,S,R), h(L,S,R),$ is $\frac{|S|}{\min(|L|,|R|) + |S|}$.  
\end{definition}

\begin{definition} [Vertex expansion of a graph $h(G)$]\label{def:vexpansion_graph}
the \textit{vertex expansion} of $G, h(G), $ is  \\ $\min_{(L,S,R) \in G} h(L,S,R)$, i.e., the minimum vertex expansion over all separation triples in $G$. 
\end{definition}

\begin{proposition}\label{pro:smallerside}
For any separation triple $(L,S,R)$, $\minlr \leq n/2$.
\end{proposition}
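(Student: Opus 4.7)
The plan is to prove this by a direct counting argument using the fact that $L$, $S$, and $R$ partition the vertex set $V$. First I would recall from Definition~\ref{def:separation-triple} that a separation triple $(L,S,R)$ is a partition of $V$, so in particular the sets $L$ and $R$ are disjoint subsets of $V$, giving the identity $|L| + |S| + |R| = n$ and hence $|L| + |R| \le n$.

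Next I would apply the elementary inequality $\min(a,b) \le (a+b)/2$ with $a = |L|$ and $b = |R|$. Combining this with the previous bound yields
\[
\min(|L|,|R|) \le \frac{|L| + |R|}{2} \le \frac{n}{2},
\]
which is exactly the claimed inequality.

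There is no real obstacle here; the statement is essentially a restatement of the pigeonhole principle on the two sides of the partition that exclude $S$. I would write the proof in two or three lines without further commentary, noting only that the bound is tight when $|S| = 0$ and $|L| = |R| = n/2$.
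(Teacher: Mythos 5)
Your proof is correct and uses essentially the same counting as the paper: both hinge on $|L|+|R|\le n$ together with $\min(|L|,|R|)\le\tfrac{1}{2}(|L|+|R|)$. The paper merely phrases it as a contradiction (assume $\min(|L|,|R|)>n/2$, then $|L|+|R|=\min+\max>n$), whereas you give the direct form; the content is identical.
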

\begin{proof}
Suppose $\minlr > n/2$. We have $|L|+|R| = \minlr + \max(|L|,|R|) > n/2 + n/2 = n$, which is a contradiction. 
\end{proof}

\begin{proposition} \label{pro:sparse-vcut}
If $h(L,S,R) \leq \eta$, then 
\begin{itemize}[noitemsep, nolistsep]
\item $\min(|L|,|R|) \geq  (1/\eta -1)\kappa_G$.
\item $ |S| \leq n\eta/(2-2\eta)$. 
\end{itemize}
\end{proposition}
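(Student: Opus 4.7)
The plan is to derive both inequalities directly from the defining equation $h(L,S,R) = \tfrac{|S|}{\min(|L|,|R|) + |S|}$ by elementary algebraic manipulation, combined with two elementary facts: that $|S| \ge \kappa_G$ whenever $(L,S,R)$ is a separation triple of a connected graph, and that $\min(|L|,|R|) \le n/2$ (which is exactly \Cref{pro:smallerside}). There is no real obstacle here; the argument is pure rearrangement.

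First I would rewrite the hypothesis $h(L,S,R) \le \eta$ as
\[
|S| \;\le\; \eta \bigl(\min(|L|,|R|) + |S|\bigr),
\]
and then collect the $|S|$ terms to obtain
\[
(1-\eta)\,|S| \;\le\; \eta \cdot \min(|L|,|R|),
\]
which is equivalent (since $\eta < 1$, otherwise the statement is vacuous because the bounds are non-restrictive) to
\[
\min(|L|,|R|) \;\ge\; \left(\tfrac{1}{\eta} - 1\right) |S|.
\]

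For the first bullet, I would then observe that $S$ is a vertex cut of the (connected) graph $G$: indeed, by \Cref{def:separation-triple}, $L$ and $R$ are both non-empty and there is no edge between them, so removing $S$ disconnects $L$ from $R$. Hence $|S| \ge \kappa_G$, and plugging this lower bound on $|S|$ into the displayed inequality above yields $\min(|L|,|R|) \ge (1/\eta - 1)\,\kappa_G$, as required.

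For the second bullet, I would start instead from the equivalent rearrangement
\[
|S| \;\le\; \tfrac{\eta}{1-\eta} \cdot \min(|L|,|R|),
\]
and then apply \Cref{pro:smallerside}, which gives $\min(|L|,|R|) \le n/2$. Substituting yields
\[
|S| \;\le\; \tfrac{\eta}{1-\eta} \cdot \tfrac{n}{2} \;=\; \tfrac{n\eta}{2-2\eta},
\]
which is the claimed bound. Both parts follow, completing the proof.
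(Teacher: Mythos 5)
Your proof is correct and follows essentially the same route as the paper: rearrange the definition of $h(L,S,R)\le\eta$ into $(1-\eta)|S|\le\eta\min(|L|,|R|)$, then combine with $|S|\ge\kappa_G$ for the first bullet and with \Cref{pro:smallerside} ($\min(|L|,|R|)\le n/2$) for the second. The only additions — explicitly noting $\eta<1$ and justifying $|S|\ge\kappa_G$ from \Cref{def:separation-triple} — are minor elaborations of steps the paper leaves implicit.
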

\begin{proof}
By definition of vertex-expansion \Cref{def:vexpansion_triple}, we have the following equations:
\begin{align*} 
|S|/(\minlr + |S|) &\leq \eta, \\
|S| &\leq \eta(\minlr + |S|), \\
|S|(1-\eta)  \leq& \eta \minlr. 
\end{align*}

Hence, we get $|S| \leq \minlr \eta/(1-\eta)$, which is at most $ n\eta/(2-2\eta)$ by \Cref{pro:smallerside}. We also get $\minlr \geq (1/\eta - 1)|S|$, which is at least $ (1/\eta-1)\kappa_G$ since $|S| \geq \kappa_G$.
\end{proof}

\begin{proposition} \label{pro:high-vexp-vol}
If $h(G) \geq \eta$, and there is no separation triple $(L,S,R)$ such that $\minlr \leq 2k/\eta$ and $|S| < k$, then $G$ is $k$-connected. 
\end{proposition}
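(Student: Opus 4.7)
The plan is to prove the contrapositive: assume $G$ is not $k$-connected, and deduce that either $h(G) < \eta$ or there is a separation triple $(L,S,R)$ with $\min(|L|,|R|) \leq 2k/\eta$ and $|S| < k$. So suppose $G$ has a separation triple $(L,S,R)$ with $|S| < k$. If $h(G) < \eta$ we are done, so assume $h(G) \geq \eta$, which by \Cref{def:vexpansion_graph} gives $h(L,S,R) \geq \eta$ for this particular triple as well.

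Using $h(L,S,R) \geq \eta$ with \Cref{def:vexpansion_triple} yields
\[
\frac{|S|}{\min(|L|,|R|) + |S|} \geq \eta,
\]
which rearranges to $\min(|L|,|R|) \leq \tfrac{1-\eta}{\eta}\,|S|$. Since $|S| < k$, this gives $\min(|L|,|R|) < (1-\eta)k/\eta \leq k/\eta \leq 2k/\eta$. Hence $(L,S,R)$ is a separation triple with $\min(|L|,|R|) \leq 2k/\eta$ and $|S| < k$, contradicting the hypothesis.

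There is no real obstacle here; the statement is essentially a reformulation of \Cref{pro:sparse-vcut}, which already shows that high vertex expansion forces any small vertex-cut to have a large ``smaller side.'' The only thing to verify carefully is the direction of the bound (upper bound on $\min(|L|,|R|)$ coming from a lower bound on $h$), and the fact that the factor of $2$ in $2k/\eta$ is slack (we only need $k/\eta$). The proposition is therefore immediate from the definitions and a one-line algebraic manipulation.
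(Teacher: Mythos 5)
Your proof is correct and takes essentially the same approach as the paper: both reduce to the one-line algebraic observation that $h(L,S,R) \geq \eta$ together with $|S| < k$ forces $\min(|L|,|R|) < 2k/\eta$ (the paper phrases this as a direct contradiction deriving $\eta < \eta/2$, while you phrase it as a contrapositive, but the underlying computation is identical). Your side observation that the factor of $2$ in $2k/\eta$ is slack is accurate.
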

\begin{proof}
Suppose $G$ has a separation triple $(L',S',R')$ such that $|S'| < k$. By the given condition, $\min(|L'|,|R'|) > 2k/\eta$. Therefore, we have  $$ \eta \leq h(L',S',R') = \frac{|S'|}{\min(|L'|,|R'|) + |S'| } < k/\min(|L'|,|R'|) < \eta/2.$$ The first inequality follows from $h(G) \geq \eta$, and \Cref{def:vexpansion_graph}, the second equality follows from \Cref{def:vexpansion_triple}. The third inequality follows from $|S'| < k$. The last inequality follows from $\min(|L'|,|R'|) > 2k/\eta$. Therefore, $\eta < \eta/2$, and we have a contradiction.  
\end{proof}

\begin{corollary} \label{cor:vexp-clean}
For $a \in (0,1)$, if $h(L,S,R) \leq 1/(2n^{1-a-o(1)})$, then 
\begin{itemize}[noitemsep, nolistsep]
\item $\min(|L|,|R|) \geq  n^{1-a-o(1)}$.
\item $ |S| \leq  n^{a+o(1)}/2$. 
\end{itemize}
\end{corollary}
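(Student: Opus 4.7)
\textbf{Proof proposal for \Cref{cor:vexp-clean}.}
The plan is to derive the corollary by direct substitution into \Cref{pro:sparse-vcut}, setting $\eta = 1/(2n^{1-a-o(1)})$. Since \Cref{pro:sparse-vcut} already gives a clean lower bound on $\min(|L|,|R|)$ in terms of $1/\eta$ and $\kappa_G$, and a clean upper bound on $|S|$ in terms of $\eta$ and $n$, the task reduces to arithmetic simplification together with careful bookkeeping of the $o(1)$ factor in the exponent.

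First, for the upper bound on $|S|$, I would write
\[
|S| \;\le\; \frac{n\eta}{2-2\eta}
\;=\; \frac{n/(2n^{1-a-o(1)})}{2 - 1/n^{1-a-o(1)}}
\;=\; \frac{n^{a+o(1)}}{4 - o(1)},
\]
using that the existence of the separation triple forces $|L|,|R|\ge 1$, so in particular $a<1$ implies $\eta \to 0$. The constant factor $1/(4-o(1))$ is at most $1/2$ for all sufficiently large $n$, and any remaining constant can be absorbed into the $n^{o(1)}$ factor, giving $|S|\le n^{a+o(1)}/2$ as required.

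Next, for the lower bound on $\min(|L|,|R|)$, \Cref{pro:sparse-vcut} gives
\[
\min(|L|,|R|) \;\ge\; \left(\tfrac{1}{\eta}-1\right)\kappa_G
\;=\; \bigl(2n^{1-a-o(1)} - 1\bigr)\,\kappa_G.
\]
Since $G$ is assumed connected and the hypothesis supplies a separation triple, we have $\kappa_G \ge 1$, and hence
\[
\min(|L|,|R|) \;\ge\; 2n^{1-a-o(1)} - 1 \;\ge\; n^{1-a-o(1)}
\]
for all sufficiently large $n$, once again absorbing the constant factor into the $o(1)$ in the exponent.

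I do not expect any real obstacle here: the statement is a restatement of \Cref{pro:sparse-vcut} with a specific choice of $\eta$. The only mild care needed is in handling the $o(1)$ notation, ensuring that the constants lost in simplifying $n\eta/(2-2\eta)$ and $(1/\eta - 1)\kappa_G$ are indeed $n^{o(1)}$ quantities, which is immediate since any fixed constant satisfies $C = n^{o(1)}$ as $n\to\infty$.
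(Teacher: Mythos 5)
Your proof is correct and follows essentially the same approach as the paper: both substitute $\eta = 1/(2n^{1-a-o(1)})$ into \Cref{pro:sparse-vcut}, use $\kappa_G \geq 1$ (connectedness) for the lower bound, and simplify $n\eta/(2-2\eta)$ for the upper bound, absorbing constants into the $n^{o(1)}$ factor.
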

\begin{proof}
The results follows from By \Cref{pro:sparse-vcut} where we use $\eta = 1/(2n^{1-a-o(1)})$. By \Cref{pro:sparse-vcut}, we get  $\min(|L|,|R|) \geq  (1/\eta -1)\kappa_G$, which is $\geq (1/\eta -1) = 2n^{1-a-o(1)}-1  \geq n^{1-a-o(1)}$. By \Cref{pro:sparse-vcut}, we have $|S| \leq n\eta/(2-2\eta) = (n/2) (1/(2n^{1-a-o(1)}-1)) \leq (n/2)(1/n^{1-a-o(1)}) \leq n^{a+o(1)}/2$. 
\end{proof}

\section{Deterministic Vertex Connectivity Algorithm}
\label{sec:vc-algo}

In this section we give our main vertex connectivity algorithm.
Our main result is

\VCAlgoMain*

\subsection{Overview}

Our algorithm is based on two structural lemmas
about $k$-connectivity.
Recall from \Cref{def:vertex-expansion} that the
vertex expansion of a separation triple $(L,S,R)$
is $h(L,S,R) = \frac{|S|}{\min(|L|,|R|)+|S|}$ and the
vertex expansion of a graph $G$ is $h(G) = \min_{(L,S,R)\in G} h(L,S,R)$.

The first observation is the following. Suppose that the vertex expansion
of $G$ is $h(G) \ge \gamma$ for some parameter $\gamma$.
Then, any separation triple $(L,S,R)$
of size less than $k$ must be such that either
$|L|<k \gamma^{-1}$ or $|R|<k \gamma^{-1}$ because
otherwise its vertex expansion is $h(L,S,R)<k/(k \gamma^{-1})\le \gamma$.
Therefore, $G$ is not $k$-connected if and only if there is a
set $L\subset V$ where $|L|\le k \gamma^{-1}$ and $|N(L)|<k$ where $N(L)$ is
the neighbors of $L$. Note that
\[
\vol\left(L\right)
\leq
2\left|E\left(L, L\right) \right|
+
E\left(L,N\left(L\right)\right)
\leq
2 k\left|L\right| + k\left|L\right|
=
O\left(\gamma^{-1}\right),
\]
because $G$ has arboricity at most $k$.

This is exactly where the \emph{local vertex connectivity }(LocalVC)
algorithm introduced in \cite{NanongkaiSY19} can help us. This algorithm
works as follows: given a vertex $x$ in a graph $G$ and parameters
$\nu$ and $k$, either
\begin{enumerate}
	\item certifies that there is no set $S\ni x$
where $\vol(S)\le\nu$ and $|N(S)|<k$, or
	\item returns a set $S\ni x$
where $|N(S)|<k$. See \Cref{def:localvc} for a formal definition.
\end{enumerate}
There are currently two deterministic algorithms for this problem:
an $\oh(\nu^{1.5}k)$-time algorithm by a subset of the 
authors~\cite{NanongkaiSY19} and an
$\oh(\nu k^{k})$-time algorithm via a slight adaptation of the
algorithm by Chechik et al.~\cite{ChechikHILP17}.

For simplicity, we will assume that $k$ is a constant,
and use the $\oh(\nu k^{k})$-time bound here,
which by our assumption we view as $\oh(\nu)$ time.

From the above observation about the set $L$, it is enough to run
the $\LocalVC$ algorithm from every vertex $x$ with a parameter
$\nu=O(n \gamma^{-1})$ to decide if such $L$ exists.
This takes $\oh(n \gamma^{-1})$
total time to decide $k$-connectivity of $G$ with the assumption
$h(G)\ge \gamma^{-1}$.

To remove the assumption, we start by calling our deterministic
vertex expansion algorithm.
As described in \Cref{subsection:Bananas}, on sparse graphs
this routine finds in $\oh(n^{1.5})$ time a separating
triple $(L,S,R)$ such that
\[
h(L,S,R)\le h(G)\cdot n^{o(1)}.
\]
If $h(G) \ge \oh(\gamma^{-1})$, then the above algorithm based on
local vertex connectivity can be immediately invoked.
So it suffices to consider the remaining case where
$h(L,S,R) \leq \gamma)$.

Notice that, in this case, we have:
\begin{align}
\left|L\right|, \left|R\right| & \geq \gamma^{-1}, \text{and}\\
\left|S\right| & \leq n \cdot \gamma
\end{align}
That is, the cut is quite balanced, and we can thus use
divide-and-conquer.

For such a separation $(L,S,R)$ with $h(L,S,R) \leq \gamma$,
we first check if $|S|\ge k$ and whether every pair of 
$x,y\in S$ are $k$-vertex-connected.
By simple augmenting-path based max-flow routines
(such as the Ford-Fulkerson algorithm),
this takes time
\[
|S|^{2}\cdot O\left(mk\right)
\leq
|S|^{2}\cdot O\left(nk^2\right)
\leq \oh\left(n^3 \gamma^2 k^2 \right),
\]
where the first inequality follows from being able to
trim the graph down to the first $k$ spanning trees.

In \Cref{sec:splitvc}, we will also show a faster algorithm
that is useful when $k=\omega(1)$.

If any of these $poly(|S|)$ checks returns a small cut, then we're done.
Otherwise, by the structural property of separating triples
given in \Cref{thm:vertex-cut-characterization}
and proven in \Cref{sec:structure-vc},
it suffices to check recursively if $H_{L}$ and $H_{R}$
(which are $L$ and $R$ with extra vertices attached,
see statement of Theorem~\ref{thm:vertex-cut-characterization})
are both $k$-connected.
In \Cref{sec:vc_time}, we perform a detailed analysis of the
running time of this recursion.
For small values of $k$, and ignoring overheads
coming from the extra edges in $H_L$ and $H_R$, the running time
recurrence that we obtain in terms of $\gamma$ is essentially
\[
T\left( n \right)
=
\max\left\{ n \gamma^{-1},
\max_{n_1 + n_2 = n,
	\left|n_1\right|, \left|n_2\right| \geq \gamma^{-1}}
T\left( n_1 \right) + T\left( n_2 \right)
+ \oh\left( n^3 \gamma^2\right)
+ \oh\left( n^{1.5} \right) \right\}.
\]
The first term is maximized at the topmost level,
so can be considered separately against the total
cost of a recursion that always takes the second case.
The depth of such a recursion is more or less bounded by
the reduction in $n$ at each step, which is $\gamma^{-1}$.
So the total layers of recursion is $\oh(n \gamma)$, and as
the total size of each level of recursion is $n$,
the total work can be bounded by
\[
T\left( n \right)
\leq
\oh\left(
n^4 \gamma^3 + n^{2.5} \gamma 
+ n \gamma^{-1}
\right).
\]
This is minimized at $\gamma = n^{-0.75}$,
for a total of $\oh(n^{1.75})$.

In the above back-of-the-envelope calculation,
we treated $k$ as a constant for simplicity.
For larger values of $k$,
specifically $k=\omega(\log n/\log\log n)$,
we instead use the $\ot(\nu^{1.5} k)$-time LocalVC algorithm by \cite{NanongkaiSY19} as the running time of $O(\nu k^k)$ is too slow.
In this situation, with our framework, a $O(m^{5/3 -\epsilon})$-time
low-vertex-expansion algorithm is needed to break quadratic time for vertex-connectivity.

\subsection{Algorithm}

We now formalize this vertex connectivity algorithm
that we outlined above.
First, we formalize the local vertex connectivity
routine that searches for a small cut starting from a single vertex.
\begin{definition} [LocalVC] \label{def:localvc}
	$\LocalVC(G,x,\nu,k)$ is any algorithm that takes as input a pointer to any vertex $x \in V$ in an adjacency list representing a connected graph $G = (V,E)$, positive integers $\nu, k$ such that 
	\begin{align} \label{eq:local-conditions}
	\nu k \leq  c_1m  \quad \nu + k \leq  c_2n \quad  \mbox{ and } \quad \min_{v \in V} \deg(v) \geq k  
	\end{align}
	for some positive constant $c_1, c_2$ and outputs either a vertex-cut $S$ corresponding to a separation triple $(L,S,R)$ such that 
	$$ x \in L, \quad \vol(L) \leq O(\nu k), \mbox{ and } \quad |S| \leq k,$$or the symbol $\perp$ certifying that there is no separation triple $(L,S,R)$ such that 
	$$ x \in L, \quad \vol(L) \leq \nu, \mbox{ and } \quad |S| \leq k.$$
\end{definition}

\begin{theorem}[\cite{NanongkaiSY19,ChechikHILP17}] \label{thm:localvc}
	There is a deterministic $\LocalVC$ algorithm that runs in $\ot(\min(\nu^{3/2}k, \nu k^k) )$ time. 
\end{theorem}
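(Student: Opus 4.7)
The plan is to sketch both algorithms achieving the respective bounds, since neither strictly dominates the other. The overarching approach in both cases is to compute a \emph{localized} max-flow from $x$: we try to push $k$ units of flow out of $x$ in a version of $G$ where vertices are capacitated, and either succeed (which by max-flow/min-cut certifies $\perp$) or run out of room and extract a small cut from the residual graph.

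I would first reduce vertex cuts to edge cuts via the standard split-vertex construction: replace each $v \in V$ with $v^{\mathrm{in}}, v^{\mathrm{out}}$ connected by a unit-capacity edge, and give every original edge infinite capacity. A separation triple $(L,S,R)$ with $x \in L$ and $|S|<k$ then corresponds to a cut of capacity $<k$ in the split graph separating $x^{\mathrm{out}}$ from the rest, and bounds on $\vol(L)$ translate to bounds on the number of split-graph edges touched. The $\LocalVC$ routine therefore boils down to finding $k$ edge-disjoint $x^{\mathrm{out}}$-to-``everything-else'' paths in this split graph, or reporting a small reachable region when this fails.

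For the $\tO(\nu k^k)$ bound (the ChechikHILP17 adaptation), I would run $k$ rounds of DFS-based augmenting-path search from $x^{\mathrm{out}}$ in the residual split graph, allotting each round a budget of $\tO(\nu k^{k-1})$ edge-explorations. If any DFS exhausts its reachable frontier within budget without touching a vertex outside $\{x\}$'s component, the boundary of the reached set is a cut of capacity $<k$, which after un-splitting yields a separation triple with $\vol(L)=O(\nu k)$ and $|S|\le k$. The factor $k^k$ arises because each augmenting search in a unit-capacity setting may branch at interior split-edges, and a clean deterministic bookkeeping of these branches multiplies the naive $\nu k$ work by a further $k^{k-1}$. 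For the $\tO(\nu^{3/2} k)$ bound (the NanongkaiSY19 approach), I would instead use an Even--Tarjan-style local blocking-flow. Maintain a layered BFS from $x^{\mathrm{out}}$, extending it only while its explored volume stays $O(\nu)$; inside the layered graph, push a blocking flow; repeat. The classical $\sqrt{F}$-phase analysis, transplanted to the local setting, shows that after $\tO(\sqrt{\nu})$ phases either the shortest augmenting path exceeds $\sqrt{\nu}$ (leaving at most $\sqrt{\nu}$ augmenting paths, each costing $O(\nu)$ locally) or the BFS fails to reach the sink and the current frontier is a cut. Each phase costs $\tO(\nu)$ and we need at most $k$ units of flow, giving $\tO(\nu^{3/2} k)$.

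The main obstacle in both cases is ensuring that when the local exploration \emph{stalls}, the visited set genuinely certifies a cut of the form required by \Cref{def:localvc}. Concretely, one must verify that the residual-graph reachable set $R'$ in the split graph, when un-split, gives a vertex set $L$ with $x \in L$, $|N(L)| \le k$, and $\vol(L) = O(\nu k)$; the factor-of-$k$ slack between the input parameter $\nu$ and the output $O(\nu k)$ is precisely the price paid for this translation, since a single boundary split-edge can correspond to an original vertex whose degree contributes on both sides of the bookkeeping. Analogously, one must argue that the BFS/DFS never explores more than the stated budget of edges before termination, which requires instrumenting the search to abort and return the current frontier the moment either $k$ augmenting paths have been found or the volume budget is exceeded. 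Given these careful implementations, both running-time bounds follow, and the minimum of the two is what the theorem claims.
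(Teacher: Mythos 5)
The statement you are asked to prove is not proved anywhere in the paper: it is imported wholesale from \cite{NanongkaiSY19,ChechikHILP17}, as the citation in the theorem header signals. So there is no ``paper's own proof'' to compare against; the right reading of \Cref{thm:localvc} is as an external black box. Your attempt to reconstruct both algorithms from scratch is therefore far more ambitious than the paper requires, and should be judged against the source papers rather than this one.

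That said, your high-level framing is in the right spirit. The reduction of vertex cuts to edge cuts by vertex splitting (each $v$ becomes $v^{\mathrm{in}},v^{\mathrm{out}}$ joined by a unit-capacity edge), followed by a local unit-capacity max-flow from $x^{\mathrm{out}}$ that either pushes $k$ units or stalls and returns the residual-reachable frontier as a cut, is exactly the skeleton used in both cited works. Likewise, you correctly identify the factor-of-$k$ slack between the certification volume $\nu$ and the output volume $O(\nu k)$ as an unavoidable artifact of local exploration.

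The arithmetic in your $\tO(\nu^{3/2}k)$ analysis does not close, however. You assert that there are $\tO(\sqrt{\nu})$ blocking-flow phases, that each costs $\tO(\nu)$, and that after $\sqrt{\nu}$ phases there remain at most $\sqrt{\nu}$ augmenting paths each costing $O(\nu)$; this sums to $\tO(\nu^{3/2})$, with no $k$ anywhere. You then append ``and we need at most $k$ units of flow, giving $\tO(\nu^{3/2}k)$,'' but the Even--Tarjan phase count already absorbs the number of units pushed, so that extra $k$ does not arise from the argument you gave. To actually land on $\nu^{3/2}k$ you would need either the per-phase cost to be $\tO(\nu k)$ (consistent with the explored region having volume $O(\nu k)$, which your own last paragraph acknowledges), with $O(\sqrt{\nu})$ phases, or some other combination; the numbers you wrote simply do not multiply to the claimed bound. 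A similar looseness afflicts the $\tO(\nu k^k)$ sketch: you posit a budget of $\tO(\nu k^{k-1})$ edge-explorations per round and attribute the $k^{k-1}$ to branching at split-edges, but you neither derive that budget nor explain why the per-round cost compounds geometrically across rounds rather than, say, additively. Both algorithms really do exist in the cited papers with the stated running times, but the per-phase cost accounting is where your reconstruction is handwaving, and a reader who tried to implement from your sketch would not recover the claimed exponents.
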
 

\begin{definition} [SplitVC] \label{def:splitvc}
	$\SplitVC(G,S,k)$ is any algorithm that takes as input a connected graph $G = (V,E)$, a vertex-cut $S$, and positive integer $k$  such that $|S| \geq k$, and decides if there exists a pair $x \in S$, and $y \in S$ such that $\kappa_G(x,y) < k$. If so, it returns an $(x,y)$-vertex-cut of size less than $k$. Otherwise, it returns $\perp$. 
\end{definition}

It is easy to see that we can implement SplitVC by using at most $|S|^2$ calls to max-flow.
We show in Appendix~\ref{sec:splitvc}
that it is possible to implement deterministic SplitVC with running time $O((|S|+k^2)mk)$.

Here, we denote $n$ as the original input size, and treat $n$ as a global variable. We denote $|V|$ as the size of the current input. Let $\Lambda = \max( (8/c_1) k^2)^{1/a}, (9/c_2)k^2)^{1/a}, n^{\epsilon} ) = \Theta( \max(k^{2/a}, n^{\epsilon})) $ where $c_1$ and $c_2$ are the constants in \Cref{eq:local-conditions}, and $\epsilon > 0$ is sufficiently small constant. 

\begin{algorithm}[H]
\caption{MainVC$(G,k,a,n)$}
Input: Graph $G = (V,E)$, integer $k > 0$, real $a \in (0,1/2)$, integer $n > 0$. \\
Assumptions: $G$ has aboricity $k$.\\ %
Output: A vertex-cut of size $< k$ or the symbol $\perp$ certifying
that $\kappa_G \geq k$. 
\begin{algorithmic}[1] \label{alg:main-vc}
\If {$|V| \leq \baseCaseMainVC $} 
\State compute $\kappa_G$ using any deterministic algorithm.  
\State \Return answer based on $\kappa_G$. 
\EndIf 
\State Let $\eta = 1/(2|V|^{1-a})$.
\If {$h(G) \geq \eta$} 
\If{$\min_{v \in V} \deg(v) < k $ }
\State \Return $ N( u_{\min})$ where $u_{\min}$ is the vertex with minimum degree. 
\EndIf
\State Let $\nu \gets 6k^2/\eta$. \label{line:set-nu}
\For{each $x \in V$} 
 \If{LocalVC$(G, x , \nu, k-1)$  returns a vertex-cut} \Comment{\Cref{def:localvc}}
     \State \Return the corresponding vertex-cut in $G$. \label{line:local-vc-find-vcut} 
\EndIf
\EndFor 
\State \Return $\perp$. \label{line:g-is-k-connected-high-vexp}
\EndIf
\State Let $(L,S,R)$ be a separation triple such that $h(L,S,R) \leq \eta |V|^{o(1)}$. \Comment{\Cref{thm:vertex expansion}} 
\If {$|S| < k$, or $\SplitVC(G,S,k)$ returns a vertex-cut} \Comment{\Cref{def:splitvc}}
\State \Return the corresponding vertex-cut in $G$. \label{line:sparse-cut-find-vcut}
\EndIf
\State Let $H_L$ and $H_R$ be the left and right subgraph from $G$, respectively.  \Comment{\Cref{def:HLHR}}
\State Let $\tilde H_L$ and $\tilde H_R$ be the sparsified graph from
$H_L$ and $H_R$ respectively.  \Comment{\Cref{thm:sparsification}}
\If{MainVC$(\tilde H_L, k, a, n)$ or MainVC$(\tilde
  H_R, k, a, n)$ returns a vertex-cut} 
\State \Return the corresponding vertex-cut in $G$. \label{line:recursion-find-vcut} \Comment{ \Cref{lem:move-clique}.}
\EndIf
\State \Return $\perp$. \label{line:g-is-kconnected-low-vexp}
\end{algorithmic}
\end{algorithm}

\subsection{Correctness}
\BL \label{lem:main-vc-correct}
\Cref{alg:main-vc} returns either a vertex-cut of size $< k$ or $\perp$
certifying that $G$ is $k$-connected. 
\EL

 We use induction on number of vertices. We prove that given a connected graph $G$ with $n$ vertices,  \Cref{alg:main-vc} correctly returns a vertex-cut of size $<k$ or $\perp$. For the
base case, if $G$ has $ \leq \baseCaseMainVC $ vertices, we run any
deterministic vertex-connectivity algorithm to decide if $\kappa_G <
k$.   For the inductive hypothesis, we assume that \Cref{alg:main-vc}
outputs correctly for any connected graph with at most $r$
vertices where 
\begin{align} \label{eq:r-is-big} 
r \geq \baseCaseMainVC = \max( (8/c_1) k^2)^{1/a}, (9/c_2)k^2)^{1/a}, n^{\epsilon} ). %
\end{align}  We show as an inductive step that \Cref{alg:main-vc} outputs correctly for the graph with at most $r+1$ vertices.

We fix an arbitrary connected graph $G = (V,E)$ with $r+1$
vertices.    By \Cref{thm:sparsification}, we can also assume that $G$ has aboricity $k$. Therefore, the precondition for \Cref{alg:main-vc} is satisfied.  %

We first verify that preconditions in \Cref{eq:local-conditions} for LocalVC are satisfied. If $G$ has minimum degree less than $k$, we can  output the neighbors of the vertex of minimum degree. Otherwies, $G$ has minimum degree at least $k$. It remains to verify $\nu + k \leq  c_2|V|$, and $\nu k \leq  c_1|E|$. 

\begin{claim}\label{claim:localvc-good}
Let $\nu$ and $k$ be the numbers as defined in \Cref{alg:main-vc} (line~\ref{line:set-nu}). Denote $n' = r+1$, and $m'$ as number of edges in $G$. We have $\nu + k \leq c_2n'$ and $\nu k \leq c_1m'$. %
\end{claim}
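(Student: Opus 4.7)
The plan is a direct substitution followed by a constant-chasing calculation. At the point where \Cref{alg:main-vc} calls $\LocalVC$, we have $n'=|V|$, $\eta=1/(2n'^{1-a})$, and $\nu=6k^2/\eta=12k^2 n'^{1-a}$. The threshold $\Lambda$ in the definition of $\baseCaseMainVC$ was engineered precisely so that the preconditions of $\LocalVC$ (\Cref{eq:local-conditions}) hold once we exceed it; in particular, $n' \geq \Lambda$ ensures both $n'^a \geq (9/c_2)k^2$ and $n'^a \geq (8/c_1)k^2$, i.e.\ $k^2 = O(n'^a)$ with the right constant on each side.

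First I would check $\nu+k \leq c_2 n'$. Plugging in $k^2 \leq (c_2/9)n'^a$ gives $\nu = 12 k^2 n'^{1-a} \leq (12/9)\,c_2\, n'$, and the additive $+k$ term is of lower order since $k \leq (c_2/9)^{1/2} n'^{a/2}$ with $a/2<1$. This establishes the bound up to matching constants.

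Second I would check $\nu k \leq c_1 m'$. The extra factor $k$ compared to the previous inequality is exactly compensated by the fact that the algorithm has already executed the minimum-degree test, so at this point every vertex of $G$ has degree at least $k$, which implies $m' \geq kn'/2$. Combining with $k^2 \leq (c_1/8)n'^a$ we get $\nu k = 12 k^3 n'^{1-a} \leq (12/8)\, c_1\, k n' \leq 3 c_1 m'$, again of the right form modulo constants.

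The only delicate point is the bookkeeping of absolute constants: the $6$ in $\nu=6k^2/\eta$, the $2$ in $\eta=1/(2n'^{1-a})$, and the $8,9$ in the definition of $\Lambda$ must be matched so that the above two estimates finish with constants $c_1$ and $c_2$ exactly rather than multiples thereof. This is routine — one simply replaces $8,9$ by slightly larger constants (e.g.\ $24,24$) in the definition of $\Lambda$, which does not affect any asymptotic statement since $\Lambda = \Theta(\max(k^{2/a},n^\epsilon))$. No new idea is needed beyond the substitution and the minimum-degree lower bound on $m'$.
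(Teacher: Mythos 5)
Your proof takes the same substitution-and-threshold approach as the paper, but you handle one step more carefully than the paper does, and it is worth flagging. The paper's proof begins ``Since $G$ has aboricity $k$, $m' \leq n'k$. Therefore, it is enough to show that $\nu k < c_1 n'k$ \dots'' — but that reduction runs in the wrong direction: to establish $\nu k \leq c_1 m'$ one needs a \emph{lower} bound on $m'$, not the upper bound $m' \leq n'k$ furnished by arboricity. You correctly observe that the algorithm has already executed the minimum-degree test before the call to $\LocalVC$, so every vertex has degree at least $k$ and hence $m' \geq kn'/2$; this is the lower bound that actually closes the argument, and it costs only a factor of $2$ that the constants in $\Lambda$ can absorb. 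Both proofs otherwise proceed identically by plugging in $\nu = \Theta(k^2 n'^{1-a})$ and invoking $n' \geq \Lambda$ to control $k^2 n'^{-a}$. Your remark that the absolute constants ($12$ from $\nu = 6k^2/\eta$ with $\eta=1/(2n'^{1-a})$, versus the paper's stated $8k^2 n'^{1-a}$ and the $8,9$ inside $\Lambda$) do not quite match but can be repaired by enlarging the constants in $\Lambda$ without affecting $\Lambda = \Theta(\max(k^{2/a}, n^\epsilon))$ is accurate; the paper is a bit loose here, and the asymptotic conclusion is unaffected. In short, your proof is correct and slightly tightens the paper's argument at the $\nu k \leq c_1 m'$ step.
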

\begin{proof}
Since $G$ has aboricity $k$, $m' \leq  n'k$. Therefore, it is enough to show that  $\nu k < c_1 n'k $ and $\nu + k < c_2n'$. By \Cref{eq:r-is-big}, we have $n' \geq \baseCaseMainVC  \geq \max( (8/c_1) k^2)^{1/a}, (9/c_2)k^2)^{1/a} )$.  By simple algebra using $\nu = 8k^2n'^{1-a}$ ,  $n' \geq  ((8/c_1) k^2)^{1/a} $ implies $\nu k \leq c_1n'k$, and  $n' \geq ((9/c_2)k^2)^{1/a}$ implies $\nu + k \leq c_2n'$. 
\end{proof}

\begin{lemma} \label{lem:main-vc-correct-high-vexp}
If $h(G) \geq \eta$, then \Cref{alg:main-vc} correctly outputs a
vertex-cut of size at most $k$ (line~\ref{line:local-vc-find-vcut}) or the symbol $\perp$(line~\ref{line:g-is-k-connected-high-vexp}).
\end{lemma}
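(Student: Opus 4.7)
The plan is to case-split on the execution path of \Cref{alg:main-vc} under the assumption $h(G) \geq \eta$, showing that every output is correct.

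First, I would handle the ``low-degree'' shortcut. If $\min_{v \in V} \deg(v) < k$, the algorithm returns $N(u_{\min})$. Since $|V| > \baseCaseMainVC \geq 2$ and $G$ is connected, removing $N(u_{\min})$ leaves $u_{\min}$ isolated from at least one other vertex, so $N(u_{\min})$ is a valid vertex-cut of size $<k$. Hence this output is correct.

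Next, when $\min_{v}\deg(v) \geq k$, \Cref{claim:localvc-good} ensures the preconditions of \Cref{def:localvc} are met by the chosen $\nu$ and $k$, so every call $\LocalVC(G,x,\nu,k-1)$ behaves according to its specification. If any such call returns a vertex-cut, the specification guarantees it corresponds to a separation triple with $|S|\leq k-1 < k$, which is a correct output.

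The substantive case is when every $\LocalVC$ call returns $\perp$ and the algorithm outputs $\perp$ on line~\ref{line:g-is-k-connected-high-vexp}; here I must certify that $G$ is in fact $k$-connected. I plan to argue the contrapositive: suppose $G$ is not $k$-connected. Combined with $h(G) \geq \eta$, \Cref{pro:high-vexp-vol} yields a separation triple $(L^*,S^*,R^*)$ with $|S^*|<k$ and (WLOG) $|L^*| \leq 2k/\eta$. The key step is bounding $\vol(L^*)$. Using that $G$ has arboricity $k$ (by the preprocessing from \Cref{thm:sparsification}), I apply $|E(T,T)|\leq k|T|$ to $T=L^*$ and to $T=L^*\cup S^*$, obtaining $|E(L^*,L^*)|\leq k|L^*|$ and $|E(L^*,S^*)|\leq k(|L^*|+|S^*|)$. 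Since the only edges leaving $L^*$ go to $S^*$, I get
\[
\vol(L^*) = 2|E(L^*,L^*)| + |E(L^*,S^*)| \leq 3k|L^*| + k|S^*| \leq 6k^2/\eta + k^2.
\]
Because $|V| > \baseCaseMainVC$ forces $1/\eta = 2|V|^{1-a}$ to dominate $k^2$ by the choice of $\Lambda$, this bound is at most $\nu = 6k^2/\eta$ up to the constant slack absorbed into the definition of $\nu$. Consequently, for any $x \in L^*$, the triple $(L^*,S^*,R^*)$ is a witness that $\LocalVC(G,x,\nu,k-1)$ should \emph{not} have returned $\perp$, contradicting our case assumption. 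Therefore $G$ must be $k$-connected and the output $\perp$ is correct.

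The main obstacle is the constant-factor bookkeeping in the volume bound: aligning $3k|L^*|+k|S^*|$ with $\nu = 6k^2/\eta$ requires that the additive $k|S^*|\leq k^2$ term be dominated by $k^2/\eta$, which in turn relies on $|V|$ being above the base-case threshold $\baseCaseMainVC = \Theta(\max(k^{2/a}, n^{\epsilon}))$. Once this is verified (exactly as in \Cref{claim:localvc-good}), the rest of the argument is a clean application of \Cref{pro:high-vexp-vol} and \Cref{def:localvc}.
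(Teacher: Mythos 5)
Your overall strategy mirrors the paper's: handle the low-degree shortcut, invoke \Cref{claim:localvc-good} for the preconditions, and then argue via \Cref{pro:high-vexp-vol} and the $\LocalVC$ specification that $\perp$ is only returned when $G$ is $k$-connected. The contrapositive framing is equivalent to the paper's direct casework on whether a separation triple with $\minlr\le 2k/\eta$ and $|S|<k$ exists. However, your volume bound does not close.

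Concretely, you derive $\vol(L^*) \le 3k|L^*| + k|S^*| \le 6k^2/\eta + k^2$ by bounding $|E(L^*,L^*)|\le k|L^*|$ and $|E(L^*,S^*)|\le k(|L^*|+|S^*|)$ separately. But $\nu$ is set to exactly $6k^2/\eta$ in the algorithm, and the $\LocalVC$ guarantee for certifying non-existence requires $\vol(L^*)\le\nu$, not $\vol(L^*)\le\nu+k^2$. Your appeal to ``constant slack absorbed into the definition of $\nu$'' is not valid: $1/\eta$ dominating $k^2$ makes the additive $k^2$ term \emph{relatively} small, but $6k^2/\eta + k^2 > 6k^2/\eta = \nu$ for every $k\ge 1$, so the specification of \Cref{def:localvc} does not apply and you cannot conclude $\LocalVC$ would have found the cut. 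The repair is the observation the paper uses implicitly: since $|S^*|<k$ and the sparsified graph is simple, the bipartite cut satisfies $|E(L^*,S^*)|\le|L^*|\cdot|S^*| < k|L^*|$, so $\vol(L^*)\le 2k|L^*|+k|L^*| = 3k|L^*|\le 6k^2/\eta = \nu$ with no leftover term. (Alternatively, one can absorb $|E(L^*,L^*)|$ into the arboricity bound on $E(L^*\cup S^*,L^*\cup S^*)$ to get $\vol(L^*)\le 2k|L^*|+k|S^*|\le 4k^2/\eta+k^2$, and then use $\eta\le 1/2$ to fold $k^2$ under $2k^2/\eta$.) With either fix the rest of your argument goes through.
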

\begin{proof}
Suppose $G$ has  a separation triple $(L,S,R)$ such that 
\begin{align} \label{eq:sep-smallvol}
\minlr \leq 2k/\eta, \quad  \text{ and }   |S| < k.
\end{align}
We claim that \Cref{alg:main-vc} returns a vertex-cut of size at most $k$ (line~\ref{line:local-vc-find-vcut}).  We show that $\vol(L) \leq 6k^2/\eta$. 
 Without loss of generality,  we assume $|L| \leq |R|$. By \Cref{eq:sep-smallvol}, $|L| \leq 2k/\eta$. Since $G$ has aboricity $k$, and $|L| \leq 2k/\eta$, we have $\vol(L) \leq 2|E(L,L)| + |E(L,S)| \leq 2k|L| + k|L| = 6k^2/\eta$. Also, \Cref{alg:main-vc} (line~\ref{line:local-vc-find-vcut}) runs LocalVC on every seed vertex. So far, we have that there is $x \in L$, and $\vol(L) \leq 6k^2/\eta \leq \nu$, and $|S| \leq k - 1$. Also, by \Cref{claim:localvc-good}, the preconditions for \Cref{thm:localvc} are satisfied. Therefore, by \Cref{thm:localvc}, LocalVC at node $x$ outputs a vertex-cut of size at most $k-1$.

Suppose now that $G$ has no separation triple $(L,S,R)$ satisfying
\Cref{eq:sep-smallvol}. Recall that $h(G) \geq \eta$. Therefore,
by \Cref{pro:high-vexp-vol}, $G$ is $k$-connected. In this case, by
\Cref{thm:localvc}, LocalVC (line~\ref{line:local-vc-find-vcut})  always outputs $\perp$, and
\Cref{alg:main-vc} correctly returns the symbol $\perp$ (line~\ref{line:g-is-k-connected-high-vexp}). 
\end{proof}
 
Therefore, if $h(G) \geq \eta$, then
\Cref{lem:main-vc-correct-high-vexp} says that \Cref{alg:main-vc}
outputs correctly. Now, suppose that $h(G) \leq \eta (r+1)^{o(1)}$. We show that
\Cref{alg:main-vc} outputs correctly. 

\begin{lemma} \label{lem:main-vc-correct-low-vexp}
If $h(G) \leq \eta (r+1)^{o(1)} $, then  \Cref{alg:main-vc} correctly outputs a
vertex-cut of size at most $k$ (line~\ref{line:sparse-cut-find-vcut} or  line~\ref{line:recursion-find-vcut})  or the symbol $\perp$ (line~\ref{line:g-is-kconnected-low-vexp}). 
\end{lemma}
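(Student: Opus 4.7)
The plan is to verify correctness case-by-case according to where the algorithm returns, leaning on the structural characterization (Theorem~\ref{thm:vertex-cut-characterization}), the sparsification guarantee (Theorem~\ref{thm:sparsification}), and the induction hypothesis on graphs with at most $r$ vertices.

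First, I would dispose of the ``cut found directly'' cases. If $|S|<k$, then the triple $(L,S,R)$ produced by the vertex-expansion routine already certifies $\kappa_G<k$, and returning $S$ at line~\ref{line:sparse-cut-find-vcut} is correct. Otherwise $|S|\ge k$ and $\SplitVC(G,S,k)$ is invoked; by Definition~\ref{def:splitvc} its output is either $\perp$ or a genuine $(x,y)$-vertex-cut of size less than $k$, so the return at line~\ref{line:sparse-cut-find-vcut} is again valid. Past this point we may assume $|S|\ge k$ and that every pair $x,y\in S$ satisfies $\kappa_G(x,y)\ge k$.

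Next, I would set up the induction on $|V|$. The goal is to argue that both $\tilde H_L$ and $\tilde H_R$ have at most $r$ vertices so the inductive hypothesis applies. The assumption $h(G)\le \eta(r+1)^{o(1)}$ together with the approximation guarantee of the vertex-expansion routine gives $h(L,S,R)\le \eta(r+1)^{o(1)}=\tfrac{1}{2(r+1)^{1-a-o(1)}}$, so Corollary~\ref{cor:vexp-clean} yields
\[
\min(|L|,|R|)\ \ge\ (r+1)^{1-a-o(1)}
\qquad\text{and}\qquad
|S|\ \le\ (r+1)^{a+o(1)}/2.
\]
Because $H_L$ replaces $R$ by a $k$-clique and similarly for $H_R$ (Definition~\ref{def:HLHR}), we get $|V(H_L)|=|L|+|S|+k$ and $|V(H_R)|=|R|+|S|+k$; provided $r+1$ is above the base-case threshold $\baseCaseMainVC$, both quantities are strictly less than $r+1$, and the sparsification step of Theorem~\ref{thm:sparsification} preserves vertex counts, so $|V(\tilde H_L)|,|V(\tilde H_R)|\le r$. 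Moreover, Theorem~\ref{thm:sparsification} preserves the ``is there a vertex-cut of size $<k$'' property and enforces arboricity $k$, so the preconditions of the recursive calls are met.

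Finally, I would translate the outcomes of the two recursive calls back to $G$ through the structural tools of Section~\ref{sec:structure-vc}. By the inductive hypothesis, MainVC on $\tilde H_L$ (resp.\ $\tilde H_R$) correctly returns either $\perp$, certifying $k$-connectivity of $\tilde H_L$ and hence of $H_L$, or a vertex-cut $S'$ of size $<k$ in $\tilde H_L$, which is also a vertex-cut in $H_L$. In the first subcase, if both recursions return $\perp$, then $H_L$ and $H_R$ are both $k$-connected; together with $|S|\ge k$ and the pairwise $k$-connectivity inside $S$, Theorem~\ref{thm:vertex-cut-characterization} gives that $G$ is $k$-connected, so returning $\perp$ at line~\ref{line:g-is-kconnected-low-vexp} is correct. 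In the second subcase, Observation~\ref{obs:clqiue-one-side} lets us assume (after swapping $L',R'$ if needed) that $V_{2,L}\cap R'=\emptyset$, and then Lemma~\ref{lem:move-clique} converts $(L',S',R')$ into a separation triple with $V_{2,L}\subseteq L''$ and $|S''|\le|S'|<k$; Lemma~\ref{lem:cut-in-hl-is-a-cut-in-g} then lifts $S''$ to a vertex-cut of size $<k$ in $G$, matching the return at line~\ref{line:recursion-find-vcut}.

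The main obstacle is the size-decrease bookkeeping: one must check that the extra $k$ clique vertices added in $H_L,H_R$, plus the $n^{o(1)}$ slack from the vertex-expansion approximation, still leaves $|V(H_L)|,|V(H_R)|<|V|$ above the base case. This is where the choice $\baseCaseMainVC=\Theta(\max(k^{2/a},n^\epsilon))$ and the bound $|S|\le (r+1)^{a+o(1)}/2$ from Corollary~\ref{cor:vexp-clean} are used, and it is the only nontrivial numerical step; everything else is a direct invocation of the structural lemmas.
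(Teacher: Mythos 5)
Your proposal is correct and follows essentially the same route as the paper's proof: discharge the direct-return cases via $|S|<k$ or \textsc{SplitVC}, use \Cref{cor:vexp-clean} together with the lower bound $\min(|L|,|R|)\ge (r+1)^{1-a-o(1)}$ and $r+1>\baseCaseMainVC$ to argue $H_L$ and $H_R$ (hence $\tilde H_L,\tilde H_R$) have fewer than $r+1$ vertices, then invoke the inductive hypothesis, \Cref{thm:vertex-cut-characterization}, \Cref{thm:sparsification}, and \Cref{lem:move-clique,lem:cut-in-hl-is-a-cut-in-g} to translate the recursive outputs back to $G$. The only stylistic difference is that you cite the $|S|\le (r+1)^{a+o(1)}/2$ bound while checking the size decrease, but only the lower bound on $\min(|L|,|R|)$ is actually needed for $k<|R|$; this is a harmless extra.
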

\begin{proof}
Since $h(G) \leq \eta (r+1)^{o(1)}$, there is a 
separation triple $(L,S,R)$ such that $h(L,S,R) \leq \eta  (r+1)^{o(1)}$. If $|S| < k$, or we can find a pair of vertices $x,y$ in $S$ such that $\kappa(x,y) < k$, then \Cref{alg:main-vc} (line~\ref{line:sparse-cut-find-vcut}) outputs the corresponding vertex-cut of size at most $k$, and we are done. Now, we assume $|S| \geq k$ and $\kappa(x,y) \geq k$ for all $x,y \in S$. 

Since we set $\eta = 1/(2(r+1)^{1-a})$,  $h(L,S,R) \leq  1/(2(r+1)^{1-a-o(1)})$. By \Cref{cor:vexp-clean}, we have 
\begin{align}  \label{eq:sparse-vcut}
\min(|L|,|R|) \geq (r+1)^{1-a-o(1)}. %
\end{align}   
Let $H_L$ and $H_R$ be the left and right subgraphs as in
\Cref{def:HLHR}.  We claim that the number of vertices of $H_L$ and $H_R$ are strictly smaller than $n$.  We focus on $H_L$ because the case $H_R$ is similar.  Suppose otherwise that number of vertices from $G$ to $H_L$ does not decrease. This means  $k \geq |R|$ by \Cref{def:HLHR}. By \Cref{eq:sparse-vcut}, $|R| \geq \minlr \geq  (r+1)^{1-a-o(1)}$. Therefore, $k \geq (r+1)^{1-a-o(1)}$, so $r < k^{1/(1-a-o(1))}$,  contradicting to \Cref{eq:r-is-big}.

By \Cref{thm:sparsification}, we obtain $\tilde H_L$ and $\tilde H_R$ where number of vertices does not change from that of $H_L$ and $H_R$, which means the number of vertices are less than $n$. Also, both $\tilde H_L$ and $\tilde H_R$ have aboricity $k$. Furthremore,  any vertex-cut of $\tilde H_L$ ($\tilde H_R$) with cardinality $< k$ is a vertex-cut in $H_L$ ($\tilde H_R$).  We now prove the inductive step. 

Suppose $G$ is not $k$-connected. We show that \Cref{alg:main-vc}
(line~\ref{line:recursion-find-vcut}) returns a vertex-cut in $G$. By
\Cref{thm:vertex-cut-characterization}, $H_L$ or $H_R$ is not
$k$-connected. By \Cref{thm:sparsification}, $\tilde H_L$ or $\tilde
H_R$ is not $k$-connected. Since $\tilde H_L$ and $\tilde H_R$ have
less than $n$ vertices, and they have aboricity $k$,
\Cref{alg:main-vc} returns a vertex-cut for $\tilde H_L$ (or $\tilde
H_R$) by inductive hypothesis. Also, any vertex-cut in $\tilde H_L$
(or $\tilde H_R$) is a vertex-cut in $H_L$ (or $H_R$). By
\Cref{lem:move-clique,lem:cut-in-hl-is-a-cut-in-g}, we can construct the corresponding vertex-cut in $G$ . %
Therefore,  \Cref{alg:main-vc} (line~\ref{line:recursion-find-vcut}) finds a vertex-cut in $G$. 

Suppose now that $G$ is $k$-connected. We show that \Cref{alg:main-vc}
(line~\ref{line:g-is-kconnected-low-vexp}) returns the symbol
$\perp$. By \Cref{thm:vertex-cut-characterization}, $H_L$ or $H_R$ is 
$k$-connected. By \Cref{thm:sparsification}, $\tilde H_L$ or $\tilde H_R$  is $k$-connected.  Since $\tilde H_L$ and $\tilde H_R$ have fewer than $n$ vertices, and they have aboricity $k$, \Cref{alg:main-vc} returns  the symbol $\perp$ by inductive hypothesis.
Therefore,  \Cref{alg:main-vc} (line~\ref{line:g-is-kconnected-low-vexp}) correctly return the symbol
$\perp$. %
\end{proof}

Therefore, by
\Cref{lem:main-vc-correct-high-vexp,lem:main-vc-correct-low-vexp}, we
complete the proof of the
inductive step that  \Cref{alg:main-vc} is correct for $G$ of at most $r+1$ vertices. 
Therefore, \Cref{lem:main-vc-correct} is proved. 

\subsection{Running Time} 
\label{sec:vc_time}

\begin{definition} \label{def:generic-runtime} We define the following running times for subroutines in \Cref{alg:main-vc}. 
\begin{itemize}[nolistsep,noitemsep]
\item An algorithm in the base-case runs in $T_{\textbase}(m,n,k)$ time. 
\item Approximate vertex-expansion $h(G)$ runs in $T_{\texth}(m,n)$ time. 
\item $\LocalVC(G, x, \nu, k-1)$ runs in $T_{\textlocal}(\nu, k)$ time. 
\item $\SplitVC(G,S,k)$ runs in $T_{\textsplit}(m, n, k, |S|)$ time.
\end{itemize}
\end{definition}

\begin{definition} \label{def:use-time}
We define $t_{\textcost}, t_{\textlocal}, $ and $t_{\textbase}$ as follows. 
\begin{itemize}[nolistsep,noitemsep]
\item $t_{\textcost} = T_{\texth}(nk,n) + T_{\textsplit}(nk, n, k, n^{a+o(1)}) + O(nk)$. 
\item $t_{\textlocal} = T(k^2n^{1-a},k)$.
 \item $t_{\textbase} = T_{\textbase}(\baseCaseMainVC k , \baseCaseMainVC, k )$.
\end{itemize}
\end{definition}

\begin{lemma} \label{lem:main-vc-runtime}
 \Cref{alg:main-vc} runs in time $\ot( t_{\textcost} n^{a+o(1)} + n (t_{\textlocal} + t_{\textbase}) )) $. 
\end{lemma}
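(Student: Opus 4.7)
The plan is to analyze the recursion tree of Algorithm~\ref{alg:main-vc} and decompose the total work by node type. A base-case leaf ($|V|\leq\Lambda$) costs at most $t_{\textbase}$ by definition. A high-vertex-expansion leaf ($h(G)\geq\eta$) runs $\LocalVC$ from every vertex with parameter $\nu=6k^2/\eta=O(k^2|V|^{1-a})$, so it contributes $|V|\cdot T_{\textlocal}(\nu,k)\leq|V|\cdot t_{\textlocal}$. A low-expansion internal node first pays $T_{\texth}(|E|,|V|)+T_{\textsplit}(|E|,|V|,k,|S|)+O(|E|)\leq t_{\textcost}$ for the one approximate-vertex-expansion call, the $\SplitVC$ call on a set of size $|S|\leq n^{a+o(1)}/2$, and the sparsification step (using $|E|\leq k|V|\leq kn$), and then recurses on $\tilde H_L,\tilde H_R$.

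Next I would read off the structural properties of each low-expansion split from Corollary~\ref{cor:vexp-clean}: a node of current size $|V|$ produces children with $|V_L|,|V_R|<|V|$, $\min(|L|,|R|)\geq|V|^{1-a-o(1)}$, $|S|\leq|V|^{a+o(1)}/2$, and hence $|V_L|+|V_R|=|V|+|S|+2k\leq|V|(1+|V|^{a-1+o(1)})$. A straightforward induction then shows that the total vertex mass summed across all nodes of the recursion tree is $n^{1+o(1)}$, since the multiplicative per-split growth $\prod(1+|V|^{a-1+o(1)})$ telescopes to $n^{o(1)}$. Consequently the aggregate high-expansion-leaf cost is $\ot(n)\cdot t_{\textlocal}$, and the aggregate base-case-leaf cost is at most $(n^{1+o(1)}/\Lambda)\cdot t_{\textbase}\leq n\cdot t_{\textbase}$, matching the two leaf terms in the claimed bound.

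The delicate remaining step is to bound the number of internal nodes by $n^{a+o(1)}$. I would first show that the recursion depth along every root-to-leaf path is $\ot(n^a)$: the larger child has size at most $|V|-|V|^{1-a-o(1)}$, so the sequence $N_{i+1}\leq N_i-N_i^{1-a-o(1)}$ descends from $n$ down to $\Lambda$ in $\ot(n^a)$ steps, as seen from a continuous approximation to $\sum N_i^{a-1}$. Combining this depth bound with the smaller-child lower bound $|V|^{1-a-o(1)}$ and the total-mass bound $n^{1+o(1)}$, a charging argument---assign each internal node to its smaller-side vertex set, which is pairwise disjoint among nodes at any fixed level---yields a total of $n^{a+o(1)}$ internal nodes. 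Summing the internal-node cost $t_{\textcost}\cdot n^{a+o(1)}$ with the two leaf contributions establishes the stated bound.

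The main obstacle is the internal-node count. The crude estimate of (tree mass)$/\Lambda=n^{1-\epsilon}$ internal nodes would be too weak to deliver the eventual $\oh(n^{1.75})$ $k$-vertex-connectivity bound; sharpening it to $n^{a+o(1)}$ requires exploiting both directions of Corollary~\ref{cor:vexp-clean} simultaneously---the lower bound $\min(|L|,|R|)\geq|V|^{1-a}$ on both children limits how unbalanced consecutive splits can be before the larger side shrinks, while the upper bound $|S|\leq|V|^{a}/2$ controls per-split mass growth---together forcing the recursion to behave essentially as a main path of length $\ot(n^a)$ decorated with bounded side-branches rather than as a wide balanced tree.
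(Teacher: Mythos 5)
Your decomposition of the total work by node type is the right starting point, and your conclusions about the two leaf terms (base-case leaves contributing $\ot(n)\cdot t_{\textbase}$, high-expansion leaves contributing $\ot(n)\cdot t_{\textlocal}$) match the paper's argument via \Cref{cor:total-leaves-tn}. However, there is a genuine gap in your treatment of the internal-node term: the claim that the number of internal nodes is $n^{a+o(1)}$ is simply false. The condition from \Cref{cor:vexp-clean} only gives a lower bound $\min(|L|,|R|)\ge n^{1-a-o(1)}$; the smaller side can still be as large as $n/2$, so a near-balanced recursion tree is possible. In that scenario the tree has roughly $\Theta(n^{1-\epsilon})$ internal nodes before reaching the base-case threshold $\baseCaseMainVC=\Theta(n^{\epsilon})$, which for small $\epsilon$ far exceeds $n^{a+o(1)}$ since $a<1/2$. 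Your charging argument---smaller-side sets disjoint per level, total mass $n^{1+o(1)}$, min smaller-side size $\Lambda^{1-a-o(1)}$---yields only an internal-node count of $n^{1+o(1)}/\Lambda^{1-a-o(1)}$, not $n^{a+o(1)}$. The depth bound $\ot(n^a)$ does not rescue this, because near-balanced splits produce a \emph{shallow} but \emph{wide} tree whose node count is governed by width, not depth.

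The reason the final bound is nevertheless correct is that the per-node cost $t_{\textcost}(n_v)$ scales superlinearly with $n_v$, so the many small nodes in a wide tree each cost far less than $t_{\textcost}$. The paper exploits this directly: it contracts each right spine of $\mathcal{T}$ into a single super-node of a shallower tree $\mathcal{T}'$ (with depth only $O(\log n)$), shows each right spine has $O(n^{a'})$ nodes and hence cost $O(n_v^{x+a'})$ (\Cref{claim:cost-right-spine}), proves the total size per level of $\mathcal{T}'$ stays $O(n)$ (\Cref{claim:internal-node}, \Cref{lem:totalsizen-each-level}), and then uses the superadditivity $\sum_u n_u^{x+a'}\le(\sum_u n_u)^{x+a'}$ to get $\ot(n^{x+a'})=\ot(t_{\textcost}n^{a+o(1)})$ total cost. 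Counting nodes rather than weighting by size cannot reach this bound. Your telescoping argument for total vertex mass $n^{1+o(1)}$ has a related issue: directly telescoping the factor $(1+n_v^{a-1+o(1)})$ over up to $\ot(n^a)$ levels of the uncontracted tree $\mathcal{T}$ does not obviously converge when $\epsilon$ is small, which is exactly why the paper works with the $O(\log n)$-depth contracted tree $\mathcal{T}'$ and the factor $(1+2/n^{\epsilon(1-2a')})^{O(\log n)}=O(1)$ instead.
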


We derive the running time of the \Cref{alg:main-vc}  by providing an upper bound in terms of recurrence relation as in \Cref{sec:recurrence}, and solving the recurrence relation in \Cref{sec:solve-recurrence}. We prove \Cref{lem:main-vc-runtime} in \Cref{sec:proof-main-vc-runtime}. 

Throughout this section, we denote $a' = a + o(1)$. 

\subsubsection{Recurrence Relation}\label{sec:recurrence}

\begin{lemma}
Suppose \Cref{alg:main-vc} never encounters the case $h(G) \geq \eta$, then the running time satisfies the following  recurrence relation. 
\begin{align} \label{eq:recurrence_main_vc}
T_{k,a}(n) \leq  T_{k,a}(\ell+s+k) + T_{k,a}(n-\ell+k) + T_{\texth}(nk,n) + T_{\textsplit}(nk, n, k, s) + O(nk), 
\end{align}
where $\ell,s, k$ satisfy 
\begin{align} \label{eq:condition_recurrence}
k < s \leq n^{a'}/2, \quad   \ell \geq n^{1-{a'}} , \quad \ell \leq n/2, \quad \mbox{ and } \quad k < n^{1-{a'}},
\end{align}
and the base case is $T_{k,a}(n) = T_{\textbase}(nk,n,k)$ for $n \leq \baseCaseMainVC$.   
\end{lemma}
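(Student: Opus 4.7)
The plan is to step through one invocation of \textsc{MainVC}$(G,k,a,n)$ in the low-vertex-expansion regime and account for the cost of every line, since by assumption we never enter the high-expansion branch. Write $n := |V|$ and $m$ for the number of edges of $G$; the arboricity-$k$ invariant maintained through \Cref{thm:sparsification} gives $m \le nk$, so in every cost bound we may substitute $m \mapsto nk$. For brevity set $a' := a + o(1)$.

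First I would account for the non-recursive work. The call to the $n^{o(1)}$-approximate vertex-expansion routine contributes $T_\texth(nk,n)$. The $|S|<k$ test is free up to lower-order terms, while the subsequent call $\SplitVC(G,S,k)$ contributes $T_\textsplit(nk,n,k,s)$ with $s := |S|$. Constructing $H_L, H_R$ via \Cref{def:HLHR} and sparsifying them via \Cref{thm:sparsification} both run in linear time, giving a combined $O(m) = O(nk)$ term.

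Second I would size the two recursive inputs. Put $\ell := \min(|L|,|R|)$ and, without loss of generality, $|L| = \ell$. \Cref{def:HLHR} yields
\[
|V(H_L)| = |L| + |S| + k = \ell + s + k, \qquad |V(H_R)| = |S| + |R| + k = n - \ell + k,
\]
and since \Cref{thm:sparsification} preserves vertex counts, the two recursive calls contribute $T_{k,a}(\ell+s+k) + T_{k,a}(n-\ell+k)$. Combining with the non-recursive work yields the claimed inequality.

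It remains to verify the constraints on $(\ell,s,k)$. Because the algorithm recurses only when the $|S|<k$ test fails, we get $s \ge k$; the lemma's strict form $k<s$ is either a cosmetic strengthening or can be enforced by treating $s=k$ as an immediate-return case, and the shape of the recurrence is unaffected. The bounds $s \le n^{a'}/2$ and $\ell \ge n^{1-a'}$ follow directly from \Cref{cor:vexp-clean} applied to $(L,S,R)$ with $h(L,S,R) \le \eta \cdot n^{o(1)} = 1/(2n^{1-a'})$, while $\ell \le n/2$ is immediate from $\ell = \min(|L|,|R|)$. The only delicate constraint is $k < n^{1-a'}$, which must hold as an invariant of the recursion; this is precisely what the choice of the base-case threshold $\Lambda = \Theta(\max(k^{2/a}, n^{\epsilon}))$ secures, by the same calculation already used in \Cref{lem:main-vc-correct-low-vexp} showing that a violation would force $n \le \Lambda$ and thus drop us into the base case $T_{k,a}(n) = T_\textbase(nk,n,k)$. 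The main (and only minor) obstacle is therefore tracking this last invariant cleanly through the recursion; the rest is direct cost accounting.
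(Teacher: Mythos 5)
Your proposal is correct and follows essentially the same route as the paper: account for the non-recursive costs line by line (using $m\le nk$ from arboricity), read off $|V(H_L)|=\ell+s+k$ and $|V(H_R)|=n-\ell+k$ from \Cref{def:HLHR}, and extract the size constraints from \Cref{cor:vexp-clean}. One point of divergence worth flagging: you treat $k < n^{1-a'}$ as a ``delicate'' invariant that must be threaded through the recursion via the base-case threshold $\Lambda$, but the paper gets it for free and locally. Since $a < 1/2$, one has $a' < 1/2$ and hence $n^{a'}/2 < n^{1-a'}$; combined with $k < s \le n^{a'}/2$ (already established from \Cref{cor:vexp-clean} and the $\SplitVC$ pass), this yields $k < n^{1-a'}$ directly, with no appeal to $\Lambda$ or any recursion-wide invariant. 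Your detour through $\Lambda$ is not wrong, but it makes a trivial consequence of the other constraints look harder than it is. Your observation about the strictness of $k < s$ (versus the $k \le s$ that the algorithm actually guarantees after the $\SplitVC$ test) is fair; the paper asserts the strict inequality without comment, and it is indeed harmless for the recurrence.
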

\begin{proof}
By assumption the input graph $G = (V,E)$ has aboricity $k$, meaning that $m = nk$. If $n \leq \baseCaseMainVC$, then we run any deterministic vertex-connectivity algorithm in  $T_{\textbase}(nk,n,k)$ time. Now suppose $n > \baseCaseMainVC$. \Cref{alg:main-vc} first computes the vertex expansion $h(G)$ of the graph $G$ in   $T_{\texth}(nk,n)$ time. Next, we obtain a separation triple $(L,S,R)$ such that $h(L,S,R) \leq 1/(2n^{1-a'})$. By \Cref{cor:vexp-clean}, we have 
\begin{align} \label{eq:vexp-size-clean}
\min(|L|,|R|) \geq  n^{1-a'} \quad \mbox{ and } \quad |S| \leq  n^{a'}/2. 
\end{align}
Next, we runs $\SplitVC$ algorithm to check if there is a pair of vertices  $x \in S, y \in S$ such that $\kappa_G(x,y) < k$.  This takes $T_{\textsplit}(nk,n,k,s)$ time where $s = |S|$.  Without loss of generality, we assume that there is no such pair and so the algorithm continues. In this situation, we have $k < |S| = s$. Furthermore, by \Cref{eq:vexp-size-clean}, $s = |S| \leq n^{a'}/2$. This justifies the first inequalities in \Cref{eq:condition_recurrence}. Next, we construct left and right subgraphs and sparsify them in $O(m) = O(nk)$ time. We assume WLOG that $|L| \leq |R|$ (otherwise, we can swap $L$ and $R$ in the separation triple). Let $\ell = |L|$. By \Cref{eq:vexp-size-clean}, $\ell = |L| \geq n^{1-a'}$, so we get the second inequality in \Cref{eq:condition_recurrence}. The third inequality in \Cref{eq:condition_recurrence} follows from \Cref{cor:vexp-clean} that $\ell = \minlr \leq n/2$. The final inequality in \Cref{eq:condition_recurrence} follows from $k < n^{a'}/2$, and $a < 0.5$.    By \Cref{def:HLHR}, we have that $H_L$ has $\ell +s + k$ vertices, and $H_R$ has $n-\ell+k$ vertices. Also, the number of new edges is $O(|S|k+k^2) = O(nk)$ for $H_L$ and $H_R$. We apply \Cref{thm:sparsification} for both $H_L$ and $H_R$, which takes additional $O(nk)$ time. Therefore, the running time for solving two subproblems is additional $T_{k,a}(\ell+s+k) + T_{k,a}(n-\ell+k)$.  
\end{proof}

\subsubsection{Solving Recurrence Relation} \label{sec:solve-recurrence}

This section is devoted to solve recurrence relation $T_{k,a}(n)$ in \Cref{eq:recurrence_main_vc}.  The main result is the following lemma.

\begin{lemma}\label{lem:formula_tn}
 An explicit function in \Cref{eq:recurrence_main_vc} is $T_{k,a}(n) = \ot(t_{\textcost} n^{a+o(1)}+  n  t_{\textbase}).$ 
\end{lemma}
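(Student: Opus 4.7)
The plan is to view \eqref{eq:recurrence_main_vc} as a binary recursion tree $\mathcal{T}$. At each internal node $u$, let $n_u$ denote the current subproblem size; the split produces children of sizes $n_u^L = \ell + s + k$ and $n_u^R = n_u - \ell + k$ with $\ell \in [n_u^{1-a'},\, n_u/2]$ and $s \le n_u^{a'}/2$, writing $a' := a + o(1)$. Each internal node contributes non-recursive work at most
\[
\tau(n_u) := T_{\texth}(n_u k, n_u) + T_{\textsplit}(n_u k, n_u, k, n_u^{a'}) + O(n_u k),
\]
and each leaf contributes at most $t_{\textbase}$; writing $t_{\textcost} := \tau(n)$, the claim reduces to showing $\sum_{u \text{ internal}} \tau(n_u) = \tilde O(n^{a'})\,t_{\textcost}$ together with $(\text{\# leaves})\cdot t_{\textbase} = \tilde O(n)\,t_{\textbase}$.

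First I would bound the depth of $\mathcal{T}$: using $n_u^{a'}$ as a potential, descending into the smaller child multiplies $n_u$ by at most $1/2+o(1)$, and descending into the larger child decreases the potential by an additive $\Omega(a')$ via a Bernoulli expansion of $(n_u-n_u^{1-a'})^{a'}$. Hence every root-to-leaf path has length at most $O(n^{a'})$.

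Next, I would control the total mass level by level. Let $M_d := \sum_{u \text{ at depth }d} n_u$. Each internal node at depth $d$ injects $s + 2k = O(n_u^{a'})$ fresh vertices into the children (from the separator $S$ appearing in both $H_L,H_R$ plus the two size-$k$ cliques), and since every internal node satisfies $n_u \ge \baseCaseMainVC \ge n^\epsilon$, the number of internal nodes at depth $d$ is at most $M_d/\baseCaseMainVC$. This yields the geometric recursion $M_{d+1} \le M_d \bigl(1 + O(n^{a'-\epsilon})\bigr)$. Choosing the constant $\epsilon$ in the definition of $\baseCaseMainVC$ to satisfy $\epsilon > 2a$ (compatible with the standing assumption $a < 1/2$) and iterating over the $O(n^{a'})$ levels gives $M_d = O(n)$ for every $d$, and therefore $\sum_u n_u = \sum_d M_d = O(n^{1+a'})$.

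Finally, both cost estimates follow. Because $\tau$ grows polynomially in its first argument with exponent $p\ge 1$ (dominated by the vertex-expansion and SplitVC calls), so that $\tau(N)\le (N/n)^p\,\tau(n)$, we have
\[
\sum_u \tau(n_u) \;\le\; \frac{\tau(n)}{n^p}\sum_u n_u^p \;\le\; \frac{\tau(n)}{n^p}\cdot n^{p-1}\sum_u n_u \;=\; \tilde O(n^{a'})\cdot t_{\textcost},
\]
using $n_u\le n$ in the middle step. The total leaf mass is $n + \sum_{u \text{ internal}} O(n_u^{a'}) \le n + O(n^{a'}/\baseCaseMainVC)\sum_u n_u = O(n)$, so the number of leaves is $O(n)$ and their contribution is $\tilde O(n)\,t_{\textbase}$. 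The main obstacle is the self-consistent level-by-level bookkeeping in the mass bound: the estimate on the number of internal nodes at each depth depends on $M_d$ itself, so one must verify that the iterated growth factor $(1+O(n^{a'-\epsilon}))^{O(n^{a'})}$ stays bounded, which is precisely the role of the condition $\epsilon > 2a$.
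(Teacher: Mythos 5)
Your depth bound ($O(n^{a'})$ root-to-leaf paths) and your per-node injection accounting ($s_u+2k = O(n_u^{a'})$ fresh mass at each internal split) are both correct, but the level-by-level iteration over the \emph{raw} binary tree $\mathcal{T}$ is where the argument genuinely fails. You compound $M_{d+1} \le M_d\bigl(1 + O(n^{a'-\epsilon})\bigr)$ over $O(n^{a'})$ levels, and to keep $(1 + O(n^{a'-\epsilon}))^{O(n^{a'})}$ bounded you are forced to take $\epsilon \ge 2a$. Even the tightest version of this per-level estimate (replacing your crude $O(n^{a'})/\baseCaseMainVC$ by the $n_u$-weighted injection ratio, using $(s_u+2k)/n_u \le 1.5\,n_u^{a'-1} \le 1.5\,n^{-\epsilon(1-a')}$ for internal $u$) only relaxes the requirement to $\epsilon \ge a'/(1-a')$. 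Either way $\epsilon$ must be bounded away from zero, which contradicts the definition $\baseCaseMainVC = \Theta\bigl(\max(k^{2/a}, n^{\epsilon})\bigr)$ with ``$\epsilon > 0$ sufficiently small.'' The damage is not cosmetic: $t_{\textbase}$ scales like $\Theta(\baseCaseMainVC^2 \operatorname{poly}(k))$, so your choice makes $n\,t_{\textbase} = \Omega(n^{1+2\epsilon}) \ge \Omega(n^{1+4a})$, and at the paper's eventual $a = 1/4$ this is $\Omega(n^2)$, overwhelming the claimed $\oh(n^{1.75})$ bound of \Cref{thm:VCAlgoMain}.

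The paper avoids compounding over $\Theta(n^{a'})$ levels by contracting each right spine of $\mathcal{T}$ (the maximal chain of right children, along which sizes drop additively by $\Omega(n^{1-a'})$) into a single meta-node, giving the tree $\mathcal{T}'$ (\Cref{def:recurrence_cn}, \Cref{lem:teqc}). Since every \emph{left} child has size at most $3n_u/4$, $\mathcal{T}'$ has depth only $O(\log n)$. The cost of one meta-node is controlled separately: the spine has $O(n^{a'})$ nodes, each of cost $O(n^x)$, giving $O(n^{x+a'})$ per meta-node (\Cref{claim:cost-right-spine}). The geometric mass bound is then iterated only over the $O(\log n)$ levels of $\mathcal{T}'$, where $\bigl(1 + 2/n^{\epsilon(1-2a')}\bigr)^{O(\log n)} = 1 + o(1)$ holds for \emph{every} fixed $\epsilon > 0$ (\Cref{obs:useful-recursion-cn}, \Cref{claim:internal-node}). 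That decoupling---isolating the long-but-shrinking right spine from the balanced-but-branching left splits so that the compounding sees only logarithmic depth---is the missing idea in your proposal.
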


Throughout this section, we denote $t_{\textcost} = \ot(n^{x})$ for some $x \geq 1$. 

\begin{definition}[Recursion tree]
A \textit{recursion tree} for a recurrence relation is a tree that is generated by tracing the function calls recursively. Each node $v$ in the tree contains (1) size$(v)$, which is the input to the function, and (2) cost$(v)$, which is the cost at current node excluding the cost for recursions. %
\end{definition}

Let $\mathcal{T}$ be a
recursion tree for the recurrence relation in \Cref{eq:recurrence_main_vc} where each node of size $n_i$ in the tree has cost $O(n_i^x)$. Without loss of generality, we assume the recursion tree $\mathcal{T}$  always has left subproblem of size $\ell+s+k$, and right subproblem of size $n-\ell+k$ (otherwise, we can swap left and right subtrees without affecting total cost).

We give the intuition for solving the recurrence relation \Cref{eq:recurrence_main_vc} using recursion tree. We consider the right child as the subproblem with ``true'' size, and left child as the subproblem with extraneous nodes in the graph. Hence, the total cost on along the right spine from the root counts the cost without extra nodes. For the extra nodes in the graph when recurse on the left, we can essentially charge the cost on the ``true'' nodes. The number of extra nodes is sufficiently small, and  left-branching can happen at most $O(\log n)$ time.   Therefore, the total extra cost can be bounded. We now make the intuition precise.  %

\begin{definition} 
We call the \textit{right-spine} $R$ of the tree $\mathcal{T}$ to be
the set of nodes from the root using right branch all the way to the node before leaf in $\mathcal{T}$.  We denote $v_i \in R$ as the node with path length $i$ from the root to $v_i$. By convention, $v_0$ is the root.%
\end{definition}

\begin{definition} \label{def:recurrence_cn}
Let $C_{k,a}(n)$ be a function satisfying the following recurrence relation 
\begin{align} \label{eq:recurrence_main_vc_upperbound}
C_{k,a}(n) \leq  \sum_{v_i \in R} \text{cost}(v_i) + \sum_{i\colon v_i \in R}C_{k,a}(n_i),  \quad n_i = \ell_i + s_i +k,
\end{align}
where the parameters satisfy 
\begin{align} \label{eq:condition_recurrence_upperbound}
\quad \sum_{v_i \in R}\ell_i \leq n, \mbox{ and } \quad \ell_i+s_i+k \leq 3n/4 \quad \mbox{ for all } v_i \in R, %
\end{align}
and, 
\begin{align} \label{eq:condition_recurrence_upperbound2}
   k < s_i \leq n_i^{a'}/2 \quad \mbox{ and } \quad \ell_i \geq n_i^{1-a'} .
\end{align}
The base case is $C_{k,a}(n) = T_{\textbase}(nk,n,k)$ for $n \leq \baseCaseMainVC$.   
\end{definition}
\begin{remark}
In essence, the recurion tree for $C_{k,a}(n)$ can be obtained by contracting right spine of  the recursion tree from $T_{k,a}(n)$ into a single node. The recusion continues for each left subtree of each node in the right spine $R$. 
\end{remark}
\begin{lemma} \label{lem:teqc}
$T_{k,a}(n) = O(C_{k,a}(n))$.
\end{lemma}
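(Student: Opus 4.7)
The plan is to induct on the subproblem size $n$, showing that any recursion tree realizing $T_{k,a}(n)$ can be re-analyzed, by contracting its right spine, into a valid recursion tree for $C_{k,a}$ whose total cost matches up to constants. The base case $n \leq \baseCaseMainVC$ is trivial: both $T_{k,a}(n)$ and $C_{k,a}(n)$ equal $T_{\textbase}(nk,n,k)$ by definition.

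For the inductive step, fix $n > \baseCaseMainVC$ and a recursion tree $\mathcal{T}$ attaining $T_{k,a}(n)$; let $R = \{v_0,\dots,v_L\}$ be its right spine, with $v_0$ the root and $v_{i+1}$ the right child of $v_i$. Denote by $N_i$ the subproblem size at $v_i$ and by $(\ell_i,s_i)$ the parameters chosen there, so that $N_0=n$ and the $T_{k,a}$ recurrence gives $N_{i+1} = N_i - \ell_i + k$. For each $i<L$, the left child of $v_i$ is the root of a $T_{k,a}$ recursion tree of size $n_i := \ell_i + s_i + k < n$, which by induction contributes total cost at most $O(C_{k,a}(n_i))$. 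Summing contributions, the total cost of $\mathcal{T}$ is bounded by
\[
\sum_{i=0}^{L} \text{cost}(v_i) \;+\; \sum_{i=0}^{L-1} O\bigl(C_{k,a}(n_i)\bigr),
\]
which is precisely the right-hand side of the $C_{k,a}(n)$ recurrence, provided the chosen $\{(\ell_i,s_i)\}$ meet the structural constraints of Definition \ref{def:recurrence_cn}.

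It remains to check those constraints. First, telescoping $N_{i+1}=N_i-\ell_i+k$ gives $\sum_{i=0}^{L-1}\ell_i = n - N_L + Lk$. Since $\ell_i \leq N_i/2$ at each step (by $\ell_i \leq |V|/2$ at $v_i$), we get $N_{i+1} \geq N_i/2$, so $L = O(\log n)$; combined with the standing regime $k < n^{1-a'}$, this yields $\sum \ell_i \leq n + O(k\log n) = O(n)$. Second, $\ell_i + s_i + k \leq N_i/2 + N_i^{a'}/2 + k \leq n/2 + n^{a'}/2 + k \leq 3n/4$ for $n$ outside the base case, using $a' < 1/2$. Third, the per-level conditions $k < s_i \leq n_i^{a'}/2$ and $\ell_i \geq n_i^{1-a'}$ match, up to absolute constants, the conditions inherited from the $T_{k,a}$ recurrence at $v_i$; here one uses $n_i \geq \ell_i \geq N_i^{1-a'}$ to pass between the $N_i$-scale of the original bounds and the $n_i$-scale demanded by Definition \ref{def:recurrence_cn}.

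The main obstacle is the $O(k\log n)$ additive slack in the first constraint and the constant mismatch in the third, both of which formally violate the stated bounds in Definition \ref{def:recurrence_cn}. I would handle both by absorbing the slack into the big-$O$ of the lemma conclusion: $C_{k,a}$ satisfies a natural scaling property in which loosening each of its parameter constraints by a constant factor changes its value by at most a constant factor (since the recurrence is polynomial in its cost terms and the extra work feeds into at most $O(\log n)$ additional spine contractions along the recursion). Plugging this into the induction closes the argument and yields $T_{k,a}(n) = O(C_{k,a}(n))$.
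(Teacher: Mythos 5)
Your proposal follows the same plan as the paper's proof: contract the right spine of $\mathcal{T}$, charge the spine costs directly, and recurse on the left subtrees rooted at $v_i$, then verify the parameter constraints of Definition~\ref{def:recurrence_cn}. You are also right (and more explicit than the paper) that the constraint $\sum_{v_i\in R}\ell_i\le n$ is met only up to a $+Lk$ slack, since the right-child sizes telescope as $N_{i+1}=N_i-\ell_i+k$.

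However, the step bounding $L$ is a non-sequitur. From $\ell_i\le N_i/2$ you get $N_{i+1}\ge N_i/2$, which only says the sizes shrink \emph{slowly}; it places no upper bound on $L$ at all (e.g.\ $N_{i+1}=N_i-1$ satisfies it, giving $L=\Theta(n)$). You have the inequality pointing the wrong way: $L=O(\log n)$ would follow from $N_{i+1}\le N_i/2$, which is false here. The correct bound comes from the \emph{lower} bound $\ell_i\ge N_i^{1-a'}$, which gives $N_{i+1}\le N_i-N_i^{1-a'}/2$ and hence $L=O(n^{a'})$ — this is exactly what the paper's Claim~\ref{claim:cost-right-spine} establishes. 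Your final conclusion happens to survive because $Lk=O(n^{a'})\cdot O(n^{1-a'})=O(n)$, so $\sum\ell_i=n+O(n)=O(n)$ still holds, but the reasoning you gave for it is wrong and should be replaced by the $\ell_i\ge N_i^{1-a'}$ argument. A smaller caveat: the "constant mismatch" you note in passing from $N_i$ to $n_i=\ell_i+s_i+k$ in the per-node constraints is not merely a constant factor, since $n_i$ can be polynomially smaller than $N_i$; this is benign only because the downstream uses of these constraints (in Observation~\ref{obs:useful-recursion-cn}) remain valid with $N_i$ in place of $n_i$, not because the scales match.
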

\begin{proof}
 If $n \leq \baseCaseMainVC$, then both functions coincide by definition. We now focus on  $n > \baseCaseMainVC$. We show that the function $C(n)$ can be obtained by rearranging the summation of the cost of all nodes in the recusion tree $\mathcal{T}$. The term $ \sum_{v_i \in R} \text{cost}(v_i)$ corresponds to the summation over all cost of nodes in the right spine $R$. Now, for each node $v_i \in R$, let $\mathcal{T}_i$ be the corresponding left-subtree of $v_i$ in the recursion tree $\mathcal{T}$.  The size of $v_i$ is $\ell_i + s_i + k$ where $\ell_i$ and $s_i$ correspond to the terms $\ell$ and $s$ in \Cref{eq:condition_recurrence} for the left child of any node in the recursion tree $\mathcal{T}$. Therefore,  the total cost is  $\sum_{v_i \in R} \text{cost}(v_i)$ plus the cost of each remaining subtree $\mathcal{T}_i$, which we can compute recursively.  Note that by \Cref{eq:condition_recurrence}, we have $  k < s_i \leq n_i^{a'}/2 $   and  $ \ell_i \geq n_i^{1-a'} .$   %

It remains to show that $\sum_{v_i\in R} \ell_i \leq n$ and that $\ell_i+s_i+k \leq 3n/4$ for all $v_i \in R$.  We first show that  $\sum_{v_i\in R} \ell_i \leq n$. The size of the leaf node in the right spine $R$ is $n - \sum_{v_i \in R}\ell_i$, which is $\geq 0$. Therefore, $n \geq \sum_{v_i \in R}\ell_i$.  Next, we show that $\ell_i+s_i+k \leq 3n/4$ for all $v_i \in R$. Let $n_i$ be the size at node $v_i$. Since recursion does not increase the size of node, we have $n_i \leq n$. By \Cref{eq:condition_recurrence}, we have $\ell_i \leq n_i/2, s_i \leq n_i^{a'}/2, k < n_i^{a'}/2$ where $\ell_i, s_i$ corresponds to the parameters at node $v_i \in R$.   Therefore, $\ell_i + s_i + k  \leq n_i/2 + n_i^{a'}/2 + n_i^{a'}/2 \leq n/2 + n^{'a}/2 + n^{a'}/2  \leq 3n/4$ for any $n \geq 8$ and any $a \in (0, 1/2)$.    %
\end{proof}

\begin{claim} \label{claim:cost-right-spine}
The cost of all nodes in the right spine $R$, $\sum_{v_i \in R} \text{cost}(v_i)$, is $O(n^{x+a'}+  t_{\textbase}))$.  
\end{claim}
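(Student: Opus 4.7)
The plan is to bound $\sum_{v_i \in R}\textcost(v_i)$ by rewriting each per-node cost in terms of the left-subproblem size $\ell_i$, and then absorbing the resulting linear sum using the global telescoping bound $\sum_{v_i \in R}\ell_i \le n$ given by Eq.~(\ref{eq:condition_recurrence_upperbound}).

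First, I would use the exponent split
\[
n_i^x \;=\; n_i^{1-a'}\cdot n_i^{x-1+a'}.
\]
Because $x \ge 1$ (the definition of $t_{\textcost}$ in \Cref{def:use-time} contains an $O(nk)$ term) and $a' > 0$, the exponent $x - 1 + a'$ is nonnegative, and since sizes never grow along the recursion we obtain $n_i^{x-1+a'} \le n^{x-1+a'}$. For the remaining factor, the structural constraint $\ell_i \ge n_i^{1-a'}$ in Eq.~(\ref{eq:condition_recurrence_upperbound2}) gives $n_i^{1-a'} \le \ell_i$. Multiplying,
\[
\textcost(v_i) \;=\; O(n_i^x) \;\le\; O\!\left(\ell_i \cdot n^{x-1+a'}\right).
\]

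Summing this estimate over all $v_i \in R$ with $n_i > \baseCaseMainVC$ and using $\sum_{v_i\in R}\ell_i \le n$ yields
\[
\sum_{\substack{v_i \in R \\ n_i > \baseCaseMainVC}}\textcost(v_i) \;\le\; O\!\left(n^{x-1+a'}\right)\cdot n \;=\; O\!\left(n^{x+a'}\right).
\]
At the bottom of the right spine, there is at most one node whose subproblem has shrunk to the base case $n_i \le \baseCaseMainVC$, and this contributes at most $t_{\textbase}$ to the sum. Adding these two contributions gives the claimed bound $O(n^{x+a'} + t_{\textbase})$.

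The whole argument is a weighted telescoping estimate; the main design choice is the exponent split, arranged so that one factor is exactly $n_i^{1-a'}$ (to be absorbed into $\ell_i$) while the remaining exponent of $n_i$ stays nonnegative, enabling replacement by the root size $n$. I do not anticipate any real obstacle beyond checking $x \ge 1$, which is immediate from the form of $t_{\textcost}$, and confirming that the leftover base-case leaf at the end of the right spine accounts only for a single additive $t_{\textbase}$ term.
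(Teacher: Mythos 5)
Your proof is correct, and it takes a genuinely different route from the paper's. The paper uses a count-times-uniform-cost argument: each node on the right spine has cost at most $O(n^x)$ since its size is at most $n$, and the number of nodes on the right spine is $O(n^{a'})$ because the size of the right child decreases by at least $n^{1-a'}/2$ at every step (formalized by solving the recurrence $L(n) \le L(\lceil n - n^{1-a'}/2\rceil) + 1$). Multiplying these two bounds yields $O(n^{x+a'})$, and the base-case leaf adds $t_{\operatorname{base}}$. Your argument instead refines the per-node cost to $O(\ell_i\, n^{x-1+a'})$ via the exponent split $n_i^x = n_i^{1-a'}\cdot n_i^{x-1+a'}$ together with $\ell_i \ge n_i^{1-a'}$ and monotonically decreasing sizes, and then uses the global telescoping constraint $\sum_{v_i\in R}\ell_i \le n$ to conclude. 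Both derivations are valid and give the same bound; yours avoids explicitly bounding the spine length and is perhaps slightly tighter when the $\ell_i$'s vary (it charges cost to actual progress made rather than to worst-case step count), while the paper's is a more direct bookkeeping argument. One minor remark: your verification that $x\ge 1$ is unnecessary since the paper asserts this explicitly immediately after \Cref{lem:formula_tn} ("we denote $t_{\operatorname{cost}} = \tilde O(n^x)$ for some $x\ge 1$"), and the exponent split only needs the weaker $x\ge 1-a'$ anyway.
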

\begin{proof}
Each node in $R$ has size at most $n$, which means that the cost is at
most $O(n^x)$ per node. We show that number of nodes in the right-spine $R$ is $O(n^{a'})$, and this implies $O(n^{x+a'})$ term in the total cost. By design, $\mathcal{T}$ always has right subproblem of size $n - \ell +k \leq n - n^{1-a'} +k \leq n - n^{1-a'} + n^{a'}/2 \leq n - n^{1-a'}/2$. The second inequality follows from \Cref{eq:condition_recurrence} where $k < s \leq n^{a'}/2$, so $k < n^{a'}/2$. Therefore, the function $L(n) \leq L(\lceil  n- n^{1-a'}/2 \rceil) + 1, L(1) = 1$ is an upperbound of the
number of nodes in the right-spine $R$. It is easy to see that $L(n) = O(n^{a'})$. Finally, the term $  t_{\textbase} $ follows from the base case of the recurrence where $n \leq \baseCaseMainVC$ Therefore, the claim follows. 
\end{proof}

We now solve the function $C_{k,a}(n)$ for \Cref{eq:recurrence_main_vc_upperbound}. Let $\mathcal{T}'$ be the recursion tree for $C_{k,a}(n)$. 
Let $\text{child}(v)$ be the set of children of node $v$ in $\mathcal{T}'$. Let $\text{level}(i)$ be the set of nodes with distance $i$ from root in $\mathcal{T}'$.  For each node $v \in \mathcal{T}'$ except the root, we denote the size of $v$ as size$(v) = n_v = \ell_v + s_v + k$ according to \Cref{eq:recurrence_main_vc_upperbound}. 

We make useful observation about the recursion tree $\mathcal{T}'$.
\begin{observation} \label{obs:useful-recursion-cn} For any non-root  internal node $v$ in the recursion tree $\mathcal{T}'$, 
\begin{align}
k &< s_v,\label{eq:k-less-than-s} \\
 (1+2s_v/\ell_v)   &\leq   (1+2/n^{\epsilon(1-2a')}), \label{eq:factor-base-case} \\
\sum_{u \in \text{child}(v)}\ell_{u}& \leq  n_v, \label{eq:sum-ell-less-than-n} \\
s_v & < \ell_v. \label{eq:s-less-than-ell} 
\end{align}

\end{observation}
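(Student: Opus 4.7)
The plan is to verify each of the four inequalities directly from the defining constraints of $C_{k,a}$ given in \Cref{eq:condition_recurrence_upperbound} and \Cref{eq:condition_recurrence_upperbound2}. Recall that a non-root internal node $v$ corresponds to a recursive invocation on a subproblem of size $n_v = \ell_v + s_v + k$ satisfying $n_v > \baseCaseMainVC \geq n^{\epsilon}$ (otherwise it would be a leaf), so throughout the argument we are free to use $n_v > n^{\epsilon}$ and in particular that $n_v$ is sufficiently large.

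First, \Cref{eq:k-less-than-s} is literally the lower bound in \Cref{eq:condition_recurrence_upperbound2}, so no work is needed. Similarly, \Cref{eq:sum-ell-less-than-n} is just the restatement of the first constraint in \Cref{eq:condition_recurrence_upperbound} applied to the subproblem of size $n_v$: the children of $v$ are the right-spine left-branches of that subproblem, whose $\ell$-values are required to sum to at most the input size $n_v$.

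For \Cref{eq:s-less-than-ell}, I would combine the two bounds $s_v \leq n_v^{a'}/2$ and $\ell_v \geq n_v^{1-a'}$ from \Cref{eq:condition_recurrence_upperbound2}. Since $a < 1/2$, we have $a' < 1/2$, hence $a' < 1 - a'$, and for $n_v \geq 2$ this yields $s_v \leq n_v^{a'}/2 < n_v^{1-a'} \leq \ell_v$. (Having $n_v > \baseCaseMainVC \geq n^\epsilon$ makes the strict inequality safe.)

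The only step that requires a brief computation is \Cref{eq:factor-base-case}. I would estimate
\[
\frac{2 s_v}{\ell_v} \leq \frac{2 \cdot n_v^{a'}/2}{n_v^{1-a'}} = n_v^{2 a' - 1},
\]
and then use $n_v > \baseCaseMainVC \geq n^{\epsilon}$ together with $2a'-1 < 0$ to conclude $n_v^{2a'-1} \leq n^{-\epsilon(1-2a')}$, which gives $2 s_v / \ell_v \leq 1/n^{\epsilon(1-2a')} \leq 2/n^{\epsilon(1-2a')}$, i.e. the claimed bound. No step here is genuinely hard; the only mild pitfall is making sure that the non-root-internal hypothesis really does license $n_v > n^{\epsilon}$ (as opposed to $n_v$ being allowed to be tiny), which is what prevents the exponential term from blowing up.
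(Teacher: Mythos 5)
Your proof is correct and follows essentially the same line of reasoning as the paper: each of the four inequalities is read off from \Cref{eq:condition_recurrence_upperbound} and \Cref{eq:condition_recurrence_upperbound2} applied to the subproblem of size $n_v$, using the fact that a non-root internal node satisfies $n_v > \baseCaseMainVC \geq n^{\epsilon}$. Your bookkeeping on \Cref{eq:factor-base-case} is in fact slightly tighter (you retain the factor $1/2$ in $s_v \leq n_v^{a'}/2$ and land on $1/n^{\epsilon(1-2a')}$ rather than $2/n^{\epsilon(1-2a')}$), but the argument is otherwise identical.
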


\begin{proof}
The results follow from \Cref{def:recurrence_cn} and \Cref{eq:condition_recurrence_upperbound}.  We now show that  $(1+2/n^{\epsilon(1-2a')}) \geq (1+2s_v/\ell_v)$ for any internal node $v$. It is enough to show that $\ell_v/s_v \geq n^{\epsilon(1-2a')}$.   Since $v$ is an internal node, $n_v \geq \baseCaseMainVC \geq n^{\epsilon}$. By \Cref{eq:condition_recurrence_upperbound2}, $\ell_v \geq n_v^{1-a'}$ and $s_v \leq n_v^{a'}$. Hence, $\ell_v/s_v \geq n_v^{1-2a'} \geq n^{\epsilon(1-2a')}$.  Finally, $s_v < \ell_v$ since $a \in (0, 1/2)$, and $s_v \leq n^{a'}_v/2$, and $\ell_v \geq n^{1-a'}$  by \Cref{eq:condition_recurrence_upperbound2}.
\end{proof}

\begin{claim} \label{claim:internal-node}
For each level in the recursion tree $\mathcal{T}'$,  the total size of internal nodes is most $2n$. 
\end{claim}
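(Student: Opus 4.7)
The plan is to prove \Cref{claim:internal-node} by induction on the level $i$, showing that the total size of internal nodes at level $i$ is at most $n \cdot (1 + 2/n^{\epsilon(1-2a')})^i$. Since the per-level multiplicative blow-up is $1 + o(1)$ (with respect to $n$) and the depth of $\mathcal{T}'$ is $O(\log n)$, this bound compresses to $(1+o(1))n \leq 2n$ for all sufficiently large $n$, which gives the claim.

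The key local estimate I would first establish is: for every internal node $v$ in $\mathcal{T}'$, the total size of its internal children is at most $n_v \cdot (1 + 2/n^{\epsilon(1-2a')})$. To see this, write each internal child $u$ as having size $n_u = \ell_u + s_u + k$. By \eqref{eq:k-less-than-s} we have $k < s_u$, so $n_u \leq \ell_u + 2 s_u = \ell_u(1 + 2 s_u/\ell_u)$. By \eqref{eq:factor-base-case} applied to $u$, we have $1 + 2 s_u/\ell_u \leq 1 + 2/n^{\epsilon(1-2a')}$. Summing over internal children and using \eqref{eq:sum-ell-less-than-n} (which bounds $\sum_u \ell_u \leq n_v$ over \emph{all} children, and hence a fortiori over just the internal ones) gives the claimed bound.

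Next, I would bound the depth of $\mathcal{T}'$. By \eqref{eq:condition_recurrence_upperbound}, every child $u$ of $v$ satisfies $n_u = \ell_u + s_u + k \leq 3 n_v / 4$, so after $O(\log n)$ levels the node size drops below $\baseCaseMainVC$ and recursion terminates. Hence the depth of $\mathcal{T}'$ is $O(\log n)$.

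Finally, I would combine the two ingredients by induction: the root has size at most $n$, and each subsequent level multiplies the total size of internal nodes by at most $1 + 2/n^{\epsilon(1-2a')}$, so the total size at level $i$ is bounded by
\[
n \cdot \left(1 + \frac{2}{n^{\epsilon(1-2a')}}\right)^i
\;\leq\; n \cdot \exp\!\left(\frac{O(\log n)}{n^{\epsilon(1-2a')}}\right)
\;=\; n \cdot (1+o(1)) \;\leq\; 2n
\]
for $n$ large enough. The one delicate point, which is also the main obstacle, is ensuring that the exponent $\epsilon(1-2a')$ is strictly positive so that the error term decays: this follows from the global assumptions $a \in (0, 1/2)$ (so $a' = a + o(1) < 1/2$) and $\epsilon > 0$ fixed in the definition of $\baseCaseMainVC$. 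For the finitely many small $n$ where the asymptotic bound is not yet tight, the inequality holds trivially because the root is the only contributor and its size is $n \leq 2n$.
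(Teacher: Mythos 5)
Your proof is correct and follows essentially the same route as the paper: you establish the per-level geometric bound $n\bigl(1+2/n^{\epsilon(1-2a')}\bigr)^i$ by induction using \eqref{eq:k-less-than-s}, \eqref{eq:factor-base-case}, and \eqref{eq:sum-ell-less-than-n}, bound the depth by $O(\log n)$ via the $3/4$-shrinkage in \eqref{eq:condition_recurrence_upperbound}, and then compress the product to $(1+o(1))n\le 2n$; the only cosmetic difference is that you start the induction at level $0$ (the root) rather than at level $1$ as the paper does, and you are a bit more explicit than the paper about why the exponent $\epsilon(1-2a')$ must be strictly positive.
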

\begin{proof}

First, we show that the recursion tree $\mathcal{T}'$ has depth at most $c\ln n$ for some constant $c$.  This follows from \Cref{eq:condition_recurrence_upperbound} where each subproblem size is at most 3/4 factor of the current size.

Let $\text{child}^*(v)$ be the set of non-leaf children of node $v$ in $\mathcal{T}'$. Let $\text{level}^*(i)$ be the set of non-leaf nodes with distance $i$ from root in $\mathcal{T}'$.  

We claim that the total size of internal nodes at level $i$, $\sum_{u \in \text{level}^*(i)} n_u$, is at most $n(1+2/n^{\epsilon (1-2a')})^i$.  We prove the claim by induction on number of level. Base case is at level $i = 1$. We have 

\begin{align*}
\sum_{u \in \text{level}^*(1)} n_u &= \sum_{v \in \text{child}^*(\text{root})} n_v \\
& = \sum_{v \in \text{child}^*(\text{root})}\ell_v+s_v+k \\
&\stackrel{(\ref{eq:k-less-than-s})} \leq   \sum_{v \in \text{child}^*(\text{root})}\ell_v(1+2s_v/\ell_v) \\
&\stackrel{(\ref{eq:factor-base-case})}= (1+2/n^{\epsilon(1-2a')})\sum_{v \in \text{child}^*(\text{root})}\ell_v\\
&\stackrel{(\ref{eq:sum-ell-less-than-n})} \leq (1+2/n^{\epsilon(1-2a')})n.
\end{align*}

 For inductive hypothesis, we assume that 
\begin{align} \label{eq:inductive-hypothesis-internal-size}
\sum_{u \in \text{level}^*(i)}n_u \leq n(1+2/n^{\epsilon (1-2a')})^i, \text{ for } i \geq 1 
\end{align} 
 We now prove as inductive step that  $\sum_{u \in \text{level}^*(i+1)} n_u \leq n(1+2/n^{\epsilon (1-2a')})^{i+1}$ (as convention, we define the sum over an empty set as zero) . This follows from \Cref{obs:useful-recursion-cn}, and the followings. 
\begin{align*}
\sum_{u \in \text{level}^*(i+1)}n_u&= \sum_{u \in \text{level}^*(i)} \sum_{v \in \text{child}^*(u)}(\ell_v+s_v+k)\\
&\stackrel{(\ref{eq:k-less-than-s})} \leq  \sum_{u \in \text{level}^*(i)} \sum_{v \in \text{child}^*(u)}(\ell_v+2s_v)\\
&=  \sum_{u \in \text{level}^*(i)} \sum_{v \in \text{child}^*(u)}\ell_v(1+2s_v/\ell_v)\\
&\stackrel{(\ref{eq:factor-base-case})} \leq  (1+2/n^{\epsilon (1-2a')}) \sum_{u \in \text{level}^*(i)} \sum_{v \in \text{child}^*(u)}\ell_v\\
&\stackrel{(\ref{eq:sum-ell-less-than-n})} \leq  (1+2/n^{\epsilon(1-2a')}) \sum_{u \in \text{level}^*(i)}n_u\\
&\stackrel{(\ref{eq:inductive-hypothesis-internal-size})} \leq  (1+2/n^{\epsilon(1-2a')}) n(1+2/n^{\epsilon(1-2a')})^i \\
& =  n(1+2/n^{\epsilon(1-2a')})^{i+1}
\end{align*} 
Therefore,  $\sum_{u \in \text{level}^*(i)} n_u \leq n(1+2/n^{\epsilon (1-2a')})^{i}$, which is $\leq 2n$ for sufficiently large $n$ and $i \leq c\log n$. 

\end{proof}

\begin{lemma} \label{lem:totalsizen-each-level}
For each level $i$ in the recursion tree $\mathcal{T}'$, the total size $\sum_{u \in \text{level}(i)} n_u$ is $O(n)$. 
\end{lemma}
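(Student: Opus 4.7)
The plan is to leverage \Cref{claim:internal-node}, which already bounds the total size of \emph{internal} nodes at each level by $2n$, and to extend this to all nodes (including leaves) by showing that leaf contributions cost only a constant factor more. The main idea is that every non-root node is a child of some \emph{internal} node at the previous level, so the level-$(i{+}1)$ sum can be written as a sum over children of level-$i$ internal nodes, and Claim~\ref{claim:internal-node} can then be applied directly.

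First, I would establish a uniform bound of the form $n_v < 2\ell_v$ for every non-root node $v$ of $\mathcal{T}'$, whether it is internal or a leaf. By \Cref{eq:condition_recurrence_upperbound2}, any such $v$ satisfies $s_v \leq n_v^{a'}/2$ and $\ell_v \geq n_v^{1-a'}$. Since $a' = a + o(1)$ with $a \in (0, 1/2)$, we have $a' < 1/2$ for $n$ large enough, and hence $n_v^{a'} \leq n_v^{1-a'}$, which gives
\[
s_v \;\leq\; \tfrac{1}{2} n_v^{a'} \;\leq\; \tfrac{1}{2} n_v^{1-a'} \;\leq\; \tfrac{1}{2}\ell_v.
\]
Combined with $k < s_v$ from \Cref{eq:k-less-than-s}, this yields $n_v = \ell_v + s_v + k < \ell_v + 2 s_v \leq 2\ell_v$.

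Second, I would use \Cref{eq:sum-ell-less-than-n} of \Cref{obs:useful-recursion-cn} (which applies to internal nodes): for any internal $u$,
\[
\sum_{v \in \mathrm{child}(u)} \ell_v \;\leq\; n_u,
\]
so combined with the first step, $\sum_{v \in \mathrm{child}(u)} n_v < 2 n_u$.

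Third, since leaves have no children, every node at level $i+1$ is the child of some internal node at level $i$, so
\[
\sum_{u \in \mathrm{level}(i+1)} n_u \;=\; \sum_{u \in \mathrm{level}^*(i)} \sum_{v \in \mathrm{child}(u)} n_v \;<\; 2 \sum_{u \in \mathrm{level}^*(i)} n_u \;\leq\; 4n,
\]
where the final inequality is \Cref{claim:internal-node}. For the base case $i = 0$ the root contributes only $n$. This gives $\sum_{u \in \mathrm{level}(i)} n_u = O(n)$ at every level, completing the proof. The only mildly subtle point---the potential obstacle---is ensuring that $s_v \leq \ell_v/2$ really does hold for \emph{leaves} as well as internal nodes, since \Cref{obs:useful-recursion-cn}\eqref{eq:s-less-than-ell} is stated only for internal nodes; but as shown above, this bound follows directly from the condition $a' < 1/2$ without requiring $n_v$ to be large, so it applies uniformly.
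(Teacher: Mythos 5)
Your proof is correct and takes essentially the same route as the paper: both proofs reduce level-$(i{+}1)$ mass to $\sum_{u \in \text{level}^*(i)} \sum_{v \in \text{child}(u)} \ell_v \leq \sum_{u \in \text{level}^*(i)} n_u$ and then invoke \Cref{claim:internal-node}. The paper splits $\text{level}(i{+}1)$ into leaves and internals and bounds the two pieces separately (getting $6n + 2n$), whereas you handle all children of level-$i$ internals in one pass (getting $4n$); the underlying manipulation is identical. Your explicit verification that $s_v \leq \ell_v/2$ follows directly from \Cref{eq:condition_recurrence_upperbound2} for \emph{all} non-root nodes (not just internal ones) is a worthwhile care-point: the paper applies \Cref{eq:s-less-than-ell} to leaf nodes even though \Cref{obs:useful-recursion-cn} is stated only for internal nodes, which works but is not flagged; you make the needed observation explicit.
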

\begin{proof}
Let $\text{child}^*(v)$ be the set of non-leaf children of node $v$ in $\mathcal{T}'$. Also, let $\text{child}^{\dagger}(v)$ be the set of leaf children of node $v$ in $\mathcal{T}'$. Note that  $\text{child}(v) = \text{child}^*(v) \cup \text{child}^{\dagger}(v)$.  Let $\text{level}^*(i)$ be the set of internal nodes at level $i$ in $\mathcal{T}'$. Also,  Let $\text{level}^{\dagger}(i)$ be the set of leaf nodes at level $i$ in $\mathcal{T}'$.  Note that  $\text{level}(i) = \text{level}^*(i) \cup \text{level}^{\dagger}(i)$. 

We first show that 
\begin{align} \label{eq:sum-dagger-6n}
\sum_{u \in \text{level}^{\dagger}(i+1)}n_u \leq 6n
\end{align}  
by the followings.
\begin{align*}
\sum_{u \in \text{level}^{\dagger}(i+1)}n_u  &= \sum_{u \in \text{level}^*(i)} \sum_{v \in \text{child}^*(u)}n_u \\
&= \sum_{u \in \text{level}^*(i)} \sum_{v \in \text{child}^*(u)}(\ell_v+s_v+k)\\
&\stackrel{(\ref{eq:k-less-than-s})} \leq \sum_{u \in \text{level}^*(i)} \sum_{v \in \text{child}^*(u)}(\ell_v+2s_v)\\
& \stackrel{(\ref{eq:s-less-than-ell})} \leq  \sum_{u \in \text{level}^*(i)} \sum_{v \in \text{child}^*(u)} 3\ell_v\\
&\leq \sum_{u \in \text{level}^*(i)} \sum_{v \in \text{child}(u)} 3\ell_v\\
& \stackrel{(\ref{eq:sum-ell-less-than-n})}  \leq 3 \sum_{u \in \text{level}^*(i)} n_u\\
&\leq 3 (2n) = 6n & \text{by \Cref{claim:internal-node}.}
\end{align*}

By \Cref{claim:internal-node}, and  \Cref{eq:sum-dagger-6n}, we have the followings. 
\begin{align*}
\sum_{u \in \text{level}(i+1)}n_u&= \sum_{u \in \text{level}^{\dagger}(i+1)}n_u + \sum_{u \in \text{level}^*(i+1)} n_u\\
 &\leq 6n + 2n = O(n).
\end{align*}
Therefore, the result follows.
\end{proof}

\begin{corollary} \label{cor:total-node}
The number of nodes in the  recursion tree $\mathcal{T}'$ is  $\ot(n)$. 
\end{corollary}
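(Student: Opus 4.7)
The plan is to derive the node-count bound by combining the per-level size bound from \Cref{lem:totalsizen-each-level} with a bound on the depth of the recursion tree $\mathcal{T}'$. The only real work is establishing that depth is $O(\log n)$; the rest is division.

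First I would argue that every root-to-leaf path in $\mathcal{T}'$ has length $O(\log n)$. At any internal node $v$ with size $n_v$, every child $u \in \text{child}(v)$ satisfies $n_u = \ell_u + s_u + k \leq 3 n_v / 4$ by the condition $\ell_i + s_i + k \leq 3n/4$ in \Cref{eq:condition_recurrence_upperbound}. Hence sizes shrink geometrically by a factor of $4/3$ along any root-to-leaf path, and recursion terminates once the size drops to at most $\Lambda$ (the base case). This gives depth at most $\log_{4/3}(n/\Lambda) = O(\log n)$.

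Next I would invoke \Cref{lem:totalsizen-each-level}, which states that $\sum_{u \in \text{level}(i)} n_u = O(n)$ at every level $i$. Since each node in $\mathcal{T}'$ corresponds to a recursive call on a graph with $n_u \geq 1$ vertices, the number of nodes at level $i$ is upper-bounded by the total size at level $i$, namely $O(n)$. Summing over the $O(\log n)$ levels yields a total node count of $O(n \log n) = \tilde{O}(n)$, which is the claim.

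There is no substantive obstacle: the per-level size bound is the nontrivial input, and the corollary is essentially just "per-level size bound $\times$ tree depth, divided by the trivial lower bound of $1$ on node size."
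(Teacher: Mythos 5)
Your proof is correct and matches the paper's argument: both bound the depth by $O(\log n)$ via the $n_u \le 3n_v/4$ shrinkage from \Cref{eq:condition_recurrence_upperbound}, invoke \Cref{lem:totalsizen-each-level} for the $O(n)$ per-level total size, and divide by the trivial unit lower bound on node size to get $O(n\log n)$ nodes.
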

\begin{proof}
By \Cref{lem:totalsizen-each-level}, each level has total size $O(n)$, and there are $O(\log n)$ levels. Thus, total size is $O(n\log n)$. Each node has at least a unit size. Therefore, number of nodes is $O(n \log n)$. 
\end{proof}

\begin{corollary} \label{cor:total-leaves-tn}
The number of leaves in the recursion tree $\mathcal{T}$ is  $\ot(n)$. 
\end{corollary}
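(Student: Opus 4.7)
The plan is to exploit the correspondence between the full recursion tree $\mathcal{T}$ (for $T_{k,a}$) and the contracted tree $\mathcal{T}'$ (for $C_{k,a}$), and then invoke the bound $|\mathcal{T}'| = \ot(n)$ already established in \Cref{cor:total-node}.

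First, I would recall the construction underlying $\mathcal{T}'$ from the remark after \Cref{def:recurrence_cn}: each node $v'$ of $\mathcal{T}'$ is obtained by contracting a maximal right spine of $\mathcal{T}$, and the children of $v'$ in $\mathcal{T}'$ are exactly the roots of the left subtrees hanging off the nodes of that spine. Thus there is a bijection between nodes of $\mathcal{T}'$ and maximal right spines of $\mathcal{T}$.

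Next, I would argue that every maximal right spine in $\mathcal{T}$ terminates in exactly one leaf of $\mathcal{T}$. Formally, let $v_0, v_1, \ldots, v_t$ be such a spine, where each $v_{i+1}$ is the right child of $v_i$. Because the recurrence \Cref{eq:recurrence_main_vc} produces a right child whenever the current instance is not a base case, the spine can only stop when $v_t$ is a base-case node, i.e.\ a leaf of $\mathcal{T}$. The interior nodes $v_0, \ldots, v_{t-1}$ are each internal in $\mathcal{T}$, and each has a left child which itself is the root of another maximal right spine; these left children are in bijection with the children of $v'$ in $\mathcal{T}'$. Conversely, every leaf of $\mathcal{T}$ is the terminal node $v_t$ of a unique maximal right spine, hence corresponds to a unique node of $\mathcal{T}'$.

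Combining these observations yields $\#\{\text{leaves of } \mathcal{T}\} = |\mathcal{T}'|$, and by \Cref{cor:total-node} this is $\ot(n)$, as desired. I do not expect a genuine obstacle here: the only point that requires a little care is verifying that a right spine always terminates at a base-case leaf (rather than, say, at an internal node that was short-circuited), which is immediate from inspecting \Cref{alg:main-vc} and the recurrence \Cref{eq:recurrence_main_vc}. Everything else is a direct structural consequence of the contraction defining $\mathcal{T}'$.
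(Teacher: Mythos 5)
Your argument is correct and is essentially the intended justification: the paper states this corollary immediately after \Cref{cor:total-node} without a written proof, relying implicitly on exactly the bijection you describe — each node of $\mathcal{T}'$ is a contracted maximal right spine of $\mathcal{T}$, and each such spine terminates at a unique base-case leaf, so the number of leaves of $\mathcal{T}$ equals $|\mathcal{T}'| = \ot(n)$. Your careful check that spines can only end at base-case leaves (because every non-base-case invocation of the recurrence \Cref{eq:recurrence_main_vc} spawns both a left and a right child) is the right thing to verify and closes the argument.
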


We are now ready to prove \Cref{lem:formula_tn}.

\begin{proof}[Proof of \Cref{lem:formula_tn}]
By \Cref{lem:teqc}, $T_{k,a}(n) = O(C_{k,a}(n))$, so it is enough to bound the cost for $C_{k,a}(n)$. Denote the recursion tree $\mathcal{T}'$ for the function relation $C_{k,a}(n)$. We have
\begin{align*}
 C_{k,a}(n) & = \sum_{i=0}^{O(\log n)} \sum_{u \in \text{level(i)}} \text{cost}(v) && \text{$O(\log n)$ depth by \Cref{eq:condition_recurrence_upperbound}.} \\
 &\leq \sum_{i=0}^{O(\log n) } ( \sum_{u \in \text{level}(i)} O(n_u^{x+a'})+ |\text{level(i)}|  t_{\textbase})   && \text{by \Cref{claim:cost-right-spine}.} \\
&\leq  \sum_{i=0}^{O(\log n)} (\sum_{u \in \text{level}(i)} n_u)^{x+a'} + |\mathcal{T}'|   t_{\textbase}    && \text{$|\mathcal{T}'|$ is the number of nodes in the tree.}\\
&= \ot(n^{x+a'} + n t_{\textbase}  ) && \text{by \Cref{lem:totalsizen-each-level,cor:total-node}.} \\
&= \ot(t_{\textcost} n^{a+o(1)} + t_{\textbase}n   ) && \text{$t_{\textcost} = n^{x},$ and $ a' = a + o(1).$}
\end{align*}
Therefore, the result follows.
\end{proof}

\subsubsection{Proof of \Cref{lem:main-vc-runtime}}  \label{sec:proof-main-vc-runtime}

Let $T'_{k,a}(n)$ be the running time of \Cref{alg:main-vc}.  By \Cref{lem:formula_tn}, $T_{k,a}(n) =  \ot(t_{\textcost} n^{a+o(1)}+   t_{\textbase}n  ) $. Hence, it is enough to show that  $$ T'_{k,a}(n) \leq \ot(T_{k,a}(n) + t_{\textlocal}  n). $$

Let $\mathcal{T}$ be a recursion tree for the \Cref{alg:main-vc}. Suppose there is a leaf-node $u$ that is not the base-case. This means $h(G) \geq \eta$, and we run $\LocalVC$ on every node in $G$, and return the answer.  We can overestimate this cost by extending  node $u$ to have two children $v, w$ of size  $\ell_v + s_v + k$, and $n_u - \ell_v+ k$. The total size is then $n_u +s_v+ 2k$. By design, LocalVC will be run at nodes $v$ and $w$ instead of node $u$. Hence, the running time due to LocalVC strictly increases. The recursion continues which also increases the total cost.    By repeat this process, we will end up with the recursion where all leaf nodes are base-case. Therefore, we have the upperbound in terms of $T_{k,a}(n)$ with extra-cost due LocalVC.  The extra-cost due to LocalVC is at most the total size of the leaf nodes in $\mathcal{T}$. By \Cref{cor:total-leaves-tn}, the total size of leaf nodes is $\ot(n)$. Hence, the number of LocalVC calls is at most $\ot(n)$. 

\subsection{Proof of \Cref{thm:VCAlgoMain}} 

We analyze \Cref{alg:main-vc}. First, the correctness follows from
\Cref{lem:main-vc-correct}. It remains to derive the final running
time.  The first term $O(m)$ follows from \Cref{thm:sparsification} as we first sparsify
the original graph in $O(m)$ time. Since $G$ (and later subgraphs in
the recursions) has aboricity $k$,  $m \leq
nk$, and from now we treat $m$ as $nk$.  We now derive the latter term.

We now instantiate the running time for the corresponding algorithms in \Cref{def:generic-runtime}.
\begin{corollary} \label{cor:current-runtime} We have the following running times for subroutines in \Cref{alg:main-vc}. 
\begin{itemize}[nolistsep,noitemsep]
\item There is an algorithm for the base-case that runs in time $O(m(n+k^3))$. 
\item There is an algorithm for approximate vertex-expansion $h(G)$ that runs in time $\ot(m^{1.6})$. 
\item There is a $\LocalVC$ algorithm that runs in $\ot(\min(\nu^{1.5}k, \nu k^k ))$ time. 
\item There is a $\SplitVC$ algorithm that runs in $O(mk(|S|+k^2))$ time.
\end{itemize}
\end{corollary}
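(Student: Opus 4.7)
The plan is to instantiate each of the four subroutines in \Cref{def:generic-runtime} using the best deterministic algorithm available, most of which are already established earlier in the paper or in the literature. For the base case on $|V|\le \Lambda$ vertices, any polynomial-time deterministic $k$-vertex-connectivity algorithm suffices; invoking Gabow's $O(m(n+\min(k^{5/2},kn^{3/4})))$-time algorithm~\cite{Gabow06} immediately yields the stated $O(m(n+k^3))$ bound.

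The most substantive step is the vertex-expansion bound. I would combine \Cref{thm:VertexExpansionMain} with the balanced-cut algorithm of \Cref{cor:OneShotMadry}, which is an $(O(\phi^{1/2}\log^{2.5} m),100)$-approximate balanced-cut routine running in deterministic $\oh(m^{(2\omega-1)/\omega})$ time. Since its approximation satisfies $f(\phi)\le \phi^{1/2}n^{o(1)}$ (taking $\xi=1/2$), the hypothesis of \Cref{thm:VertexExpansionMain} is met. That theorem then reduces minimum vertex expansion to $O(\log n)$ balanced-cut invocations on graphs with $n$ vertices and maximum degree $O(\log n)$ (hence $O(n\log n)$ edges), each costing $\oh(n^{(2\omega-1)/\omega})$, plus a deterministic additive overhead of $\oh(m^{1.5})$. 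Using $\omega<2.38$ to get $(2\omega-1)/\omega<1.58$, the resulting total running time is $\oh(m^{1.58})$, which is bounded by $\ot(m^{1.6})$ for all sufficiently large $m$.

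The remaining two bounds are effectively citations. The LocalVC running time is exactly \Cref{thm:localvc}, which collects the results of~\cite{NanongkaiSY19,ChechikHILP17}. For SplitVC, the naive approach of running one augmenting-path max-flow per pair gives only $O(|S|^2 mk)$; obtaining the sharper $O(mk(|S|+k^2))$ bound requires a scheme that batches the pairwise $(x,y)$-connectivity checks so that the dependence on $|S|$ becomes linear rather than quadratic. This is the main technical obstacle I anticipate, and I would defer its construction to Appendix~\ref{sec:splitvc} as the excerpt already announces; the essential idea is that once one has computed a single $k$-flow-based structure from a designated vertex in $S$, additional pairs can be tested with only $O(k)$ extra augmentations each, plus a one-time $O(mk^3)$ setup that contributes the $k^2$ summand.
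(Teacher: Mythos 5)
Your proposal is correct and takes essentially the same route as the paper: the paper's proof of this corollary simply cites Gabow~\cite{Gabow06} for the base case, \Cref{thm:vertex expansion} for the $\ot(m^{1.6})$ vertex-expansion bound, \Cref{thm:localvc} for LocalVC, and \Cref{thm:split-vc-runtime} for SplitVC. The one place you go further than the paper is the vertex-expansion item: you derive the bound by feeding \Cref{cor:OneShotMadry} into \Cref{thm:VertexExpansionMain}, which is a reasonable unpacking since the paper states \Cref{thm:vertex expansion} without an explicit proof and it must ultimately be obtained this way (or via \Cref{cor:Recursive}, which the paper uses when assembling \Cref{thm:VertexCutMain} and which gives the even better $\oh(m^{1.5})$). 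One small caveat: your sketch of the SplitVC mechanism (``$O(k)$ extra augmentations per pair after a one-time setup'') does not match Appendix~\ref{sec:splitvc} — the actual scheme picks a size-$k$ subset $X\subseteq S$, tests all $O(k^2)$ pairs within $X$ by full $O(mk)$ Ford--Fulkerson runs, then adds a super-source adjacent to $X$ and tests $(s,v)$ for each $v\in S$; but since you correctly defer to the appendix for the real construction, this is harmless.
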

\begin{proof}
 We use current best-known running time for the base case \cite{Gabow06} (with small $k$) which runs in time $O(mn+mk^3)$. By \Cref{thm:vertex expansion}, we can compute
approximate vertex-expansion in time $\ot(m^{1.6})$. By \Cref{thm:localvc}, LocalVC runs in time $\ot(\min(\nu^{1.5}k, \nu k^k ))$.  By \Cref{thm:split-vc-runtime}, SplitVC runs in time $O(mk(|S|+k^2))$.  
\end{proof}

Consequently, by \Cref{cor:current-runtime}, we  instantiate the running time for each term in \Cref{def:use-time} where we putting together subroutines in \Cref{def:generic-runtime} into \Cref{alg:main-vc}. %

\begin{corollary} \label{cor:use-time}
We have the running time for $t_{\textcost}, t_{\textlocal}, $ and $t_{\textbase}$ as follows. Assuming that $k > n^{\epsilon}$ where we can select sufficiently small $\epsilon > 0$. 
\begin{itemize}[nolistsep,noitemsep]
\item $t_{\textcost} =  \ot( n^{\theta} k^{\theta} + n^{1+a+o(1)}k^2+nk^4)$. 
\item $t_{\textlocal} = \ot( \min(n^{1.5-1.5a} k^4, n^{1-a} k^{2+k} ))$.
\item $t_{\textbase} = O(k^{1+4/a}+ k^{4+2/a})$. 
\end{itemize}
\end{corollary}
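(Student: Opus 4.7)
The proof is essentially a mechanical substitution of the running times from \Cref{cor:current-runtime} into the abstract definitions of $t_{\textcost}$, $t_{\textlocal}$, and $t_{\textbase}$ from \Cref{def:use-time}, and then simplifying. The plan is to handle each of the three quantities separately.

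For $t_{\textcost} = T_{\texth}(nk,n) + T_{\textsplit}(nk,n,k,n^{a+o(1)}) + O(nk)$, I would first substitute $m = nk$ into the vertex-expansion running time, which, phrased in terms of the generic exponent $\theta$ from the hypothesis of \Cref{thm:VCAlgoMain}, gives $T_{\texth}(nk,n) = \oh((nk)^{\theta}) = \oh(n^{\theta} k^{\theta})$. Then I would plug $(m,n,k,|S|) = (nk, n, k, n^{a+o(1)})$ into the $O(mk(|S|+k^2))$ bound for $\SplitVC$, getting $O(nk \cdot k \cdot (n^{a+o(1)} + k^2)) = O(n^{1+a+o(1)} k^2 + nk^4)$. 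Summing these with the trivial $O(nk)$ term and absorbing the last into $nk^4$ yields the claimed $\oh(n^{\theta} k^{\theta} + n^{1+a+o(1)} k^2 + nk^4)$.

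For $t_{\textlocal} = T_{\textlocal}(k^2 n^{1-a}, k)$, I would set $\nu = k^2 n^{1-a}$ in the $\oh(\min(\nu^{1.5} k, \nu k^k))$ bound. A direct calculation gives $\nu^{1.5} k = (k^2 n^{1-a})^{1.5} \cdot k = k^4 n^{1.5 - 1.5a}$ and $\nu k^k = k^{k+2} n^{1-a}$, so $t_{\textlocal} = \oh(\min(n^{1.5-1.5a} k^4, n^{1-a} k^{k+2}))$ as required.

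For $t_{\textbase} = T_{\textbase}(\baseCaseMainVC k, \baseCaseMainVC, k)$, recall $\baseCaseMainVC = \Theta(\max(k^{2/a}, n^{\epsilon}))$. Under the assumption $k > n^\epsilon$ (made for sufficiently small $\epsilon > 0$), we have $k^{2/a} \ge k > n^\epsilon$ since $a < 1$, so $\baseCaseMainVC = \Theta(k^{2/a})$. Plugging this and $m = \baseCaseMainVC \cdot k$ into the $O(m(n+k^3))$ base-case bound gives $O(k^{2/a} \cdot k \cdot (k^{2/a} + k^3)) = O(k^{1 + 4/a} + k^{4 + 2/a})$, matching the claim.

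The only mild obstacle is keeping the exponents straight, in particular noting that the $\oh(m^{1.6})$ bound from \Cref{cor:current-runtime} is written generically as $\theta$ in the statement so that the corollary remains compatible with the parameter $\theta$ appearing in \Cref{thm:VCAlgoMain}, and verifying that the case $\baseCaseMainVC = k^{2/a}$ is the one that applies (rather than the $n^{\epsilon}$ term), which is exactly where the hypothesis $k > n^\epsilon$ is used.
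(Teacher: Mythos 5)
Your proposal is correct and takes essentially the same approach as the paper, which also obtains the three bounds by directly substituting the subroutine running times from \Cref{cor:current-runtime} (with $\nu = O(k^2 n^{1-a})$ for LocalVC) into the definitions in \Cref{def:use-time} and simplifying. Your side remark that the vertex-expansion bound should be carried through as the generic $m^\theta$ from the hypothesis of \Cref{thm:VCAlgoMain}, and that the hypothesis $k > n^\epsilon$ is exactly what forces $\baseCaseMainVC = \Theta(k^{2/a})$, correctly identifies the two places where the substitution is not purely mechanical.
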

\begin{proof}
The results follow from simple algebraic calculation from \Cref{def:use-time}, and \Cref{cor:current-runtime}. For LocalVC, we use $\nu = O(k^2n^{1-a})$. 
\end{proof}

\begin{corollary} \label{cor:use-time-ez}
When $\theta > 1.5$,
the running time $\tilde T_{a,k}(n)$ for \Cref{alg:main-vc} can
be bounded by 
$$ 
 \tilde T_{a,k}(n) = \ot(n^{\theta+a+o(1)} + n^{2-a} k^{2 + k}). 
$$
\end{corollary}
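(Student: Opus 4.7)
The plan is to obtain this estimate by a direct substitution: plug the per-subroutine bounds from \Cref{cor:use-time} into the aggregate running-time formula from \Cref{lem:main-vc-runtime}, namely $\tilde T_{a,k}(n) = \ot\!\big(t_{\textcost}\, n^{a+o(1)} + n\,t_{\textlocal} + n\,t_{\textbase}\big)$, and then show that each of the three summands is dominated by $n^{\theta+a+o(1)} + n^{2-a}k^{2+k}$ under the standing assumptions $\theta>1.5$, $a\in(0,1/2)$, and $k<n^{1/8}$.

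First I would dispose of the local-search term. Selecting the second branch of the minimum, $t_{\textlocal} = \ot(n^{1-a}k^{2+k})$, immediately yields $n\cdot t_{\textlocal} = \ot(n^{2-a}k^{2+k})$, which matches the second summand of the target bound verbatim. (The first branch $n^{1.5-1.5a}k^4$ is kept in reserve for the companion regime, but is not needed here.)

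Next I would expand the vertex-expansion/split term $t_{\textcost}n^{a+o(1)}$ into the three pieces
\[
n^{\theta+a+o(1)}k^{\theta},\qquad n^{1+2a+o(1)}k^{2},\qquad n^{1+a+o(1)}k^{4},
\]
and show each is $\ot(n^{\theta+a+o(1)})$. The hypothesis $\theta>1.5$ combined with $a<1/2$ gives $\theta+a > 1+2a$ and $\theta+a > 1+a$, so the $n$-exponents in the last two pieces are dominated by the first. The residual $k$-factors $k^{\theta},k^{2},k^{4}$ are of the form $k^{O(1)}$ and can be absorbed into the floating $n^{o(1)}$ in the regime where we care (either by the usual convention of hiding $k^{O(1)}$ inside $\ot$, or by noting that $k^{O(1)}$ is negligible compared with $k^{k}$ which already appears in the second summand of the target).

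Finally I would handle the base-case term $n\cdot t_{\textbase} = O\!\big(n(k^{1+4/a}+k^{4+2/a})\big)$. Because the recursion bottoms out only when $|V|\le\Lambda=\Theta(\max(k^{2/a},n^\epsilon))$, the exponents $1+4/a$ and $4+2/a$ are precisely what the definition of $\Lambda$ is designed to neutralize: a factor $k^{4/a}$ is at most $\Lambda^{2}$ and a factor $k^{2/a}$ is at most $\Lambda$, so the whole contribution is dominated by $n \cdot k^{O(1)}$ which fits inside $n^{\theta+a+o(1)}$ (using $\theta>1.5>1$) or, for the $k$-heavy piece, inside $n^{2-a}k^{2+k}$. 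Summing the three bounded pieces gives the claimed $\ot(n^{\theta+a+o(1)} + n^{2-a}k^{2+k})$.

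The step I expect to be the main obstacle is the bookkeeping in the middle: cleanly arguing that the stray $k^{\theta}$, $k^{2}$, $k^{4}$ factors coming from $t_{\textcost}$ and the $k^{1+4/a},k^{4+2/a}$ factors coming from $t_{\textbase}$ are all legitimately swept into $n^{o(1)}$ or into the $n^{2-a}k^{2+k}$ term, rather than spawning a third, larger summand. Making this explicit amounts to keeping track of the regime $k<n^{1/8}$ and the definition of $\Lambda$ simultaneously; once that is done, the rest is an exponent comparison powered by $\theta>1.5$ and $a<1/2$.
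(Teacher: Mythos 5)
Your proposal uses the same two ingredients as the paper's proof — the aggregate bound $\tilde T_{a,k}(n)=\ot(t_{\textcost}n^{a+o(1)}+n\,t_{\textlocal}+n\,t_{\textbase})$ from \Cref{lem:main-vc-runtime} and the per-subroutine bounds from \Cref{cor:use-time} — and then makes a term-by-term comparison powered by $\theta>1.5$ and $a<1/2$; that is exactly the paper's route. The one place where you diverge is in the bookkeeping of the $k$-dependence: the paper's proof simply asserts ``By \Cref{cor:use-time} where $k=O(1)$, $t_{\textcost}=\ot(n^\theta)$, $t_{\textlocal}=\ot(n^{1-a})$, $t_{\textbase}=O(\baseCaseMainVC^2)=O(n^{2\epsilon})$'' and leaves the reader to supply the $k^{2+k}$, whereas you correctly trace the $k^{2+k}$ to the second branch of $t_{\textlocal}$ and explicitly argue that $n\cdot t_{\textlocal}=\ot(n^{2-a}k^{2+k})$ gives the second summand verbatim. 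In that sense your version is somewhat more faithful to the stated bound than the paper's own terse proof. One small caution on the step you yourself flag: the justification ``hide $k^{O(1)}$ inside $\ot$'' is not literally correct, since $\ot$ only suppresses $\polylog(n)$ factors — the absorption of $k^\theta,k^2,k^4$ into the first summand is really only legitimate in the regime where the corollary is ultimately used (small $k$ so that $k^{O(1)}=n^{o(1)}$), or, as you also note, because once $a$ is balanced those stray $k^{O(1)}$ factors are already dominated by the $k^{2+k}$ in the second summand. This is the same level of informality the paper itself adopts with its ``$k=O(1)$'' shortcut, so it does not introduce a new gap.
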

\begin{proof}
By \Cref{lem:main-vc-runtime}, we have $ \tilde T_{a,k}(n) = \ot( t_{\textcost} n^{a+o(1)} + n (t_{\textlocal} + t_{\textbase}) )) $. By \Cref{cor:use-time} where $k = O(1)$, we have 
$t_{\textcost} = \ot(n^{\theta})$ (since $ a < 0.5$) , $t_{\textlocal} = \ot (n^{1-a}) $, and  $ t_{\textbase} = O(\baseCaseMainVC ^2) = O(n^{2\epsilon})$. 
\end{proof}

Setting $a = 0.25$ then gives the first term of the running
time as stated in \Cref{thm:VCAlgoMain}.
For the larger $k$ case, we instead use
$t_{\textlocal} = \ot( n^{1.5-1.5a} k^4 )$.

\begin{corollary} \label{cor:main-vc-runtime}
Let $a' = a + o(1)$.
The running time $\tilde T_{a,k}(n)$ for \Cref{alg:main-vc}
can be bounded by
\[
 \tilde T_{a,k}\left(n\right)
 =  \ot\left( n^{a'} n^{\theta}k^{\theta}
 	+ n^{2.5-1.5a}k^4 + nk^{4/a+1}\right)
\]
\end{corollary}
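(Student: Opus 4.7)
The plan is to substitute the concrete bounds from Corollary~\ref{cor:use-time} into the generic recurrence solution of Lemma~\ref{lem:main-vc-runtime}, and then identify which of the resulting summands dominate under the parameter regime $a\in(0,1/2)$, $\theta>1$, and (implicitly) $k\le n^{1/8}$. Recall the master bound
\[
\tilde T_{a,k}(n) \;=\; \ot\bigl(t_{\textcost}\, n^{a+o(1)} \;+\; n\,(t_{\textlocal}+t_{\textbase})\bigr),
\]
so it suffices to handle each of the three ``buckets'' on the right-hand side.

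First, for the $t_{\textcost}\cdot n^{a+o(1)}$ bucket, plug in $t_{\textcost}=\ot(n^\theta k^\theta+n^{1+a+o(1)}k^2+nk^4)$ from Corollary~\ref{cor:use-time}. Multiplying through by $n^{a+o(1)}$ yields three summands: $n^{\theta+a'}k^\theta$, $n^{1+2a+o(1)}k^2$, and $n^{1+a'}k^4$, writing $a'=a+o(1)$. Since $\theta>1.5$ and $a<1/2$, the exponent $\theta+a'$ strictly exceeds both $1+2a+o(1)$ and $1+a'$, so (under the assumed smallness of $k$) the first summand $n^{\theta+a'}k^\theta = n^{a'}\cdot n^\theta k^\theta$ dominates. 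This produces exactly the first term of the claimed bound.

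Second, for $n\cdot t_{\textlocal}$, apply Corollary~\ref{cor:use-time} with the $n^{1.5-1.5a}k^4$ branch of the $\min$, giving $n\cdot t_{\textlocal}\le \ot(n^{2.5-1.5a}k^4)$, which is the second term. (The alternative $n^{1-a}k^{2+k}$ branch would give a different regime; we keep this branch because the corollary's stated form uses the $k^4$ bound.) Third, for $n\cdot t_{\textbase}$, Corollary~\ref{cor:use-time} gives $t_{\textbase}=O(k^{1+4/a}+k^{4+2/a})$, so $n\cdot t_{\textbase}=O(nk^{1+4/a}+nk^{4+2/a})$. Comparing exponents of $k$: $1+4/a \ge 4+2/a$ iff $2/a\ge 3$ iff $a\le 2/3$, which holds throughout the relevant range $a\in(0,1/2)$. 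Hence $nk^{4/a+1}$ dominates, giving the third term.

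There is no real obstacle here --- the proof is essentially a routine verification that the ``right'' summand wins in each of the three buckets. The only point that requires mild care is making sure the $n^{a'}$ prefactor from the master formula is retained on the first term (since it is what couples the ``cost-per-node'' $n^\theta k^\theta$ to the $n^{a+o(1)}$ leaf-counting overhead from the recursion-tree analysis of Section~\ref{sec:solve-recurrence}), and that the subdominant summands $n^{1+2a+o(1)}k^2$, $n^{1+a'}k^4$, and $nk^{4+2/a}$ are correctly shown to be absorbed into the three retained terms for the stated parameter range. Summing the three retained terms gives exactly the bound claimed in Corollary~\ref{cor:main-vc-runtime}.
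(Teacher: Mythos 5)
Your proof is correct and takes essentially the same route as the paper: substitute the concrete bounds of Corollary~\ref{cor:use-time} into the master formula of Lemma~\ref{lem:main-vc-runtime}, multiply the $t_{\textcost}$ bucket by $n^{a+o(1)}$, and argue that in each bucket the retained summand dominates under $a<1/2$, $\theta$ sufficiently large, and $k<n^{1/8}$. Your version is in fact a bit more careful than the paper's displayed computation: you correctly use $t_{\textbase}=O(k^{1+4/a}+k^{4+2/a})$ from Corollary~\ref{cor:use-time} (the paper's inline display has a typo $k^{5+2/a}$), and you make the dominance comparison $1+4/a \ge 4+2/a \iff a \le 2/3$ explicit where the paper just asserts it.
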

\begin{proof}
By \Cref{lem:main-vc-runtime}, we have
$ \tilde T_{a,k}(n)
=
\ot( t_{\textcost} n^{a+o(1)} + n (t_{\textlocal} + t_{\textbase}) )$.
By \Cref{cor:use-time}, we obtain the following. 
\begin{multline*}
\widetilde{T}_{a,k}\left(n\right)
\leq
\oh\left( n^{a'} \left( n^{\theta}k^{\theta}
  + n^{1+a'}k^2 + nk^4\right)
  	+ n^{2.5 - 1.5a}k^4
    + n\left(k^{4/a+1} + k^{5+2/a}\right)
\right)\\
\leq \oh\left( n^{a'} n^{\theta}k^{\theta}
+ n^{2.5-1.5a}k^4 + nk^{4/a+1}\right).
\end{multline*}
The last equality follows by the assumption of $k < n^{1/8}$
(precondition for \Cref{thm:VCAlgoMain}), and $a < 0.5$.
\end{proof}

Here the optimal choice of $a$ is 
\[
a = \frac{(2.5-\theta)+(4 - \theta) \log_n k}{2.5}.
\]
By the assumption on $k$, we have that $a < 0.5$,
and thus $a$ satisfies the precondition for \Cref{alg:main-vc}.
Substituting this vlaue into \Cref{cor:main-vc-runtime}
gives the final term of running time for \Cref{alg:main-vc}, which is
$\oh(n^{1+ 0.6\theta}k^{1.6+0.6\theta})$. Therefore, we conclude the proof of \Cref{thm:VCAlgoMain}.

\section{Low-Vertex-Expansion Cuts from Balanced Low-conductance Cuts}

\label{sec:cutmatching}

In this section, we show how an approximate balanced cut algorithm can
be used to give an algorithm that approximates the minimum vertex expansion.

\VertexExpansionMain*

\begin{theorem}
\label{thm:vertex expansion}There is a deterministic algorithm that,
given a graph $G$ and a parameter $\eta$, runs in $\tilde{O}(m^{1.6})$
and either 
\begin{itemize}
\item certifies that $h(G)\ge\eta$, or
\item returns $(L,S,R)$ where $h(L,S,R)\le\eta n^{o(1)}$.
\end{itemize}
\end{theorem}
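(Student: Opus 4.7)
The plan is to assemble this theorem by composing two previously stated results: the reduction in \Cref{thm:VertexExpansionMain} from approximating vertex expansion to approximating balanced low-conductance cuts, and one of the deterministic balanced-cut algorithms built in the paper's graph-partitioning section. The cleanest way to land on the $\ot(m^{1.6})$ target is to feed the $(O(\phi^{1/2}\log^{2.5}m),100)$-approximate balanced-cut routine of \Cref{cor:OneShotMadry}, whose running time is $\oh(m^{(2\om-1)/\om})=\oh(m^{1.58})$, into \Cref{thm:VertexExpansionMain}. The guarantee $O(\phi^{1/2}\log^{2.5}m)$ is dominated by $\phi^{1/2}n^{o(1)}$, so the hypothesis of \Cref{thm:VertexExpansionMain} is satisfied with $\xi=1/2$ and $c=O(1)$.

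Next, I would perform the two pieces of cost accounting that \Cref{thm:VertexExpansionMain} dictates. The reduction calls the balanced-cut routine $O(c\log n)=O(\log n)$ times, each time on a graph with $n$ vertices and maximum degree $O(\log n)$, so each such instance has $m'=\ot(n)$ edges. Each call therefore costs $\oh(m'^{\,1.58})=\oh(n^{1.58})$, and the total for these calls is $\ot(n^{1.58})\le\ot(m^{1.58})$. The only additional work is the deterministic overhead of $\oh(cm^{1.5})=\oh(m^{1.5})$, and adding the two contributions gives the claimed $\ot(m^{1.6})$ bound.

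Finally, I would translate the $n^{o(1)}$-approximation produced by \Cref{thm:VertexExpansionMain} into the decision statement of the theorem. The output is a separation triple $(L,S,R)$ satisfying $h(L,S,R)\le n^{o(1)}\cdot h(G)$. Comparing $h(L,S,R)$ against $\eta\cdot n^{o(1)}$: if it is at most $\eta\cdot n^{o(1)}$, return $(L,S,R)$; otherwise, dividing by the $n^{o(1)}$ approximation factor yields $h(G)>\eta$, certifying the bound. There is essentially no obstacle here beyond selecting the right primitive, since the heavy lifting (the cut-matching reduction for vertex expansion and the construction of the deterministic balanced-cut routines) is already encapsulated in \Cref{thm:VertexExpansionMain} and \Cref{cor:OneShotMadry}. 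Incidentally, replacing \Cref{cor:OneShotMadry} by the stronger \Cref{cor:Recursive} would tighten the running time to $\oh(m^{1.5})$, but $\ot(m^{1.6})$ already suffices for the subsequent use of this lemma.
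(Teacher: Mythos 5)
Your proposal is correct and follows the route the paper intends (but leaves implicit): \Cref{thm:vertex expansion} is obtained by plugging the $\oh(m^{(2\omega-1)/\omega})$-time balanced-cut routine of \Cref{cor:OneShotMadry} into \Cref{thm:VertexExpansionMain}, noting that the $O(c\log n)$ calls are on $\tilde O(n)$-edge instances and cost $\oh(n^{(2\omega-1)/\omega})\le\tilde O(m^{1.6})$ in total, that the $\oh(cm^{1.5})$ overhead is dominated, and that an $n^{o(1)}$-approximate separating triple yields the stated decision by thresholding at $\eta n^{o(1)}$. The cost accounting and the translation to the decision form are both handled correctly, so nothing is missing.
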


This algorithm is based on the \emph{cut-matching game }introduced
by Khandekar, Rao, and Vazirani \cite{KhandekarRV09}.%

{} In the original paper \cite{KhandekarRV09}, they show a specific
randomized approach, based on random projection, for implementing
the framework very fast. Later, it is shown in \cite{KhandekarKOV07}
that such random projection is not inherent: given any algorithm for
finding a sparsest cut, then the framework can be implemented. In
particular, this framework can be implemented deterministically. However,
in \cite{KhandekarKOV07}, they use \emph{exact} algorithm for computing
sparsest cut which is NP-hard. Below, we will show that the idea \cite{KhandekarKOV07}
still works even if when we use only approximate algorithms for finding
sparse cuts.

Below, we first prove two main steps in the cut matching game in our
context. In \Cref{subsec:embed}, we show that how to lower bound
the vertex expansion by \emph{embedding} an expander. In \Cref{subsec:matching},
we show how to find low vertex expansion cut using a single commodity
flow. Then, \Cref{subsec:cutmatching}, we describe the cut-matching
game variant by \cite{KhandekarKOV07} with a relaxation
that we can use approximation algorithms. Then, we show how everything fits
together.

The main technique in this section is the \emph{cut matching game}
by Khandekar, Rao, and Vazirani \cite{KhandekarRV09}. This is a very
flexible framework that can be used to certify various notion of expansion
of graphs. The framework was used for approximating the sparsity $\sigma(G)$
of a graph $G$ where $\sigma(G)=\min_{S}\frac{|E(S,V-S)}{\min\{|S|,|V-S|\}}$
in \cite{KhandekarRV09} and for approximating the conductance $\Phi(G)$
of $G$ in \cite{SaranurakW19}. For our purpose, it is not hard to
adjust the framework so that it works for vertex expansion $h(G)$
of the graph $G$.

Unfortunately, one of the two main components of the framework in
\cite{KhandekarRV09} is randomized. More precisely, their framework
requires computing a cut $C$ with some specific property, and they
give a very fast randomized algorithm for computing such $C$ based
on a random-projection technique. It is not clear how to derandomize
the algorithm for computing such cut. Fortunately, in a technical
report by Khandekar, Khot, Orecchia, Vishnoi \cite{KhandekarKOV07},
they show how to adjust the analysis of \cite{KhandekarRV09} 
so that it works when $C$ is the most-balanced low-conductance cut.
However, this is an NP-hard problem.

We observe that we do not need an exact algorithm for computing $C$.
For this, we require an additional property of our approximate
balanced cut routine, that it returns a set $S$ with $|S| \leq n /2$
such that $\Phi(G[V -  S]) \geq \alpha$, or that the graph
with $S$ removed is an expander.
It turns out that there is a black box method \cite{NanongkaiS17,Wulff-Nilsen17}
to obtain this guarantee from an approximate balanced
cut algorithm.
Formally, we obtain from this black-box,
$T_{cut}=\oh(m^{1.5})$ and $\alpha=1/n^{o(1)}$,
which in turn gives the overall running time.

\subsection{Lower Bounding Vertex Expansion via Expander Embedding}

\label{subsec:embed}

We first define some basic notions about \emph{flow}. Although we
are working with undirected vertex-capacitated graphs, we will define
the problems also on directed graphs and also edge capacitated graphs.
Let $G$ be a directed graph $G=(V,E)$ and $s,t\in V$. An $s$-$t$
flow $f\in\mathbb{R}_{\ge0}^{E}$ is such that, for any $v\in V-\{s,t\}$,
the amount of flow into $v$ equals the amount of flow out of $v$,
i.e., $\sum_{(u,v)\in E}f(u,v)=\sum_{(v,u)\in E}f(v,u)$. Let $f(v)=\sum_{(u,v)\in E}f(u,v)$
be the \emph{amount of flow at $v$}. The \emph{value} $|f|$ of $f$
is $\sum_{(s,v)\in E}f(s,v)-\sum_{(v,s)\in E}f(v,s)$. %

{} Let $c\in(\mathbb{R}_{>0}\cup\{\infty\})^{V}$ be vertex capacities.
$f$ is \emph{vertex-capacity-feasible} if $f(v)\le c(v)$ for all
$v\in V$. If $G$ is undirected, one way to define an $s$-$t$ flow
is by treating $G$ as a directed graph where there are two directed
edge $(u,v)$ and $(v,u)$ for each undirected edge $\{u,v\}$. We
will assume that a flow only goes through an edge in one direction,
i.e., for each edge $\{u,v\}\in E$, either $f(u,v)=0$ or $f(v,u)=0$. 

For $1\le i\le k$, let $f_{i}$ be an $s_{i}$-$t_{i}$ flow with
value $d_{i}$. We call $\cF=\{f_{1},\dots,f_{k}\}$ a \emph{multi-commodity
flow}. We call the $k$ tuples $(s_{1},t_{1},d_{1}),\dots,(s_{k},t_{k},d_{k})$
the \emph{demands }of $\cF$. We say that $\cF$ is \emph{edge-capacity-feasible
}if $\sum_{i}f_{i}(e)\le c(e)$ for each $e\in E$, and is \emph{vertex-capacity-feasible}
if $\sum_{i}f_{i}(v)\le c(v)$ for all $v\in V$. We usually just
write \emph{feasible}. Let $W=(V,E,w)$ be a weighted graph. If a
multicommodity flow $f$ has the demands $\{(u,v,w(e))\mid e=(u,v)\in W\}$,
then we say that $f$ \emph{respects} $W$.

\begin{definition}
[Embedding with Vertex Congestion]Let $G=(V,E)$ and $W=(V,E_{W})$
be two graphs with the same set of vertices. We say that $W$ can
be embedded into $G$ with vertex congestion $c$ iff there exists
a feasible multicommodity flow $f$ in $G$ respecting $W$ when each
node in $G$ has capacity $c$.
\end{definition}

\begin{lemma}
\label{lem:embed}
Suppose $W$ has sparsity $\sigma(W) \geq \phi$, and 
$W$ can be embedded into $G$ with vertex congestion $c$, then $h(G)\ge\phi/2c$.
\end{lemma}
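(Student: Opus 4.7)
The plan is to fix an arbitrary separation triple $(L,S,R)$ in $G$ and prove $h(L,S,R) \ge \phi/(2c)$; minimizing over all separation triples then yields $h(G) \ge \phi/(2c)$. Without loss of generality assume $|L| \le |R|$, so that $\min(|L|,|R|) = |L|$ and also $\min(|L|, |V\setminus L|) = |L|$.

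First I would apply the sparsity hypothesis on $W$ to the vertex subset $L$, obtaining
\[
|E_W(L, V \setminus L)| \;\ge\; \sigma(W) \cdot |L| \;\ge\; \phi |L|.
\]
Next, let $\cF$ be the multicommodity flow in $G$ that witnesses the embedding of $W$. Since $(L,S,R)$ is a separation triple, $G$ has no edge between $L$ and $R$, so every flow path in $\cF$ corresponding to a $W$-edge from $L$ to $V \setminus L = S \cup R$ must use at least one vertex of $S$ (either as an internal vertex on the way to $R$, or as its endpoint when the $W$-edge's other endpoint lies in $S$). Hence the total flow routed through vertices of $S$ is at least $|E_W(L, V \setminus L)|$. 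On the other hand, the vertex-congestion bound gives $\sum_{v \in S} f(v) \le c \cdot |S|$. Combining, $\phi |L| \le c |S|$, i.e., $|S|/|L| \ge \phi/c$.

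Finally, writing $x = |S|/|L|$, we have $h(L,S,R) = x/(1+x)$, which is increasing in $x$, so $h(L,S,R) \ge (\phi/c)/(1+\phi/c) = \phi/(c+\phi)$. Applying sparsity to singleton sets $T = \{v\}$ gives $\deg_W(v) \ge \phi$ for every $v$, and since each incident $W$-edge contributes at least a unit of flow through $v$ (bounded by $c$), we must have $\phi \le c$; consequently $\phi/(c+\phi) \ge \phi/(2c)$, completing the proof. The main subtle point will be pinning down the vertex-congestion convention: the argument relies on endpoints of flow paths counting toward their vertex's capacity (the natural interpretation, and necessary for the statement to be meaningful — otherwise a single vertex could be incident to arbitrarily many $W$-edges at zero cost). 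Once that convention is fixed, the remainder is routine, and the factor-of-two loss in $\phi/(2c)$ is inherent because $|S|$ appears in both the numerator and denominator of $h(L,S,R)$.
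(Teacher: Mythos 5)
Your proof follows the paper's argument precisely: apply the sparsity of $W$ to $L$ (WLOG the smaller side) to get $|E_W(L,\overline{L})|\ge\phi|L|$, observe that the flows realizing these $W$-demands must pass through $S$ since $G$ has no $L$--$R$ edges, deduce $|S|\ge\phi|L|/c$ from the congestion bound, and plug in to obtain $h(L,S,R)\ge\phi/(c+\phi)\ge\phi/(2c)$. The one small difference is where $\phi\le c$ comes from: you derive it explicitly from singleton cuts plus the congestion bound (flagging the endpoint-counting convention it needs), whereas the paper's proof obtains it implicitly from the up-front case split $|S|\le\min(|L|,|R|)$, which forces $\phi|L|/c\le|S|\le|L|$ and hence $\phi\le c$.
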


\begin{proof}
Consider any separation triple $(L,S,R)$ where $|S|\le|L|,|R|$,
otherwise $h(L,S,R)\ge1/2\ge\phi/2c$. Assume w.l.o.g. that $|L|\le|R|$.
In particular, $|L|\le|\overline{L}|$. Then we have $E_{W}(L,\overline{L})\ge\phi|L|$.
As $W$ can be embedding into $G$ with vertex congestion $c$, $|N_{G}(L)|=|S|\ge\phi|L|/c$.
So we conclude $h(L,S,R)=\frac{|S|}{\min\{|L|,|R|\}+|S|}\ge\frac{|S|}{(c/\phi)\cdot|S|+|S|}\ge\phi/2c$.
\end{proof}

\subsection{Finding Low Vertex Expansion Cut via Single Commodity Flow}

\label{subsec:matching}
We prove the following in this section: 
\begin{lemma}
\label{lem:matching}Let $G=(V,E)$ be a $n$-vertex $m$-edge graph.
Let $A,B\subset V$ be two disjoint vertex sets. Let $c$ be a congestion
parameter. There is a deterministic algorithm that runs in $\tilde{O}(m\sqrt{n})$
time, and either 
\begin{itemize}
\item return an embedding $M$ into $G$ with vertex congestion $c$, where
$M$ is a matching of size $\min\{|A|,|B|\}$ between vertices of
$A$ and $B$ , or
\item return a separation triple $(L,S,R)$ where $h(L,S,R)<1/c$.
\end{itemize}
\end{lemma}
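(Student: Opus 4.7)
The plan is to reduce to a single-commodity $s$-$t$ vertex-capacitated max-flow problem in which either a maximum flow yields the matching-with-congestion embedding, or a minimum cut exhibits the low-vertex-expansion triple. First, I build an auxiliary network $G^+$ by adjoining a super-source $s$ with a unit-capacity edge to every $a\in A$, a super-sink $t$ with a unit-capacity edge from every $b\in B$, keeping all edges of $G$, and assigning vertex capacity $c$ to every $v\in V$ (with $\infty$ at $s,t$). Set $k=\min\{|A|,|B|\}$. After the standard vertex-splitting reduction $v\mapsto(v_{\mathrm{in}}\to v_{\mathrm{out}})$ with capacity $c$, we obtain an edge-capacitated network with $O(n)$ nodes and $O(m)$ edges in which the max flow value is at most $k\le n$ because every edge incident to $s$ or $t$ has unit capacity. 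On such a structured network, a deterministic blocking-flow algorithm (Dinic/Even--Tarjan style, exploiting the unit source/sink edges to bound the number of phases by $O(\sqrt{n})$) runs in $\tilde O(m\sqrt n)$ time.

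If the max flow equals $k$, I decompose it into $k$ unit $s$-to-$t$ paths. The unit capacities on source and sink edges force the $A$-endpoints and $B$-endpoints of the paths to be distinct, so the paths realize a matching $M$ of size $k$ between $A$ and $B$, and restricting to $G$ yields exactly the multicommodity flow respecting $M$ with vertex congestion $c$.

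Otherwise the max flow is strictly less than $k$, so by LP duality there is an $s$-$t$ vertex cut in $G^+$ of capacity at most $k-1$. Let $S\subseteq V$ be the vertices removed in this cut, and let $L,R\subseteq V\setminus S$ be the $s$- and $t$-side components respectively. The cut capacity equals $c|S|+|A\cap R|+|B\cap L|<k$, where the last two terms charge the cut source/sink unit edges. Assume WLOG $|A|\le|B|$, so $k=|A|$. From $|A\cap R|<k-c|S|$ we get
\[
|L|\;\ge\;|A\cap L|\;=\;k-|A\cap S|-|A\cap R|\;>\;k-|S|-(k-c|S|)\;=\;(c-1)|S|,
\]
and symmetrically $|R|\ge|B\cap R|>(c-1)|S|$. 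Therefore $\min\{|L|,|R|\}+|S|>c|S|$, yielding $h(L,S,R)<1/c$. These same strict inequalities also imply $L,R\neq\emptyset$, making $(L,S,R)$ a valid separation triple; in the degenerate case $S=\emptyset$, one simply has $h(L,S,R)=0$.

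The main obstacle is securing the deterministic $\tilde O(m\sqrt n)$ bound for the vertex-capacitated max-flow subroutine itself; the structural steps --- constructing $G^+$, the flow decomposition, and the elementary arithmetic linking cut capacity to $h(L,S,R)$ --- are routine once the correct flow network is in place. The key observation making the runtime plausible is that unit source/sink edges cap the max flow by $n$, which is precisely the regime in which Dinic-style phase counting delivers a $\sqrt n$ bound on the number of augmenting phases even though the internal vertex capacities are larger than $1$.
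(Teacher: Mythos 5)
Your reduction is essentially the same as the paper's: both build an auxiliary $s$--$t$ network with unit bottlenecks at the $A$- and $B$-interfaces (the paper via degree-$2$ dummy vertices with capacity~$1$, you via unit-capacity source/sink edges --- equivalent after vertex splitting), then argue that a max flow of value $\min\{|A|,|B|\}$ decomposes into the matching embedding, while a smaller min cut yields $(L,S,R)$ with $h(L,S,R) < 1/c$. Your cut arithmetic ($|L| > (c-1)|S|$ and symmetrically for $R$) is an equivalent rearrangement of the paper's $|L|+|S| > c|S|$ derivation, and you correctly handle nonemptiness of $L,R$.

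The one place you deviate --- and where you are too optimistic --- is the runtime. The paper does not attempt a phase-counting argument; it simply cites a deterministic $\tilde{O}(m\sqrt{n})$ vertex-capacitated max-flow algorithm from \cite{NanongkaiSY19} (with Goldberg--Rao's $\tilde{O}(m^{1.5})$ as a fallback). Your claim that ``unit source/sink edges cap the max flow by $n$, which is precisely the regime in which Dinic-style phase counting delivers a $\sqrt n$ bound'' is not a standard theorem and is not implied by a flow bound alone: the Even--Tarjan $O(\sqrt{n})$ phase bound applies to \emph{unit networks} (every internal vertex has in- or out-degree $\le 1$), which this network is not once internal vertex capacities equal $c>1$. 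A correct level-graph counting argument along the lines you sketch yields a residual-flow bound of roughly $cn/d$ after $d$ phases, hence $O(\sqrt{cn})$ phases total, which only matches $\tilde O(m\sqrt n)$ when $c = \polylog(n)$ or $n^{o(1)}$. That suffices for the application in the cut-matching game, but the lemma as stated places no restriction on $c$; to get the general bound you really do need to invoke the cited black-box algorithm as the paper does. You rightly flag the max-flow subroutine as the ``main obstacle,'' but the justification you give for overcoming it is not yet correct.
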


To prove the above, we define the following flow problem. 
Let $G'$ be a vertex-capacitated graph defined from $G$ as follows.
We create a source vertex $s$ and a sink vertex $t$. For each vertex
$v\in A$, we add a dummy vertex $v'$ and an edge $(s,v')$ and $(v',v)$.
Let $A'$ be the set of such dummy vertices incident to $s$. For
each vertex $v\in B$, we add a dummy vertex $v'$ and an edge $(t,v')$
and $(v',v)$. Let $B'$ be the set of such dummy vertices incident
to $t$. Each vertex in $V$ has capacity $v$. Each vertex in $A'\cup B'$
has capacity $1$. The vertices $s$ and $t$ have capacity $\infty$.

The algorithm just computes a max flow in $G'$ using an deterministic $\tilde{O}(m\sqrt{n})$-time
algorithm by the subset of the authors \cite{NanongkaiSY19} (or a
more well-known one with running time $\tilde{O}(m^{1.5})$ by Goldberg
and Rao \cite{GoldbergR98}). If the size of max flow value is $\min\{|A|,|B|\}$,
then by definition, we can obtain an embedding $M$ into $G$ as stated
in \Cref{lem:matching}. Now, we need to show that if the max flow
value is less, then we obtain $(L,S,R)$ where $h(L,S,R)<1/c$.
\begin{lemma}
Let $(L',S',R')$ be the minimum vertex cut in $G'$ where $s\in L'$
and $t\in R'$. If the cut value of $(L',S',R')$ is less than $\min\{|A|,|B|\}$,
then $(L,S,R)=(L'\cap V,S'\cap V,R'\cap V)$ is a vertex-cut in $G$
where $h(L,S,R)<1/c$.
\end{lemma}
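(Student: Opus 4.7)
The plan is to translate the vertex min-cut of the auxiliary graph $G'$ directly into a low-vertex-expansion triple in $G$ by bookkeeping how the min-cut spends its capacity budget between original vertices (capacity $c$) and dummies (capacity $1$). Since $s$ and $t$ have capacity $\infty$, they cannot lie in $S'$, so I decompose $S' = S_V' \cup S_A' \cup S_B'$ where $S_V' = S' \cap V$, $S_A' = S' \cap A'$, $S_B' = S' \cap B'$. The cut value is exactly $c\cdot|S_V'| + |S_A'| + |S_B'|$, which by hypothesis is strictly less than $\min\{|A|, |B|\}$.

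Next, I set $S := S_V'$, $L := L' \cap V$, $R := R' \cap V$, and verify that $(L, S, R)$ is a separation triple of $G$. Absence of $L$--$R$ edges is immediate: any edge $(u,v)$ in $G$ with $u \in L$, $v \in R$ would also be an edge in $G'$ between $L'$ and $R'$, contradicting that $(L', S', R')$ is a separation triple of $G'$. For non-emptiness, I observe that for each $v \in A$ not in $L'$, the unique $s$--$v$ path in $G'$, namely $s \to v' \to v$, must be cut by $S'$, so either $v \in S_V'$ or $v' \in S_A'$. This gives $|A \cap L'| \geq |A| - |S_V'| - |S_A'|$, and the cut bound combined with $c \geq 1$ yields $|S_V'| + |S_A'| < |A|$, so $L \neq \emptyset$; symmetrically $R \neq \emptyset$.

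The expansion bound then falls out of the same budget inequality. Dropping the nonnegative $|S_B'|$ term from the cut bound gives $c\cdot|S| + |S_A'| < |A|$, which rearranges to
\[
|A| - |S_A'| - |S| \;>\; (c-1)\cdot|S|.
\]
Combined with $|L| \geq |A| - |S_A'| - |S|$ this yields $|L| > (c-1)\cdot|S|$, and the symmetric estimate starting from $B$ gives $|R| > (c-1)\cdot|S|$. Hence $\min\{|L|, |R|\} + |S| > c\cdot|S|$, so
\[
h(L, S, R) \;=\; \frac{|S|}{\min\{|L|, |R|\} + |S|} \;<\; \frac{1}{c},
\]
as required.

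The only mildly delicate point is that a priori the min-cut can allocate its budget arbitrarily between the two capacity scales in $G'$; the argument works because both the non-emptiness claim and the final rearrangement depend only on the aggregate budget being below $\min\{|A|, |B|\}$, and the factor-$c$ weighting of vertices in $V$ is exactly what converts that budget inequality into the target vertex-expansion ratio $1/c$. No further ideas are required beyond this accounting.
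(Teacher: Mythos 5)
Your proof is correct and takes essentially the same route as the paper: both decompose the cut value of $(L',S',R')$ by capacity scale, then use the forced-dummy observation (that for each $v\in A$ excluded from $L'$, either $v\in S'\cap V$ or its dummy $v'\in S'\cap A'$) together with the budget bound $c|S|+|S'\cap A'|+|S'\cap B'|<\min\{|A|,|B|\}$ to obtain $|L|>(c-1)|S|$ and symmetrically $|R|>(c-1)|S|$, giving $h(L,S,R)<1/c$. The paper counts $A\cap R$ (showing $|S'\cap A'|\ge|A\cap R|$) and then passes through $|A\cap(L\cup S)|>c|S|$, while you count $A\setminus L'$ directly to bound $|L|$ from below; these are the same accounting rearranged, and both establish the required inequalities.
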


\begin{proof}
Observe that the cut value of $(L',S',R')$ is 
\[
|S'\cap A'|+|S'\cap B'|+c|S'-A'\cup B'|
\]
which is less than $|A|$ by assumption. Note that $(S'-A'\cup B')=(S'\cap V)=S$. 

Observe that $|S'\cap A'|\ge|A\cap R|$. This is because for each
$v\in A\cap R$, we must include the corresponding dummy node into
$v'$ into $S'$, otherwise $s$ can reach $R\subset R'$ and hence
and reach $t$, and this would mean that $(L',S',R')$ is not a $s$-$t$
vertex cut. So we have
\[
|A|>|S'\cap A'|+|S'\cap B'|+c|S'-A'\cup B'|\ge|A\cap R|+c|S|.
\]
Hence,
\[
|A\cap(L\cup S)|=|A|-|A\cap R|>c|S|.
\]
So we have $|L|+|S|>c|S|$. Symmetrically, we also conclude that $|R|+|S|>c|S|$.
In particular, $L\ne\emptyset$ and $R\ne\emptyset$. As $(L,S,R)$
is obtained from $(L',S',R')$ by just removing $A'\cup B'\cup\{s,t\}$
from the graph, there must be no edge between $L$ and $R$ as there are no edges between $L'$ and $R'$. That
is, $(L,S,R)$ is indeed a separation triple. Therefore, we conclude that
$h(L,S,R)=|S|/(\min\{|L|,|R|\}+|S|)<1/c$. 
\end{proof}

\subsection{Cut with expander complement}

In this section, we show the following:

\begin{theorem}
	\label{thm:det cond}
	Given an $(f(\phi), \beta)$-approximate balanced cut
	algorithm $\A$ such that $f(\phi) \leq \phi^{\xi} \polylog(n)$
	for some absolute constant $0 < \xi \le 1$
	and $\beta \leq O(\log^{4}n)$ where $\A$ works on graph instances with maximum degree $O(\log n)$,
	we can obtain an algorithm that
	takes an undirected $n$-vertex $m$-edge graph $G$ with
	maximum degree $O(\log n)$, 
	calls $\A$ for $n^{o(1)}$ many times with parameter $\phi \ge 1/n^{o(1)}$ and uses 
	$\oh(m)$ overhead, then 
	either
	\begin{itemize}
		\item certifies that $\Phi(G)\ge\gamma$, or
		\item outputs a cut $S$ where $\Phi_{G}(S)\le1/\log^2 n$ and $\vol(S)=\Omega(m/\log n)$, 
		\item outputs a cut $S$ where $\Phi_{G}(S)\le1/\log^2 n$, $\vol(S)=O(m/\log n)$,
		and $\Phi(G[V-S])\ge$$\gamma$, 
	\end{itemize}
	where $\gamma=1/n^{o(1)}$.
\end{theorem}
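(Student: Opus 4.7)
My plan is to invoke the standard black-box cut-peeling reduction (the approach of \cite{NanongkaiS17,Wulff-Nilsen17} alluded to in the excerpt) that converts an approximate balanced-cut routine into an algorithm producing an expander-complement decomposition. I maintain a growing vertex set $S$, initialized to $\emptyset$, and at each round invoke $\A$ on the induced subgraph $G[V\setminus S]$ with a small conductance parameter $\phi = 1/n^{o(1)}$.

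Each round admits three outcomes. If $\A$ certifies that $G[V\setminus S]$ has conductance at least $\phi$, we stop: when $S=\emptyset$ we certify $\Phi(G)\ge\gamma$ with $\gamma=\phi$, otherwise we return $S$ as the unbalanced sparse cut with expander complement (third bullet of the theorem). If $\A$ returns a cut $C\subseteq V\setminus S$ with $\Phi_{G[V\setminus S]}(C)\le f(\phi)$ and adding $C$ would push $\vol(S\cup C)$ past the balance threshold $\Theta(m/\log n)$, we return $S\cup C$ as the balanced sparse cut (second bullet). Otherwise, set $S\gets S\cup C$ and iterate.

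The analysis rests on three invariants. \emph{(i) Conductance in $G$:} each peeled cut contributes at most $f(\phi)\cdot\vol(C)$ new boundary edges, which telescopes to $\Phi_G(S)\le f(\phi)\le\phi^{\xi}\polylog(n)\le 1/\log^2 n$ for $\phi$ chosen sufficiently small (still within $1/n^{o(1)}$). \emph{(ii) Iteration count:} the $\beta=O(\log^4 n)$ most-balancedness guarantee ensures that each call to $\A$ peels volume at least a $1/\beta$ fraction of the maximum-volume $\phi$-conductance cut in the current subgraph, so a standard potential argument bounds the number of rounds by $O(\beta\log n)=n^{o(1)}$. \emph{(iii) Volume bound:} in the expander-complement case we have $\vol(S)=O(m/\log n)$ by the exit condition, since otherwise we would have returned a balanced cut in the previous round.

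The main obstacle is reconciling cut quality in the shrinking subgraph $G[V\setminus S]$ with that in the original graph $G$. The routine $\A$ only sees $G[V\setminus S]$, so its low-conductance and most-balancedness guarantees are intrinsic to this subgraph, while the theorem demands guarantees in $G$. We close the loop with two observations: the edges already incident on $S$ have been paid for by previous iterations' boundary contributions, and any low-conductance subset of $V\setminus S$ in $G$ is also low-conductance (up to a constant blow-up) in $G[V\setminus S]$, thanks in part to the bounded-degree assumption $\Delta=O(\log n)$ that keeps volume distortions small. The careful bookkeeping that makes the constants, the balance thresholds, and the final $\gamma = 1/n^{o(1)}$ fit together—across $n^{o(1)}$ rounds of conductance loss—constitutes the bulk of the formal proof.
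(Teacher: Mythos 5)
Your invariant (ii) — that the most-balancedness factor $\beta = O(\log^4 n)$ bounds the number of peeling rounds by $O(\beta\log n)$ via "a standard potential argument" — is where the proposal breaks down, and it is precisely the step the paper is careful to avoid. After peeling $C$ from $G[V\setminus S]$, the maximum-volume $\phi$-conductance cut in the \emph{new} subgraph $G[V\setminus (S\cup C)]$ has no useful relationship to the one in the old subgraph: it can stay the same size, shrink, or even grow. All the most-balancedness guarantee gives you per round is $\vol(C_i)\ge \vol(C^*_i)/\beta$ where $C^*_i$ is the max-volume $\phi$-conductance cut \emph{at that moment}; since $\vol(C^*_i)$ can be as small as a constant for many consecutive rounds, the greedy one-pass peel can run for $\Theta(m/\beta)$ iterations before either certifying or reaching the balance threshold $\Theta(m/\log n)$. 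That is $\Omega(m^{1-o(1)})$ calls to $\A$, not $n^{o(1)}$. This is exactly what the cited Lemma~\ref{thm:cut expander complement reduction} of~\cite{NanongkaiS17,Wulff-Nilsen17} says for its depth parameter $k=1$: the call count is $O(m^{1/k}/c_{size})$, which is $O(m/\beta)$ when $k=1$.

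The paper's actual proof resolves this by invoking that lemma with depth $k\ge 2$, which is a \emph{nested} (multi-level) peeling rather than the flat loop you describe. The trade-off is that cut conductance compounds across levels to $f^k(\phi)$ while the call count drops to $O(m^{1/k}/c_{size})$. The proof then sets $k=\Theta(\log\log n)$, which makes $m^{1/k} = n^{o(1)}$, and chooses $\phi$ so that $f^k(\phi)=\phi^{\xi^k}\polylog(n)^k\le 1/\log^2 n$ while $\phi\ge 1/n^{o(1)}$ — a balance that only holds because $\xi^{-k} = \log^{\eps}n$ for small $\eps$. Your invariant (i), that a single-pass peel telescopes to conductance $f(\phi)$, is in fact a better conductance bound than $f^k(\phi)$, but it is irrelevant: the paper willingly pays the $f^k(\phi)$ degradation precisely in exchange for the drastically reduced iteration count. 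The core idea you are missing is this trade-off between nesting depth and conductance loss, and without it the $n^{o(1)}$ call bound does not follow.
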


We follow the techniques from~\cite{NanongkaiS17,Wulff-Nilsen17},
which show how to use an (approximate) most-balanced low-conductance
algorithm to obtain a cut with expander complement.
Here, we state their results. 

\BL[\cite{NanongkaiS17,Wulff-Nilsen17}]\label{thm:cut expander complement reduction}
Let $f$ be a function such that $f(\phi)\ge\phi$ for all $\phi\in[0,1]$.
Let $c_{size}=c_{size}(n)$ be some number depending on $n$. Suppose
that there is an algorithm $\A$ that, given a $n$-vertex $m$-edge graph
$G$ with maximum degree $\Delta$ and a parameter $\phi$, either 
\begin{itemize}
	\item certifies that $\Phi_{G}\ge\phi$, or
	\item returns a $(\phi,c_{size}(n))$-most-balanced $f(\phi,n)$-conductance cut. 
\end{itemize}
Then, for any $k\ge 1$, there is an algorithm that, given a $n$-vertex
graph $G$ with maximum degree $\Delta$ and a parameter $\phi$, calls $\A$ for $O(m^{1/k}/c_{size})$ many times plus $\oh(m)$ overhead and then either
\begin{itemize}
	\item certifies that $\Phi_{G}\ge\phi$, or
	\item returns a cut $S$ where $\Phi_{G}(S)\le f^{k}(\phi)$ where $\vol(S)\ge\Om(m)$,
	or
	\item returns a cut $S$ where $\Phi_{G}(S)\le f^{k}(\phi)$ where $\Phi(G[V-S]) \ge \phi$,
\end{itemize}
where $f^{1}(\phi) = f(\phi,n)$ and $f^{k}(\phi) = f^{k-1}(f(\phi))$.
\EL

\begin{proof}[Proof of \Cref{thm:det cond}]
	The given approximate balanced cut routine either
	(1) certifies that $\Phi_{G}\ge\phi$, or
	(2) returns a $(\phi,c_{size}(n))$-most-balanced $f(\phi,n)$-conductance cut,
	where $c_{size}(n) = O(\log^4 n)$ and
	$f(\phi,n) \leq \phi^{\xi} \polylog(n)$. 
	
	Observe that $f^{k}(\phi)=\phi^{\xi^{k}} (\polylog(n))^k$.
	In order to have $f^{k}(\phi)\le1/\log^{2}n$, we can set
	\[
	\phi=\left( \frac{1}{ (\polylog(n))^k } \right)^{\xi^{-k}}.
	\]
	If we set $k = c\cdot \log\log n$ for small enough constant $c$, then
	we have $\xi^{-k} = \log^\eps(n)$ for a very small constant $\eps >0$.
	Therefore, $\phi = 1/n^{o(1)}$.  
	So \Cref{thm:cut expander complement reduction} gives us
	the routine that we need for \Cref{thm:det cond} where the number of calls to $\A$ is $O(m^{1/k}/c_{size}) = n^{o(1)}$.
\end{proof}

\subsection{Cut-matching game via approximate low-conductance cut}

\label{subsec:cutmatching}

Given an $n$-vertex graph $G$, the cut-matching game from \cite{KhandekarKOV07} is a framework
which proceeds in rounds. There will be $O(\log n)$ rounds. Let $W_{0}=\emptyset$.
Let $c$ be a congestion parameter. Let $\A$ be the algorithm from
\Cref{thm:det cond}. At round $i$, we maintain the invariant that
$W_{i-1}$ is already embedded into $G$ with congestion $c\times(i-1)$
and $\vol(W_{i-1})=\Omega(n(i-1))$. We run $\A$ on $W_{i-1}$ which can result in three cases.

First, if $\A$ certify that $\Phi(W_{i-1})\ge\gamma$. Hence, $\sigma(W_{i-1})\ge\gamma$
and we have that $h(G)\ge\gamma/(2c(i-1))=\Omega(\gamma/(c\log n))$
by \Cref{lem:embed}, and so we terminate.

Second, if $\A$ returns a cut $S$ where $\Phi(S)\le1/\log^2 n$ and
$\vol(S)=\Omega(\vol(W_{i-1}))$, then we set $A=S$ and $B=V-S$
and invoke \Cref{lem:matching}. If we obtain $(L,S,R)$ where $h(L,S,R)<1/c$,
we terminate. Otherwise, we obtain a matching $M_{i}$ of size $|M_{i}|\ge\min\{|S|,|V-S|\}\ge\Omega(\vol(W_{i-1})/(i-1))=\Omega(n)$.
We set $W_{i}\gets W_{i-1}\cup M_{i}$ and hence $\vol(W_{i})=\Omega(ni)$
and $W_{i}$ is embeddable into $G$ with congestion $c\times i$. 

Third, if $\A$ returns a cut $S$ where $\Phi(S)\le1/\log^2 n$ and
$\Phi(G[V-S])\ge\gamma$, then we set $A=S$ and $B=V-S$ and invoke
\Cref{lem:matching}. If we obtain $(L,S,R)$ where $h(L,S,R)<1/c$,
we terminate. Otherwise, we obtain a matching $M_{i}$ embeddable
to $G$. We know $|M_{i}|=|S|$ because $|S|\le\vol(S)\le\vol(V-S)/\Theta(\log n)\le|V-S|$
by \Cref{thm:det cond}. We set $W_{i}\gets W_{i-1}\cup M_{i}$. We
claim that $\sigma(W_{i})\ge\Omega(\gamma)$. As $W_{i}$ is embeddable
into $G$ with congestion $c\times i$, so $h(G)\ge\gamma/(2ci))=\Omega(\gamma/(c\log n))$
by \Cref{lem:embed}, and so we terminate.

In \cite{KhandekarKOV07}, the following is proven:
\begin{lemma}
[Section 4 of \cite{KhandekarKOV07}]The second case can occur at most $O(\log n)$ times. 
\end{lemma}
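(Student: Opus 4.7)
The plan is to adapt the potential function argument from Section~4 of \cite{KhandekarKOV07}. Associate to each vertex $u \in V$ a vector $x_u^{(i)} \in \mathbb{R}^V$, initialized to $x_u^{(0)} = \mathbf{e}_u - \tfrac{1}{n}\mathbf{1}$; at the end of round $i$, for each matched pair $(u,v) \in M_i$ replace both $x_u^{(i-1)}$ and $x_v^{(i-1)}$ by their average, leaving the unmatched vectors unchanged. Define the potential $\Phi_i = \sum_{u \in V} \|x_u^{(i)}\|_2^2$. Then $\Phi_0 = n - 1$, the potential is always nonnegative, and the zero-mean condition $\sum_u x_u^{(i)} = \mathbf{0}$ is preserved throughout. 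The goal is to show $\Phi_i \le (1-c)\,\Phi_{i-1}$ for some absolute constant $c > 0$ whenever case 2 fires, which immediately caps the number of such rounds by $O(\log n)$.

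The parallelogram identity gives the exact per-round decrease
\[
\Phi_{i-1} - \Phi_{i} \;=\; \tfrac{1}{2}\sum_{(u,v)\in M_i}\bigl\|x_u^{(i-1)} - x_v^{(i-1)}\bigr\|_2^2,
\]
so the task reduces to lower-bounding the right-hand side by $\Omega(\Phi_{i-1})$. The core geometric fact from \cite{KhandekarKOV07} is that if $M_i$ saturates a constant fraction of each side of a partition $(A,B)$ with $|A|,|B| = \Omega(n)$, then a Cauchy-Schwarz argument using $\sum_u x_u^{(i-1)} = \mathbf{0}$ yields $\sum_{(u,v)\in M_i} \|x_u^{(i-1)} - x_v^{(i-1)}\|_2^2 = \Omega(\Phi_{i-1})$, because a constant fraction of the total spread $\Phi_{i-1}$ must lie across any balanced bipartition of the vertex set.

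It then remains to verify the balance hypothesis in our setting. When $\A$ fires case 2 it returns $S$ with $\vol_{W_{i-1}}(S) = \Omega(\vol(W_{i-1}))$ and $\vol_{W_{i-1}}(S) \le \vol_{W_{i-1}}(V \setminus S)$; combined with the invariant that $W_{i-1}$ is the union of at most $i-1 = O(\log n)$ matchings (so its maximum degree is $O(\log n)$ and its average degree is $\Theta(i)$), this forces $|S|, |V \setminus S| = \Omega(n)$. Hence \Cref{lem:matching} either already gives us a separation triple of vertex expansion $< 1/c$ (in which case we terminate), or delivers a matching $M_i$ of size $\Omega(n)$ saturating a constant fraction of each side, triggering the potential drop above. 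Iterating $\Phi_i \le (1-c)\Phi_{i-1}$ for $O(\log n)$ case-2 rounds drives $\Phi_i$ below $n^{-\Omega(1)}$, at which point a standard Poincar\'e/Cheeger argument on $W_i$ forces conductance $\Omega(1) \gg \gamma$, so the next invocation of $\A$ cannot return via case 2.

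The main obstacle is the zero-mean Cauchy-Schwarz step that converts balanced cuts into $\Omega(\Phi_{i-1})$ drops --- this is the heart of the KKOV analysis and is where the $\log n$ factor comes from; the balance translation (volume balance to vertex balance) is comparatively routine but does genuinely use the bounded-maximum-degree invariant on $W_{i-1}$.
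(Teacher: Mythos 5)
The outer skeleton of your argument is the standard one --- track $\Phi_i = \sum_u \|x_u^{(i)}\|_2^2$ starting at $n-1$, use the parallelogram identity for the exact per-round drop $\tfrac{1}{2}\sum_{(u,v)\in M_i}\|x_u^{(i-1)}-x_v^{(i-1)}\|^2$, and stop once the potential is small enough that $W_i$ must be an expander --- and your translation from volume balance of the returned cut $S$ to vertex balance $|S|,|V\setminus S| = \Omega(n)$ via the bounded degree of $W_{i-1}$ is sound. But the central inequality you invoke is false: it is \emph{not} true that a matching $M_i$ saturating a constant fraction of each side of an arbitrary balanced bipartition $(A,B)$ must satisfy $\sum_{(u,v)\in M_i}\|x_u^{(i-1)}-x_v^{(i-1)}\|^2 = \Omega(\Phi_{i-1})$. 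Concretely, take half of the $n$ vectors equal to $+v$ and half equal to $-v$ (so the mean is zero and $\Phi_{i-1}=n\|v\|^2$), and distribute an equal number of $+v$'s and $-v$'s on each side of $(A,B)$. The perfect matching pairing $+v$ with $+v$ and $-v$ with $-v$ across $(A,B)$ saturates both sides completely yet contributes \emph{zero} to the drop. The total spread $\sum_{u,w}\|x_u-x_w\|^2 = 2n\Phi_{i-1}$ does lie across any balanced bipartition \emph{on average} over all $\Theta(n^2)$ cross-pairs, but a single adversarial matching of size $\Theta(n)$ can avoid the gap entirely; balance plus zero-mean plus Cauchy--Schwarz is not enough.

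This is exactly why the cut player in the cut-matching game cannot propose an arbitrary balanced bipartition. In the KRV analysis the cut player projects the vectors onto a random direction and splits at the median, forcing the matched vectors to be far apart in that direction. In the deterministic strategy of Section~4 of \cite{KhandekarKOV07}, the cut $(S,V\setminus S)$ is a low-sparsity cut of $W_{i-1}$ itself, and the argument exploits the feedback loop between the cut and the vectors: $W_{i-1}$ is the union of precisely the matchings whose lazy-walk transitions built the vectors $x_u^{(i-1)}$, so a cut with few crossing $W_{i-1}$-edges confines the walk from each side mostly to that side, forcing the vectors on the two sides to be genuinely separated and hence \emph{every} matching pair across the cut to pick up a substantial contribution. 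Your proof replaces this structural coupling with a balance-only Cauchy--Schwarz step that does not hold, so the multiplicative decay $\Phi_i \le (1-c)\Phi_{i-1}$ is unsupported. The drop-per-round lemma must be re-derived from the sparse-cut structure of $W_{i-1}$, not from the balance of $(S,V\setminus S)$ alone. (The endgame ``low potential implies $W_i$ is an expander'' step is also stated only by appeal to Cheeger; it actually requires an embedding argument relating the sequential-matching walk to cuts of $W_i$, but that is a fixable formalization issue, unlike the drop lemma.)
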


Therefore, then indeed there at $O(\log n)$ rounds. This is because
the algorithm will terminate whenever the first or third case occur. 
\begin{lemma}
In the third case, if we obtain a matching $M_{i}$, then $\sigma(W_{i})\ge\Omega(\gamma)$.
\end{lemma}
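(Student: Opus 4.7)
The plan is to exhibit, for every nontrivial cut $(T, V \setminus T)$ in $W_i$ with $|T| \leq |V - T|$, at least $\Omega(\gamma)|T|$ edges of $W_i$ crossing it. I will exploit two substructures contained in $W_i$: the matching $M_i$, which pairs each vertex of $S$ with a distinct vertex of $V-S$ (the third case guarantees $|M_i| = |S|$); and the subgraph $W_{i-1}[V-S]$, which has conductance at least $\gamma$ (the statement in the text, $\Phi(G[V-S])\geq\gamma$, should read $\Phi(W_{i-1}[V-S])\geq\gamma$, since $\A$ is run on $W_{i-1}$).

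Decompose $T = T_S \sqcup T_{V-S}$ with $a = |T_S|$, $b = |T_{V-S}|$. From $M_i$, at least $\max(a-b,0)$ matching edges cross the cut, since each of the $a$ vertices of $T_S$ is matched to some vertex of $V-S$ and at most $b$ of these matches can land inside $T_{V-S}$. From the conductance of $W_{i-1}[V-S]$, at least $\gamma\cdot\min(\mathrm{vol}(T_{V-S}),\mathrm{vol}((V-S)\setminus T)) \geq \gamma\cdot\min(b,\,|V-S|-b)$ edges cross (a nontrivial conductance forces every vertex of $V-S$ to have positive degree in $W_{i-1}[V-S]$, so volume lower-bounds size).

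I then split into three cases. If $b \leq |V-S|/2$ and $a \leq b$, conductance alone gives $\gamma b \geq \gamma|T|/2$ crossing edges. If $b \leq |V-S|/2$ and $a > b$, matching contributes $a-b$ and conductance contributes $\gamma b$; for $b \leq a/2$ the matching alone yields sparsity $\geq 1/3$, while for $a/2 < b \leq a$ the conductance term is $\Omega(\gamma a) = \Omega(\gamma|T|)$. If $b > |V-S|/2$, the constraint $|V-T| \geq |V|/2$ combined with the case-3 guarantee $\mathrm{vol}(S) = O(m/\log n)$ and the maximum-degree $O(\log n)$ hypothesis forces $|S| \ll |V-S|$, and hence $|V-S| - b = \Omega(|V-S|) = \Omega(|T|)$; so conductance alone again produces $\Omega(\gamma|T|)$ crossing edges. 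In every case the cut has the desired sparsity, so $\sigma(W_i) \geq \Omega(\gamma)$.

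The main obstacle will be Case III, where $T$ swallows most of $V-S$ and the matching is useless; here one must leverage the imbalance between $\mathrm{vol}(S)$ and $\mathrm{vol}(V-S)$ guaranteed by $\A$ in the third case to keep $|V-S|-b$ a constant fraction of $|T|$. A secondary subtlety is that vertices of $V-S$ which happen to be isolated in $W_{i-1}[V-S]$ and also unmatched by $M_i$ would trivially break the bound on singleton cuts; I will justify that the hypothesis $\Phi(W_{i-1}[V-S]) \geq \gamma$ rules this out (equivalently, one may discard such vertices up front since they can only arise as artifacts of the definition of conductance on degenerate subsets).
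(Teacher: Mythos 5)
Your proposal takes the same general approach as the paper's proof: decompose the cut $T$ into $T\cap S$ and $T\setminus S$, use the perfect matching $M_i$ (each vertex of $S$ matched into $V-S$) to certify crossing edges when the $S$-side dominates, and use the expander property $\Phi(W_{i-1}[V-S])\ge\gamma$ to certify crossing edges when the $(V-S)$-side dominates. The paper's case split is coarser: it distinguishes only $|T\cap S|\ge 2|T\setminus S|$ (matching alone gives $\ge |T|/3$ crossing edges) from $|T\cap S|\le 2|T\setminus S|$ (conductance alone gives $\ge\gamma|T|/3$), and in the second case it writes the conductance bound directly as $\gamma|T\setminus S|$ without discussing which side of $V-S$ is the smaller in volume. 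Your proposal correctly observes that this hides two issues — the regime $b>|V-S|/2$ (your Case C, where $\min(\cdot,\cdot)$ in conductance falls on the complement), and vertices of $V-S$ that are isolated in $W_{i-1}[V-S]$ — and you propose to patch both. That is more careful than the paper's proof. However, your patches as sketched are not yet airtight. In Case C, "$\vol(S)=O(m/\log n)$ plus max degree $O(\log n)$ forces $|S|\ll|V-S|$" only gives $|S|=O(|V-S|)$ with an unspecified constant; to conclude $|V-S|-b=\Omega(|T|)$ you need $|S|$ bounded away from $n/2$ by a definite fraction, which requires tracking the constant hidden in the $O(m/\log n)$ volume guarantee. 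And for the isolated-vertex remark, $\Phi(W_{i-1}[V-S])\ge\gamma$ does not by itself preclude isolated vertices, since a zero-volume singleton has an undefined (not small) conductance ratio and is typically excluded from the minimization; you would need a separate argument (e.g. that the invariant $\vol(W_{i-1})=\Omega(n(i-1))$ and the construction of the matchings keeps degrees positive, or that such vertices can be absorbed into $S$ or handled directly). The paper's own write-up shares these same informalities, so you are on essentially equal footing, but if you intend this as a rigorous replacement you should close those two gaps explicitly.
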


\begin{proof}
Let $S$ be the cut that $\A$ returns in the third case. Consider
any cut $C\subset V$ where $|C|\le|V-C|$. If $|C\cap S|\ge2|C-S|$
and so $|C\cap S|\ge2|C|/3$, then there is 
\begin{align*}
|E_{M_{i}}(C\cap S,V-S)| & \ge|E_{M_{i}}(C\cap S,V-C\cap S)|-|E_{M_{i}}(C\cap S,C-S)|\\
 & \ge|C\cap S|-|C-S|\\
 & \ge|C\cap S|/2\\
 & \ge|C|/3.
\end{align*}
Next, if $|C\cap S|\le2|C-S|$ and so $|C-S|\ge|C|/3$then 
\begin{align*}
|E_{W_{i-1}}(C-S,V-S)| & \ge\gamma|C-S|\\
 & \ge\gamma|C|/3.
\end{align*}
That is, $E_{W_{i}}(C,V-S)\ge\gamma|C|/3$. Hence, $\sigma(W_{i})\ge\gamma/3$.
\end{proof}
Observe that the above algorithm just invoke the algorithm from \Cref{lem:matching}
and \Cref{thm:det cond} $O(\log n)$ times.
Hence, the total running time
is $O(\log n)\times(\tilde{O}(m\sqrt{n})$
plus the cost of $O(\log{n})$ calls to the approximate balanced cut routine.
To conclude, whenever the first case occurs or a matching is obtained
in the third case, we have $h(G) = \Omega(\gamma/(c\log n))$.
Otherwise, we must obtain $(L,S,R)$ where $h(L,S,R) < 1/c$ during some round. 
By setting $\eta=\Theta(\gamma/(c\log n))$, we conclude the proof
of \Cref{thm:VertexExpansionMain}.

\section{Near-Expander Low-Conductance Cuts on Dense Graphs}
\label{sec:pagerank}

The main result in this section is an approximate balanced cut
algorithm on dense graphs.

\PageRankMain*

Although, the algorithm is stated for general parameters,
to start with, it is more convenient to think of $\phi,\phi^{*}=1/\poly\log n$
but $\phi\gg\phi^{*}$, and $c=\poly\log n$.
Our goal is to find a $(\phi^{2}/\poly\log m,\tilde{O}(1))$-balanced
$\phi$-conductance in $O(n^{\omega})$ time.
We emphasize that the requirement $\vol(S)\ge\vol(S^{*})/c$ is crucial
for us.
Without this requirement, there is in fact a very simple $O(n^{\omega})$
time algorithm that we discuss in Section~\ref{sec:Related}

At the high level, our algorithm is a derandomization of the \emph{PageRank-Nibble}
algorithm by Andersen, Chung, and Lang \cite{AndersenCL06} for finding
$(\phi^{2}/\poly\log m,O(1))$-balanced $\phi$-conductance cut in
$\tilde{O}(m/\phi^{2})$ time. However, there are several obstacles
we need to overcome. To show how, we need some basics about PageRank
and the algorithm of \cite{AndersenCL06}. For any vertex $v$, the
\emph{PageRank vector }$p_{v}\in\mathbb{R}_{\ge0}^{V}$\emph{ of vertex
	$v$ }is a vector encoding a distribution over vertices when we perform
some variant of random walk starting at $v$. Roughly speaking, PageRank-Nibble
works as follows::
\begin{enumerate}
	\item Sample a vertex $v$ according to some distribution.
Compute an approximation of $p_{v}$.
	\item Find a $\phi$-conductance cut
$C$ by which is a \emph{sweep cut} w.r.t. to $p_{v}$, if exists.
(We will define sweep cuts later).
	\item Then, set $G\gets G[V-C]$ if $C$
is found and repeat.
	\item The next PageRank vector $p_{v}$ is w.r.t. the
new graph.
\end{enumerate}
The running time obtained by Andersen et al.~\cite{AndersenCL06}
critically relies on the vertex $v$ being sampled randomly.
If $v$ is chosen in some arbitrary order, the
running time can be $\Omega(mn)$: we may need to check $\Omega(n)$
vertice,  spend $\Omega(m)$ time to compute (an approximation of)
$p_{v}$ for each $v$, and still could not find any
$\phi$-conductance cut.

Our first key idea is to instead compute the
PageRank vector $p_{v}$ w.r.t. to the input graph $G$ for
all $v\in V$ \emph{simultaneously} by inverting the PageRank
matrix defined from $G$.
That is, instead of sampling, we consider all starting points $v$.

The more subtle and  challenging obstacle is to compute sweep
cuts. A sweep cut w.r.t. $p_{v}$ is a cut of the form $V_{\ge t}^{p_{v}}$
where $V_{\ge t}^{p_{v}}=\{u\in V\mid p_{v}(u)\ge t\}$. In \cite{AndersenCL06},
the cost for computing sweep cut can be charged to the cost for computing
an approximation of $p_{v}$. As our approach for computing $p_{v}$
changes, we need to account the cost ourselves. We can trivially check
if there is a sweep cut $V_{\ge t}^{p_{v}}$ with conductance at most
$\phi$ in $O(m)$. To do this, we compute $|E(V_{\ge t}^{p_{v}},V-V_{\ge t}^{p_{v}})|$
and $\vol(V_{\ge t}^{p_{v}})$ of \emph{all }$t$ in $O(m)$, by sorting
vertices according their values in $p_{v}$ and ``sweeping'' through
vertices in the sorted order. Unfortunately, spending $O(m)$ time
for each vertex $v$ would give $O(mn)$ time algorithm which is again
too slow.\footnote{With randomization, given $p_{v}$ for all $v$, we can compute a
	sweep cut for all $v$ in $\tilde{O}(n^{2})$ total time. For example,
	we can precompute the degrees of vertices in $O(m)$ time. So we can
	compute $\vol(V_{\ge t}^{p_{v}})$ of all $t$ in $O(n)$ for each
	$v$. Then, we can build a randomized data structure based on linear
	sketches in $\tilde{O}(m)$ time. So we can approximate $|E(V_{\ge t}^{p_{v}},V-V_{\ge t}^{p_{v}})|$
	of all $t$ in $O(n)$ for each $v$.} To the best of our knowledge, there is no deterministic data structure
even for checking whether $|E(S,V-S)|>0$ in $o(m)$ time, given a
vertex set $S\subset V$. That is, it is not clear how to approximate
$|E(V_{\ge t}^{p_{v}},V-V_{\ge t}^{p_{v}})|$ in $o(m)$ time even
for a fix $t$.

To overcome this second obstacle, we show that how to obtain a sweep
cut without approximating the cut size $|E(V_{\ge t'}^{p_{v}},V-V_{\ge t'}^{p_{v}})|$
for any $t'$. More precisely, by preprocessing the PageRank vectors
$p_{v}$ of all $v$ in $\tilde{O}(n^{2})$, we show how obtain a
sweep cut with conductance $\phi$, if exists, in time $\poly\log m$
instead of $O(m)$. To do this, we exploit the fact the sweep cut
is w.r.t. a PageRank vector $p_{v}$ and not some arbitrary vector.
This allows us to do a binary search tree for $t$ where the condition
depends solely on the volume $\vol(V_{\ge t}^{p_{v}})$ and not the
cut size $|E(V_{\ge t}^{p_{v}},V-V_{\ge t}^{p_{v}})|$. The details
of technique is in \Cref{sec:sweepcut}. Once all sweep cuts are computed.
We follow the analysis of \cite{AndersenCL06,KawarabayashiT15} and
obtain $(\phi^{2}/\poly\log m,O(1))$-balanced $\phi$-conductance
cut.

\subsection{Near-Expander Cuts}

For the analysis of this algorithm, it is most convenient
to work near-expanders and near-expander cuts.

\BD[Nearly Expander]
Given $G=(V,E)$ and a set of nodes $A\s V$, we say $A$ is a nearly $\phi$-expander in $G$ if
\[ \forall S\s A,\ \vol(S)\le\vol(A)/2:\ |E(S,V\setminus S)|\ge\phi\vol(S) .\]
\ED

\BD[Near-Expander Edge Cut]\label{def:near-expander}
Given $G=(V,E)$ and parameters $\phi_1,\phi_2$, a set $S\s V$ with $\vol(S)\le m$ is a \emph{$\phi_2$-near-expander $\phi_1$-conductance cut} if it satisfies:
\BE
\im $\Phi(S) \le \phi_1$,
\im Either $\vol(S)\ge m/4$, or $V\setminus S$ is a nearly $\phi_2$-expander in $G$.
\EE
\ED

We can then prove the following equivalent form of
\Cref{thm:pagerank-main}.

\BT\label{thm:main-w}
Given a multigraph $G=(V,E)$ and parameter $\phi$, there is a deterministic algorithm that runs in $O(n^\om)$ time and either:
\BE
\im Outputs a $\Om(\phi/\log(m))$-near-expander $O(\sr{\phi\log m})$-conductance cut, or
\im Certifies that $\Phi(G) \ge \phi$.
\EE
\ET

We use this notion since it is more general and could prove useful in future applications.
We can use it to obtain a most balanced low conductance cut
(as defined in Definition~\ref{def:ExpansionApprox}) by the lemma below.

\BL\label{lem:ne-mb}
Given $G$ and parameters $\phi_1,\phi_2<1$ and $c\ge4$, if $S$ is a $(\f{c\phi_2+\phi_1}{c-1})$-near-expander $\phi_1$-conductance cut, then $S$ is a $(\phi_2,c)$-most-balanced $\phi_1$-conductance cut. 
\EL
\BP

If $\vol(S)\ge m/4$, then $S$ must be a $(\phi_2,4)$-most-balanced $\phi_1$-conductance cut, since any cut $S'$ with $\vol(S')\le m$ must satisfy $\vol(S')\le m\le4\vol(S)$. Since $c\ge4$, $S$ is also a $(\phi_1,c)$-most-balanced $\phi_2$-conductance cut.

From now on, suppose that $\vol(S)<m/4$. Suppose for contradiction that $S$ is not a $(\phi_2,c)$-most-balanced $\phi_1$-conductance cut.  Then, there exists a set $T$ ($\vol(T)\le m$) of conductance at most $\phi_2$ with 
\begin{gather}
\vol(T)>c\vol(S).\label{eq:bal1}
\end{gather}
Our goal is to show that the set $T\setminus S \s V\sm S$ satisfies 
\begin{gather}
|E(T\sm S,V\sm(T\sm S))|<\f{c\phi_2+\phi_1}{c-1}\vol(T\sm S)\label{eq:bal2},\end{gather}
 contradicting the assumption that $V\sm S$ is a nearly $\big(\f{c\phi_2+\phi_1}{c-1}\big)$-expander.

We have
\[ \vol(T\sm S)\ge\vol(T)-\vol(S)\stackrel{(\ref{eq:bal1})}>\vol(T)-\f1c\vol(T)=\f{c-1}{c}\vol(T)\]
and
\[ |E(T\sm S , V\sm(T\sm S))| \le |E(T,V\sm T)|+|E(S,V\sm S)|\le \phi_2\vol(T)+\phi_1\vol(S)\stackrel{\mathclap{(\ref{eq:bal1})}}<\phi_2\vol(T)+\phi_1\cd\f{1}{c}\vol(T). \]
Therefore,
\[ \f{|E(T\sm S,V\sm(T\sm S))|}{\vol(T\sm S)}< \f { (\phi_2+\phi_1/c)\vol(T)} {(c-1)/c\cd\vol(T)} = \f{c\phi_2+\phi_1}{c-1}  ,\]
establishing (\ref{eq:bal2}).
\EP

The rest of this section is for proving \Cref{thm:main-w} and is organized as follows.
In \Cref{sec:prelim_pagerank}, we give definitions and basic properties about PageRank.
In \Cref{sec:sweepcut}, we show the key technical result of this section. 
This is the subroutine for finding a low conductance cut, given a PageRank vector. 
Previous algorithms based on PageRank need to compute the size of a cut (i.e. the number of edge crossing the cut). 
In fact, most algorithms ``sweep through'' a PageRank vector and compute the size of $n$ many cuts. 
Hence, this procedure is often called ``sweep cut''.
We show an algorithm that, given a PageRank vector and access to degrees of vertices, does not need to compute a cut size of any cut at all, and can still guarantee to return a low conductance cut (given appropriate parameters).

Subsequent sections show how to exploit the algorithm in  \Cref{sec:sweepcut}. 
We present in a backward manner.
\Cref{sec:cut_from_excess} shows how to compute find a low conductance cut by calling the algorithm in \Cref{sec:sweepcut} once, given a set of vertices with large \emph{excess} (where excess is defined from a PageRank vector).
\Cref{sec:large_excess} shows that if there exists a low conductance cut, then there must exist a set of vertices with large excess.
\Cref{sec:near_expander_cut} shows the final algorithm which computes the PageRank vectors from all vertices simultaneously, and obtain
many low conductance cuts by the guarantees of previous sections, and then combines them to obtain the near-expander cut as desired in  \Cref{thm:main-w}.

\subsection{Preliminaries about PageRank}
\label{sec:prelim_pagerank}

We will follow \cite{KawarabayashiT15}'s treatment of PageRank [ACL'06], which is more algorithmic and can be better adapted to our binary search algorithm.
For our algorithm and analysis, we will only need the following definition of PageRank vector, following Section~2.1 of \cite{AndersenCL06}:

\BD[PageRank vector {\cite{AndersenCL06}}]
Given real number $\al\in(0,1]$ and vector $\mathbf v\in \R^V$ satisfying $\1^T\bv=1$, the \emph{PageRank vector starting at $\mathbf v$}, denoted $PR(\mathbf v)$, is the unique solution to
\begin{gather}
 PR(\bv) = \al\bv+(1-\al)PR(\bv)W . \label{eq:pr}
\end{gather}
Here, $W=\f12(I+D\inv A)$, where $A$ is the adjacency matrix of $G$ and $D$ is the diagonal matrix with entry $\deg(v)$ at row and column $v$, for each $v\in V$. ($W$ is known as the lazy random walk transition matrix.)

We refer to $p$ as a \emph{PageRank vector} if $p=PR(\bv)$ for some (possibly unspecified) vector $\bv$ with $\1^T\bv=1$.
\ED

We will use the following important properties of PageRank vectors; see Section~2 of \cite{AndersenCL06} for more details.
\begin{fact}\label{fact:pr}
For any vector $\bv$, the PageRank vector $PR(\bv)$ is unique. Moreover, it satisfies $PR(\bv)\ge0$ and $\1^T PR(\bv)=1$.
\end{fact}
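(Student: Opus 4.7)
The plan is to rewrite the defining equation \eqref{eq:pr} as a linear system and check that it has a unique nonnegative solution whose entries sum to $1$. Treating $PR(\bv)$ as an unknown row vector, equation \eqref{eq:pr} rearranges to
\[
 PR(\bv)\bigl(I-(1-\al)W\bigr)=\al\bv.
\]
So uniqueness and existence reduce to showing that $I-(1-\al)W$ is invertible. Since $W=\tfrac12(I+D\inv A)$ and $D\inv A$ is row-stochastic ($(D\inv A)\mathbf 1=\mathbf 1$), $W$ is row-stochastic as well, hence $W$ has spectral radius at most $1$. Therefore $(1-\al)W$ has spectral radius at most $1-\al<1$, so $I-(1-\al)W$ is invertible by the Neumann series
\[
 \bigl(I-(1-\al)W\bigr)\inv=\sum_{k=0}^{\infty}(1-\al)^{k}W^{k},
\]
and we get the closed form $PR(\bv)=\al\bv\sum_{k=0}^{\infty}(1-\al)^{k}W^{k}$, which is unique.

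For the sum-to-one property, I would simply take $\1^T$ (equivalently sum the coordinates) on both sides of \eqref{eq:pr}. Writing $s=\mathbf 1^{T}PR(\bv)$ and using the row-stochasticity $W\mathbf 1=\mathbf 1$ (so that $PR(\bv)W$ has the same coordinate sum as $PR(\bv)$), we obtain
\[
 s=\al\cdot\mathbf 1^{T}\bv+(1-\al)\,s=\al+(1-\al)s,
\]
which forces $s=1$. For nonnegativity, I would use the power-series expression above: each $W^{k}$ has nonnegative entries (being a power of a nonnegative matrix), so assuming $\bv\ge 0$ (the natural setting in which $PR(\bv)$ is a probability distribution, implicit in the use of the fact), every term $\al\bv(1-\al)^{k}W^{k}$ is entrywise nonnegative, and hence so is the sum $PR(\bv)$.

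The main obstacle is essentially bookkeeping: one must carefully fix the convention (row vectors acting on $W$ from the left, so row-stochasticity of $W$ is the relevant property, not column-stochasticity) and confirm that $D\inv A$ is well-defined, i.e.\ that the graph has no isolated vertices; this is exactly the standing connectivity assumption made in Section~\ref{sec:overview}. Once the convention is fixed, all three claims follow from the Neumann expansion and a single summation, as sketched.
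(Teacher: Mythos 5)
Your argument is correct and is the standard one; the paper itself does not prove \Cref{fact:pr} but rather cites Section~2 of \cite{AndersenCL06}, whose treatment is exactly the Neumann-series expansion $PR(\bv)=\al\bv\sum_{k\ge0}(1-\al)^{k}W^{k}$ that you give. You are also right to flag the hidden hypothesis behind nonnegativity: the fact is phrased ``for any vector $\bv$,'' and the preceding definition only requires $\1^T\bv=1$, but for a signed starting vector (e.g.\ $\bv=(2,-1,0,\dots,0)$) and $\al$ close to $1$ one has $PR(\bv)\approx\al\bv$, which is not nonnegative. This is harmless in context because the paper only ever invokes $PR(\chi_v)$ for indicator vectors $\chi_v\ge0$, but your attention to the implicit assumption $\bv\ge0$ is correct and is the one place the fact as literally stated is slightly too strong.
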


\BD
Given a PageRank vector $p$ and a real number $t$, we define the vertex sets $\V t:=\{v\in V:p(v)/d(v)\ge t\}$ and $V^p_{\le t}:=\{v\in V:p(v)/d(v)\le t\}$.
\ED

We show some terminology and lemmas from \cite{KawarabayashiT15}.

\BD[Median Expansion]
Given any value $t$ such that $\pt\V t\ne\emptyset$, there exists some $t_{med}<t$ (the ``median'') such that half the edges $\pt V^p_{\ge t}$ go to vertices in $V^p_{\ge t_{med}}$ and half go to vertices in $V^p_{\le t_{med}}$. We call $t_{med}$ the \emph{median expansion} at $t$.
\ED
\BCL\label{lem:med1}
If $t_{med}$ is the median expansion at $t$, then:
\begin{gather}
\vol(V^p_{\ge t_{med}}) \ge \vol(V^p_{\ge t}) + |\pt V^p_{\ge t}|/2 . \label{eq:med1}
\end{gather}
\ECL
\BP
By definition, at least half the edges of $\pt\V t$ go to vertices in $\V{t_{med}}$. These edges make up at least $|\pt \V t|$ extra volume in $\V{t_{med}}$ when compared to $\V t$.
\EP
The following lemma is proved in the proof of Lemma~33 in \cite{KawarabayashiT15}:
\BL[Lemma~33 of {\cite{KawarabayashiT15}}]\label{lem:med2}
If $t_{med}$ is the median expansion at $t$, then:
\begin{gather}
t-t_{med} \le \f{6\al}{|\pt V^p_{\ge t}|} %
. \label{eq:med2}
\end{gather}
\EL

\subsection{Sweep Cut without Cut-size Query}
\label{sec:sweepcut}
Previous algorithms based on PageRank need to compute the number of edges crossing some cut, for at least one cut.
This might take $O(m)$ time in the worst-case. In this section, we show that this we do not need to query for the cut-size at all.

\BL[Sweep-cut with few queries]\label{lem:sweep-cut}
Suppose we have PageRank vector $p$, a real number $t_0$  that satisfies $\vol(V^p_{\ge t_0})\le 1.5m$, and a real number $\tau\in(t_0,1]$. Suppose that we have access to an data structure that, given $t$, can return 
$\VV{t}=\sum_{v\in\V{t}}\deg(v)$ in $O(\log n)$ time.
Then, for any $\tau \in (t_0,1]$, we can compute a cut of conductance at most
$\sr { \f {54\al}{(\tau-t_0) \vol(V^p_{\ge\tau})} }$ in $O(\log (\frac{\log m}{\phi} )) = O(\log m)$ queries. 
Hence, the running time is $O(\log^2 m)$.
\EL
We note that our constraint $\VV{t_0}\le1.5m$ is looser than the constraint $\VV{t_0}\le m$ in Section~7.2 of \cite{KawarabayashiT15}. This explains the difference in our bound $\sr { \f {54\al}{(\tau-t_0) \vol(V^p_{\ge\tau})} }$ compared to the bound $\sr { \f {18\al}{(\tau-t_0) \vol(V^p_{\ge\tau})} }$ in \cite{KawarabayashiT15}

\begin{algorithm}[H]
\caption{SweepCutBinarySearch$(G,p,t_0,\tau)$}
Assumptions: $t_0$ satisfies $\VV{t_0}\le 1.5m$ and $\tau$ satisfies $\tau\in(t_0,1]$; access to an data structure $\D$ that, given $t$, can return $\VV{t}=\sum_{v\in\V{t}}\deg(v)$ in $O(\log n)$ time.
\\Output: A cut with conductance at most $\phi$ (as defined in line~\ref{line:def-phi})
\\Runtime: $O(\log m)$ queries to $\D$ in $O(\log^2 m)$ time.
\begin{algorithmic}[1]
\State $\phi\gets\sr { \f {54\al}{(\tau-t_0) \vol(V^p_{\ge\tau})} }$, $t_{init} \gets \tau$, $L_{init} \gets \lc\log_{(1+\phi/2)}(2m)\rc$ \label{line:def-phi}
\State $t_+\gets t_{init}$, $L\gets L_{init}$ \Comment{Maintain tuple $(t_+,L)$ in binary search}
\While {$L>1$} \label{line:while}
  \State $t_{mid} \gets t_+-\sum_{i=0}^{\lf L/2\rf-1}\f{18\al}{\phi\,\vol(\V{t_+})(1+\phi/2)^i}$ \label{line:tmid}
  \If {$\vol(\V{t_+})(1+\phi/2)^{\lf L/2\rf}\ge\VV{t_{mid}}$} \label{line:if}%
    \State $L\gets\lf L/2\rf$ \Comment{$(t_+,L) \gets (t_+,\lf L/2\rf)$}
  \Else %
    \State $t_+\gets t_{mid}$, $L\gets\lc L/2\rc$ \Comment{$(t_+,L)\gets(t_{mid},\lc L/2\rc)$}
  \EndIf
\EndWhile
\State \Return $\V{t_+}$ \Comment{\textbf{Guarantee:} $\Phi(\V{t_+})\le O(\phi)$}
\end{algorithmic} \label{alg:binary-search}
\end{algorithm}

\begin{remark}
We will be applying this with $(\tau-t_0 )= \Th(\e)$ and $\vol(V^p_{\ge\tau}) = \Om(1/(\e\log m))$ for some $\e$, so that the $\e$'s cancel in the denominator of $\phi=\sr { \f {54\al}{(\tau-t_0) \vol(V^p_{\ge\tau})} }$, and we get a cut of conductance $O(\sr {\al\log m})$.
\end{remark}

Clearly, the number of queries in the algorithm $O(\log L_{init}) = O(\log (\frac{\log m}{\phi} )) = O(\log m)$ queries. 
The rest of this section is dedicated to proving the promised guarantee:

\BL\label{lem:binary-search-find}
Algorithm~\ref{alg:binary-search} outputs a cut of conductance at most
$\sr { \f {O(\al)}{(\tau-t_0) \vol(V^p_{\ge\tau})} }$.
\EL

We now proceed to the proof of \Cref{lem:binary-search-find}. We will maintain the following invariant throughout the binary search:
\begin{invariant}\label{inv}
The tuple $(t_+,L)$ always satisfies the following. Define
\begin{gather}
t_-:=t_+-\sum_{i=0}^{L-1}\f{18\al}{\phi\vol(V_{\ge t_+}^p)(1+\phi/2)^i} ; \label{eq:inv1}
\end{gather}
then, we must have $t_-\ge t_0$ and
\begin{gather}
\vol(V_{\ge t_+}^p)(1+\phi/2)^L\ge\vol(V_{\ge t_-}^p) . \label{eq:inv2}
\end{gather}
\end{invariant}
There are three items that need to be proven:
\BCL
\Cref{inv} is satisfied at the beginning, for tuple $(\tau_{init},L_{init})$.
\ECL
\BP
We have $\VV{t_+}(1+\phi/2)^{L_{init}} \ge (1+\phi/2)^{L_{init}}\ge2m\ge\VV{t_-}$, satisfying (\ref{eq:inv2}). As for $t_-\ge t_0$, we have
\begin{align*}
t_- &\stackrel{(\ref{eq:inv1})} = \tau - \sum_{i=0}^{L-1}\f{18\al}{\phi\vol(V_{\ge t_+}^p)(1+\phi/2)^i}
\\&\ge \tau - \sum_{i=0}^\infty \f{18\al}{\phi\vol(V_{\ge t_+}^p)(1+\phi/2)^i}
\\&= \tau - \f{18\al}{\phi\vol(V_{\ge t_+}^p)} \cd \f1{1-(1+\phi/2)\inv}
\\&= \tau - \f{18\al}{\phi\vol(V_{\ge t_+}^p)} \cd \f{1+\phi/2}{\phi/2}
\\&\ge\tau - \f{18\al}{\phi\vol(V_{\ge t_+}^p)} \cd \f3\phi =\tau- \f{54\al}{\phi^2\VV{t_+}},
\end{align*}
where the last inequality used that $\phi\le1$, which we can safely assume (otherwise, the guarantee $\Phi(\V{t_+})\le\phi$ of Algorithm~\ref{alg:binary-search} is vacuous). Plugging in $\phi=\sr { \f {54\al}{(\tau-t_0) \vol(V^p_{\ge\tau})} }$, we obtain
$t_- \ge \tau - (\tau - t_0) = t_0$, as desired.
\EP
\BCL\label{clm:guarantee}
Suppose that \Cref{inv} is satisfied at the end, for some tuple $(t_+,1)$. Then, $\Phi(\V{t_+})\le \sr { \f {O(\al)}{(\tau-t_0) \vol(V^p_{\ge\tau})} }$.
\ECL
\BP
Define $\phi:=\sr { \f {54\al}{(\tau-t_0) \vol(V^p_{\ge\tau})} }$, and suppose for contradiction that
\[ \Phi(\V{t_+}) = \f {|\pt\V{t_+}|} { \min\{\VV{t_+},\vol(V\setminus \V{t_+})\} } > 3\phi .\]
Assuming \Cref{inv}, we have $t_+\ge t_-\ge t_0$. Also, $\VV{t_0}\le1.5m$ by assumption, so 
\[ \VV{t_+}\le\VV{t_0}\le1.5m \implies \min\{\VV{t_+},\vol(V\setminus \V{t_+})\}\ge\f13\VV{t_+} .\]
In particular, $|\pt\V{t_+}|\ge\phi\,\VV{t_+}\ge(\phi/3)\,\VV{t_+}$. Let $t_{med}$ be the median expansion at $t_+$. By \Cref{lem:med2}, $t_+-t_{med}\le6\al/|\pt\V{t_+}|\le18\al/(\phi\,\VV{t_+})$. In particular,
\[ t_{med}\ge t_+-18\al/(\phi\,\VV {t_+})\stackrel{(\ref{eq:inv1})}= t_- .\]
Therefore,
\[\VV{t_+}(1+\phi/2)^1 \stackrel{(\ref{eq:inv2})}\ge \VV{t_{-}}\ge \VV{t_{med}} \stackrel{\mathclap{\text{Lem~\ref{lem:med1}}}}\ge \VV{t_+}+|\pt \V{t_+}|/2\]
\[\implies |\pt \V{t_+}|\le\phi\,\VV{t_+} \le 3\phi\min\{\VV{t_+},\vol(V\setminus \V{t_+})\} ,\]
contradicting the assumption that $|\pt\V{t_+}|>3\phi\,\min\{\VV{t_+},\vol(V\setminus \V{t_+})\}$. 
\EP
\BCL
Suppose \Cref{inv} is satisfied before an iteration of the \textup{\textbf{while}} loop. Then, it is still satisfied after that iteration.
\ECL
\BP
Suppose $(t,L)$ is the tuple at the beginning of the iteration, and define $t_-$ as in (\ref{eq:inv1}). First, suppose the \textbf{If} statement (line~\ref{line:if}) is true. In order to prove that \Cref{inv} is maintained, we need to show that for $t'_-:=t_+-\sum_{i=0}^{\lf L/2\rf-1}\f{18\al}{\phi\vol(V_{\ge t_+}^p)(1+\phi/2)^i} $, we have $t'_-\ge t_0$ and $\VV{t_+}(1+\phi/2)^{\lf L/2\rf}\ge\VV{t'_-}$. The former inequality is easy: clearly $t'_-\ge t_-$, and we know $t_-\ge t_0$ since we assumed \Cref{inv} is satisfied for $(t,L)$. For the latter inequality, observe that $t_{mid}=t'_-$ by definition (line~\ref{line:tmid}); therefore, since the \textbf{If} is true,
\[ \VV{t_+}(1+\phi/2)^{\lf L/2\rf} \ge \VV{t_{mid}}=\VV{t'_-} ,\]
as desired.

Now suppose that the \textbf{If} is false, which means that
\begin{gather}
\vol(\V{t_+})(1+\phi/2)^{\lf L/2\rf}\le\VV{t_{mid}}. \label{eq:if-false}
\end{gather}
This time, we define
\begin{align*}
t'_- &:= t_{mid}-\sum_{i=0}^{\lc L/2\rc-1}\f{18\al}{\phi\,\VV{t_{mid}}(1+\phi/2)^i}
\\&\stackrel{(\ref{eq:if-false})}\ge t_{mid} - \sum_{i=0}^{\lc L/2\rc-1}\f{18\al}{\phi\cd \big(\VV{t_+}(1+\phi/2)^{\lf L/2\rf}\big)\cd(1+\phi/2)^i}
\\&=t_{mid} - \sum_{i=\lf L/2\rf}^{L-1} \f{18\al}{\phi\,\VV{t_+}(1+\phi/2)^i}
\\&\stackrel{\mathclap{\text{line~\ref{line:tmid}}}}=\ \ t_+-\sum_{i=0}^{L-1}\f{18\al}{\phi\,\VV{t_+}(1+\phi/2)^i}
\\&\stackrel{\mathclap{(\ref{eq:inv1})}}= t_-.
\end{align*}
Again, this means that $t'_-\ge t_-\ge t_0$. Also, $\VV{t'_-}\le\VV{t_-}$, so
\begin{align*}
\VV{t_{mid}}\cd(1+\phi/2)^{\lc L/2\rc} &\stackrel{(\ref{eq:if-false})}\ge \lp \VV{t_+}(1+\phi/2)^{\lf L/2\rf} \rp \cd (1+\phi/2)^{\lc L/2\rc} 
\\&=\VV{t_+}(1+\phi/2)^L
\\&\stackrel{(\ref{eq:inv2})}\ge \VV{t_-}
\\&\ge\VV{t'_-},
\end{align*}
so \Cref{inv} is satisfied for the tuple $(t_{mid},\lc L/2\rc)$.
\EP

\subsection{Computing Low Conductance Cuts from Large Excess}
\label{sec:cut_from_excess}

\BD[Excess]
Given a PageRank vector $p$, the \emph{excess} at a vertex $v$ is defined as
\[ \exc(p,v):=p(v)-\f{\deg(v)}{2m} .\]
The excess of a vertex set $S\s V$ is
\[ \exc(p,S):=\sum_{v\in S}\exc(p,v)=p(S)-\f{\vol(S)}{2m} .\]
When the PageRank vector $p$ is implicit, we shorten the notations to $\exc(v)$ and $\exc(S)$.
\ED

The main result of this section is the following:
\BL\label{lem:find-cut}
Fix a PageRank vector $p$, and define $S:=\{v\in V:\exc(p,v)/\deg(v)\ge\f1{100m}\}$. (Note that $S=\V{1/(2m)+1/(100m)}$.) Suppose $S$ satisfies the following two properties:
\BE
\im $\exc(S) \ge 1/10$ %
\im $\vol(S) \le 1.5m$ %
\EE
Then, there is an algorithm that makes a single call to Algorithm~\ref{alg:binary-search} and outputs a cut of conductance $O(\sr{\al\log m})$.
\EL

\BP
Partition the vertices of $S$ into $O(\log m)$ ``excess buckets'' as follows: for each positive integer $i \le  \log_2(50m) $, form a  bucket $B_{2^{-i}}:=\{v\in S:\exc(v)/\deg(v) \in ( 2^{-i} , 2^{-i+1} ] \}$.

Define $S^-:=\{v\in S:\exc(v)/\deg(v)\le\f1{25m}\}$. Observe that 
\[ \exc(S^-)= \sum_{v\in S^-} \exc(v) \le \sum_{v\in S^-}\deg(v) \cd \f1{25m} \le 2m \cd \f1{25m}=\f1{12.5} ,\]
so $\exc(S\setminus S^-) \ge 1/10-1/12.5=1/50$. Also, all vertices in $S\setminus S^-$ belong to exactly one bucket, which means
\[ \sum_{i=1}^{\lf\log_2(50m)\rf}\exc(B_{2^{-i}}) \ge \exc(S\setminus S^-)=\f1{50}. \]
In particular, there exists a bucket $B_\e$ ($\e=2^{-i}$ for some $i\le \log_2(50m)$) with $\exc(B_\e)\ge\f1{50\lf\log_2(50m)\rf}\ge\f1{50\log_2(50m)}$. Since each vertex $v\in B_\e$ satisfies $\exc(v)/\deg(v)\le\e\iff\deg(v)\ge\exc(v)/\e$, we have
\[ \vol(B_\e)=\sum_{v\in B_\e}\deg(v)\ge\sum_{v\in B_\e}\f1\e\,\exc(v)=\f1\e\cd\exc(B_\e)\ge\f1{50\e\log_2(50m)}= \Om\lp\f1{\e\log m}\rp .\]
Also, snce $B_\e \s \V\e$, we have $\VV\e=\Om(1/(\e\log m))$ as well.

Set $t_0\gets\f1{2m}+\e/2$ and $\tau \gets\f1{2m}+ \e$. \footnote{Recall that $V_{\ge t}$ is defined for \emph{densities}, whereas we have \emph{excesses}, hence the additional $1/(2m)$ term.} Since $\e\ge1/(50m)$, we have $t_0 \ge \f1{2m}+\f1{100m}$, and since $\vol(\V{1/(2m)+1/(100m)})\le1.5m$, the guarantee $\VV{t_0}\le1.5m$ of \Cref{lem:sweep-cut} is satisfied.  Finally, invoking Algorithm~\ref{alg:binary-search} with this $t_0$ and $\tau$ gives a cut of conductance
\[ \sr { \f {18\al}{(\tau-t_0) \vol(V^p_{\ge\tau})} } \le \sr { \f {18\al}{(\e/2) \cd\Om(1/(\e\log m))} } = O\lp\sr{\al\log m}\rp .\]
\EP

\subsection{Guarantee of Large Excess from Existence of Low Conductance Cuts}
\label{sec:large_excess}

Here, we justify the assumptions (1) and (2) of \Cref{lem:find-cut}, assuming that a low-conductance cut $S$ exists in the graph. We use the theorem below of \cite{AndersenCL06}.

\BT[Theorem~4 of {\cite{AndersenCL06}}]\label{thm:pagerank-start}
For any set $C$ with $\vol(C)\le m$ and any constant $\al\in(0,1]$, there is a subset $C_\al\s C$ with volume $\vol(C_\al)\ge\vol(C)/2$ such that for any vertex $v\in C_\al$, the PageRank vector $p=p(\chi_v)$ satisfies $p(C)\ge1-2\Phi(C)/\al$.
\ET

For the application in \cite{AndersenCL06}, the guarantee $\vol(C_\al)\ge\vol(C)/2$ is important, since it means that a random vertex from $C$, weighted by degree, is in $C_\al$ with probability $1/2$. In contrast, since we are in the deterministic setting, we only need $C_\al$ to be nonempty, which \Cref{thm:pagerank-start} guarantees.

\BCL\label{clm:conditions}
Let $C$ be a cut of conductance $\phi$ in $G$ ($\vol(C)\le m$), let $\al\ge400\phi$, and fix an arbitrary vertex $s\in C_\al$. 
For PageRank vector $p=p(x_s)$, Conditions~(1)~and~(2) of \Cref{lem:find-cut} is satisfied.
\ECL
\BP
As in \Cref{lem:find-cut}, define $S:=\{v\in V:\exc(p,v)/\deg(v)\ge\f1{100m}\}$. 

We first prove Condition~(1). By \Cref{thm:pagerank-start}, $p(C)\ge1-2\Phi(C)/\al\ge1-1/200$, so
\[ \exc(C) = \sum_{v\in C}\lp p(v)-\f{\deg(v)}{2m}\rp = p(C) - \f{\vol(C)}{2m} \ge \lp1-\f1{200}\rp-\f12\ge\f13 .\]
Let $C^- := \{ v\in C:\exc(v)/\deg(v)<\f1{100m} \} = C\setminus S$. Observe that
\[ \exc(C^-)=\sum_{v\in C^-} \exc(v) \le \sum_{v\in C^-}\deg(v)\cd\f1{100m}\le\f{2m}{100m}=\f1{50} .\]
Since $C\setminus C^-=C\cap S\s S$, we have
\[ \exc(S) \ge \exc(C\setminus C^-) = \exc(C)-\exc(C^-)\ge\f13-\f1{50}\ge\f1{10}, .\]
proving Condition~(1).

For Condition~(2), let $C^+:=\{v\notin C:\exc(v)/\deg(v)\ge\f1{100m}\} = S\setminus C$. By \Cref{thm:pagerank-start}, $p(C^+)\le p(V\setminus C)\le1/200$, so
\[ \vol(C^+) =  \sum_{v\in C^+}\deg(v) \le \sum_{v\in C^+} \exc(v) \cd 100m \le \sum_{v\in C^+}p(v)\cd100m \le \f1{200}\cd100m\le\f12m .\]
Therefore $\vol(S) \le \vol(C \cup C^+) = \vol(C)+\vol(C^+)\le m+m/2$, proving Condition~(2).
\EP

\subsection{Merging Low Conductance Cuts into a Near-Expander Cut}
\label{sec:near_expander_cut}

\BL\label{lem:compute-pr}
Given a real number $\al\in(0,1]$ and $n$ initial vectors $\bv_1,\lds,\bv_n$, we can compute the $n$ PageRank vectors $PR(\bv_1),\lds,PR(\bv_n)$ in time $O(n^\om)$.
\EL
\BP
We have
\[ PR(\bv_i) \stackrel{(\ref{eq:pr})}= \al \bv_i (I-(1-\al)W)\inv ,\]
where $I-(1-\al)W$ is guaranteed to be invertible by the uniqueness statement in Fact~\ref{fact:pr}. Let $\mathbf V$ be the matrix with $\bv_i$ as the $i$'th row; then, $PR(\bv_i)$ is simply the $i$'th row of $\al \mathbf V (I-(1-\al)W)\inv$. Since matrix inversion and matrix multiplication can be computed in $O(n^\om)$ time, the lemma follows.
\EP

The algorithm computes PageRank starting at the $n$ vectors $\{ \chi_v : v\in V\}$. Then, for each vertex $v\in V$ and each of the $O(\log m)$ buckets $B_\e$ (for that $v$),

\begin{algorithm}[H]
\caption{MostBalancedEdgeCut$(G,\phi)$}
Assumption: $\Phi(G) < \phi$
\\Output: a $\big(\f{\phi}{\log_2(50m)}\big)$-near-expander $(\sqrt{\phi\log m})$-conductance cut
\\Runtime: $O(n^\om)$
\begin{algorithmic}[1]
\State Compute the $n$ PageRank vectors $PR(\chi_v)$ for each $v\in V$ \Comment{$O(n^\om)$ time by \Cref{lem:compute-pr}}
\State Compute the degree $\deg(v)$ for each $v \in V$ \Comment{$O(n^2)$ time}
\State $\m C \gets \emptyset$ \Comment{$C\s2^V$ will be a collection of low-conductance cuts}
\For {each $p_v\gets PR(\chi_v)$}
  \State Create a data structure that, given $t$, can return $\vol(V^{p_v}_{\ge t})=\sum_{u\in V^{p_v}_{\ge t}}\deg(u)$ in $O(\log n)$ time. \Comment{$O(n\log n)$ time using balanced binary search trees}
  \State Form the $O(\log m)$ excess buckets as in \Cref{lem:find-cut} with $\al = 400 \phi$ %
  \State $B_\e\gets$ some bucket with $\exc(B_e)\ge1/(50\log_2(50m))$
    \State Call Algorithm~\ref{alg:binary-search} on $(G,p_v,1/(2m)+\e/2,1/(2m)+\e)$, which returns a set $C$ ($\vol(C) \le 1.5m$) of conductance $O(\sr{\al\log m})=O(\sr{\phi}\log m)$  \label{line:find-C}
    \State Add $C$ to $\m C$
\EndFor
\State $S\gets\emptyset$\Comment{$S\s V$ will be the near-expander conductance cut}
\For {each cut $C\in\m C$ in arbitrary order}
  \If {$\vol(C\setminus S)\ge\vol(C)/2$}
    \State $S\gets S\cup C$ \label{line:add-C}
  \EndIf
  \If {$\vol(S)\ge m/4$}
    \State \textbf{break} \label{line:break} \Comment{Exit the \textbf{for} loop}
  \EndIf
\EndFor
\State\Return $S$ or $V\setminus S$, whichever has smaller volume
\end{algorithmic} \label{alg:near-expander-cut}
\end{algorithm}

\BT
Algorithm~\ref{alg:near-expander-cut} returns a $\big(\f{\phi}{\log_2(50m)}\big)$-near-expander $O(\sr{\phi\log m})$-conductance cut.%
\ET
\BP
Let $\phi' \le O(\sr{\phi\log m})$ be an upper bound to the conductance of any cut $C$ from line~\ref{line:find-C}. 
We first claim that at all times of the algorithm, $|\pt S|\le 2\phi'\vol(S)$. Intuitively, this is because every time we add a set $C$ to $S$ (line~\ref{line:break}), the new edges in $\pt S$ can be ``charged'' to the new volume $\vol(C\setminus S)$, which is always at least $\vol(C)/2$.
More formally, $|\pt S|\le2\phi'\vol(S)$ is clearly satisfied initially with $S=\emptyset$, and whenever a new set $C$ is added (line~\ref{line:add-C}),
\[ |\pt (S\cup C)| \le |\pt S|+|\pt C| \le 2\phi'\vol(S) + \phi'\vol(C) \le 2\phi'\vol(S)+2\phi'\vol(C\setminus S)=2\phi'\vol(S\cup C) .\]

Suppose that line~\ref{line:break} is reached in the algorithm. Let $S'$ be the set $S$ before the last cut $C$ was added to it (line~\ref{line:add-C}). Since $\vol(S')\le m/4$, we must have $\vol(S)=\vol(S'\cup C)\le\vol(S')+\vol(C)\le 1.5m+m/4=2m-m/4$. This means that
\[ \min\{\vol(S),\vol(V\setminus S)\}\ge\f17\vol(S) .\] Therefore, the cut $S$ that is output satisfies
\[ \Phi(S)=\f{|\pt S|}{\min\{\vol(S),\vol(V\setminus S)\}} \le \f{|\pt S|}{\vol(S)/7}\le14\phi' .%
\]
This, along with the fact that $\min\{\vol(S),\vol(V\setminus S)\}\ge m/4$, shows that the algorithm outputs a near-expander $14\phi'$-conductance cut, as desired.

Now suppose that line~\ref{line:break} is never reached. We know that $\vol(S)<m/4$, so the algorithm returns $S$ (and not $V\setminus S$). Suppose for contradiction that $S$ is not $\big(\f{\phi}{\log_2(50m)}\big)$-near-expander $(14\phi')$-conductance cut. By \Cref{def:near-expander}, this can only happen if $V\setminus S$ is not a nearly $\f{\phi}{\log_2(50m)}$-expander, which means there exists $T\s V\setminus S$ with $\vol(T)\le\vol(V\setminus S)/2$ and $|E(T,V\setminus T)|<\f{\phi}{\log_2(50m)}\vol(T)$. Our goal is to show that there exists some $C\in\m C$ that should have been added to $S$ in line~\ref{line:add-C}, a contradiction.

As $\alpha = 400 \phi$, by \Cref{thm:pagerank-start} applied to $\al$ and $T$, there is a vertex $t\in T$ such that if we start PageRank at $t$, we have $p(T)\ge1-2\Phi(T)/\al$. This means that
\begin{gather}
\exc(V\setminus T)\le p(V\setminus T)\le\f{2\Phi(T)}\al \le \f{2\cd\phi/\log_2(50m)}{400\phi}=\f1{200\log_2(50m)} .\label{eq:volC1}
\end{gather}
By \Cref{clm:conditions} applied to $T$, the conditions of \Cref{lem:find-cut} are satisfied. Following the proof of \Cref{lem:find-cut}, there exists a bucket $B_e$ with $\exc(B_\e)\ge\f1{50\log_2(50m)}$ and $\vol(B_\e)\ge\f1\e\,\exc(B_\e)\ge\f1{50\e\log_2(50m)}$. Since \Cref{alg:binary-search} is called with $\tau\gets\f1{2m}+\e$, the cut $C$ returned satisfies
\begin{gather}
\vol(C)\ge\VV{1/(2m)+\e}\ge\vol(B_\e)\ge\f1{50\e\log_2(50m)} \label{eq:volC2}
\end{gather} (where $p$ is the relevant PageRank vector). Also, since \Cref{alg:binary-search} is called with $t_0\gets\f1{2m}+\e/2$, all vertices $v$ in the returned cut $C$ satisfy $\exc(v)\ge\e/2$. Therefore,
\[ \vol(V\setminus T) \le \f2\e\cd\exc(V\setminus T) \stackrel{(\ref{eq:volC1})}\le \f2\e\cd\f1{200\log_2(50m)}=\f1{100\e\log_2(50m)} \stackrel{(\ref{eq:volC2})}\le \f{\vol(C)}2. \] Therefore,
\[ \vol(C\setminus S) \ge \vol(C\cap T)\ge\vol(C) - \vol(V\setminus T)\ge\vol(C)/2,\]
which means that $C$ should have been added to $S$ in line~\ref{line:add-C}, contradiction.
\EP
Thus, Algorithm~\ref{alg:near-expander-cut} achieves the guarantees of \Cref{thm:main-w}.
Apply \Cref{lem:ne-mb} with $\phi_1=\Th(\sr{\phi\log m})$, $\phi_2=\Th(\phi/\log(m))$, and $c:=\Th(\phi^{-0.5}\log^{1.5} m)$
then gives the guarantees of the dense approximate balanced
cut algorithm from \Cref{thm:pagerank-main}.

\section{Most-Balanced Low-Conductance Cuts on Sparse Graphs}
\label{sec:madry}

Our next goal is to speed up the previous
algorithm on sparse graphs.

\MadryMain*

As before in \Cref{sec:pagerank}, for explanation purposes,
it is more convenient to assume that there exists
$S^{*}$ where $\Phi(S^{*})\le1/\poly\log n$ and $\vol(S^{*})=\Omega(m)$,
and our goal is find a $(1/\log n)$-conductance cut $S$ where $\vol(S)=\Omega(m)$.

The main technique we use here is the $j$-trees
by Madry \cite{Madry10}.
A $j$-tree is a graph where consisting of two parts:
\begin{enumerate}
        \item a core $K$ which contains at most $j$ vertices, and
        \item a forest $F$ such that for each tree
$T\in F$, $|V(T)\cup V(K)|=1$.
\end{enumerate}
Intuitively, $j$-trees are graphs with $j$ vertices that have
a forest ``attached'' to it.
Although $j$-trees are very restricted form of graphs,
Madry~\cite{Madry10} shows that a collection of $j$-trees
can approximate an arbitrary graph in the following sense: 
\begin{fact}
        [\cite{Madry10} Paraphrased]
\label{fact:Madry10InANutshell}
        There is an deterministic algorithm that,
        given an $m$-edge graph $G$ and a parameter $t$, runs in $\tilde{O}(mt)$
        time and outputs $t$ many $\tilde{O}(m/t)$-trees $G_{1},\dots,G_{t}$
        such that, for any $C\subset V$
        \begin{itemize}
                \item $|E_{G}(C,V-C)|\le|E_{G_{i}}(C,V-C)|$ for all $i$, and
                \item $|E_{G_{i}}(C,V-C)|\le\alpha|E_{G}(C,V-C)|$ for some $i$ where $\alpha=O(\log^{3}n)$.
        \end{itemize}
\end{fact}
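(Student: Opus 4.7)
The plan is to follow Madry's original construction from \cite{Madry10}, replacing each randomized step by a deterministic counterpart that preserves the cut-approximation guarantees. The argument produces an explicit family of $t$ many $j$-trees, each computable in $\tilde{O}(m)$ time, for a total of $\tilde{O}(mt)$, together with a cut-by-cut analysis giving the two-sided domination stated in the fact.

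The construction of a single $j$-tree proceeds in two phases. First, I would compute a deterministic low-stretch spanning tree $T$ of $G$ in $\tilde{O}(m)$ time (via an AKPW-style construction, or a subsequent polylog-stretch refinement). Second, convert $T$ into a $j$-tree $G_i$ by choosing a set $K$ of $j = \tilde{O}(m/t)$ ``core'' vertices and absorbing the rest into forest components. Concretely, one repeatedly contracts leaves and non-branching paths of $T$ until at most $j$ important vertices remain; these form the core, on which we place an appropriately reweighted subgraph of $G$ reflecting the contracted paths. The remaining structure becomes the forest $F$, with each tree of $F$ meeting $K$ at exactly one vertex. A direct accounting then shows $|E_{G_i}(C,V-C)| \ge |E_G(C,V-C)|$ for every cut $C$, since each cut edge of $G$ is either preserved as a core edge of $G_i$ or corresponds to a unique path in $F$ whose endpoints lie on opposite sides of $C$. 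This gives the first (domination) property uniformly over all $i$.

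For the approximation property, I would construct the family $G_1,\dots,G_t$ by instantiating the above procedure with $t$ different tree embeddings drawn from a Räcke-type ensemble. Fix a cut $C$ with boundary size $|E_G(C,V-C)|$. For a typical tree $T$ in the ensemble, the expansion of $C$ in $T$ blows up by at most the stretch factor of $T$, which is $O(\log^3 n)$ for the appropriate deterministic choice. A counting / potential argument (averaging over the $t$ trees) shows that at least one $G_i$ achieves $|E_{G_i}(C,V-C)| \le O(\log^3 n)\cdot |E_G(C,V-C)|$, giving the desired $\alpha$.

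The main obstacle is the deterministic construction of this tree ensemble: Madry's original argument uses randomness both in Räcke's hierarchical decomposition (random cluster centers and radii in a low-diameter decomposition) and in sampling $t$ trees so that every cut is ``hit'' by some tree with small stretch. I would derandomize the first source by plugging in deterministic expander-decomposition-based oblivious routings in place of Räcke's construction, which retain the polylog stretch and $\tilde{O}(m)$ per-tree cost. Replacing the second source is more delicate: rather than sampling, one must enumerate over a structured family of ``pivot'' choices (for instance, all $t$ levels or branches of a single hierarchical decomposition) and argue that this deterministic family covers every cut to within $O(\log^3 n)$. This derandomized covering step, rather than the per-tree construction, is the technically hardest part of the argument.
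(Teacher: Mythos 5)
Your high-level plan (deterministic low-stretch spanning trees, conversion of each tree into a $j$-tree, and an averaging argument over the $t$ graphs) matches the intended route, but there is a genuine gap in where you locate the randomness and in what you leave unproved. You assert that Madry's argument needs, besides the randomness in building a single tree, a second random ingredient --- ``sampling $t$ trees so that every cut is hit by some tree with small stretch'' --- and you declare the derandomization of this covering step the technically hardest part, proposing only vaguely to enumerate structured pivot choices. No such step exists, and your proposal never actually establishes it. In Madry's construction (\Cref{thm:madry}) the algorithm explicitly outputs the entire distribution: $t$ graphs $G_1,\dots,G_t$ together with their weights, built by an iterative (multiplicative-weights-style) procedure that is already deterministic apart from the low-stretch spanning tree subroutine. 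The guarantee is that for every cut $C$ the expected cut value under this explicit convex combination is at most $\alpha\,|E_G(C,V\setminus C)|$; for a fixed cut $C$, the existence of a good index $i$ is then immediate by averaging over the output support --- no covering argument, enumeration of pivots, or extra derandomization is required (different cuts may of course be served by different indices $i$, which is all the fact claims). The only randomness to remove is inside the low-stretch spanning tree construction, and this is handled by substituting the deterministic construction of Elkin, Emek, Spielman and Teng~\cite{ElkinEST08}, at the cost of a slightly worse polylogarithmic stretch, which is absorbed into $\alpha=O(\log^{3}n)$.

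Relatedly, your proposed fix for the ``first source of randomness'' --- deterministic expander-decomposition-based oblivious routings in place of R\"acke's hierarchical decomposition --- is both unnecessary and misplaced: Madry's $j$-tree theorem is driven by low-stretch spanning trees rather than by random low-diameter clusterings, and fast deterministic expander decompositions are precisely the kind of tool this paper cannot presuppose (they postdate it; cf.~\cite{ChuzhoyGLNPS19:arxiv}). The correct and short argument for the fact is: cite \Cref{thm:madry}, observe that its only randomized ingredient is the low-stretch spanning tree subroutine, replace that subroutine by~\cite{ElkinEST08}, note the per-tree cost stays $\tilde{O}(m)$ so the total is $\tilde{O}(mt)$, and convert the expectation bound into the ``for some $i$'' statement by averaging over the explicitly computed distribution. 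As written, your proof builds its main difficulty around a step that does not occur in the construction and then leaves that invented step without a proof.
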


To discuss the main idea, it is more convenient to use a notion of
\emph{sparsity} of cuts.%
Namely, the sparsity of a cut $S$ is $\sigma(S)=\frac{E(S,V-S)}{\min\{|S|,|V-S|\}}$.
A \emph{$(\phi^{*},c)$-most-balanced $\phi$-sparse cut $S$}
is such that $\sigma(S)\le\phi$ and $|S|\ge|S^{*}|/c$ where $S^{*}$
is the set with maximum $|S^{*}|$ out of all sets $S'$ where $\sigma(S')\le\phi^{*}$.
In this language, our goal is to find a cut with
$\phi^{*}=O(\phi^{2}/\poly\log m)$ and $c=\tilde{O}(1)$.

Setting $C$ in \Cref{fact:Madry10InANutshell} above to $S^{*}$ 
gives that there is some $i$ such that
$|E_{G_{i}}(S^{*},V-S^{*})|\le\alpha|E_{G}(S^{*},V-S^{*})|$.
So $S^{*}$ is has sparsity $\sigma_{G_{i}}(S^{*})\le\alpha\phi^{*}$.
Let $S_{i}$ be an $(\alpha\phi^{*},c)$-most-balanced $\phi$-sparse
cut in $G_{i}$.
That is, $\sigma_{G_{i}}(S_{i})\le\phi$ and $|S_{i}|\ge|S^{*}|/c$.
As $|E_{G}(S_{i},V-S_{i})|\le|E_{G_{i}}(S_{i},V-S_{i})|$, we have
\[
\sigma_{G}\left(S_{i}\right)\le\phi.
\]
That is, $S_{i}$ is in fact a $(\phi^{*},c)$-most-balanced
$\phi$-sparse in $G$.
This argument shows that it suffices to compute an 
$(\alpha\phi^{*},c)$-most-balanced $\phi$-sparse cut in each
$j$-tree $G_{i}$ where $j=\tilde{O}(m/t)$.

To compute such cut in a $j$-tree $G_{i}$, we will use different
approaches on the core $K$ and the forest $F$:
\begin{itemize}
        \item The forest $F$ of $G_{i}$ will be computed using
        greedy in linear time.
        \item We then apply the dense graph algorithm from
        \Cref{sec:pagerank} to the core.
\end{itemize}
The main obstacle is that the dense graph algorithm
deals with low-conductance instead of low-sparsity cuts.
These two notions may be different in $G_{i}$ because $G_{i}$
might not have constant degree.
Also, it is non-trivial how one can combine the two solution from
two parts together.
Thus, the bulk of our technical contribution in this section it to show
that computing approximate balanced cuts on the forests
and core separately is sufficient.

\BT\label{thm:balanced_in_sparse_graphs}
There is a deterministic $(\ot(\Delta^{3/2} \phi^{1/2}), c)$-approximate balanced
cut algorithm for graphs with maximum degree $\Delta$ that runs in
$O(m^{1.579})$ time.
\ET

\BD[Multi-forest, multi-tree]
A multigraph $G=(V,E)$ is a \emph{multi-forest} if the support graph $\{(u,v):$ there exists edge $e\in E$ with endpoints $u,v\}$ is a forest. Similarly, a multigraph is a \emph{multi-tree} if its support graph is a tree. We represent multiforests and multitrees in $\tO(n)$ space by storing the number of multi-edges in $E$ for each edge in the support graph.
\ED

\BD[$j$-tree]
Given a multigraph $G=(V,E)$, a \emph{$j$-tree} of $G$ is a graph whose edges are the union of (i) a graph on a vertex set $K$ of size $j$, and (ii) a multi-forest $F$ with no edge between two vertices in $K$ ($F$ can contain edges not in $G$). The set $K$ is called the \emph{core} of the $j$-tree. (The core may not be unique.)
\ED

\BT[\cite{Madry10}]\label{thm:madry}
For any multigraph $G=(V,E)$ and parameter $t$, there exists a distribution $\m D$ on $t$ multigraphs $G_1,\lds,G_t$ of $\tO(m/t)$-trees of $G$ such that for any subset $C\s V$:
 \BE
 \im For all $i\in[t]$, $|E_G(C,V\sm C)| \le |E_{G_i}(C,V\sm C)|$
 \im $\E_{G_i\sim\m D}[| E_{G_i}(C,V\sm C) |] \le \al\, |E_G(C,V\sm C)|$ for $\al=O(\log^2 n \log\log n)$
 \EE
Moreover, we can find such a distribution in $\tO(|E|\cd t)$ time.
\ET
The only randomized procedure in \Cref{thm:madry} is the construction of low-stretch spanning trees, but this can be replaced with a deterministic construction with slightly worse parameters~\cite{ElkinEST08}. In other words, \Cref{thm:madry} can be made entirely deterministic with the same guarantees up to $O(\log^2 n \log\log n)$ factors.
\BD[Canonical $j$-tree]\label{def:canon}
Fix a multigraph $G=(V,E,w)$ and a parameter $j$, and fix a vertex set $K\s V$ of size $j$ and an unweighted forest $F$ in $V$ with no edge between two vertices in $K$ ($F$ can contain edges not in $G$). A \emph{canonical $j$-tree $H$ with core $K$ and forest $F$} is a $j$-tree constructed as follows: For each vertex $v\in V$, consider the tree $T$ in $F$ containing $v$, and let $r(v)$ be the (unique) vertex of $T$ that is also in $K$ (the ``root'' of $T$). For each edge $e=(u,v)\in E$:
\BE
\im If $r(u)=r(v)$, then for each edge $e'$ in the path from $u$ to $v$ in $F$, add an edge $e'$ to $H$.
\im Otherwise, $r(u)\ne r(v)$. For each edge $e'$ on the path from $u$ to $r(u)$ in $F$, add an edge $e'$ to $H$. Do the same for each edge $e'$ on the path from $v$ to $r(v)$ in $F$. Finally, add the edge $(r(u),r(v))$ to $H$.
\EE
\ED

\BL\label{lem:canon}
Fix a graph $G=(V,E)$ and a parameter $j$, and let $H$ be a $j$-tree of $G$ with core $K$ such that $G \le H\le\al\,G$. Then, there exists a canonical $j$-tree $H'$ with core $K$ such that $G \le H'\le  H$. Furthermore, given the core $K$ of $H$, we can compute in $\tO(|V|+|E|)$ time a weighted graph $H''$ where each edge $(u,v)\in H''$ has weight equal to the number of (parallel) edges between $u$ and $v$ in $H'$. 
\EL
\BP
The direction $H'\ge G$ follows from the embedding of the canonical $j$-tree, so we focus on the other direction $H'\le H$. Consider an embedding of $G$ into $H$; we will use this embedding to construct an embedding from $H'$ into $H$. Let $F$ be the forest of $H$, and for each vertex $v\in V$, let $r(v)$ be defined as in \Cref{def:canon} for $K$ and $F$. For each edge $e=(u,v)$ in $G$, by the $j$-tree structure of $H$, the embedded path $P_e$ of $e$ in $H$ must consist of either (1) the path from $u$ to $v$ in $F$ if $r(u)=r(v)$, or (2) the path from $u$ to $r(u)$ in $F$, the path from $v$ to $r(v)$ in $F$, and some path from $r(u)$ to $r(v)$ in $H$. Now consider the edges in $H'$ that resulted from edge $e$ in $G$. In case (1), for each edge $e'$ on the path from $u$ to $v$ in $F$, we added an edge $e'$ in $H'$; embed each edge $e'$ onto the edge in $P_e$ with the same endpoints of $e'$. This gives an embedding of the edges in $H'$ constructed by $e$ onto the edges of $P_e$. In case (2), we can do the same for the edges $e'$ in the path from $u$ to $r(u)$ and from $v$ to $r(v)$. For the edge $(r(u),r(v))$ in $H'$, we embed it along the path from $r(u)$ to $r(v)$ in $P_e$. Altogether, we also embed every edge constructed by $e$ onto the edges of $P_e$. Doing this for every $e$ in $G$, we obtain an embedding of $H'$ into $H$.

The computation of $H''$ is straightforward. First, for each edge $(u,v)\in G$ with $r(u)=r(v)$, we can compute the lowest common ancestor $a(u,v)$ of $u$ and $v$ in the corresponding tree in $F$ rooted at $r(u)=r(v)$~\cite{aho1976finding}. For each edge $(u,v)$ with $r(u)\ne r(v)$, we can find the vertices $r(u)$ and $r(v)$. Next, for each tree in $F$, we can perform a simple traversal to determine, for each edge $e$ in the tree, the number of pairs $(u,r(u))$ or $(u,a(u,v))$ whose path between the two vertices passes through $e$. Lastly, we can easily keep track of the edges $(r(u),r(v))$ for each edge $(u,v)$ with $r(u)\ne r(v)$.
\EP
From \Cref{thm:madry} and \Cref{lem:canon}, we obtain the following corollary:
\BC
For any multigraph $G=(V,E)$ and parameter $t$, there exists a distribution on $t$ graphs $G_1,\lds,G_t$ of $\tO(n/t)$-trees of $G$ such that for any subset $C\s V$:
 \BE
 \im For all $i\in[t]$, $w(\pt_{G_i}C) \le w(\pt_GC)$
 \im $\E_i[w(\pt_{G_i}C)] \le \al\, w(\pt_GC)$
 \EE
Moreover, we can find such a distribution in $\tO(|E|\cd t)$ time.
\EC
\BP
Apply \Cref{thm:madry} to find the $j$-trees $G_1,\lds,G_t$, and then apply \Cref{lem:canon} onto each one. The two properties hold by the properties of an embedding.
\EP

\subsection{Most-Balanced Sparse Cut on a $j$-tree}

We will actually compute most-balanced \emph{sparse} cut, defined as follows:
\BD[Most-Balanced Sparse Cut]\label{def:most-balanced-s}
Given $G=(V,E)$ and parameters $\phi,\phi^*,c$, a set $S\s V$ with $\vol(S)\le m$ is a \emph{$(\phi^*,c)$-most-balanced $\phi$-sparse cut} if it satisfies:
 \BE
 \im $\sigma(S) \le \phi$.
 \im Let $S^*\s V$ be the set with maximum $\vol(S^*)$ out of all sets $S'$ satisfying $\sigma(S')\le\phi ^*$ and $|S'|\le n/2$. Then, $|S|\ge |S^*|/c$.
 \EE
\ED

\begin{algorithm}[H]
\caption{MostBalancedSparseCut$(G,H,K,F,\phi)$}
Assumption: $H=(V,E)$ is a $|K|$-tree of $G$ with core $K$ and multi-forest $F$.
\\Output: 
\\Runtime: $\tO(m+|K|^\om)$
\begin{algorithmic}[1]

\State $H_K\gets H[K]$ with the following additional \emph{self-loops}: for each edge $(u,v)$ in $G$, if $r(u)=r(v)$, then add a self-loop at $r(u)\in K$.\label{line:self}

\State $S_K\gets$ (recursive) $(f(\phi),\be)$-approximate most-balanced low-conductance cut on $H_K$ \label{line:a1}

\State Construct a vertex-weighted multi-tree $T=(V_T,E_T,w_T)$ as follows: Starting with $H$, contract $K$ into a single vertex $k$ with weight $|K|$. All other vertices have weight $1$. (The vertices of $T$ have total weight $n$.) \label{line:a2}
\State Root $T$ at a vertex $r\in V_T$ such that every subtree rooted at a child of $r$ has total weight at most $n/2$. \Comment{A subtree of $T$ rooted at $u$ is the set of vertices $v\in V$ whose path to $r$ includes $u$.}\label{line:a3}
\State $S_{T}\gets $ RootedTreeMostBalancedSparseCut$(T,r,\phi)$ (Algorithm~\ref{alg:tree})\label{line:a4}

\State It is guaranteed that one of $S_K$ and $S_T$ has conductance $\le O(\De\sr{\phi\log m})$. Of the (one or two) cuts satisfying this property, output the one with highest volume. \label{line:a5}

\end{algorithmic} \label{alg:most-balanced}
\end{algorithm}

\begin{algorithm}[H]
\caption{RootedTreeMostBalancedSparseCut$(T=(V,E,w),r,\phi)$}
Assumption: $T$ is a weighted multi-tree with weight function $w:V\to\N$ (so that $w(v)$ is the weight of vertex $v\in V$). The tree is rooted at a root $r$ such that every subtree $V_u$ rooted at a vertex $u\in V\sm\{r\}$ has total weight $w(V_u)\le w(V)/2$.
\\Output: a set $S\s V$ satisfying the conditions of \Cref{lem:tree-alg}.
\\Runtime: $\tO(n^\om)$
\begin{algorithmic}[1]

\State Find all vertices $u\in V\sm\{r\}$ such that if $V_u$ is the vertices in the subtree rooted at $u$, then $w(E[V_u,V\sm V_u])/|V_u|\le2\phi$. Let this set be $X$.
\State Let $X^\uparrow$ denote all vertices $u\in X$ without an ancestor in $X$ (that is, there is no $v\in X\sm\{u\}$ with $u\in T_v$).
\State Starting with $S=\emptyset$, iteratively add the vertices $V_u$ for $u\in X^\uparrow$. If $w(S)\ge n/4$ at any point, then terminate immediately and output $S$. Otherwise, output $S$ at the end.\label{line:t3}

\end{algorithmic} \label{alg:tree}
\end{algorithm}

\BL\label{lem:tree-alg}
Algorithm~\ref{alg:tree} can be implemented to run in $O(|V|)$ time.
The set $S$ output by Algorithm~\ref{alg:tree} satisfies $|E[S,V\sm S]|/\min\{w(S),w(V\sm S)\}\le6\phi$.
Moreover, for any set $S^*$ with $|E[S^*,V\sm S^*]|/w(S^*)\le \phi$ and  $w(S^*)\le2w(V)/3$, and which is composed of vertex-disjoint subtrees rooted at vertices in $T$, we have $\min\{w(S),w(V\sm S)\}\ge w(S^*)/3$.
\EL
\BP
Clearly, every line in the algorithm can be implemented in linear time, so the running time follows. We focus on the other properties.

Every set of vertices $V_u$ added to $S$ satisfies $|E[V_u,V\sm V_u]|/w(V_u)\le2\phi$. Also, the added sets $V_u$ are vertex-disjoint, so $|E[S,V\sm S]| = \sum_{V_u\s S}|E[V_u,V\sm V_u]|$. This means that Algorithm~\ref{alg:tree} outputs $S$ satisfying $|E[S,V\sm S]|/w(S)\le2\phi$. Since every set $V_u$ has total weight at most $w(V)/2$, and since the algorithm terminates early if $w(S)\ge w(V)/4$, we have $w(S)\le 3w(V)/4$. This means that $\min\{w(S),w(V\sm S)\}\ge w(S)/3$, so $|E[S,V\sm S]|/\min\{w(S),w(V\sm S)\} \le 3|E[S,V\sm S]|/w(S)\le6\phi$.

Suppose first that the algorithm terminates early. Then, as argued above, $\min\{w(S),w(V\sm S)\}\ge w(V)/4$, which is at least $(2w(V)/3)/3\ge w(S^*)/3$, so  $\min\{w(S),w(V\sm S)\}\ge w(S^*)/3$.

Now suppose that $S$ does not terminate early. Let $S^*_1,\lds,S^*_\el$ be the vertices in the (vertex-disjoint) subtrees that together compose $S^*$, that is, $\bigcup_iS^*_i=S^*$. Note that $E[S^*_i,V\sm S^*_i]$ is a single edge in $E$ for each $i$. Suppose we reorder the sets $S^*_i$ so that $S^*_1,\lds, S^*_q$ are the sets that satisfy $|E[S^*_i,V\sm S^*_i]|/w(S^*_i)\le2\phi$. Since $|E[S^*,V\sm S^*]|/w(S^*)\le \phi$, by a Markov's inequality-like argument, we must have $\sum_{i\in[q]}w(S^*_i) \ge (1/2)\sum_{i\in[\el]}w(S^*_i)=w(S^*)/2$. Observe that by construction of $X^\uparrow$, each of the subsets $S^*_1,\lds, S^*_q$ is inside $V_u$ for some $u\in X^\uparrow$. Therefore, the set $S$ that Algorithm~\ref{alg:tree} outputs satisfies $w(S)\ge\sum_{i\in[q]}w(S^*_i)\ge w(S^*)/2$. The bound on $|E[S,V\sm S]|/\min\{w(S),w(V\sm S)\}$ follows as before.
\EP

\BL\label{lem:alg-balanced-r}
Algorithm~\ref{alg:most-balanced} runs in time $O(m)$ plus the recursive call (line~\ref{line:a1} of Algorithm~\ref{alg:most-balanced}).
\EL
\BP
Lines~\ref{line:a2}~and~\ref{line:a3}, can be easily implemented in linear time. By \Cref{lem:tree-alg}, line~\ref{line:a4} also takes linear time.
\EP

\BCL\label{clm:6.8}
Let $H$ be a $|K|$-tree of a connected graph $G$ with core $K$ and multi-forest $F$, and let $\De$ be the maximum degree in $G$.
Fix a subset $S\s K$, and let $S_F$ be the vertices in the trees in $F$ intersecting $S$ (note that $S_F\supseteq S$). Then,
\BE
\im $|E_{H_K}(S,V_K\setminus S)| = |E_H(S_F,V\setminus S_F)|$, and
\im $|S_F|\le\vol_{H_K}(S)\le\De\cd|S_F|$.
\EE
\ECL

\BP
For (1), observe that every edge in $E_H(S_F,V\sm S_F)$ must belong in $H[K]$: the only difference in the edges of $H[K]$ and $H_K$ are self-loops, which never appear in $E_{H_K}(S,V_K\sm S)$;
this proves property (1).

For (2), we first show the $|S_F|\le\vol_{H_K}(S)$ direction. For a given vertex $u\in S_F$, let $e=(u,v)$ be an edge in $G$ incident to $u$, which must exist since $G$ is connected. By the construction of $H_K$, $e$ corresponds to the edge $(r(u),r(v))$ in $H_K$: either $r(u)\ne r(v)$ and the edge $(r(u),r(v))$ was added in the construction of $H$ (see \Cref{def:canon}), or $r(u)=r(v)$ and it was added in line~\ref{line:self} of Algorithm~\ref{alg:most-balanced}. Let us charge the vertex $u$ to the endpoint $r(u)$ of the edge $(r(u),r(v))$ in $H_K$. Since no endpoint of any edge is charged more than once, and since the number of endpoints in $S$ of edges in $H[K]$ is exactly $\vol_{H_K}(S)$, we have $|S_F|\le\vol_{H_K}(S)$.

We now show the $\vol_{H_K}(S)\le\De\cd|S_F|$ direction. Consider an endpoint $u\in S$ of edge $(u,v)$ in $H_K$ (possibly a self-loop). This edge resulted from an edge $(u',v')$ in $G$ with $r(u')=u$ and $r(v')=v$; note that $u'\in S_F$. Let us charge the endpoint $u$ of edge $(u,v)$ in $H_K$ to the endpoint $u'\in S_F$ of edge $(u',v')$ in $G$. Every vertex in $|S_F|$ can be charged at most $\De$ times, since its degree in $G$ is at most $\De$. Therefore, the quantity $\vol_{H_K}(S)$, which equals the number of endpoints $u\in S$ of edges in $H_K$, is at most $\De\cd|S_F|$.
\EP

The next lemma shows that we can transform the optimal cut $S^*\s V$ into either a cut $S^*_T\s V_T$ or a cut $S^*_K\s H_K$ without losing too much in sparsity and volume. Note that $S^*_T$ corresponds to a cut in $H$ that only cuts the forest edges in $F$, and $S^*_K$ corresponds to a cut in $H$ that only cuts edges in the core $H[K]$. Then, in the proof of \Cref{lem:alg-balanced}, we will show that the cuts $S_T$ and $S_K$ computed by Algorithm~\ref{alg:most-balanced} approximate $S^*_T$ and $S^*_K$ respectively.

\BL\label{lem:either}
Let $H$ be a $|K|$-tree of a connected graph $G=(V,E)$ with core $K$ and multi-forest $F$, and let $S^*\s V$ be any cut with $|S^*|\le n/2$ (where $n:=|V|$). Let $T$ and $H_K$ be defined as in Algorithm~\ref{alg:most-balanced}. One of the following must hold:
\BE
\im There exists a cut $S^*_T$ in $T$ satisfying $|E_T(S^*_T,V\sm S^*_T)|\le|E_H(S^*,V\sm S^*)|$ and $|S^*|/2\le w( S^*_T)\le 2n/3$, and $S^*_T$ is the disjoint union of subtrees of $T$ rooted at $r$.
\im There exists a cut $S^*_K$ in $H_K$ satisfying $|E_{H_K}(S^*_K,K\sm S^*_K)|\le|E_H(S^*,V\sm S^*)|$ and \linebreak $\min \{ \vol_{H_K}(S^*_K) , \vol_{H_K}(K\sm S^*_K)\} \ge|S^*|/3$.
\EE
\EL
\BP
Let $S^*\s V$ the set as described in \Cref{def:most-balanced-s} ($\sigma_H(S^*)\le\phi$). Let $U$ be the vertices $u\in V$ whose (unique) path to $r(u)$ in $F$ contains at least one edge in $E[S^*,V\sm S^*]$ (see Figure~\ref{fig:j}). Note that $U\cap K=\emptyset$ and $E_H(U,V\sm U)\s E_H(S^*,V\sm S^*)$. Moreover, suppose we first root the tree $T$ at $k$ (not $r$); then, for each vertex $u\in U$, its entire subtree is contained in $U$. Therefore, $U$ is a union of subtrees of $T$ rooted at $k$.

\begin{figure}
\centering \includegraphics[scale=.7]{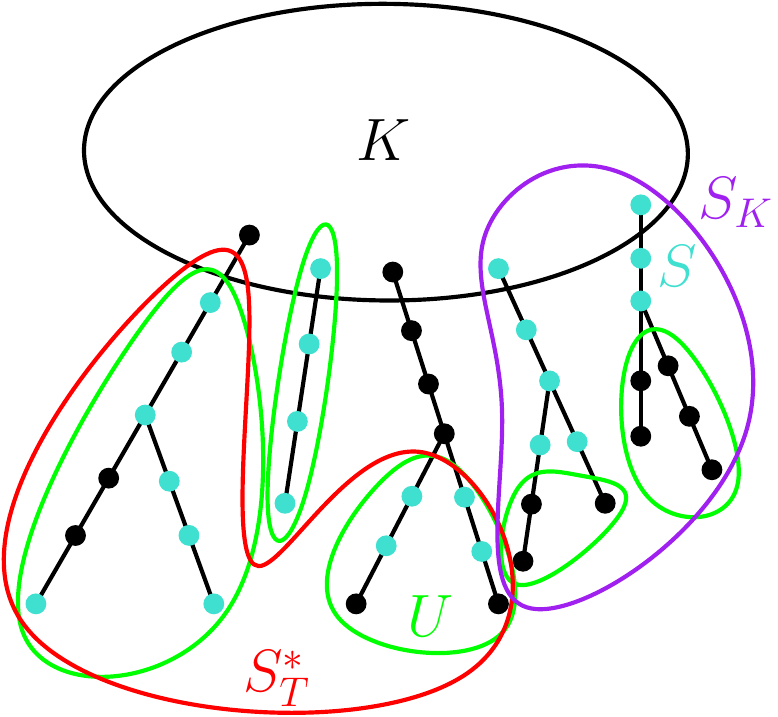}
\caption{
Cases~2a and 2b of \Cref{lem:either}. The set $S$ is the cyan vertices.
}
\label{fig:j}
\end{figure}

\para{Case 1: $r\in U$.}
In this case, let $U'\s U$ be the vertices in the subtree containing $r$. Then, if we now re-root $T$ at vertex $r$, then the vertices in $V_T\sm U'$ now form a subtree. Define $S^*_T\s V_T$ as $S^*_T:=V_T\sm U'$ (Figure~\ref{fig:j} shows a different case). By our selection of $r$, $w(S^*_T) =w(V_T\sm U')\le n/2$. Moreover, $w(S^*_T)=w(V_T\sm U')\ge |S^*|$ and $E_T(S^*_T,V\sm S^*_T)\s E_H(U,V\sm U)\s E_H(S^*,V\sm S^*)$, so the conditions $|S^*|\le w(S^*_T)\le n/2$ and $|E_T(S^*_T,V\sm S^*_T)|\le|E_H(S^*,V\sm S^*)|$ are satisfied, fulfilling condition~(1).

\para{Case 2a: $r\notin U$ and $|U|\ge n/6$.}
Since $r\notin U$, every subtree in $U$ has weight at most $n/2$. Let $U'$ be a subset of these subtrees of total weight in the range $[n/6,n/2]$.
Define $S^*_T:=U'$, which satisfies $n/2\ge|S^*_T|\ge n/6\ge|S^*|/3$ and $E_T(S^*_T,V\sm S^*_T)\s  E_H(U,V\sm U)\s E_H(S^*,V\sm S^*)$, fulfilling condition~(1).

\para{Case 2b: $r\notin U$ and $|U|<n/6$.}
In this case, let $S:=S^*\cup U$, which satisfies $|S^*|\le|S|\le|S^*|+|U|\le|S^*|+n/6\le2n/3$ and $E_H(S,V\sm S)\s E_H(S^*,V\sm S^*)$. Next, partition $S$ into $S_K$ and $S_T^*$, where $S_K$ consists of the vertices of all connected components of $H[S]$ that intersect $K$, and $S_T^*:=S\sm S_K$ is the rest. Clearly, we have $|E_H(S_K,V\sm S_K)|\le E_H(S^*,V\sm S^*)$ and  $|E_H(S_T^*,V\sm S_T^*)|\le E_H(S^*,V\sm S^*)$. Also, observe that for each tree $T$ in $F$, either $V(T)\s S_T^*$ or $V(T)\cap S_T^*=\emptyset$.

We have that either $|S_T^*|\ge|S|/2$ or $|S_K|\ge|S|/2$. If the former is true, then the set $S^*_T$ satisfies condition~(1). Otherwise, assume the latter. Since $E_H(S_{K},V\sm S_{K})$ does not contain any edges in $F$, there exists a set $S^*_K:=\{r(v):v\in S_{K}\}\s K$ such that $S_K$ is the vertices in the trees in $F$ intersecting $S^*_K$. This also means that $V\sm S_K$ is the vertices in the trees of $F$ intersecting $K\sm S^*_K$. By \Cref{clm:6.8} on $S^*_K$ and $K\sm S^*_K$, we have $|E_{H_K}(S^*_K,K\sm S^*_K)|=|E_H(S_K,V\sm S_K)|\le|E_H(S^*,V\sm S^*)|$, and 
\[ \min \{ \vol_{H_K}(S^*_K) , \vol_{H_K}(K\sm S^*_K)\} \ge \min \{ |S_K|,|V\sm S_K|\} .\]
It suffices to show that $\min\{|S_K|,|V\sm S_K|\}\ge|S^*|/3$. We have $|S_K|\le|S^*|+n/4\le3n/4$, so $\min \{ |S_K|,|V\sm S_K|\}\ge|S^*|/3$, and this along with $|S_K|\ge|S^*|$ shows that $\min\{|S_K|,|V\sm S_K|\}\ge|S_K|/3\ge|S^*|/3$.
\EP

\BL\label{lem:alg-balanced}
For a $|K|$-tree $H=(V,E)$ with core $K$ and multi-forest $F$, and for any parameter $\phi$, Algorithm~\ref{alg:most-balanced} outputs a $(\phi,O(\De\be))$-most-balanced $\max\{f(3\phi),6\phi\}$-sparse cut for $H$.
\EL

\BP
Let $S^*\s V$ the set as described in \Cref{def:most-balanced-s} ($\sigma_H(S^*)\le\phi$). We need to show that Algorithm~\ref{alg:most-balanced} outputs a cut $S$ ($|S|\le n/2$) satisfying $\sigma(S)\le \max\{f(3\phi),6\phi\}$ and $|S|\ge  |S^*|/O(\De\be)$.

First, suppose that condition~(1) of \Cref{lem:either} holds. Then, by \Cref{lem:tree-alg}, line~\ref{line:a4} of Algorithm~\ref{alg:most-balanced} returns a cut $S_T\s V_T$ with $|E[S^*,V_T\sm S^*]|/w(S^*)\le6\phi$ and $\min\{w(S_T),w(V_T\sm S_T)\}\ge w(S^*_T)/3\ge |S^*|/6$. Let $S_H:=S_T\sm \{k\} \cup K$ if $k\in S_T$ and $S_H:=S_T$ otherwise. Then, $\min\{|S_H|,|V\sm S_H|\}=\min\{w(S_T),w(V_T\sm S_T)\}\ge|S^*|/6$ and
\[ \sigma(S_H)=\f{|E_H(S_H,V\sm S_H)|}{\min\{|S_H|,|V\sm S_H|\}} = \f{|E_H(S_T,V_T\sm S_T)|}{\min\{w(S_T),w(V_T\sm S_T)\}} \le 6\phi .\]

Otherwise, suppose that condition~(2) of \Cref{lem:either} holds. Then, the cut $S^*_K$ satisfies 
\[ \Phi_{H_K}(S^*_K)=\f{|E_{H_K}(S^*_K,K\sm S^*_K)|}{\min \{ \vol_{H_K}(S^*_K) , \vol_{H_K}(K\sm S^*_K)\}}\le3\phi .\]
 By the guarantee of the recursive call in line~\ref{line:a1}, the set $S_K$ that it outputs satisfies $\vol(S_K)\le m$ and
\[\Phi_{H_K}(S_K)\le f(\Phi_{H_K}(S_K^*))\le f(3\phi)\]
and
\begin{align*}
\vol_{H_K}(S_K)&\ge (1/\be)\cd\min\{\vol_{H_K}(S^*_K),\vol_{H_K}(K\sm S^*_K)\}.
\end{align*}
Let $S_F$ be the vertices in the trees in $F$ intersecting $S_K$. By \Cref{clm:6.8} on $S_K$ and $K\sm S_K$, we have $|E_{H_K}(S_K,K\setminus S_K)| = |E(S_F,V\setminus S_F)|$ and
  $|S_F|\ge\vol(S_K)/\De$ and $|V\sm S_F|\ge\vol(K\sm S_K)/\De\ge \vol(S_K)/\De$, so 
\begin{align}
\min\{|S_F|,|V\sm S_F|\}&\ge \f1\De\vol_{H_K}(S_K) \label{eq:SF}
\\&\ge \Om\lp\f1{\De\be}\rp\cd\min\{\vol_{H_K}(S^*_K),\vol_{H_K}(K\sm S^*_K)\}\nonumber
\\ &\ge \Om\lp\f1{\De\be}\rp\cd|S^*|, \nonumber
\end{align}
where the last inequality follows by property~(2) of \Cref{lem:either}.
\EP

\subsection{Most Balanced Cut on $G$}

\BL
\label{lem:JTreeMain}
Let $G$ be a graph with $n$ vertices and $m$ edges, and fix a parameter $j$ depending on $n$.
Suppose we have an $(f(\phi),c)$-approximate balanced cut algorithm
(which was invoked on line~\ref{line:a1} of Algorithm~\ref{alg:most-balanced})
that takes time $T(\nhat, \mhat)$ on an input with $\nhat$ vertices and $\mhat$ edges.
Then there is a
$(\max\{f(3\al\De\phi),6\al\De\phi\}, O(\De^2 c))$-approximate balanced cut
algorithm that for any graph $G$ with $m$ edges and any parameter $j$,
runs in time
\[
\tO\left( \frac{m}{j} \left( T\left(j, m\right)+ m \right) \right).
\]
\EL

\BP
Let $S^*\s V$ be the set as described in \Cref{def:most-balanced-s} ($\Phi(S^*)\le\phi$). Since $\vol(S^*)\le \De|S^*|$ and $\vol(S^*)\le \vol(V\sm S^*)\le\De|V\sm S^*|$, we have $\min\{|S^*|,|V\sm S^*|\}\ge\vol(S^*)/\De$, so $\sigma(S^*)\le\De \Phi(S^*)\le\De\phi$. 

 Invoke \Cref{thm:madry} with a parameter $t:=\tO(m/j)$, computing $t$ many $j$-trees $G_1,\lds, G_t$ with $j=\tO(m/t)$. Let $\al$ be the parameter specified in \Cref{thm:madry}. 

For each $j$-tree $G_i$, run Algorithm~\ref{alg:most-balanced} with parameter $\al\De\phi$ on each of the $j$-trees. By \Cref{lem:alg-balanced-r}, Algorithm~\ref{alg:most-balanced} takes $O(m)$ time plus one call to $\m A$ for each of $t$ many $j$-trees, which is a total of $\tO(tm)=\tO(m^2/j)$ time plus $\tO(m/j)$ calls to $\m A$. %

By property~(2) of \Cref{thm:madry}, there exists a $j$-tree $G_i$ such that  $|E_{G_i}(S^*,V\sm S^*)| \le \al\, |E_G(S^*,V\sm S^*)|$, so $\sigma_{G_i}(S^*)\le\al\sigma_G(S^*)=\al\De\phi$.
By \Cref{lem:alg-balanced}, Algorithm~\ref{alg:most-balanced} returns a $(\al\De\phi,O(\De/c))$-most-balanced $\max\{f(3\al\De\phi),6\al\De\phi\}$-sparse cut $S$ for $G_i$.

Since $\vol(S)\ge|S|$ and $\vol(V\sm S)\ge|V\sm S|\ge|S|$, we have
\begin{gather} \min\{\vol(S),\vol(V\sm S)\}\ge|S|. \label{eq:volS} \end{gather}
 By property~(1) of \Cref{thm:madry}, this cut $S$ satisfies $|E_G(S,V\sm S)\le|E_{G_i}(S,V\sm S)|$, which means that
\[ \Phi_G(S) \stackrel{(\ref{eq:volS})}\le \sigma_G(S)\le\sigma_{G_i}(S)\le \max\{f(3\al\De\phi),6\phi\} . \]

By property~(2) of most-balanced sparse cut (\Cref{def:most-balanced-s}), the set $S^*$ with $\sigma_{G_i}(S^*)\le\al\De\phi$ ensures that $|S|\ge|S^*|/O(\De c)$. \[ \min\{\vol(S),\vol(V\sm S)\}\stackrel{(\ref{eq:volS})}\ge|S|\ge\f{|S^*|}{O(\De c)}\ge\f{\vol(S^*)}{O(\De^2 c)} .\]
Thus, $S$ is a $(\phi,O(\De^2 c))$-most-balanced $\max\{f(3\al\De\phi),6\al\De\phi\}$-conductance cut.
\EP

Recall that $\alpha = O(\log^2 n \log\log n)$. By setting the parameter $k=n/j$, \Cref{lem:JTreeMain} says that given an $(f(\phi), c)$-approximate balanced-cut routine with
running time $T(m)$ on an input with $n$ vertices and $m$ edges,
we can obtain an $(f(O(\phi \log^3 n)), O( c))$-approximate balanced-cut
routine on a graph with maximum degree $\Delta = O(1)$ with running time:
\[
\tilde O\left( k \left(m + T\left(\frac{m}{k} , m\right) \right) \right).
\]
We would have proven \Cref{thm:MadryMain} if there is no restriction on the maximum degree $\Delta = O(1)$. Fortunately, the following lemma show that for the most balanced lower conductance cut problem, we can in fact assume with out loss of generality that $\Delta = O(1)$ (see \Cref{sec:expander_split} for the proof):

\BL
Given a graph $G$ with $n$ vertices and $m$ edges, we can compute in linear time a graph $G'$ such that
 \BE
 \im $G'$ has $O(m)$ vertices with constant maximum degree.
 \im $\Phi(G) =\Phi(G')$.
 \im Given a $(\phi, c)$-most balanced $\al$-conductance cut in $G'$, we can transform it into
 a $(\phi,O(c))$-most balanced $O(\al)$-conductance cut in $G'$ in linear time.
 \EE
\EL

This allows us to invoke \Cref{lem:JTreeMain} with $\Delta = O(1)$,
and thus obtaining the guarantees of \Cref{thm:MadryMain}.

\section{Recursive Sparsification}
\label{sec:recursion}

\RecursionMain*

\newcommand\ww{\boldsymbol{\mathit{w}}}

The general idea here is to incorporate one of the most
well-known applications of graph partitioning:
graph sparsification.
This routine essentially allow one to transform any
graph into a sparse one by repeatedly computing
balanced separators on it, and then replacing the
expanding pieces with expanders.
Then we use another divide and conquer scheme to avoid
calling sparsify on the initial, possibility dense,
graph ith $m$ edges.
Instead, we will partition the vertices into $b$ equal
sized parts, and recursively sparsify each of the
$O(b^2)$ subgraphs on about $2n / b$ vertices.
The resulting graphs are then combined with another call
to sparsification.
This scheme essentially trades the multiplicative accumulation
of errors from repeatedly calling sparsification
with the smaller sizes throughout these calls.

\begin{definition}
	\label{def:Approx}
	A weighted graph $H=(V,E^{(H)}, w^{(H)})$ is a
	$\kappa$-approximation of another weighted graph
	$G=(V,E^{(G)},w^{(H)})$ on the same set of vertices $V$
	if for any cut $S\subseteq V$, we have
	\[
	\frac{1}{\kappa} \cdot \ww^{\left(G\right)}
	\left(E^{\left(G\right)}\left(S,\overline{S}\right)\right)
	\leq
	\ww^{\left(H\right)}	\left(E^{\left(H\right)}\left(S,\overline{S}\right)\right)
	\leq
	\kappa \cdot \ww^{\left(G\right)}
	\left(E^{\left(G\right)}\left(S,\overline{S}\right)\right).
	\]
\end{definition}

This notion is a simplified notion of graph sparsification,
and we can show modify the existing sparsification literature
to produce $n^{o(1)}$-approximations deterministically,
when given an approximate balanced cut procedure.

\begin{lemma}
	\label{lem:SparsifyHandWave}
	Given any $(f(\phi), \beta)$-approximate balanced-cut routine
$\textsc{ApproxBalCut}$ in time $m^\theta$ for some $1<\theta\le 2$ such that
$f(\phi) \leq \phi^{-\xi} n^{o(1)}$ for some absolute constant $\xi > 0$ (as specified
	in Theorem~\ref{thm:RecursionMain}),
	
	we can construct
	a deterministic sparsification algorithm $\textsc{DeterministicSparsify}$ that takes any 
	weighted graph $G$ as input, and outputs in
	deterministic $\oh(m^\theta)$ time a sparse graph $H$ with
	$\oh(n)$ edges that $n^{o(1)}$-approximates $G$.
\end{lemma}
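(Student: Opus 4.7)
The plan is to turn \textsc{ApproxBalCut} into a sparsifier via expander decomposition and then to avoid paying the full $m^\theta$ cost on dense inputs through a vertex-partition divide-and-conquer in the spirit of the random spanning tree algorithm of~\cite{DurfeeKPRS17}. For the base single-pass routine, I would call \textsc{ApproxBalCut} at a target conductance $\phi = 1/n^{o(1)}$ (chosen so that $f(\phi) \leq 1/\polylog n$), recurse on any returned balanced cut, and stop once every remaining piece is certified a near-expander. This gives an expander decomposition with recursion depth $n^{o(1)}$, at most $\phi m \cdot n^{o(1)} = o(m)$ inter-cluster edges, and total cost $\oh(m^\theta)$ since $\theta \geq 1$ forces $\sum_i m_i^\theta \leq m^\theta$ at every level of recursion. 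On each near-expander cluster I would then invoke an off-the-shelf deterministic expander sparsifier (e.g.\ based on short cycle decomposition or derandomized uniform edge sampling), producing a $(1+1/\polylog n)$-approximation with $\tilde O(|V_i|)$ edges. Taking the union of the per-cluster sparsifiers with the inter-cluster edges, and then iterating this contraction $O(\log \log n)$ times, drives the edge count down to $\oh(n)$ while keeping the overall approximation at $n^{o(1)}$.

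To avoid a direct $m^\theta$ call on a dense input, I would wrap the base routine in a vertex partition recursion: partition $V$ into $b$ equal parts, recursively sparsify each of the $O(b^2)$ induced subgraphs $G[V_i \cup V_j]$ on $2n/b$ vertices, union the resulting sparsifiers into a single graph on $V$ with $\tilde O(bn)$ edges, and then run the base routine once more to reduce this back to $\tilde O(n)$ edges. Because each edge of $G$ appears in exactly one subproblem and $\theta \geq 1$ gives $\sum_{(i,j)} m_{ij}^\theta \leq m^\theta$, balancing $b$ against the $\log_b n$ depth of this outer recursion keeps the total work at $\oh(m^\theta)$ while accumulating only $n^{o(1)}$ in multiplicative approximation.

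The main obstacle is producing the deterministic expander sparsifier in near-linear time with the required $(1+o(1))$ accuracy: the randomized version is a standard concentration argument, but derandomizing it calls for substantial tools such as short-cycle decomposition, explicit discrepancy constructions, or the framework of~\cite{ChuzhoyGLNPS19:arxiv}. A secondary technical concern is to track the $n^{o(1)}$ approximation budget across both the $n^{o(1)}$-deep expander decomposition and the $\log_b n$ outer levels; this is handled by choosing the per-level parameters so that each level contributes only a sub-polynomial factor to both the size reduction and the approximation loss.
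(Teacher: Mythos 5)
Your expander-decomposition phase matches the paper's: both repeatedly invoke the approximate balanced cut routine (via the Saranurak--Wang balanced-cut-or-expander equivalence, Theorem~\ref{thm:det cond}) to peel off pieces until every remaining cluster is an $n^{-o(1)}$-conductance expander with few inter-cluster edges and total cost $\oh(m^\theta)$. But the next step diverges in a way that leaves a genuine gap. You propose to replace each expander cluster with a $(1+1/\polylog n)$-accurate deterministic sparsifier (via short-cycle decomposition, derandomized sampling, etc.), and you yourself flag this as the ``main obstacle'' of the argument without resolving it. The paper's key observation is that no such accurate sparsifier is needed here: since each cluster is a $\phi$-conductance expander with $\phi = n^{-o(1)}$, it suffices to replace the cluster wholesale with an \emph{explicit} constant-degree weighted expander (using the Margulis / Lubotzky--Phillips--Sarnak constructions via the weighted version in Appendix~J of~\cite{KyngLPSS16}), which is deterministic and off-the-shelf. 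This replacement loses a multiplicative factor on the order of $1/\phi = n^{o(1)}$ in cut approximation, but that is exactly within the $n^{o(1)}$ approximation budget of the lemma, and the paper even remarks at the end of Section~\ref{sec:recursion} that this lossy replacement is precisely what limits the final approximation. So your proposal rests on a lemma (a near-exact deterministic sparsifier for expanders) that the paper deliberately avoids needing; without it, your argument does not go through.

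Two smaller scope issues. First, the vertex-partition divide-and-conquer you describe in your second paragraph is not part of this lemma at all --- it is Algorithm~\ref{alg:4way} (\textsc{RecursiveSparsify}), which constitutes the proof of Theorem~\ref{thm:RecursionMain}, where Lemma~\ref{lem:SparsifyHandWave} is used as a black box. Lemma~\ref{lem:SparsifyHandWave} itself is permitted to pay $\oh(m^\theta)$ directly. Including that recursion here conflates the two results. Second, the lemma must handle weighted graphs; the paper reduces to the unweighted case via the minimum-spanning-tree hierarchical invocation from the weighted-sparsifier section of~\cite{SpielmanT11}, a step your proposal omits.
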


\begin{proof}
	By the minimum spanning tree based hierarchical invocation
	given in the weighted sparsifier section of~\cite{SpielmanT11},
	it suffices to give such an algorithm for unweighted graphs.
	
	On such graphs, the equivalence between finding approximate
	balanced cuts and almost-expanders~\cite{SaranurakW19}
	as stated in \Cref{thm:det cond}
	means we also have a routine that either finds an $1/2$-balanced
	cut, or a expander of size at least half the graph.
	Then by repeatedly invoking this partition routine, we
	obtain in $\oh(m^\theta)$ time a partition
	of the vertices into expanders with conductance at
	least $n^{-o(1)}$, so that at most
	\[
		f\left(\phi \right) m \log{n}
		\leq \frac{m}{2}
	\]
	edges are between the pieces.
	Repeating this process $O(\log{n})$ iterations then
	puts all the edges into expanders,
	and thus gives a total number of vertices of $O(n \log^2{n})$.
	
	Then the construction of weighted expanders
	from Appendix J of~\cite{KyngLPSS16} gives that each of
	these expanders can be $n^{o(1)}$-approximated by
	a graph with average degree $O(1)$.
	These constructions in turn rely on the explicit
	expander constructions by either Margulis~\cite{Margulis88},
	or by Lubotzky, Phillips, and Sarnak~\cite{LubotzkyPS88},
	both of which are determinsitic.
	Such a replacement, however, incurs an error equaling to
	conductance (expanders have constant conductace),
	which in turn goes into the overall approximation factor.
\end{proof}

This notion of approximation composes under summation
of graphs, as well as compositions.
\begin{fact} (see e.g. Section 2 of~\cite{KyngLPSS16})
\label{fac:approx_sum}
\begin{itemize}
	\item	If $G_1$ $\kappa_1$-approximates $H_1$,
	and $G_2$ $\kappa_2$-approximates $H_2$,
	then $G_1 + G_2$ $\max\{\kappa_1, \kappa_2\}$-approximates
	$H_1 + H_2$.
	\item If $G_1$ $\kappa_1$-approxiamtes $G_2$,
	and $G_2$ $\kappa_2$-approximates $G_3$,
	then $G_1$ $\kappa_1\kappa_2$-approximates $G_3$.
\end{itemize}
\end{fact}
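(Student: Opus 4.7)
The plan is to fix an arbitrary cut $S \subseteq V$ and reduce both claims to elementary scalar inequalities about the cut weight $c_G(S) := w^{(G)}(E^{(G)}(S, \overline{S}))$. The whole proof is essentially a definition-unpacking exercise, so I will not expect any real obstacle; the only thing to be careful about is that the ``sum'' $G_1 + G_2$ of two graphs on the same vertex set means the graph whose edge multiset is the disjoint union of the two edge sets (with weights preserved). Under this convention, every edge crosses the cut in $G_1 + G_2$ if and only if it crosses the cut in whichever graph it came from, so one immediately has the key linearity identity
\[
c_{G_1 + G_2}(S) \;=\; c_{G_1}(S) + c_{G_2}(S)
\]
for every cut $S$.

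For the first bullet, I will apply the two hypotheses $c_{G_i}(S) \le \kappa_i c_{H_i}(S)$ for $i=1,2$ to the linearity identity and bound each summand by the worse constant $\kappa := \max\{\kappa_1, \kappa_2\}$. Concretely,
\[
c_{G_1+G_2}(S) = c_{G_1}(S) + c_{G_2}(S) \le \kappa_1 c_{H_1}(S) + \kappa_2 c_{H_2}(S) \le \kappa\bigl(c_{H_1}(S)+c_{H_2}(S)\bigr) = \kappa\, c_{H_1+H_2}(S),
\]
and the lower bound $c_{G_1+G_2}(S) \ge \kappa^{-1} c_{H_1+H_2}(S)$ follows by exactly the same manipulation applied to $c_{G_i}(S) \ge \kappa_i^{-1} c_{H_i}(S)$. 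Since $S$ was arbitrary, this is the defining inequality of a $\kappa$-approximation.

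For the second bullet (composition), I will simply chain the two given multiplicative bounds: for any $S$,
\[
c_{G_1}(S) \;\le\; \kappa_1\, c_{G_2}(S) \;\le\; \kappa_1 \kappa_2\, c_{G_3}(S),
\]
and symmetrically $c_{G_1}(S) \ge \kappa_1^{-1} c_{G_2}(S) \ge (\kappa_1 \kappa_2)^{-1} c_{G_3}(S)$, which matches \Cref{def:Approx} with constant $\kappa_1 \kappa_2$. No properties of the underlying graph structure are used beyond those already encoded in the cut-value function, so both claims follow at the level of scalar inequalities on a per-cut basis.
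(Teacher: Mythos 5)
Your proof is correct, and since the paper states this fact without proof (only citing Section~2 of~\cite{KyngLPSS16}), your definition-unpacking argument is exactly the standard one intended: linearity of the cut weight under graph sums for the first bullet, and chaining the two-sided multiplicative bounds for the second. Both directions of the inequality in \Cref{def:Approx} are handled, so nothing is missing.
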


Pseudocode of our routine is then given in \Cref{alg:4way}.

\begin{algorithm}[H]
	\caption{$\textsc{RecursiveSparsify}(G)$}
	\label{alg:4way}
	
	\begin{algorithmic}[1] 
		\State Let $m$ be the number of edges of $G$\;
		\State Let $n$ be the number of vertices of $G$\;
		\If{$m\le b\cdot n$}
			\State Return $\textsc{DeterministicSparsify}(G)$\label{lin:4}\;
		\Else
			\State Partition $V(G)$ into $b$ parts $V_1,\ldots,V_b$ such that their sizes differ by at most $1$\label {lin:6}\;
			\State Decompose $G$ into $\frac{b(b-1)}{2}$ subgraphs: $G_{i,j}=(V_i\cup V_j, E(G)(V_i,V_j))$\label {lin:7}\;
			\For{$1\le i\le j\le b$}
				\State $H_{i,j}$=\textsc{RecursiveSparsify}($G_{i,j}$)\label{lin:9}\;
			\EndFor
			\State Return $\textsc{DeterministicSparsify}(\sum_{ij} H_{i, j})$\label{lin:10}\;
		\EndIf
	\end{algorithmic}
	
\end{algorithm}

\paragraph{Proof of Theorem~\ref{thm:RecursionMain}.}
	 We may assume that $G$ is simple. Let $n_0$ be the number of vertices and edges in the original graph $G_0$ on which we call $\textsc{RecursiveSparsify}$ (\Cref{alg:4way}). It suffices to show that \Cref{alg:4way} runs in $O(n^{\frac{(1+c)(2\theta-2)}{c}}m^{2-\theta})$ time on any graph $G$ with $n$ vertices and $m$ edges such that $n\le n_0$ if $b=3n_0^{\frac{1}{c}}$ and returns a $n^{o(1)}$ approximation of $G$ with $\oh(n)$ edges, since we can then call the given $\textsc{ApproxBalCut}$ routine on the returned graph.\\
	
	The time cost of Line \ref{lin:6} and \ref{lin:7} and passing function arguments are bounded by $O(m)=O(n^{2\theta - 2}m^{2-\theta})$. So we focus on bounding the total running time of recursive calls (Line \ref{lin:4}, \ref{lin:9} and \ref{lin:10}), denoted by $T(n, m)$. 
	We will prove by induction that $T(n, m)$ is no more than $(nb)^{2\theta-2}m^{2-\theta}100^{\log_{b/3} (n)}$. 
		
	As base case, if $m\le b\cdot n$, by Lemma \ref{lem:SparsifyHandWave}, $\textsc{DeterministicSparsify}$ computes an $n^{o(1)}$-approximation of any graph in $\oh(m^\theta)$ time. So \Cref{alg:4way} runs in $m^\theta$ time which is no more than $(nb)^{2\theta-2}m^{2-\theta}$.
	
	Otherwise, $m>b\cdot n\ge 3n$. This implies $n\ge 6$.

	If $m>b\cdot n$, 
	\begin{align*}
	T(n, m)\le& (b^2(n/b))^\theta + \sum_{1\le i\le j\le b} T(2n/b+1, m_{i,j})100^{\log_{b/3} (2n/b+1)}\\
	\le& (b^2(n/b))^\theta+\sum_{1\le i\le j\le b} (9n^{2\theta-2}m_{i,j}^{2-\theta})100^{\log_{b/3} (2n/b+1)}\\
	&\text{(by $((2n/b+1)b)^{2\theta-2}\le (3n)^{2\theta-2}$)}\\
	\le& (b^2(n/b))^\theta+9\sum_{1\le i\le j\le b} (n^{2\theta-2}(2m/b/(b-1))^{2-\theta})100^{\log_{b/3} (2n/b+1)}\\
	=& (b^2(n/b))^\theta+9b^2(n^{2\theta-2}(2m/b/(b-1))^{2-\theta})100^{\log_{b/3} (2n/b+1)}\\
	\le& (b^2(n/b))^\theta+36 b^2(n^{2\theta-2}(m/(b^2))^{2-\theta})100^{\log_{b/3} (2n/b+1)}\\
	=& (b^2(n/b))^\theta+36 b^{2\theta-2}(n^{2\theta-2}m^{2-\theta})100^{\log_{b/3} (2n/b+1)}\\
	\le& (b^2(n/b))^\theta+36 b^{2\theta-2}(n^{2\theta-2}m^{2-\theta})100^{-1+\log_{b/3} (n)}\\
	\le& (b^2(n/b))^\theta+0.36b^{2\theta-2}n^{2\theta-2}m^{2-\theta}100^{\log_{b/3} (n)}\\
	\le& b^{2\theta-2}n^{2\theta-2}m^{2-\theta}100^{\log_{b/3} (n)}
	\end{align*}
	for $n\ge 6$. The last inequality is by $(nb)^\theta<m^\theta=m^{\theta-1}m^{2-\theta}\le n^{2\theta-2}m^{2-\theta}$.\\
	
	The number of edges returned by Algorithm \ref{alg:4way} is equal to that number of $\textsc{DeterministicSparsify}$, which is $\oh(n)$.\\
	
	Since there are at most $\log_{b/3}n=c$ layers of recursion and each layer computes a $n^{o(1)}$-approximation (by $\textsc{DeterministicSparsify}$) of the graph returned by the previous layer (Fact \ref{fac:approx_sum}), Algorithm \ref{alg:4way} returns a $\left(n^{o(1)}\right)^c=n^{o(1)}$-approximation of $G$.

Note however that this accumulation of errors is due to the
approximation of a piece with $\phi$ conductance with a
regular expander.
We believe this lack of better deterministic approximations
of expanders is an inherent gap in the construction of
deterministic sparsification / partitioning tools.
While it does not affect our overall final performances,
it is a question worth investigation on its own.

\section*{Acknowledgement}

        This project has received funding from the European Research
        Council (ERC) under the European Union's Horizon 2020 research
        and innovation programme under grant agreement No
        715672 and 759557. Nanongkai was also partially supported by the Swedish
        Research Council (Reg. No. 2015-04659.).
        Yu Gao was partially supported by the National Science Foundation under Grant No. 1718533.

\appendix 
\section{Notations}
\label{sec:notations}

We use $G = (V,E)$ to denote an \textit{undirected} graph,
and $|V| = n$ and $|E| = m$ to denote the number of edges and vertices respectively.
We assume that $G$ is connected because otherwise there is a trivial separator
of $0$ vertices/edges.
We also assume that $G$ does not have parallel edges: otherwise,
we can remove duplicate edges without changing vertex connectivity.  

\begin{definition} [$\deg,\vol,N$] For any vertex $v$ on graph $G$, and any subset of vertex $U \subseteq V$, 
	\begin{itemize}[nolistsep, noitemsep]
		\item $\deg_G(v) = $ number of edges incident to $v$. 
		\item $\vol_G(U) = \sum_{v\in U}\deg_G(v)$. 
		\item $N_G(v) = \{u \colon (u,v) \in E\}$, i.e., $N_G(v)$ is the set of neighbors of $v$.
		\item $N_G(U) = \bigcup_{v \in U} N_G(v) \setminus U$.  Note that $U$ is excluded. 
	\end{itemize}
	We omit subscription when the graph that we refer to is clear from the context.
\end{definition}

\begin{definition}[Subgraphs] For any set of vertices $U \subseteq V$, we denote $G[U]$ as an \textit{induced} subgraph by $U$. For any vertex set $U$ and edge set $F$, we denote 
	\begin{itemize} [nolistsep,noitemsep]
		\item $G - U = (V \setminus U, E)$, and
		\item $G - F = (V, E \setminus F)$
	\end{itemize}
\end{definition}

\begin{definition} [Edge-cuts and vertex-cuts]
	Let $x,y$ be any distinct vertices. We call any edge-set $C \subset E$ (respectively any vertex-set $U \subset V$):%
	\begin{itemize} [nolistsep,noitemsep]
		\item an $(x,y)$-\textit{edge-cut} (respectively an $(x,y)$-\textit{vertex-cut}) if there is no path from $x$ to $y$ in $G - C$ (respectively if there is no path from $x$ to $y$ in $G - U$ and $x,y \not \in U$),
		\item an \textit{edge-cut} (respectively an \textit{vertex-cut}) if it is an $(s,t)$-edge-cut (respectively $(s,t)$-vertex-cut) for some distinct vertices $s$ and $t$. 
	\end{itemize}
\end{definition}

\begin{definition} [Separation triple] \label{def:sep-triple}
	A \textit{separation triple} $(L,S,R)$ is an order triplet of sets forming a partition of $V$ where $L$ and $R$ are non-empty, and there is no edge between $L$ and $R$.
\end{definition}

Note that $S$ is an $(x,y)$-vertex-cut for any $x \in L$ and $y \in R$.

\begin{definition} [Edge set]
	We denote  $E(S,T) = \{(u,v)\colon u \in S, v \in T, \text{ and } v \in E \}$.
\end{definition}

\begin{definition} [Vertex connectivity $\kappa$]
	Vertex connectivity of a graph $G$, denoted as $\kappa_G$, is the minimum cardinality vertex-cut or $n-1$ if no vertex-cut exists. For any vertices $x,y \in V$, we denote $\kappa_G(x,y)$ as the smallest cardinality $(x,y)$-vertex-cut or $n-1$ if $(x,y)$-vertex-cut does not exist. 
\end{definition}

Observe that $\kappa_G = \min\{ \kappa_G(x,y) \colon x,y \in V, x \not = y \}$.

\section{Split Vertex Connectivity} 
\label{sec:splitvc}

\begin{theorem} \label{thm:split-vc-runtime}
There is a deterministic SplitVC algorithm (\Cref{def:splitvc}) that runs in $O(mk(|S|+k^2))$ time.
\end{theorem}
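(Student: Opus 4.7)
The plan is to use a Kleitman-style recursive strategy. Pick an arbitrary pivot $x \in S$. For each $y \in S \setminus \{x\}$, compute the $(x,y)$ vertex connectivity $\kappa_G(x,y)$ using Ford--Fulkerson augmenting paths on the standard vertex-capacity splitting gadget, halting once the flow value reaches $k$; each such call takes $O(mk)$ time, and with $|S|-1$ calls this phase costs $O(mk|S|)$. If any call certifies $\kappa_G(x,y) < k$, return the corresponding min vertex cut, which is a valid output since $x, y \in S$.

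If every one of these $|S|-1$ calls reports $\kappa_G(x,y) \geq k$, an exchange argument shows that any remaining bad cut $C$ of size less than $k$ separating two vertices of $S$ must satisfy $x \in C$: otherwise $x$ would lie in one of the two connected components of $G - C$ alongside one of the two separated endpoints, and then $C$ would be an $(x,\cdot)$-cut of size $< k$, contradicting the pivot check. We therefore recurse on the smaller instance $(G - x,\, S \setminus \{x\},\, k-1)$, prepending $\{x\}$ to any cut returned. The recursion terminates when $k = 0$ or $|S| < 2$.

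For the runtime, writing $T(|S|,k) = O(mk|S|) + T(|S|-1,\, k-1)$ and summing gives a contribution of $O(m(k-i)(|S|-i))$ at level $i$. The topmost level contributes $O(mk|S|)$, matching the first term of the claimed bound. The remaining $k-1$ levels must contribute only $O(mk^3)$, matching the $mk \cdot k^2$ term. The way to see this is that at level $i \geq 1$ one need not test $|S|-i$ pairs from scratch: the flows computed at earlier levels already certify that $\kappa_{G_i}(x_j, y) \geq k - j$ for every previously used pivot $x_j$ and every $y \in S_i$ (since deleting one vertex can decrease $\kappa$ by at most one), so by the exchange argument it suffices to check a witness set of only $O(k-i)$ new vertices, via at most $O(k-i)$ max-flows of value $O(k-i)$. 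Summing $\sum_{i=1}^{k-1}(k-i)^2 = O(k^3)$ augmenting-path searches of width $O(k)$ each gives the $O(mk^3)$ bound, and combining yields $O(mk|S| + mk^3) = O(mk(|S|+k^2))$.

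The main technical obstacle will be formalizing the previous paragraph: the naive recurrence sums to $\Theta(mk^2|S|)$, a factor of $k$ too slow, so one must carefully argue that deeper recursion levels admit $O(k-i)$ rather than $|S|-i$ max-flow calls. This requires either (a) maintaining augmenting paths incrementally across recursion levels despite the graph losing a vertex at each step, or (b) extracting from the level-$0$ flows a witness subset of size $O(k)$ in $S \setminus \{x\}$ whose pairwise connectivity in $G - x$ certifies $k$-connectivity for all of $S \setminus \{x\}$ by a pigeonhole argument (at most $k-2$ witnesses can lie in a size-$(k-2)$ cut $C \setminus \{x\}$, forcing two witnesses on opposite sides of $C$). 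Making this witness-selection deterministic and handling the base case $|S| \leq k$ (where we simply run pairwise max-flow on all $\binom{|S|}{2} \leq k^2$ pairs) completes the analysis.
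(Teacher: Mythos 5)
Your recursive Kleitman-style peeling is a genuinely different strategy from the paper's, and your soundness argument is correct: if the pivot $x$ is $k$-connected to every other vertex of $S$, then any size-$(<k)$ cut $C$ separating two other vertices of $S$ must contain $x$ (because otherwise $x$ falls into one component of $G-C$, and $C$ then separates $x$ from whichever of the two endpoints lies in another component). So the recursion on $(G-x, S\setminus\{x\}, k-1)$ is a valid reduction, and the argument you correctly flag as the obstacle really is the obstacle: the naive recurrence sums to $\Theta(mk^2|S|)$, a factor of $k$ too slow.

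The fix you sketch does not close the gap. Your pigeonhole claim---that a size-$O(k)$ witness set $W \subseteq S\setminus\{x\}$ whose pairwise $(k-1)$-connectivity in $G-x$ certifies $(k-1)$-connectivity for all of $S\setminus\{x\}$---is false as stated. A cut $C\setminus\{x\}$ of size $\le k-2$ leaves at least two witnesses outside $C$, but those survivors can all land on the \emph{same} side of the cut while the actual separated pair $u,v \in S\setminus W$ goes undetected. Pairwise checks on a small witness set alone cannot catch this. The incremental-flow variant (a) is also not worked out, and decrementing the graph by one vertex per level changes the underlying flow network in a way that doesn't obviously preserve residual structure.

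The paper avoids the recursion entirely and uses a single \emph{super-source} step, which is where the $|S|+k^2$ shape comes from. Fix an arbitrary $X \subseteq S$ with $|X|=k$. First run the $\binom{k}{2}=O(k^2)$ pairwise max-flows within $X$, at $O(mk)$ each, for $O(mk^3)$ total. If none of these finds a small cut, then any bad separation triple $(L^*,S^*,R^*)$ with $|S^*|<k$ has $X$ missing one of $L^*$, $R^*$---say $X \cap R^* = \emptyset$, hence $X\subseteq L^*\cup S^*$. Now form $G'$ by adding a vertex $s$ adjacent to every vertex of $X$; then $N_{G'}(s)\subseteq L^*\cup S^*$, so $S^*$ is an $(s,y)$-cut in $G'$ for any $y\in S\cap R^*$. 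Checking $\kappa_{G'}(s,v)$ for every $v\in S$ is $|S|$ more max-flows at $O(mk)$ each, i.e. $O(mk|S|)$; if any is $<k$, a simple argument (the paper's Proposition B.2) converts it back to a cut separating two vertices of $S$ in $G$. Total $O(mk(|S|+k^2))$, no recursion, no witness lemma. You may find it instructive that your ``witness set of size $O(k)$'' intuition is exactly right, but the paper makes it work by pairing the $k$ witnesses with a single-source probe against \emph{all} of $S$, rather than trying to replace the $|S|$ checks by $O(k)$ pairwise ones.
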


We prove \Cref{thm:split-vc-runtime} by giving an algorithm, and analyzing its correctness and running time. 

\subsection{Algorithm}
 
\begin{algorithm}[H]
\caption{SplitVC$(G,S,k)$}
Input: Graph $G = (V,E)$, a vertex-cut $S$, and a positive integer $k$ \\
Assumptions: $|S| \geq k$. \\%$G$ has aboricity $k$. $n$ is the number of vertices of the original graph, and  $k < n$. \\%,%
Output: An $(x,y)$-vertex-cut of size $< k$ for some $x \in S$ and $y
\in S$ or the symbol $\perp$ certifying that $\kappa_G(x,y) \geq k $ for all $x \in S$ and $y \in S$. %
\begin{algorithmic}[1] \label{alg:split-vc}
\State Let $X$ be any subset of size $k$ from $S$.\label{line:X} \Comment{$X$ exists
 since $|S| \geq  k$. }
\If{$\min_{x \in X, y \in X} \kappa_G(x,y) < k $} \label{line:first-loop} %
\State \Return the corresponding $(x,y)$-vertex-cut in $G$. \label{line:pair-exists}
\EndIf
\State Let $G'$ be a graph obtained from $G$ by adding a new vertex
$s$, and edges $(s,v)$ for all $v \in X$.  \label{line:newg}
\If{$\min_{v \in S} \kappa_{G'}(s,v) < k$ } \label{line:check-newg}%
\State \Return  A $(u,v)$-vertex-cut in $G$ where $u \in S$, $v \in S$.  \label{line:vcut2} \Comment{See \Cref{pro:get-vcut}.}
\EndIf
\State \Return $\perp$.  \label{line:last-line}
\end{algorithmic}
\end{algorithm}

\subsection{Analysis} 

We show that \Cref{alg:split-vc} is correct.  Let $X$ be the set as defined in \Cref{alg:split-vc} (line~\ref{line:X}).  Recall that $s$ is the new vertex in $G'$.  Let $\kappa' = \min_{v \in S}
\kappa_{G'}(s,v)$. 

\begin{proposition} \label{pro:get-vcut}
If $\kappa' < k$, then the corresponding $(s,v)$-vertex-cut in $G'$ is also a $(u,v)$-vertex-cut in $G$ for some $u \in S$ and some $v \in S$.
\end{proposition}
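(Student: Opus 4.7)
The plan is to unpack what $\kappa' = \kappa_{G'}(s,v) < k$ gives us and apply a pigeonhole argument to the set $X$. I would begin by fixing a minimum $(s,v)$-vertex-cut $T$ in $G'$ of size $|T| < k$. By definition of a vertex-cut, $s,v \notin T$, and since $s$ is the only vertex added in going from $G$ to $G'$ (line \ref{line:newg}), we have $T \subseteq V(G)$.

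The key step is a pigeonhole observation on $X$: since $|X| = k$ and $|T| < k$, there must exist some $x \in X \setminus T$. By construction of $G'$, the vertex $s$ is adjacent to every vertex of $X$, so the edge $(s,x)$ survives in $G' - T$, placing $x$ in the same connected component of $G' - T$ as $s$. Because $T$ separates $s$ from $v$ in $G'$, this forces $x$ and $v$ to lie in different components of $G' - T$.

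The final step is to transfer this separation back to $G$. The graph $G - T$ is exactly $G' - T$ with the vertex $s$ (and its incident edges to $X \setminus T$) deleted, and deleting a vertex cannot merge two previously disconnected components. Hence $x$ and $v$ remain in different components of $G - T$, and so $T$ is an $(x,v)$-vertex-cut in $G$. Setting $u := x$, we have $u \in X \subseteq S$, $v \in S$ by hypothesis, and $|T| < k$, yielding the desired conclusion.

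The only delicate point — what I would call the main, though mild, obstacle — is making sure the passage from $G' - T$ to $G - T$ actually preserves the disconnection between $x$ and $v$. This is worth stating carefully because in $G' - T$ the vertex $s$ is a genuine hub connecting all of $X \setminus T$, so a priori one might worry that removing $s$ could affect connectivity elsewhere. The resolution is just the elementary fact that vertex deletion cannot reconnect components, and so the witness $x$ chosen above continues to certify the separation in $G$ itself.
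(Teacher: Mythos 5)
Your proof is correct and takes essentially the same approach as the paper's: both rely on the pigeonhole argument that some vertex of $X$ survives removal of the cut (since $|X| = k > |T|$), placing it on the $s$-side of the separation, and both then observe that deleting the auxiliary vertex $s$ cannot reconnect the two sides. The only cosmetic difference is that the paper phrases the argument via separation triples $(L',S',R')$ while you work directly with connected components of $G' - T$, which is equivalent.
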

\begin{proof}
 Let $(L',S',R')$ be a separation triple such that
$s \in L'$ and $|S'| = \kappa'$ and $v \in R' \cap S$ (such separation triple exists by \Cref{alg:split-vc} line~\ref{line:check-newg}). Since $s \in L'$ and there cannot be an edge between $L'$ and $R'$, we have $N_{G'}(s) \subseteq L' \sqcup S'$. Since $|S'| < k$
, but $|N_{G'}(s)| = |X| = k$, there is some vertex $u$ in $X$ that is
also in $L'$. That is, $L' \setminus \{ s \} \not  =
\emptyset$. Also, $R'$ contains a vertex $v \in S$. Hence, we get a new separation triple in $G$ by removing $s$ from $L'$  Therefore, $S'$ is a $(u,v)$-vertex-cut in $G$ where $u \in S$ and $v \in S$.
\end{proof}

\begin{lemma} \label{lem:split-vc-correct1} Suppose there exist $x$ and $y$ such that $x \in S, y
  \in S$ and $\kappa_G(x,y) < k $. Then, \Cref{alg:split-vc}
  (line~\ref{line:pair-exists} or line~\ref{line:vcut2}) returns a $(u,v)$-vertex-cut of size at most $k$ where $u\in S$ and $v \in S$. 
\end{lemma}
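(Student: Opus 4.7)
The plan is to fix a witnessing separation triple and then do a case analysis on how the set $X$ (chosen at line~\ref{line:X}) sits with respect to that triple. By hypothesis there exist $x,y\in S$ with $\kappa_G(x,y)<k$; let $(L^*, S^*, R^*)$ be a corresponding separation triple of $G$ with $x\in L^*$, $y\in R^*$, and $|S^*|<k$. The key observation driving everything is a pigeonhole fact: since $|X|=k$ while $|S^*|<k$, we have $X\not\subseteq S^*$, so $X$ meets $L^*\cup R^*$. I would split into two cases.

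\textbf{Case 1 (both sides are hit):} $X\cap L^*\neq\emptyset$ and $X\cap R^*\neq\emptyset$. Pick any $u\in X\cap L^*$ and $w\in X\cap R^*$. Then $S^*$ is a $(u,w)$-vertex-cut in $G$, so $\kappa_G(u,w)\le |S^*|<k$. Since $u,w\in X\subseteq S$, the loop at line~\ref{line:first-loop} discovers this pair and the algorithm returns the corresponding cut at line~\ref{line:pair-exists}, which is of the required form.

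\textbf{Case 2 (only one side is hit):} By symmetry assume $X\cap L^*=\emptyset$, so $X\subseteq S^*\cup R^*$. I will show that the second loop must succeed at the vertex $x\in S$. Consider $G'$ built at line~\ref{line:newg}: the new source $s$ has neighborhood exactly $X\subseteq S^*\cup R^*$. Remove $S^*$ from $G'$; the remaining neighbors of $s$ lie in $R^*$, so in $G'-S^*$ the vertex $s$ is confined to the $R^*$-side, whereas $x\in L^*$ lies on the opposite side (recall $L^*$ and $R^*$ are separated in $G-S^*$ and adding $s$ does not create any new $L^*$-$R^*$ path). Hence $S^*$ is an $(s,x)$-vertex-cut in $G'$ of size $<k$. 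Since $x\in S$, the check at line~\ref{line:check-newg} finds $\kappa_{G'}(s,x)<k$, and Proposition~\ref{pro:get-vcut} converts this into a $(u,v)$-vertex-cut in $G$ with $u,v\in S$, returned at line~\ref{line:vcut2}.

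The proof is then complete by combining the two cases. I do not expect any genuine obstacle: the argument is essentially a one-line pigeonhole ($|X|=k>|S^*|$) followed by routine verification that the claimed separators really are separators in $G$ and $G'$. The only subtle point worth double-checking is the second case, where one must confirm that adding $s$ to $G-S^*$ does not reconnect $L^*$ to $R^*$; this is immediate because $N_{G'}(s)=X$ avoids $L^*$, so $s$'s attachments do not cross the cut.
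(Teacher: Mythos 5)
Your proposal is correct and follows essentially the same approach as the paper's proof: fix a witnessing separation triple $(L^*,S^*,R^*)$, split on whether $X$ hits both sides or at most one side, handle the first case via line~\ref{line:first-loop}, and handle the second by showing $S^*$ remains a vertex-cut in $G'$ separating $s$ from some vertex of $S$, then invoking Proposition~\ref{pro:get-vcut}. The only difference is cosmetic: you take WLOG $X\cap L^*=\emptyset$ and target $x$, whereas the paper takes WLOG $X\cap R^*=\emptyset$ and targets $y$; these are mirror images of the same argument.
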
 
\begin{proof}
Let  $(L^*, S^*, R^*)$ be a separation triple such that $x \in L^*$ and
  $y \in R^*$ and $|S^*| = \kappa_G(x,y)$.  We have that $S \setminus
  S^*$ has two components $S \cap L^*$ and $S \cap R^*$.%

If $X  \cap L^* \not = \emptyset$ and $X  \cap R^* \not =
  \emptyset$, then there exist $u \in X \cap L^*$ and $v \in
  X  \cap R^*$. Hence, $u \in  L^*$ and $v \in
  R^*$. Therefore, $\kappa_G(u,v) = |S^*| < k$.  Since $u \in X$ and
  $v \in X$, \Cref{alg:split-vc}(line~\ref{line:pair-exists})  returns
  a vertex-cut of size at most $k$. 

Otherwise, $X   \cap L^* = \emptyset$ or $X  \cap
R^* = \emptyset$. Since $|S^*| < k,$ and $|X| = k$, either  $X
\cap L^* = \emptyset$ or $X \cap R^* = \emptyset$. Now, we
assume WLOG that $X  \cap R^* = \emptyset$.  This means   $X \subseteq L^*
\sqcup S^*$. Let $G'$ be the graph as defined in
\Cref{alg:split-vc} (line~\ref{line:newg}). Let $\kappa' = \min_{v \in S}
\kappa_{G'}(s,v)$. 

We claim that  $\kappa' \leq  \kappa_G(x,y)$.  Recall that  $(L^*, S^*, R^*)$ is a separation triple such that $x \in S \cap L^*, y \in  S \cap R^*$ and $|S^*| = \kappa_G(x,y)$.  We show that $S^*$ is an
  $(s,y)$-vertex-cut in $G'$. Note that $s$ is a new vertex in $G'$,
  and $y \in S$.    Since  $N_{G'}(s) = X \subseteq L^* \sqcup S^*$,   the new edges do not join $L^*$ and $R^*$. Also, $y \in R^*$. Hence, $(L^* \cup \{ s \}, S^*, R^*)$
  is a separation triple in $G'$ where $s$ and $y$ belong to different
  partitions. Therefore, $S^*$ is an  $(s,y)$-vertex-cut in $G'$, and
  we have $\kappa_{G'}(s,y) \leq |S^*| = \kappa_G(x,y)$. Therefore,
  $\kappa' \leq  \kappa_G(x,y) < k$.

Therefore, by \Cref{pro:get-vcut}, \Cref{alg:split-vc}(line~\ref{line:vcut2}) outputs correctly a $(u,v)$-vertex-cut in $G$ of size $< k $ where $u \in S$ and $v\in S$. %

\end{proof}

We show the last part. 
\begin{lemma} \label{lem:split-vc-correct2}
Suppose $\kappa_G(x,y) \geq k$ for all $x \in S$ and $y \in S$. \Cref{alg:split-vc} returns the symbol $\perp$ at line~\ref{line:last-line}. 
\end{lemma}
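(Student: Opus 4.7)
The plan is to show that under the hypothesis, the algorithm does not return at either of the two early-exit points (line~\ref{line:pair-exists} and line~\ref{line:vcut2}), so it must reach line~\ref{line:last-line} and return $\perp$.

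First I would dispose of the check at line~\ref{line:first-loop}. By construction $X \subseteq S$ (line~\ref{line:X}), so any pair $x, y \in X$ is also a pair in $S$. The hypothesis $\kappa_G(x,y) \geq k$ for all $x, y \in S$ therefore applies to every pair in $X$, so the minimum in line~\ref{line:first-loop} is at least $k$ and the algorithm does not return at line~\ref{line:pair-exists}.

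Next I would handle the check at line~\ref{line:check-newg} via contraposition, using \Cref{pro:get-vcut} as a black box. Suppose for contradiction that $\kappa_{G'}(s,v) < k$ for some $v \in S$; equivalently, $\min_{v \in S}\kappa_{G'}(s,v) < k$, so the condition at line~\ref{line:check-newg} holds. Then \Cref{pro:get-vcut} transforms a minimum $(s,v)$-vertex-cut of $G'$ of size smaller than $k$ into a $(u,w)$-vertex-cut of $G$ of the same size, where $u \in S$ and $w \in S$. This gives $\kappa_G(u,w) < k$ with $u, w \in S$, contradicting the hypothesis of the lemma. Hence the condition at line~\ref{line:check-newg} is false, and the algorithm does not return at line~\ref{line:vcut2}.

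Since neither conditional branch fires, the algorithm proceeds to line~\ref{line:last-line} and returns $\perp$, as required. The argument is essentially a bookkeeping exercise; there is no real obstacle because the nontrivial combinatorial content (that a small $(s,v)$-separator in $G'$ descends to a small separator inside $S$) has already been isolated in \Cref{pro:get-vcut}.
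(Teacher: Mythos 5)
Your proof is correct and follows essentially the same route as the paper: dismiss the check at line~\ref{line:first-loop} because $X \subseteq S$, then argue by contradiction via \Cref{pro:get-vcut} that the check at line~\ref{line:check-newg} cannot fire, forcing the algorithm to reach line~\ref{line:last-line}. The only cosmetic difference is that you spell out the $X \subseteq S$ justification that the paper dismisses with the word ``clearly.''
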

\begin{proof}
Clealy, \Cref{alg:split-vc} never returns a vertex-cut at line~\ref{line:pair-exists}.   Let $\kappa' = \min_{v \in S}
\kappa_{G'}(s,v)$. It remains to show that $\kappa' \geq k$. Suppose $\kappa'\ < k$. Let $S'$ be the corresponding $(s,v)$-vertex-cut in $G'$ where $v \in S$. By \Cref{pro:get-vcut}, $S'$ is also a $(u,v)$-vertex-cut in $G$ where $u \in S$ and $v \in S$. Therefore, $\kappa_G(u,v) < k$, which is a contradiction. 
Therefore,  \Cref{alg:split-vc} never returns a vertex-cut at line~\ref{line:vcut2}, and correctly returns the symbol $\perp$ at line~\ref{line:last-line}. 

\end{proof}

\begin{lemma}  \label{lem:split-vc-runtime}
\Cref{alg:split-vc} terminates in $O(mk(|S|+k^2))$ time. 
\end{lemma}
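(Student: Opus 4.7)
The plan is simply to walk through \Cref{alg:split-vc} line by line and account for the cost of each step, using the fact that $k$-bounded max-flow (i.e., deciding whether $\kappa_G(x,y) < k$ and, if so, outputting the cut) can be done deterministically in $O(mk)$ time via augmenting-path max-flow (e.g., Even--Tarjan~\cite{EvenT75}): each augmenting path costs $O(m)$ and we stop after $k$ augmentations, either finding a cut of size less than $k$ or certifying $\kappa_G(x,y) \ge k$.

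First I would handle the cheap preprocessing: picking the subset $X \subseteq S$ with $|X| = k$ in line~\ref{line:X} costs $O(k)$, and constructing $G'$ in line~\ref{line:newg} by adding the single vertex $s$ together with $k$ edges costs $O(m + k)$. Next, the first loop at line~\ref{line:first-loop} ranges over all pairs $(x,y) \in X \times X$, giving $O(k^2)$ pairs; each pair invokes a $k$-bounded max-flow computation on $G$ at a cost of $O(mk)$, so the total cost of this loop is $O(mk \cdot k^2) = O(mk^3)$. Finally, the check at line~\ref{line:check-newg} iterates over $v \in S$ and computes $\kappa_{G'}(s,v)$ up to value $k$; this is $|S|$ invocations of $k$-bounded max-flow on $G'$, each of which has $O(m)$ edges and costs $O(mk)$, for a total of $O(mk|S|)$.

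Summing all the contributions gives
\[
O(k) + O(m+k) + O(mk^3) + O(mk|S|) \;=\; O\!\left(mk\bigl(|S| + k^2\bigr)\right),
\]
which is the claimed bound. There is no real obstacle here: the only item that requires care is justifying the $O(mk)$ cost per $k$-bounded connectivity query, but this is a standard consequence of augmenting paths and the assumption in \Cref{def:splitvc} that $|S| \ge k$, which guarantees we are always comparing against the threshold $k$ rather than the (potentially much larger) true connectivity value.
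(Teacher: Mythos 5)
Your proof is correct and follows essentially the same approach as the paper's: bound each $k$-bounded max-flow query by $O(mk)$ via augmenting paths, count $O(k^2)$ queries in the first loop and $|S|$ in the second, and sum. The only cosmetic difference is that you also explicitly account for the negligible preprocessing cost of choosing $X$ and constructing $G'$, which the paper omits.
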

\begin{proof}
Given $x,y$, we can compute an $(x,y)$-vertex-cut such that  $\kappa_G(x,y) < k$ or certify that $\kappa_G(x,y) \geq k$ in $O(mk)$ time using Ford-Fulkerson algorithm.  We run at most $|X|^2 = k^2$ calls of $\kappa(x,y)$ at line~\ref{line:first-loop}, and at most $|S|$ calls of $\kappa(x,y)$ at line~\ref{line:check-newg}. Each call of $\kappa(x,y)$ takes $O(mk)$. Therefore, the running time for \Cref{alg:split-vc} follows.%
\end{proof}

\begin{proof}[Proof of \Cref{thm:split-vc-runtime}]
This follows from \Cref{alg:split-vc} is correct by \Cref{lem:split-vc-correct1}, and \Cref{lem:split-vc-correct2} for the two possible cases. The running time follows from \Cref{lem:split-vc-runtime}. 
\end{proof}

\section{Expander Split}
\label{sec:expander_split}
\begin{definition}
[Expander Split]\label{def:exp split}Let $G=(V,E)$ be a graph.
The \emph{expander split} graph $G'$ of $G$ is obtained from $G$
by the following operations 
\begin{itemize}
\item For each node $u\in V$, we replace $u$ by a constant-degree expander
$X_{u}$ with $\deg(u)$ nodes. We call $X_{u}$ a \emph{super-node}
in $G'$.
\item Let $E_{u}=\{e_{u,1},\dots,e_{u,\deg(u)}\}$ denote the set of edges
in $G$ incident to $u$. For each $e=(u,v)$, suppose $e=e_{u,i}=e_{v,j}$,
we create add an edge between the $i$-th of node $X_{u}$ and the
$j$-th node of $X_{v}$.
\end{itemize}
\end{definition}

\begin{proposition}
\label{prop:exp split}For any $m$-edge graph $G=(V,E)$, the expander
split graph $G'=(V',E')$ of $G$ has the following properties
\begin{enumerate}
\item $G'$ has $O(m)$ vertices with constant maximum degree, and can be
obtained from $G$ in $O(m)$ time.
\item $\Phi_{G'}=\Theta(\Phi_{G})$.
\item Given a $\beta$-balanced cut $S$ in $G'$ where $\Phi_{G'}(S)\le\epsilon$
for some small enough constant $\epsilon<1$, then we can obtain in
$O(m)$ time a $\Omega(\beta)$-balanced cut $T$ in $G$ where $\Phi_{G}(T)=O(\Phi_{G'}(S))$.
\end{enumerate}
\end{proposition}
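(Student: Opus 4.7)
The plan is to handle the three items in turn, with Part 1 being almost by construction, Part 2 following from lifting/projecting cuts combined with the expander property, and Part 3 being the key technical step where one needs to track balance carefully under rounding.

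For Part 1, I count: $|V'| = \sum_{u \in V} \deg_G(u) = 2m$, each super-node $X_u$ has constant degree by the expander property, and each vertex in $V(X_u)$ has exactly one additional incident edge going to another super-node, so the maximum degree of $G'$ is bounded by a constant. The construction requires an explicit constant-degree expander on $\deg_G(u)$ vertices for each $u$; such expanders (e.g., Margulis or Lubotzky--Phillips--Sarnak graphs) can be produced deterministically in linear time per super-node, giving $O(m)$ total.

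For the upper bound direction of Part 2, given any cut $S \subseteq V$ in $G$, I will lift it to $S' = \bigcup_{u \in S} V(X_u)$ in $G'$. By construction, $E_{G'}(S', \bar{S'})$ is in bijection with $E_G(S, \bar S)$ (these are exactly the inter-super-node edges crossing the partition), and $\vol_{G'}(S') = \Theta(\vol_G(S))$ since each vertex of $X_u$ has $\Theta(1)$ degree in $G'$ while $X_u$ contributes exactly $\deg_G(u)$ vertices. Hence $\Phi_{G'}(S') = \Theta(\Phi_G(S))$, giving $\Phi_{G'} \le O(\Phi_G)$.

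The heart of the proof is the lower bound of Part 2, which is essentially Part 3. Given any cut $S \subseteq V'$ in $G'$ with $\Phi_{G'}(S) \le \epsilon$, define for each super-node $s_u := |S \cap V(X_u)|$ and set
\[
T := \bigl\{ u \in V : s_u \ge \deg_G(u)/2 \bigr\}.
\]
The key mechanism is that the expander property of each $X_u$ forces
\[
|E_{X_u}(S \cap V(X_u),\ V(X_u) \setminus S)| \ge \Omega\bigl(\min(s_u, \deg_G(u) - s_u)\bigr),
\]
so summing over $u$ and using that the total intra-super-node contribution is at most $|E_{G'}(S, \bar S)| \le \epsilon \cdot \vol_{G'}(S) = O(\epsilon m)$, we obtain
\[
\sum_{u} \min(s_u, \deg_G(u) - s_u) \le O(\epsilon m).
\]
This quantity exactly measures the ``misclassification mass'' between $S$ and the lifted set $\tilde T := \bigcup_{u \in T} V(X_u)$. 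Since $G'$ has constant degree, the symmetric difference $|S \triangle \tilde T|$ changes the number of crossing edges by at most a constant factor of itself, so
\[
|E_{G'}(\tilde T, \bar{\tilde T})| \le |E_{G'}(S, \bar S)| + O(\epsilon m) = O(|E_{G'}(S, \bar S)|).
\]
By construction $|E_G(T, \bar T)| = |E_{G'}(\tilde T, \bar{\tilde T})|$, giving the desired edge-boundary bound.

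The delicate part, which I anticipate as the main obstacle, is showing that $T$ inherits the balance of $S$. From the identity $\vol_G(T) = \sum_{u \in T} \deg_G(u)$ and a routine rearrangement,
\[
\bigl|\,\vol_G(T) - |S|\,\bigr| \le \sum_{u} \min(s_u, \deg_G(u) - s_u) \le O(\epsilon m),
\]
so if $S$ is $\beta$-balanced then $|S| = \Omega(\beta m)$, and choosing $\epsilon$ a small enough constant (independent of $\beta$ as long as $\beta \gg \epsilon$) gives $\vol_G(T) = \Omega(\beta m)$; the symmetric calculation handles $\vol_G(\bar T)$. Combining the edge-boundary bound with this volume bound yields $\Phi_G(T) = O(\Phi_{G'}(S))$ and the $\Omega(\beta)$-balance, which also gives the missing direction $\Phi_G \le O(\Phi_{G'})$ of Part 2. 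The entire rounding procedure and the expander-based counting are clearly implementable in $O(m)$ time since the super-node structure is stored explicitly.
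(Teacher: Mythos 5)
Your overall strategy (direct rounding with a per-super-node threshold, using the expander property of each $X_u$ to bound the misclassification mass) is essentially the same mechanism as the paper, though the paper packages it as a general lemma about partitions into ``clumps'' (\Cref{prop:conductance preserve}) and then applies it with $\Phi_{G'}^{\mathrm{in}} = \Theta(1)$ and $\Phi_{G'}^{\mathrm{out}} = \Theta(\Phi_G)$. The paper's threshold for rounding is $\vol(S \cap V_i) > 2\vol(V_i \setminus S)$ rather than your majority rule $s_u \ge \deg_G(u)/2$, but this is cosmetic.

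There is, however, a genuine gap in how you bound the misclassification mass. You upper-bound $\sum_u \min(s_u, \deg_G(u) - s_u)$ by $O(\epsilon m)$ and then assert that $|E_{G'}(S,\bar S)| + O(\epsilon m) = O(|E_{G'}(S,\bar S)|)$. This inequality is false in general: $|E_{G'}(S,\bar S)| = \Phi_{G'}(S)\cdot\min\{\vol_{G'}(S),\vol_{G'}(\bar S)\}$ can be far below $\epsilon m$ when $S$ is a small, unbalanced set. The same looseness propagates into your balance argument, where you are forced to impose the extraneous side condition $\beta \gg \epsilon$. But the proposition must hold for every $\beta$-balanced cut, with $\epsilon$ a fixed small constant and $\beta$ possibly much smaller; your argument as written does not cover that regime. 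The fix is to retain the sharper bound that the expander property actually gives you: the intra-super-node crossing edges are a subset of $E_{G'}(S,\bar S)$, so
\[
\sum_u \min\bigl(s_u,\ \deg_G(u) - s_u\bigr) \le O\bigl(|E_{G'}(S,\bar S)|\bigr) = O\bigl(\epsilon\cdot\min\{\vol_{G'}(S),\vol_{G'}(\bar S)\}\bigr),
\]
which (taking WLOG $\vol_{G'}(S) \le \vol_{G'}(\bar S)$, and using that $G'$ has constant degree so $\vol_{G'}(S) = \Theta(|S|)$) gives $|S \triangle \tilde T| = O(\epsilon|S|)$. Then $|E_{G'}(\tilde T,\bar{\tilde T})| \le |E_{G'}(S,\bar S)| + O(\epsilon|S|) = O(|E_{G'}(S,\bar S)|)$ because $|E_{G'}(S,\bar S)| \ge \epsilon^{-1}\cdot\Phi_{G'}(S)\cdot\Theta(|S|)$ — wait, more directly, $|E_{G'}(S,\bar S)| \ge \Omega(\sum_u \min(s_u,\deg_G(u)-s_u))$ already, so the added crossing edges are $O(|E_{G'}(S,\bar S)|)$. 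Likewise $|\vol_G(T) - |S|| = O(\epsilon|S|)$ gives $\vol_G(T) = (1\pm O(\epsilon))|S|$, preserving $\Omega(\beta)$-balance for any $\beta$ once $\epsilon$ is a fixed small constant. This is exactly what the paper's clump lemma delivers; your proof needs this sharpening to close.
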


\Cref{prop:exp split} allows us to assume that we only work with a
graph with constant degree.

\subsection{Proof of \Cref{prop:exp split}}

We prove \Cref{prop:exp split} here. This part should be skipped in
the first read.
\begin{proposition}
\label{prop:explicit expander}
[Fast explicit expanders]Given any number $n$, there is a deterministic
algorithm with running time $O(n)$ that constructs a graph $H_{n}$
with $n$ vertices such that each vertex has degree at most 16, and
the conductance $\Phi_{H_{n}}=\Omega(1)$. 
\end{proposition}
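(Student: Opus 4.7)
The plan is to bootstrap a known explicit constant-degree expander family (due to Margulis, or Lubotzky--Phillips--Sarnak) and then handle the mismatch in vertex count by a local patching operation that preserves both bounded degree and constant conductance.

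First I would invoke an explicit expander family $\{X_N\}_{N \in \mathcal{S}}$, where $\mathcal{S} \subseteq \mathbb{N}$ is an infinite index set, each $X_N$ is $d_0$-regular for a fixed $d_0 \le 8$, and $\Phi_{X_N} \ge \phi_0$ for a universal constant $\phi_0 > 0$. Both the Margulis construction on $\mathbb{Z}_p \times \mathbb{Z}_p$ (giving $d_0 = 8$ regular graphs on $N = p^2$ vertices for any prime $p$) and the LPS Ramanujan graphs are deterministic, and the $i$-th adjacency list of $X_N$ can be written down in $O(1)$ time from a closed-form rule. In particular, $X_N$ can be built in $O(N)$ time, once $N \in \mathcal{S}$ is known. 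Furthermore, by combining the Margulis family with Bertrand's postulate on the existence of a prime in $[k, 2k]$, the set $\mathcal{S}$ can be chosen so that for every $n \ge n_0$ (some absolute constant) there exists $N \in \mathcal{S}$ with $n \le N \le 2n$.

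Second, for the input $n$, I would split into two regimes. If $n < n_0$ I simply output a precomputed constant-degree expander (e.g.\ the complete graph on $n$ vertices, which has conductance $\Omega(1)$ and degree $< 16$). Otherwise, let $N$ be the smallest element of $\mathcal{S}$ with $N \ge n$; build $X_N$ explicitly in $O(n)$ time, then shrink $X_N$ to exactly $n$ vertices by a vertex-merging step: greedily choose a set of $N - n$ disjoint pairs of vertices of $X_N$ and contract each pair into a single vertex (keeping all incident edges, deduplicating multi-edges and self-loops at the end). Since $N - n \le n$, such a pairing exists, and the whole procedure runs in $O(n)$ time.

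The analysis has two parts. For the degree bound, each vertex of $X_N$ is involved in at most one merge and so its degree at most doubles, from $d_0 \le 8$ to at most $2 d_0 \le 16$. For the conductance bound, any cut $(S, \bar S)$ in the merged graph $H_n$ pulls back to a cut $(\widetilde S, \overline{\widetilde S})$ in $X_N$ with the same edge boundary, and with $\mathrm{vol}_{X_N}(\widetilde S) \le 2\,\mathrm{vol}_{H_n}(S)$ (since degrees only grow, by a factor of at most $2$). The expansion bound $\Phi_{X_N} \ge \phi_0$ then yields $\Phi_{H_n}(S) \ge \phi_0 / 2$, so $\Phi_{H_n} = \Omega(1)$.

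The main obstacle is essentially bookkeeping: picking the index set $\mathcal{S}$ with small enough gaps so that $N \le 2n$ always holds, and verifying that merging genuinely preserves conductance up to a factor of $2$ rather than destroying it. Both are standard once stated; the explicit-construction ingredient itself is deferred to the cited works \cite{Margulis88,LubotzkyPS88}, which already establish the deterministic $O(N)$-time constructibility of each $X_N$.
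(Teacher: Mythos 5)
Your proposal follows essentially the same route as the paper: take the explicit Margulis/Gabber--Galil expander, round $n$ up to a size in the index set, and shrink by contracting disjoint pairs of vertices, deducing bounded degree (at most $2\cdot 8=16$) and $\Omega(1)$ conductance. Two small remarks on where you add work or where a detail is off. First, the restriction to prime $p$ and the appeal to Bertrand's postulate are unnecessary: the Gabber--Galil construction lives on $\mathbb{Z}_k\times\mathbb{Z}_k$ for \emph{every} integer $k$, so the paper simply takes the least $k$ with $k^2\ge n$ and has $k^2<2n$ for free (since $(k-1)^2<n$ and $k\ge 4$ once $n\ge 10$), with tiny $n$ handled in constant time. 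Second, your volume comparison is stated in the wrong direction: contracting a pair of vertices never \emph{increases} volume, so the relevant fact is $\vol_{H_n}(S)\le \vol_{X_N}(\widetilde S)$, not $\vol_{X_N}(\widetilde S)\le 2\vol_{H_n}(S)$. Combined with the observation that every boundary edge of $S$ in $H_n$ comes from a boundary edge of $\widetilde S$ in $X_N$, this gives $\Phi_{H_n}\ge\Phi_{X_N}$ directly, with no constant lost, which is exactly what the paper asserts. Your argument still reaches $\Phi_{H_n}=\Omega(1)$, so the proof stands once that inequality is flipped.
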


\begin{proof}
We assume that $n\ge10$, otherwise $H_{n}$ can be constructed in
constant time. The expander construction by Margulis, Gabber and Galil
is as follows. For any number $k$, the $H'_{k^{2}}$ is a vertex
set $\mathbb{Z}_{k}\times\mathbb{Z}_{k}$ where $\mathbb{Z}_{k}=\mathbb{Z}/k\mathbb{Z}$.
For each vertex $(x,y)\in\mathbb{Z}_{k}\times\mathbb{Z}_{k}$, its
eight adjacent vertices are $(x\pm2y,y),(x\pm(2y+1),y),(x,y\pm2x),(x,y\pm(2x+1))$.
In \cite{GabberG81}, it is shown that $\Phi_{H'_{k^{2}}}=\Omega(1)$.

Let $k$ be such that $(k-1)^{2}<n\le k^{2}$. As $n\ge10$, so $k\ge4$,
and so $(k-1)^{2}\ge k^{2}/2$. So we can contract disjoint pairs
of vertices in $H'_{k^{2}}$ and obtain a graph $H_{n}$ with $n$
nodes where each node has degree between $8$ and $16$. Note that
$\Phi_{H_{n}}\ge\Phi_{H'_{k^{2}}}$. It is clear that the construction
takes $O(n)$ time.
\end{proof}
Let $\cP=\{V_{1},\dots V_{k}\}$ be a partition of $V$. We say that
a cut $S$ \emph{respects} $\cP$ if for each $i$, either $V_{i}\subseteq S$
or $V_{i}\cap S=\emptyset$ (i.e. no overlapping). Let $\Phi_{G}^{out}=\min_{S\text{ respects }\text{\ensuremath{\P}}}\Phi_{G}(S)$.
Let $\Phi_{G}^{in}=\min_{i}\Phi_{G[V_{i}]}$. We say that $V_{i}$
is a \emph{clump }if, for each $u\in V_{i}$, $\deg_{G[V_{i}]}(u)=\Theta(\deg_{G}(u))$.
In particular, for every $S\subset V_{i}$, we have $\vol_{G[V_{i}]}(S)=\Theta(\vol_{G}(S))$.%

\begin{lemma}
\label{prop:conductance preserve}Suppose that $\cP=\{V_{1},\dots V_{k}\}$
is a partition of $V$ where each $V_{i}$ is a clump. We have
\begin{enumerate}
\item $\Phi_{G}=\Omega(\Phi_{G}^{out}\cdot\Phi_{G}^{in})$.
\item Given a $\beta$-balanced cut $S$ where $\Phi_{G}(S)\le\epsilon\Phi_{G}^{in}$
for some small enough constant $\epsilon<1$, then we can obtain in
$O(\vol_{G}(V))$ time a $\Omega(\beta)$-balanced cut $T$ respecting
 where $\Phi_{G}(T)=O(\Phi_{G}(S)/\Phi_{G}^{in})$.
\end{enumerate}
\end{lemma}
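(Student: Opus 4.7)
The plan is to introduce, for any cut $S \subseteq V$, a canonical ``rounded'' cut $T := \bigcup_{i : \vol_{G[V_i]}(S_i) > \vol_{G[V_i]}(V_i)/2} V_i$, where $S_i := S \cap V_i$. By construction $T$ respects $\cP$, and $T$ can be produced in time $O(\vol_G(V))$ by a single scan over the partition. Both parts of the lemma will fall out of two complementary inequalities that compare $|E_G(S, V \setminus S)|$ with $|E_G(T, V \setminus T)|$ through the symmetric difference $S \triangle T$.

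The first inequality I aim to prove is $|E_G(S, V \setminus S)| \ge \Omega(\Phi_G^{in}) \cdot \vol_G(S \triangle T)$. For each $i$ with $V_i \not\subseteq T$ we have $\vol_{G[V_i]}(S_i) \le \vol_{G[V_i]}(V_i)/2$, so the conductance bound inside the clump gives $|E_{G[V_i]}(S_i, V_i \setminus S_i)| \ge \Phi_{G[V_i]} \cdot \vol_{G[V_i]}(S_i) = \Omega(\Phi_G^{in}) \cdot \vol_G(S_i)$ by the clump property. A symmetric statement holds when $V_i \subseteq T$, with $V_i \setminus S_i$ as the smaller side. Summing these internal cuts (all of which sit inside $E_G(S, V \setminus S)$) yields the claim, since $\vol_G(S \triangle T) = \sum_{V_i \not\subseteq T} \vol_G(S_i) + \sum_{V_i \subseteq T} \vol_G(V_i \setminus S_i)$. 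The second inequality is $|E_G(T, V \setminus T)| \le |E_G(S, V \setminus S)| + \vol_G(S \triangle T)$: any edge in $E_G(T, V \setminus T) \setminus E_G(S, V \setminus S)$ must have both endpoints on the same $S$-side but on opposite $T$-sides, so at least one endpoint lies in $S \triangle T$.

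For part (1), take any $S$ with $\vol_G(S) \le \vol_G(V)/2$. If $\vol_G(S \triangle T) \ge \vol_G(S)/2$, the first inequality already gives $\Phi_G(S) = \Omega(\Phi_G^{in}) \ge \Omega(\Phi_G^{in} \cdot \Phi_G^{out})$. Otherwise $\vol_G(T) \in [\vol_G(S)/2,\, 3 \vol_G(S)/2]$ and thus $\min\{\vol_G(T), \vol_G(V \setminus T)\} \ge \vol_G(S)/2$; because $T$ respects $\cP$, this gives $|E_G(T, V \setminus T)| \ge \Phi_G^{out} \cdot \vol_G(S)/2$. Taking a convex combination of the two inequalities (weighted so that the $\vol_G(S \triangle T)$ contributions cancel) yields $|E_G(S, V \setminus S)| = \Omega(\Phi_G^{in}) \cdot |E_G(T, V \setminus T)| = \Omega(\Phi_G^{in} \cdot \Phi_G^{out}) \cdot \vol_G(S)$. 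For part (2), the hypothesis $\Phi_G(S) \le \epsilon \Phi_G^{in}$ substituted into the first inequality forces $\vol_G(S \triangle T) \le O(\epsilon) \cdot \vol_G(S)$, so for $\epsilon$ small enough we are in the second case above and $T$ is $\Omega(\beta)$-balanced; the second inequality then gives $|E_G(T, V \setminus T)| = O(|E_G(S, V \setminus S)| / \Phi_G^{in})$, hence $\Phi_G(T) = O(\Phi_G(S)/\Phi_G^{in})$.

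The main subtlety, though not a serious one, is that the rounding threshold must be phrased in terms of $\vol_{G[V_i]}(V_i)$ rather than $\vol_G(V_i)$: otherwise the within-cluster conductance bound cannot be applied on the ``large side'' of some $V_i$ without a constant slack that would propagate into the final exponent. The clump property then allows clean interchange between internal and global volumes in all of the accounting above.
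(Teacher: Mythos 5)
Your proof is correct and takes essentially the same route as the paper's. Both round $S$ to the union $T$ of the parts where $S$ holds a majority of the volume, bound the volume moved (your $\vol_G(S\triangle T)$; the paper's $\vol_G(S-T)$ plus $\vol_G(T-S)$) via the within-clump conductance bound $\Phi_G^{\mathrm{in}}$ and the clump property, relate $\delta_G(T)$ to $\delta_G(S)$ plus the moved volume, and then case-split on whether $S$ and $T$ are close. Your variant uses the cleaner internal-volume threshold $\vol_{G[V_i]}(S_i) > \vol_{G[V_i]}(V_i)/2$ where the paper uses a $2:1$ global-volume threshold and compensates with the clump constants, and you fold the paper's two deviation bounds into a single $\vol_G(S\triangle T)$; these are presentational differences only.
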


\begin{proof}
Below, we write $a\lesssim b$ to denote $a=O(b)$. Consider any $\beta$-balanced
cut $(S,V-S)$ in $G$ where $\vol_{G}(S)\le\vol_{G}(V-S)$. We will
prove that either 1) $\Phi_{G}(S)=\Omega(\Phi_{G}^{in})$, otherwise
we can obtain in $O(\vol_{G}(V))$ time a $\Omega(\beta)$-balanced
cut $(T,V-T)$ respecting $\cP$ such that $\Phi_{G}(T)=O(\Phi_{G}(S)/\Phi_{G}^{in})$.
Observe that this implies both the first part of the lemma, i.e. $\Phi_{G}=\Omega(\Phi_{G}^{out}\cdot\Phi_{G}^{in})$,
and also the second part.

Let $\cP'=\{V_{i}\mid0<\vol_{G}(S\cap V_{i})<2\vol_{G}(V_{i}-S)\}$
and $\cP''=\{V_{i}\mid\vol_{G}(S\cap V_{i})>2\vol_{G}(V_{i}-S)\}$.
Note that as $V_{i}$ is a clump, $\vol_{G[V_{i}]}(S\cap V_{i})=\Theta(\vol_{G}(S\cap V_{i}))$
and $\vol_{G[V_{i}]}(V_{i}-S)=\Theta(\vol_{G}(V_{i}-S))$. So $\delta_{G[V_{i}]}(V_{i}\cap S)\ge\Omega(\Phi_{G}^{in}\cdot\vol_{G[V_{i}]}(V_{i}\cap S))$
for each $V_{i}\in\cP'$ and $\delta_{G[V_{i}]}(V_{i}-S)\ge\Omega(\Phi_{G}^{in}\cdot\vol_{G[V_{i}]}(V_{i}-S))$
for each $V_{i}\in\cP''$. Let $T=\bigcup_{V_{i}\in\cP''}V_{i}$, $\overline{T}=V-T$
and $\overline{S}=V-S$. There are two cases.

In the first case, suppose $\vol_{G}(S-T)\ge\vol_{G}(S)/2$. Then,
we have
\begin{align*}
\delta_{G}(S) & \ge\sum_{V_{i}\in\cP'}\delta_{G[V_{i}]}(S\cap V_{i}).\\
 & \ge\Omega(\Phi_{G}^{in})\cdot\sum_{V_{i}\in\cP'}\vol_{G[V_{i}]}(S\cap V_{i})\\
 & \ge\Omega(\Phi_{G}^{in})\cdot\sum_{V_{i}\in\cP'}\vol_{G}(S\cap V_{i})\\
 & =\Omega(\Phi_{G}^{in})\cdot\vol_{G}(S-T)\\
 & \ge\Omega(\Phi_{G}^{in})\cdot\vol_{G}(S)/2
\end{align*}
So $\Phi_{G}(S)=\Omega(\Phi_{G}^{in})$.

In the second case, suppose $\vol_{G}(S-T)\le\epsilon\vol_{G}(S)$.
We will show that (1) $\delta_{G}(T)=O(\delta_{G}(S)/\Phi_{G}^{in})$,
(2) $\vol_{G}(T)=\Omega(\vol_{G}(S))$, and (3) $\vol_{G}(\overline{T})=\Omega(\vol_{G}(\overline{S}))$.
This would imply 
\begin{align*}
\Phi_{G}(T) & =\frac{\delta_{G}(T)}{\min\{\vol_{G}(T),\vol_{G}(\overline{T})}\\
 & \lesssim\frac{\delta_{G}(S)/\Phi_{G}^{in}}{\min\{\vol_{G}(S),\vol_{G}(\overline{S})\}}\\
 & =\Phi_{G}(S)/\Phi_{G}^{in}.
\end{align*}
and that $T$ is a $\Omega(\beta$)-balanced cut. Now, it remains
to prove the three claims.
\begin{claim}
We have the following:
\end{claim}

\begin{itemize}
\item $\delta_{G}(T)=O(\delta_{G}(S)/\Phi_{G}^{in})$, 
\item $\vol_{G}(S)=O(\vol_{G}(T))$, and
\item $\vol_{G}(\overline{S})=O(\vol_{G}(\overline{T}))$. 
\end{itemize}
\begin{proof}
It is convenient to bound $\vol_{G}(T-S)$ and $\vol_{G}(S-T)$ first.
We have 
\begin{align*}
\vol_{G}(T-S) & =\sum_{V_{i}\in\cP''}\vol_{G}(V_{i}-S)\\
 & \lesssim\sum_{V_{i}\in\cP''}\vol_{G[V_{i}]}(V_{i}-S)\\
 & \lesssim\sum_{V_{i}\in\cP''}\delta_{G[V_{i}]}(V_{i}-S)/\Phi_{G}^{in}\\
 & \le\delta_{G}(V-S)/\Phi_{G}^{in}=\delta_{G}(S)/\Phi_{G}^{in}.
\end{align*}
Next, 
\begin{align*}
\vol_{G}(S-T) & =\sum_{V_{i}\in\cP'}\vol_{G}(S\cap V_{i})\\
 & \lesssim\sum_{V_{i}\in\cP'}\vol_{G[V_{i}]}(S\cap V_{i})\\
 & \lesssim\sum_{V_{i}\in\cP'}\delta_{G[V_{i}]}(S)/\Phi_{G}^{in}\\
 & \le\delta_{G}(S)/\Phi_{G}^{in}.
\end{align*}
Both of the two bounds above exploit the fact that $V_{i}$ is a clump.
From this, we obtain the first part of the claim:

\begin{align*}
\delta_{G}(T) & \le\delta_{G}(S)+\vol_{G}(T-S)+\vol_{G}(S-T)\\
 & =O(\delta_{G}(S)/\Phi_{G}^{in}).
\end{align*}
For the second part, we have 
\begin{align*}
\vol_{G}(S) & \le\vol_{G}(T)+\vol_{G}(S-T)\\
 & \le\vol_{G}(T)+\vol_{G}(S)/2
\end{align*}
and so $\vol_{G}(S)=O(\vol_{G}(T))$. For the last part, first observe
that
\begin{align*}
\vol_{G}(T-S) & =\sum_{V_{i}\in\cP''}\vol_{G}(V_{i}-S)\\
 & <\sum_{V_{i}\in\cP''}\vol_{G}(V_{i}\cap S)/2\\
 & =\vol_{G}(T\cap S)/2.
\end{align*}
So we have
\begin{align*}
\vol_{G'}(\overline{S}) & =\vol_{G}(\overline{T})+\vol_{G}(T-S)\\
 & <\vol_{G}(\overline{T})+\vol_{G}(T\cap S)/2\\
 & \le\vol_{G}(\overline{T})+\vol_{G}(S)/2\\
 & \le\vol_{G}(\overline{T})+\vol_{G}(\overline{S})/2,
\end{align*}
and so $\vol_{G}(\overline{S})=O(\vol_{G}(\overline{T}))$.
\end{proof}
\end{proof}
Now we are ready to prove \Cref{prop:exp split}.
\begin{proof}
[Proof of \Cref{prop:exp split}]For 1), property follow immediately
from the definition of expander split graph and \Cref{prop:explicit expander}.
For 2), $\Phi_{G'}\le\Phi_{G}$ by the construction. To show that
$\Phi_{G'}=\Omega(\Phi_{G})$, let $\cP$$=\{X_{u}\}_{u\in V}$ be
a partition of $V'$. For any cut $S'$ in $G'$ respecting $\cP$,
there is a corresponding cut $S$ in $G$. Note that $\delta_{G'}(S')=\delta_{G}(S)$
and $\vol_{G'}(S')=\Theta(\vol_{G}(S))$. So $\Phi_{G'}^{out}=\Theta(\Phi_{G})$.
By \Cref{prop:explicit expander}, $\Phi_{G'}^{in}=\Theta(1)$. Note
that each node $u$ in $G'$ is such that $\deg_{G'}(u)=\Theta(\deg_{X_{u}}(u))$.
In particular, each super-node $X_{u}$ is a clump in $G'$. By \Cref{prop:conductance preserve}
we have $\Phi_{G'}=\Omega(\Phi_{G'}^{out}\Phi_{G}^{in})=\Omega(\Phi_{G})$.
For 3), this follows from \Cref{prop:conductance preserve} as well.
\end{proof}

\newcommand{\etalchar}[1]{$^{#1}$}


\begin{thebibliography}{BFNW91}

\bibitem[ABS15]{AroraBS15}
Sanjeev Arora, Boaz Barak, and David Steurer.
\newblock Subexponential algorithms for unique games and related problems.
\newblock {\em J. {ACM}}, 62(5):42:1--42:25, 2015.

\bibitem[ACL06]{AndersenCL06}
Reid Andersen, Fan R.~K. Chung, and Kevin~J. Lang.
\newblock Local graph partitioning using pagerank vectors.
\newblock In {\em {FOCS}}, pages 475--486. {IEEE} Computer Society, 2006.

\bibitem[ACL08]{AndersenCL08}
Reid Andersen, Fan R.~K. Chung, and Kevin~J. Lang.
\newblock Local partitioning for directed graphs using pagerank.
\newblock {\em Internet Mathematics}, 5(1):3--22, 2008.

\bibitem[AHU76]{aho1976finding}
Alfred~V Aho, John~E Hopcroft, and Jeffrey~D Ullman.
\newblock On finding lowest common ancestors in trees.
\newblock {\em SIAM Journal on computing}, 5(1):115--132, 1976.

\bibitem[AK16]{AroraK16}
Sanjeev Arora and Satyen Kale.
\newblock A combinatorial, primal-dual approach to semidefinite programs.
\newblock {\em J. {ACM}}, 63(2):12:1--12:35, 2016.

\bibitem[AL08]{AndersenL08}
Reid Andersen and Kevin~J. Lang.
\newblock An algorithm for improving graph partitions.
\newblock In {\em {SODA}}, pages 651--660. {SIAM}, 2008.

\bibitem[Alo86]{Alon86}
Noga Alon.
\newblock Eigenvalues and expanders.
\newblock {\em Combinatorica}, 6(2):83--96, 1986.

\bibitem[AP09]{AndersenP09}
Reid Andersen and Yuval Peres.
\newblock Finding sparse cuts locally using evolving sets.
\newblock In {\em {STOC}}, pages 235--244. {ACM}, 2009.

\bibitem[ARV09]{AroraRV09}
Sanjeev Arora, Satish Rao, and Umesh~V. Vazirani.
\newblock Expander flows, geometric embeddings and graph partitioning.
\newblock {\em J. {ACM}}, 56(2):5:1--5:37, 2009.

\bibitem[BDD{\etalchar{+}}82]{BeckerDDHKKMNRW82}
Michael Becker, W.~Degenhardt, J{\"{u}}rgen Doenhardt, Stefan Hertel, Gerd
  Kaninke, W.~Kerber, Kurt Mehlhorn, Stefan N{\"{a}}her, Hans Rohnert, and
  Thomas Winter.
\newblock A probabilistic algorithm for vertex connectivity of graphs.
\newblock {\em Inf. Process. Lett.}, 15(3):135--136, 1982.

\bibitem[BFNW91]{BabaiFNW91}
Laszlo Babai, Lance Fortnow, Noam Nisan, and Avi Wigderson.
\newblock {BPP} has subexponential time simulations unless {EXPTIME} has
  publishable proofs.
\newblock In {\em [1991] Proceedings of the Sixth Annual Structure in
  Complexity Theory Conference}, pages 213--219. IEEE, 1991.

\bibitem[BMS{\etalchar{+}}16]{BulucMSSS16}
Aydin Buluc, Henning Meyerhenke, Ilya Safro, Peter Sanders, and Christian
  Schulz.
\newblock Recent advances in graph partitioning.
\newblock In {\em Algorithm Engineering}, pages 117--158. Springer, 2016.

\bibitem[BVZ99]{BoykovVZ99}
Yuri Boykov, Olga Veksler, and Ramin Zabih.
\newblock Fast approximate energy minimization via graph cuts.
\newblock In {\em Proceedings of the Seventh IEEE International Conference on
  Computer Vision}, volume~1, pages 377--384. IEEE, 1999.

\bibitem[CGL{\etalchar{+}}19]{ChuzhoyGLNPS19:arxiv}
Julia Chuzhoy, Yu~Gao, Jason Li, Danupon Nanongkai, Richard Peng, and
  Thatchaphol Saranurak.
\newblock A deterministic algorithm for balanced cut with applications to
  dynamic connectivity, flows, and beyond.
\newblock {\em CoRR}, abs/1910.08025, 2019.
\newblock Available at: https://arxiv.org/abs/1910.08025.

\bibitem[CHI{\etalchar{+}}17]{ChechikHILP17}
Shiri Chechik, Thomas~Dueholm Hansen, Giuseppe~F. Italiano, Veronika
  Loitzenbauer, and Nikos Parotsidis.
\newblock Faster algorithms for computing maximal 2-connected subgraphs in
  sparse directed graphs.
\newblock In {\em Proceedings of the Twenty-Eighth Annual {ACM-SIAM} Symposium
  on Discrete Algorithms, {SODA} 2017, Barcelona, Spain, Hotel Porta Fira,
  January 16-19}, pages 1900--1918, 2017.

\bibitem[CIS18]{CarmosinoIS18}
Marco~L. Carmosino, Russell Impagliazzo, and Manuel Sabin.
\newblock Fine-grained derandomization: From problem-centric to
  resource-centric complexity.
\newblock In {\em 45th International Colloquium on Automata, Languages, and
  Programming, {ICALP} 2018, July 9-13, 2018, Prague, Czech Republic}, pages
  27:1--27:16, 2018.

\bibitem[CMP16]{CohenMP16}
Michael~B. Cohen, Cameron Musco, and Jakub~W. Pachocki.
\newblock Online row sampling.
\newblock In {\em Approximation, Randomization, and Combinatorial Optimization.
  Algorithms and Techniques, {APPROX/RANDOM} 2016, September 7-9, 2016, Paris,
  France}, pages 7:1--7:18, 2016.

\bibitem[CR94]{CheriyanR94}
Joseph Cheriyan and John~H. Reif.
\newblock Directed \emph{s-t} numberings, rubber bands, and testing digraph
  \emph{k}-vertex connectivity.
\newblock {\em Combinatorica}, 14(4):435--451, 1994.
\newblock Announced at SODA'92.

\bibitem[CT91]{CheriyanT91}
Joseph Cheriyan and Ramakrishna Thurimella.
\newblock Algorithms for parallel k-vertex connectivity and sparse certificates
  (extended abstract).
\newblock In {\em {STOC}}, pages 391--401. {ACM}, 1991.

\bibitem[DHNS19]{DagaHNS19}
Mohit Daga, Monika Henzinger, Danupon Nanongkai, and Thatchaphol Saranurak.
\newblock Distributed edge connectivity in sublinear time.
\newblock In {\em {STOC}}, pages 343--354. {ACM}, 2019.

\bibitem[DKP{\etalchar{+}}17]{DurfeeKPRS17}
David Durfee, Rasmus Kyng, John Peebles, Anup~B Rao, and Sushant Sachdeva.
\newblock Sampling random spanning trees faster than matrix multiplication.
\newblock In {\em Proceedings of the 49th Annual ACM SIGACT Symposium on Theory
  of Computing}, pages 730--742. ACM, 2017.

\bibitem[EEST08]{ElkinEST08}
Michael Elkin, Yuval Emek, Daniel~A. Spielman, and Shang{-}Hua Teng.
\newblock Lower-stretch spanning trees.
\newblock {\em {SIAM} J. Comput.}, 38(2):608--628, 2008.

\bibitem[EGIN97]{EppsteinGIN97}
David Eppstein, Zvi Galil, Giuseppe~F. Italiano, and Amnon Nissenzweig.
\newblock Sparsification - a technique for speeding up dynamic graph
  algorithms.
\newblock {\em J. {ACM}}, 44(5):669--696, 1997.
\newblock Announced at FOCS 1992.

\bibitem[EH84]{EsfahanianH84}
Abdol{-}Hossein Esfahanian and S.~Louis Hakimi.
\newblock On computing the connectivities of graphs and digraphs.
\newblock {\em Networks}, 14(2):355--366, 1984.

\bibitem[ET75]{EvenT75}
Shimon Even and Robert~Endre Tarjan.
\newblock Network flow and testing graph connectivity.
\newblock {\em {SIAM} J. Comput.}, 4(4):507--518, 1975.

\bibitem[Eve75]{Even75}
Shimon Even.
\newblock An algorithm for determining whether the connectivity of a graph is
  at least k.
\newblock {\em {SIAM} J. Comput.}, 4(3):393--396, 1975.

\bibitem[FM95]{FederM95}
Tom{\'{a}}s Feder and Rajeev Motwani.
\newblock Clique partitions, graph compression and speeding-up algorithms.
\newblock {\em J. Comput. Syst. Sci.}, 51(2):261--272, 1995.
\newblock Announced at STOC'91.

\bibitem[FMP{\etalchar{+}}18]{FahrbachMPSWX18}
Matthew Fahrbach, Gary~L. Miller, Richard Peng, Saurabh Sawlani, Junxing Wang,
  and Shen~Chen Xu.
\newblock Graph sketching against adaptive adversaries applied to the minimum
  degree algorithm.
\newblock In {\em 59th {IEEE} Annual Symposium on Foundations of Computer
  Science, {FOCS} 2018, Paris, France, October 7-9, 2018}, pages 101--112,
  2018.

\bibitem[Fre85]{Frederickson85}
Greg~N. Frederickson.
\newblock Data structures for on-line updating of minimum spanning trees, with
  applications.
\newblock {\em {SIAM} J. Comput.}, 14(4):781--798, 1985.
\newblock Announced at STOC'83.

\bibitem[FY19]{ForsterY19}
Sebastian Forster and Liu Yang.
\newblock A faster local algorithm for detecting bounded-size cuts with
  applications to higher-connectivity problems.
\newblock {\em CoRR}, abs/1904.08382, 2019.

\bibitem[Gab06]{Gabow06}
Harold~N. Gabow.
\newblock Using expander graphs to find vertex connectivity.
\newblock {\em J. {ACM}}, 53(5):800--844, 2006.
\newblock Announced at FOCS'00.

\bibitem[Gal80]{Galil80}
Zvi Galil.
\newblock Finding the vertex connectivity of graphs.
\newblock {\em {SIAM} J. Comput.}, 9(1):197--199, 1980.

\bibitem[GG81]{GabberG81}
Ofer Gabber and Zvi Galil.
\newblock Explicit constructions of linear-sized superconcentrators.
\newblock {\em J. Comput. Syst. Sci.}, 22(3):407--420, 1981.
\newblock announced at FOCS'79.

\bibitem[GHT18]{GoranciHT18}
Gramoz Goranci, Monika Henzinger, and Mikkel Thorup.
\newblock Incremental exact min-cut in polylogarithmic amortized update time.
\newblock {\em {ACM} Trans. Algorithms}, 14(2):17:1--17:21, 2018.

\bibitem[GR98]{GoldbergR98}
Andrew~V. Goldberg and Satish Rao.
\newblock Beyond the flow decomposition barrier.
\newblock {\em J. {ACM}}, 45(5):783--797, 1998.

\bibitem[Hen97]{Henzinger97}
Monika~Rauch Henzinger.
\newblock A static 2-approximation algorithm for vertex connectivity and
  incremental approximation algorithms for edge and vertex connectivity.
\newblock {\em J. Algorithms}, 24(1):194--220, 1997.

\bibitem[HRG96]{HenzingerRG96}
Monika~Rauch Henzinger, Satish Rao, and Harold~N. Gabow.
\newblock Computing vertex connectivity: New bounds from old techniques.
\newblock In {\em 37th Annual Symposium on Foundations of Computer Science,
  {FOCS} '96, Burlington, Vermont, USA, 14-16 October, 1996}, pages 462--471,
  1996.

\bibitem[HRG00]{HenzingerRG00}
Monika~Rauch Henzinger, Satish Rao, and Harold~N. Gabow.
\newblock Computing vertex connectivity: New bounds from old techniques.
\newblock {\em J. Algorithms}, 34(2):222--250, 2000.
\newblock Announced at FOCS'96.

\bibitem[HRW17]{HenzingerRW17}
Monika Henzinger, Satish Rao, and Di~Wang.
\newblock Local flow partitioning for faster edge connectivity.
\newblock In {\em Proceedings of the Twenty-Eighth Annual {ACM-SIAM} Symposium
  on Discrete Algorithms, {SODA} 2017, Barcelona, Spain, Hotel Porta Fira,
  January 16-19}, pages 1919--1938, 2017.

\bibitem[HT73]{HopcroftT73}
John~E. Hopcroft and Robert~Endre Tarjan.
\newblock Dividing a graph into triconnected components.
\newblock {\em {SIAM} J. Comput.}, 2(3):135--158, 1973.

\bibitem[IW97]{ImpagliazzoW97}
Russell Impagliazzo and Avi Wigderson.
\newblock {P= BPP} if {E} requires exponential circuits: derandomizing the
  {XOR} lemma.
\newblock In {\em Proceedings of the twenty-ninth annual ACM symposium on
  Theory of computing}, pages 220--229. ACM, 1997.

\bibitem[JS88]{JerrumS88}
Mark Jerrum and Alistair Sinclair.
\newblock Conductance and the rapid mixing property for markov chains: the
  approximation of the permanent resolved (preliminary version).
\newblock In {\em Proceedings of the 20th Annual {ACM} Symposium on Theory of
  Computing, May 2-4, 1988, Chicago, Illinois, {USA}}, pages 235--244, 1988.

\bibitem[Kar00]{Karger00:journal}
David~R Karger.
\newblock Minimum cuts in near-linear time.
\newblock {\em Journal of the ACM (JACM)}, 47(1):46--76, 2000.
\newblock Available at http://arxiv.org/abs/cs/9812007.

\bibitem[KKOV07]{KhandekarKOV07}
Rohit Khandekar, Subhash Khot, Lorenzo Orecchia, and Nisheeth~K. Vishnoi.
\newblock On a cut-matching game for the sparsest cut problem.
\newblock 2007.

\bibitem[KKT95]{KargerKT95}
David~R Karger, Philip~N Klein, and Robert~E Tarjan.
\newblock A randomized linear-time algorithm to find minimum spanning trees.
\newblock {\em Journal of the ACM (JACM)}, 42(2):321--328, 1995.

\bibitem[KL13]{KelnerL13}
Jonathan~A. Kelner and Alex Levin.
\newblock Spectral sparsification in the semi-streaming setting.
\newblock {\em Theory Comput. Syst.}, 53(2):243--262, 2013.

\bibitem[Kle69]{Kleitman1969methods}
D~Kleitman.
\newblock Methods for investigating connectivity of large graphs.
\newblock {\em IEEE Transactions on Circuit Theory}, 16(2):232--233, 1969.

\bibitem[KLOS14]{KelnerLOS14}
Jonathan~A. Kelner, Yin~Tat Lee, Lorenzo Orecchia, and Aaron Sidford.
\newblock An almost-linear-time algorithm for approximate max flow in
  undirected graphs, and its multicommodity generalizations.
\newblock In {\em Proceedings of the Twenty-Fifth Annual {ACM-SIAM} Symposium
  on Discrete Algorithms, {SODA} 2014, Portland, Oregon, USA, January 5-7,
  2014}, pages 217--226, 2014.
\newblock Available at http://arxiv.org/abs/1304.2338.

\bibitem[KLP{\etalchar{+}}16]{KyngLPSS16}
Rasmus Kyng, Yin~Tat Lee, Richard Peng, Sushant Sachdeva, and Daniel~A
  Spielman.
\newblock Sparsified cholesky and multigrid solvers for connection laplacians.
\newblock In {\em Proceedings of the 48th Annual ACM SIGACT Symposium on Theory
  of Computing}, pages 842--850. ACM, 2016.
\newblock Available at http://arxiv.org/abs/1512.01892.

\bibitem[KPPS17]{KyngPPS17}
Rasmus Kyng, Jakub Pachocki, Richard Peng, and Sushant Sachdeva.
\newblock A framework for analyzing resparsification algorithms.
\newblock In {\em Proceedings of the Twenty-Eighth Annual {ACM-SIAM} Symposium
  on Discrete Algorithms, {SODA} 2017, Barcelona, Spain, Hotel Porta Fira,
  January 16-19}, pages 2032--2043, 2017.

\bibitem[KR87]{KanevskyR87}
Arkady Kanevsky and Vijaya Ramachandran.
\newblock Improved algorithms for graph four-connectivity.
\newblock In {\em {FOCS}}, pages 252--259. {IEEE} Computer Society, 1987.

\bibitem[KRV09]{KhandekarRV09}
Rohit Khandekar, Satish Rao, and Umesh~V. Vazirani.
\newblock Graph partitioning using single commodity flows.
\newblock {\em J. {ACM}}, 56(4):19:1--19:15, 2009.

\bibitem[KS16]{KyngS16}
Rasmus Kyng and Sushant Sachdeva.
\newblock Approximate gaussian elimination for laplacians - fast, sparse, and
  simple.
\newblock In {\em {IEEE} 57th Annual Symposium on Foundations of Computer
  Science, {FOCS} 2016, 9-11 October 2016, Hyatt Regency, New Brunswick, New
  Jersey, {USA}}, pages 573--582, 2016.
\newblock Available at http://arxiv.org/abs/1605.02353.

\bibitem[KT19]{KawarabayashiT15}
Ken{-}ichi Kawarabayashi and Mikkel Thorup.
\newblock Deterministic edge connectivity in near-linear time.
\newblock {\em J. {ACM}}, 66(1):4:1--4:50, 2019.
\newblock Announced at STOC'15.

\bibitem[LLW86]{LinialLW86}
Nathan Linial, Laszlo Lovasz, and Avi Wigderson.
\newblock A physical interpretation of graph connectivity, and its algorithmic
  applications.
\newblock In {\em 27th Annual Symposium on Foundations of Computer Science,
  Toronto, Canada, 27-29 October 1986}, pages 39--48, 1986.

\bibitem[LLW88]{LinialLW88}
Nathan Linial, L{\'{a}}szl{\'{o}} Lov{\'{a}}sz, and Avi Wigderson.
\newblock Rubber bands, convex embeddings and graph connectivity.
\newblock {\em Combinatorica}, 8(1):91--102, 1988.
\newblock Announced at FOCS'86.

\bibitem[LPS88]{LubotzkyPS88}
Alexander Lubotzky, Ralph Phillips, and Peter Sarnak.
\newblock Ramanujan graphs.
\newblock {\em Combinatorica}, 8(3):261--277, 1988.

\bibitem[LR99]{LeightonR99}
Frank~Thomson Leighton and Satish Rao.
\newblock Multicommodity max-flow min-cut theorems and their use in designing
  approximation algorithms.
\newblock {\em J. {ACM}}, 46(6):787--832, 1999.

\bibitem[LST18]{LoST18}
On{-}Hei~Solomon Lo, Jens~M. Schmidt, and Mikkel Thorup.
\newblock Contraction-based sparsification in near-linear time.
\newblock {\em CoRR}, abs/1810.03865, 2018.

\bibitem[Mad10]{Madry10}
Aleksander Madry.
\newblock Fast approximation algorithms for cut-based problems in undirected
  graphs.
\newblock In {\em 51th Annual {IEEE} Symposium on Foundations of Computer
  Science, {FOCS} 2010, October 23-26, 2010, Las Vegas, Nevada, {USA}}, pages
  245--254, 2010.

\bibitem[Mar88]{Margulis88}
Grigorii~Aleksandrovich Margulis.
\newblock Explicit group-theoretical constructions of combinatorial schemes and
  their application to the design of expanders and concentrators.
\newblock {\em Problemy peredachi informatsii}, 24(1):51--60, 1988.

\bibitem[Mat87]{Matula87}
David~W. Matula.
\newblock Determining edge connectivity in o(nm).
\newblock In {\em {FOCS}}, pages 249--251. {IEEE} Computer Society, 1987.

\bibitem[MRSV17]{MurtaghRSV17}
Jack Murtagh, Omer Reingold, Aaron Sidford, and Salil~P. Vadhan.
\newblock Derandomization beyond connectivity: Undirected laplacian systems in
  nearly logarithmic space.
\newblock In {\em 58th {IEEE} Annual Symposium on Foundations of Computer
  Science, {FOCS} 2017, Berkeley, CA, USA, October 15-17, 2017}, pages
  801--812, 2017.

\bibitem[NI92]{NagamochiI92}
Hiroshi Nagamochi and Toshihide Ibaraki.
\newblock A linear-time algorithm for finding a sparse k-connected spanning
  subgraph of a k-connected graph.
\newblock {\em Algorithmica}, 7(5{\&}6):583--596, 1992.

\bibitem[NS17]{NanongkaiS17}
Danupon Nanongkai and Thatchaphol Saranurak.
\newblock Dynamic spanning forest with worst-case update time: adaptive, las
  vegas, and o(n\({}^{\mbox{1/2 - {\(\epsilon\)}}}\))-time.
\newblock In {\em {STOC}}, pages 1122--1129. {ACM}, 2017.

\bibitem[NSW17]{NanongkaiSW17}
Danupon Nanongkai, Thatchaphol Saranurak, and Christian Wulff{-}Nilsen.
\newblock Dynamic minimum spanning forest with subpolynomial worst-case update
  time.
\newblock In {\em {FOCS}}, pages 950--961. {IEEE} Computer Society, 2017.

\bibitem[NSY19a]{NanongkaiSY19}
Danupon Nanongkai, Thatchaphol Saranurak, and Sorrachai Yingchareonthawornchai.
\newblock Breaking quadratic time for small vertex connectivity and an
  approximation scheme.
\newblock To appear at STOC'19, 2019.

\bibitem[NSY19b]{NanongkaiSY19_linear}
Danupon Nanongkai, Thatchaphol Saranurak, and Sorrachai Yingchareonthawornchai.
\newblock Computing and testing small vertex connectivity in near-linear time
  and queries.
\newblock {\em CoRR}, abs/1905.05329, 2019.

\bibitem[NW94]{nisan1994hardness}
Noam Nisan and Avi Wigderson.
\newblock Hardness vs randomness.
\newblock {\em Journal of computer and System Sciences}, 49(2):149--167, 1994.

\bibitem[OSV12]{OrecchiaSV12}
Lorenzo Orecchia, Sushant Sachdeva, and Nisheeth~K. Vishnoi.
\newblock Approximating the exponential, the lanczos method and an
  {\~{o}}(\emph{m})-time spectral algorithm for balanced separator.
\newblock In {\em {STOC}}, pages 1141--1160. {ACM}, 2012.

\bibitem[OV11]{OrecchiaV11}
Lorenzo Orecchia and Nisheeth~K. Vishnoi.
\newblock Towards an sdp-based approach to spectral methods: {A}
  nearly-linear-time algorithm for graph partitioning and decomposition.
\newblock In {\em {SODA}}, pages 532--545. {SIAM}, 2011.

\bibitem[PC99]{PanC99}
Victor~Y. Pan and Zhao~Q. Chen.
\newblock The complexity of the matrix eigenproblem.
\newblock In {\em Proceedings of the Thirty-First Annual {ACM} Symposium on
  Theory of Computing, May 1-4, 1999, Atlanta, Georgia, {USA}}, pages 507--516,
  1999.

\bibitem[Pen16]{DBLP:conf/soda/Peng16}
Richard Peng.
\newblock Approximate undirected maximum flows in
  \emph{O}(\emph{m}polylog(\emph{n})) time.
\newblock In {\em {SODA}}, pages 1862--1867. {SIAM}, 2016.

\bibitem[Pod73]{Podderyugin1973algorithm}
VD~Podderyugin.
\newblock An algorithm for finding the edge connectivity of graphs.
\newblock {\em Vopr. Kibern}, 2:136, 1973.

\bibitem[Rei08]{Reingold08}
Omer Reingold.
\newblock Undirected connectivity in log-space.
\newblock {\em J. {ACM}}, 55(4):17:1--17:24, 2008.

\bibitem[RSW18]{RubinsteinSW18}
Aviad Rubinstein, Tselil Schramm, and S.~Matthew Weinberg.
\newblock Computing exact minimum cuts without knowing the graph.
\newblock In {\em {ITCS}}, volume~94 of {\em LIPIcs}, pages 39:1--39:16.
  Schloss Dagstuhl - Leibniz-Zentrum fuer Informatik, 2018.

\bibitem[Sch03]{Schrijver-book}
A.~Schrijver.
\newblock {\em Combinatorial Optimization - Polyhedra and Efficiency}.
\newblock Springer, 2003.

\bibitem[She09]{Sherman09}
Jonah Sherman.
\newblock Breaking the multicommodity flow barrier for o(vlog n)-approximations
  to sparsest cut.
\newblock In {\em 50th Annual {IEEE} Symposium on Foundations of Computer
  Science, {FOCS} 2009, October 25-27, 2009, Atlanta, Georgia, {USA}}, pages
  363--372, 2009.

\bibitem[She13]{Sherman13}
Jonah Sherman.
\newblock Nearly maximum flows in nearly linear time.
\newblock In {\em 54th Annual {IEEE} Symposium on Foundations of Computer
  Science, {FOCS} 2013, 26-29 October, 2013, Berkeley, CA, {USA}}, pages
  263--269, 2013.
\newblock Available at http://arxiv.org/abs/1304.2077.

\bibitem[ST04]{SpielmanT04}
Daniel~A. Spielman and Shang{-}Hua Teng.
\newblock Nearly-linear time algorithms for graph partitioning, graph
  sparsification, and solving linear systems.
\newblock In {\em Proceedings of the 36th Annual {ACM} Symposium on Theory of
  Computing, Chicago, IL, USA, June 13-16, 2004}, pages 81--90, 2004.

\bibitem[ST11]{SpielmanT11}
Daniel~A. Spielman and Shang{-}Hua Teng.
\newblock Spectral sparsification of graphs.
\newblock {\em {SIAM} J. Comput.}, 40(4):981--1025, 2011.

\bibitem[ST14]{SpielmanTengSolver:journal}
D.~Spielman and S.~Teng.
\newblock Nearly linear time algorithms for preconditioning and solving
  symmetric, diagonally dominant linear systems.
\newblock {\em SIAM Journal on Matrix Analysis and Applications},
  35(3):835--885, 2014.
\newblock Available at http://arxiv.org/abs/cs/0607105.

\bibitem[SW19]{SaranurakW19}
Thatchaphol Saranurak and Di~Wang.
\newblock Expander decomposition and pruning: Faster, stronger, and simpler.
\newblock In {\em Proceedings of the Thirtieth Annual {ACM-SIAM} Symposium on
  Discrete Algorithms, {SODA} 2019, San Diego, California, USA, January 6-9,
  2019}, pages 2616--2635, 2019.

\bibitem[Tar72]{Tarjan72}
Robert~Endre Tarjan.
\newblock Depth-first search and linear graph algorithms.
\newblock {\em {SIAM} J. Comput.}, 1(2):146--160, 1972.
\newblock Announced at FOCS'71.

\bibitem[Tre08]{Trevisan08}
Luca Trevisan.
\newblock Approximation algorithms for unique games.
\newblock {\em Theory of Computing}, 4(1):111--128, 2008.

\bibitem[Uma03]{Umans2003pseudo}
Christopher Umans.
\newblock Pseudo-random generators for all hardnesses.
\newblock {\em Journal of Computer and System Sciences}, 67(2):419--440, 2003.

\bibitem[Wul17]{Wulff-Nilsen17}
Christian Wulff{-}Nilsen.
\newblock Fully-dynamic minimum spanning forest with improved worst-case update
  time.
\newblock In {\em {STOC}}, pages 1130--1143. {ACM}, 2017.

\end{thebibliography}
\end{document}